\definecolor{gray1}{rgb}{0.8,0.8,0.8}
\definecolor{gray2}{rgb}{0.95,0.95,0.95}
\newcommand{\argmin}{\mathop{\rm argmin}}
\newcommand{\cF}{{\mathcal F}}
\newcommand{\RQUnetVAE}{RQUNet-VAE~}
\newtheorem{thm}{Theorem}[section]
\newtheorem{prop}[thm]{Proposition}
\newcommand*{\addFileDependency}[1]{% argument=file name and extension
  \typeout{(#1)}
  \@addtofilelist{#1}
  \IfFileExists{#1}{}{\typeout{No file #1.}}
}
\def\hidefigures{0}
\begin{document}

% \title{Bayesian Poisson Semi-supervised Learning for Hyperspectral Satellite Video Segmentation}

\title{Riesz-Quincunx-Unet Variational Auto-Encoder for Satellite Image Denoising}

\author{Duy~H.~Thai$^1$,
        Xiqi~Fei$^1$,
        Minh~Tri~Le$^1$,
        Andreas~Z\"ufle$^2$,
        Konrad Wessels$^1$\\

        $^1$George Mason University, Department of Geography and Geoinformation Science, USA
        
        $^2$Emory University, Department of Computer Science, USA

\thanks{}% <-this % stops a space
\thanks{}% <-this % stops a space
\thanks{Manuscript received ??, 2022; revised ??.}}

\maketitle

%%
%\newpage
%\tableofcontents
%
%%
%\newpage

\begin{abstract} 
Multiresolution deep learning approaches, such as the U-Net architecture, have achieved high performance in classifying and segmenting images. However, these approaches do not provide a latent image representation and cannot be used to decompose, denoise, and reconstruct image data. The U-Net and other convolutional neural network (CNNs) architectures commonly use pooling to enlarge the receptive field, which usually results in irreversible information loss.
This study proposes to include a Riesz-Quincunx (RQ) wavelet transform, which combines 1) higher-order Riesz wavelet transform and 2) orthogonal Quincunx wavelets (which have both been used to reduce blur in medical images) inside the U-net architecture, to reduce noise in satellite images and their time-series. 
%The proposed deterministic wavelet expansion and neural network in multi-scale analysis. This expansion defines a generalized Besov space with learnable frames. This dictionary approach enhances sparsity of a signal in this space.
%
In the transformed feature space, we propose a variational approach to understand how random perturbations of the features affect the image to further reduce noise.
Combining both approaches, we introduce a hybrid \RQUnetVAE scheme for image and time series decomposition used to reduce noise in satellite imagery. 
We present qualitative and quantitative experimental results  that demonstrate that our proposed \RQUnetVAE
was more effective at reducing noise in satellite imagery compared to other state-of-the-art methods. We also apply our scheme to several applications for multi-band satellite images, including: image denoising, image and time-series decomposition by diffusion and image segmentation.

\end{abstract}

{\bfseries Keywords:} 
Quincunx wavelet, high order Riesz transform, image time series decomposition, variational auto-encoder, deep neural networks, Sentinel-2, Unet

% -----------------

%
\section{Introduction}  \label{sec:motivation}

The temporal frequency of medium resolution, optical satellite imagery, such as Landsat 8 and 9  and Sentinel2A\& B, has increase significant in the past four years from one observation every 16 days with Landsat 8 to an observation every 2.9 days on average \cite{li2017global,roy2019landsat}. Such time series of multi-spectral satellite data enables many novel applications such as global  agriculture monitoring and land cover change at the appropriate resolution of land management impacts that has not been possible with low resolution (250-500m GSD) time-series such as MODIS \cite{kleynhans2012land,salmon2013land,van2012hitempo}. Moreover, the harmonized Landsat Sentinel2 products open up new avenues for real-time monitoring of a wider variety of change phenomenon \cite{claverie2018harmonized}.
There is a wide variety of time series analysis methods for change detection, which focus primarily on the temporal domain while the spatial domain has been largely neglected (for review see \cite{zhu2017change}). The availability of hyper-temporal medium resolution imagery allows new application in the spatial domain, such as semantic segmentation with UNET \cite{ronneberger2015u} to track objects of change through time. 

Change detection methods are hampered by significant noise in the time series that remains despite various processing efforts to reduce the noise \cite{zhu2017change}. This noise is firstly caused by clouds, and cloud shadows that result in data gaps even when they are correctly detected and serious change artifacts are caused when some clouds or their edges remain undetected \cite{roy2005prototyping}. Second, atmospheric variability, notably water vapor and aerosols that cause variability in top of atmosphere reflectance, despite best efforts at atmospheric correction \cite{vermote2016preliminary,zhang2018characterization}. Third, BRDF variation due to variation in sun-sensor geometry and non-Lambertian reflectance properties of the target. Fourth, variations in spectral reflectance due to seasonal vegetation phenology, which become hard to model when clouds result in missing data that makes the time-series irregular and unpredictable. These issues are often addressed by multi-date composites or temporal interpolation to fill in clouds, shadows and other missing data. 
Despite efforts to reduce the impact of spatio-temporal noise in the satellite time series, it often limits the accuracy of timely land cover change detection over very large areas using either conventional time-series analysis \cite{woodcock2020transitioning}, as well as the new generation of machine learning methods \cite{zhu2017change}.

%
 %Image decomposition stays at the heart of approximation theory to decompose an image into several components. 
 
 Various cutting-edge techniques have been developed for image decomposition that can be used to remove noise from conventional images or videos, for example, wavelet smoothing techniques \cite{UnserVandeville2010, FeilnerVilleUnser2005, CaiChanShen2008} and regularization methods \cite{RudinOsherFatemi1992}, but these methods require parameter selection that vary greatly between datasets and cannot optimally represent varying signals over space and time commonly found in earth observation images. On the other hand, a neural network (UNet) can automatically learn optimal local representation of signal and noise \cite{ronneberger2015u, tian2020deep}, but incurs high computational cost and requires prohibitively large sets of domain-specific training data.
The approach proposed in this paper combines the strengths of convolutional neural networks and conventional smoothing techniques.
For this purpose we introduce a novel non-subsampled high order Riesz-Quincunx wavelet with variational auto-encoder UNet (RQUNet-VAE) as a hybrid combination of deterministic wavelet expansion (two-dimensional Riesz transformation~\cite{UnserVandeville2010}, Quincunx wavelet) and a variational version of a convolutional neural network \cite{pu2016variational}. 
The proposed RQUNet-VAE approach uses framelet decomposition \cite{YeHanCha2018} to map an image into a sparse feature space and leverages UNet~\cite{ronneberger2015u} to enable learning of the frames of the feature space.
The rationale is that such a hybrid method should provide better feature representation and artifact reduction compared to conventional approaches. Therefore, while previous studies used UNet-VAE for image segmentation \cite{kohl2018probabilistic}, we use it to mimic the properties of latent factor model in our RQUNet-VAE decomposition.

To implement this concept, we first introduce a non-subsampled version of high-order Riesz wavelet expansion~\cite{UnserVandeville2010, UnserChenouardVandeville2011, UnserSageVandeville2009} with Quincunx sampling~\cite{FeilnerVilleUnser2005}. Quincunx sampling is used to reduce redundancy of an expansion while high order Riesz transform is used to increase the directional property of wavelet expansion. 
The hybrid model reduces computational cost with deterministic bases as predefined parameters instead of letting all bases be learnable parameters. 

Next, this Riesz-Quincunx wavelet is integrated into the skip-connections of the deep neural network UNet-VAE~\cite{kingma2013auto, kohl2018probabilistic, esser2018variational} for learning new bases from the training dataset.
Our rationale of using wavelet expansion is that signals extracted from the UNet-VAE encoder, at the skip-connection level, contains both the main signal and details (as well as noise) of an input image which are separated into scaling and wavelet coefficients. Truncation of these coefficients eliminates small wavelet coefficients which contain noise and detailed texture. By decoding the remaining coefficients back into image space, we obtain a denoised version of the original image. 

%This dictionary learning approach provides flexible expansion whose encoder and decoder play as analysis and synthesis operators in wavelet expansion and also satisfy the unity condition by forcing a mean square error term in the loss function for learning model parameters. The variational term here is to learn variance of an image's mean by a latent variable sampled from a normal distribution conditioned on an input signal.

The theoretical framework of RQUNet-VAE is based on Hankel matrix algebra~\cite{partington1988introduction}, framelet decomposition \cite{YeHanCha2018, YinGaoLuDaubechies2017} and proximal operators \cite{PolsonScottWillard2015}. 
Framelet decomposition uses isotropic family-matrix convolution to combine all channels of a multi-band image in a convolutional operation with learnable frames. 
Furthermore, there is a connection between framelet decomposition and sampling with a finite rate of innovation \cite{VetterliMarzilianoBlu2002} via Hankel matrix theory and annihilating filter. 
We prove that RQUNet-VAE also relates to latent factor model whose loss function is defined by Kullback-Leibler divergence~\cite{van2014renyi}.

Finally we demonstrate how to apply our proposed \RQUnetVAE to satellite image time series, that is, sequences of images of the same area. Reducing noise satellite image time series is challenging due to the severe background noise resulting from spectral variability caused by changing environmental conditions due to atmospheric and seasonal variability, remaining small clouds, as well as variable sun-sensor geometry~\cite{roy2019landsat,zhang2018characterization}. The level of noise is very different from that of conventional videos which contains slow motion changes between subsequent images that can be removed with time delay embedding, for example. In contrast, the satellite time series are constituted by discrete frames that are independently capture several days apart, causing large variability in reflectance properties of the background and objects. 

To test the effectiveness of our proposed concept, the RQUNet-VAE was applied to image decomposition, image denoising and segmentation of satellite images and their time-series. Our experimental results show that our hybrid method provides better feature representation and artifact reduction than traditional approaches.
The objectives of this paper are: \vspace{-0.15cm}
\begin{enumerate}

 \item introduce \RQUnetVAE as a generalized wavelet expansion approach;

 \item extend RQUnet-VAE to a diffusion process \cite{RichterThaiHuckemann2020} and enable spectral decomposition \cite{Gilboa2014}; 

 \item apply it to image denoising and segmentation in noisy environment for multi-band satellite images and their time-series.
% and compare them to traditional methods?? 
\end{enumerate}

%
% Moreover, we also apply our technique in multi-band image segmentation and provide a superior result in segmentation of large scale objects in comparison with a typical Unet. Note that a typical Unet is proposed for medical image segmentation whose noise is often assumed to be i.i.d. Gaussian and this noise is different from a noise in satellite video.
% Note that Smoothing to remove noise [or stabilize the environment changes] is the 1st step for all later methods. 

% We mainly apply this idea to multi-band image decomposition by spectral analysis to separate features into different scales; then, we extend this decomposition for video processing by using wavelet expansion in time, e.g. Haar wavelet.

%The RQUnet VAE was demonstrated on (i) 1m aerial photographs (NAIP) with added noise, Sentinel2 (S2) images and S2 time series visualised as videos.
Organization of the paper is:
Section \ref{sec:RieszQuincunxUnetVAE} provides the RQUNet-VAE expansion, including mathematical properties and image or time-series decomposition; Section \ref{sec:ExperimentalResults} gives numerical examples and comparisons of image denoising and segmentation in noisy environments for multi-band satellites images. Section \ref{sec:Conclusion} gives the conclusions, Mathematical background, proofs, and additional experiments are are provided in our supplemental material (SM).
% We need to include above the experiments we used to test the benefit of RQ-Unet 

% -------------------------------
\vspace{-0.4cm}
\section{Riesz-Quincunx-UNet variational auto-encoder (\RQUnetVAE)}  \vspace{-0.2cm}
\label{sec:RieszQuincunxUnetVAE}

\subsection{Notations and Definitions:} \vspace{-0.2cm}
\label{sec:NotationsDefinitions}
The following list provides an overview of notations used throughout this work. \vspace{-0.1cm}
\begin{itemize}
 \item continuous coordinate in the spatial domain: $x = (x_1 \,, x_2) \in \mathbb R^2$,

 \item discrete coordinate in the spatial domain: $k = (k_1 \,, k_2) \,, m = (m_1 \,, m_2) \in \mathbb Z^2$,

 \item Fourier coordinate: $\omega = (\omega_1 \,, \omega_2) \in [-\pi \,, \pi]^2$,

 \item Complex number: $j = \sqrt{-1}$, 

 \item Image domain: $\Omega = \{ 1 \,, \ldots \,, n_1 \} \times \{ 1 \,, \ldots \,, n_2 \}$, $\abs{\Omega} = n_1 \times n_2$,

 \item Images and stacks of images:

  \begin{itemize}
   \item a gray-scale image $\underline{f} = \begin{pmatrix} f_1 & \hdots &  f_{n_2} \end{pmatrix} \in \mathbb R^{\abs{\Omega}} \,, f_i \in \mathbb R^{n_1}$, 

   \item a multi-channel image $\underline{\underline{f}} = \left\{ \underline{f_1} \,, \hdots \,, \underline{f_{P}} \right\} \in \mathbb R^{\abs{\Omega} \times P} \,, \underline{f_i} \in \mathbb R^{\abs{\Omega}}$,

   \item a set of observed images $\textfrak{F} := \underline{\underline{\underline{f}}} = \left\{ \underline{\underline{f_i}} \right\}_{i=1}^T \in \mathbb R^{\abs{\Omega} \times P \times T}$,
  \end{itemize}

 \item Distribution and Lebesgue density: $\underline{f} \stackrel{\text{i.i.d.}}{\sim} \mathbb{P}\left( \underline{f} \right) := \mathbb{P}\left( \text{d} \underline{f} \right) = \mathtt{p}\left( \underline{f} \right) \text{d}\underline{f}$,

 \item Continuous Fourier transform of a continuous function $a \in L_2(\mathbb R^2)$ is:
\begin{align*}
 a(x) &\stackrel{\cF}{\longleftrightarrow} 
 \widehat{a}(\omega) = \int_{\mathbb R^2} a(x) e^{-j \langle x \,, \omega \rangle_{\ell_2} } \text{d} x \,,
\end{align*} 
and its discrete version is computed via Poisson summation formulae:
\begin{align*}
 a[k] &:= a(x) \mid_{x = k \in \mathbb Z^2} 
 \\  % ----
 \stackrel{\cF}{\longleftrightarrow} 
 \widehat{A} \left( e^{j \omega} \right) &= \sum_{k \in \mathbb Z^2} a[k] e^{-j \langle k \,, \omega \rangle_{\ell_2} } 
 = \sum_{k \in \mathbb Z^2} \widehat{a}(2 \pi k + \omega) \,.
\end{align*}

\end{itemize}
Additional definitions can be found in the supplemental material.

\begin{figure*}[t]
\begin{center}  

\includegraphics[width=0.99\textwidth]{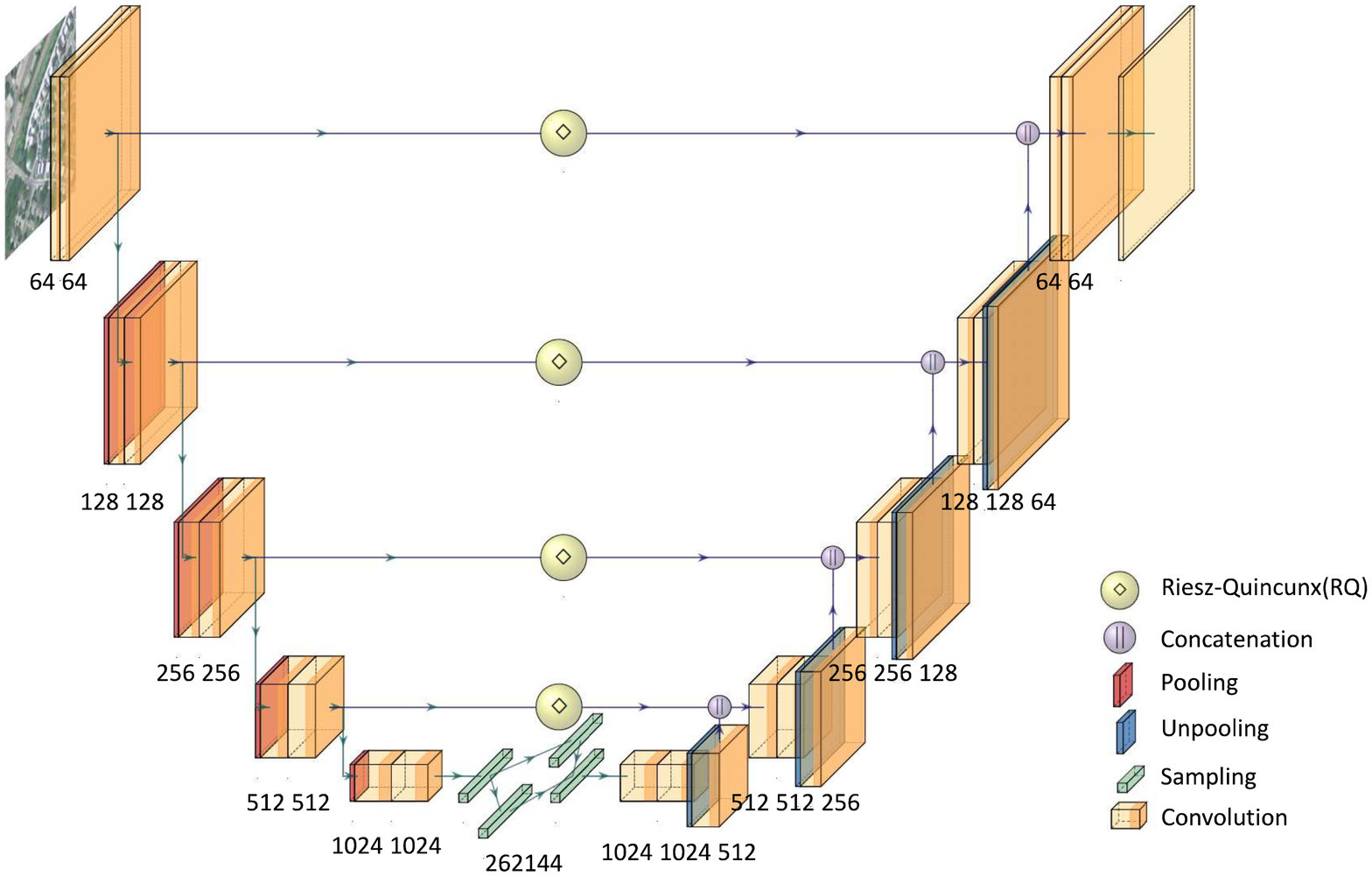}

\caption{\RQUnetVAE architecture. \label{fig:RQUnetVAEarchitecture}}\vspace{-0.4cm}
\end{center}
\end{figure*}

\subsection{\RQUnetVAE Architecture Overview}\label{sec:architecture_overview}
The network architecture of  our proposed \RQUnetVAE is illustrated in  Figure~\ref{fig:RQUnetVAEarchitecture}. The network utilizes a UNet~\cite{ronneberger2015u} as our primary backbone architecture with the modified skip-connection signals, between the encoder and decoder paths, and bottom layer. The network introduces the Variational Autoencoder (VAE) in the bottom layer to learn the signal distribution in the latent space during training on large datasets. The VAE layer combined with the UNet backbone allows the model to learn and retain much of the input information, similar to generative models, to perform image reconstruction. The reconstructed image then can go through a convolution layer, or a classifier, to produce the final classification or segmentation output. The primary objective of our proposed network is to remove certain levels of noise in the input images in order to produce improved, quality reconstructed images using the Riesz-Quincunx (RQ) scheme in the skip-connections, before performing image classification. Since the RQ scheme is computational heavy, we only applied the RQ computation on a pre-trained network to produce predictions on input images.

\subsection{$N$-th order Riesz Quincunx non-subsampled wavelet}\label{sec:RQ}

Before proposing our RQUNet-VAE expansion, we firstly introduce framelet decomposition \cite{YeHanCha2018} in the following proposition. 
%Notation and background used in the following can be found in Appendix~\ref{app:background}.

\begin{prop}  \label{prop:FrameletDecomposition}
Given an image $\underline{f} \in \mathbb R^{n_1 \times n_2}$, all wavelet filter banks $\underline{\Phi} \,, \underline{\tilde{\Phi}} \in \mathbb R^{n_2 d_1 \times d_2}$ (see definitions of wavelet filter banks in Section~4 of the supplemental material)
and local basis $\underline{\Xi} \,, \underline{\tilde{\Xi}} \in \mathbb R^{n_1 \times d}$ (see Equation~5 in the supplemental material for details) 
satisfy the unity conditions
\begin{align}
 \label{eq:OrthonormalFrameConditions}
 \underline{\tilde{\Xi}} \, \underline{\Xi}^\text{T} &= \sum_{i=1}^d \tilde{\xi}_i \xi_i^\text{T} = \text{Id}_{n_1 \times n_1} \,,~
 \underline{\Phi} \, \underline{\tilde{\Phi}}^\text{T} = \sum_{i=1}^{d_2} \tilde{\phi}_i \phi_i^\text{T} = \text{Id}_{n_2 d_1 \times n_2 d_1} %\,;
\end{align}
then, a framelet decomposition is:
\begin{align}  \label{eq:FrameletDecomposition:prop}
 \underline{f} &= \mathscr{H}^\dagger_{d_1 \mid n_2} \left( \underline{\tilde{\Xi}} \underline{c_f} \tilde{\Phi}^\text{T} \right)% \,,~
 \notag
 \\
 &= \frac{1}{d_1} \sum_{s=1}^{d_2} \sum_{l=1}^d \sum_{i=1}^{n_2}
 \begin{pmatrix} \left\langle f_k \,, \mathfrak{C}_{\phi_s^i} \left( \xi_l \right) \right\rangle_{\ell_2} 
 \mathfrak{C}_{\tilde{\phi}^{1}_s} \left( \tilde{\xi}_l \right)
 & 
 \\
 \cdot  \hdots \cdot
 \\
 \left\langle f_k \,, \mathfrak{C}_{\phi_s^i} \left( \xi_l \right) \right\rangle_{\ell_2} \mathfrak{C}_{\tilde{\phi}^{n_2}_s} \left( \tilde{\xi}_l \right) \end{pmatrix}
 \notag
 \\ % -----
 &= \frac{1}{d_1} \sum_{s=1}^{d_2} \begin{pmatrix}  
 \mathfrak{C}_{\tilde{\phi}^{1}_s} \left( \underline{\tilde{\Xi}} c_{f,s} \right)
 & \hdots & 
 \mathfrak{C}_{\tilde{\phi}^{n_2}_s} \left( \underline{\tilde{\Xi}} c_{f,s} \right) \end{pmatrix}% \,,
\end{align}
where $\mathscr{H}^\dagger_{d_1 \mid n_2}$ is the extended Hankel matrix of image \underline{f} as described in the SM. Framelet coefficients are:
\begin{align}  \label{eq:FrameletCoefficients}
 \underline{c_f} &:= \begin{pmatrix} c_{f,1} & \hdots & c_{f,d_2} \end{pmatrix}
 = \underline{\Xi}^\text{T} \mathscr{H}_{d_1 \mid n_2} \left( \underline{f} \right) \underline{\Phi} \,.
\end{align}

\end{prop}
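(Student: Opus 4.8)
The plan is to derive the decomposition from the reconstruction property of the extended Hankel embedding together with the two unity conditions~\eqref{eq:OrthonormalFrameConditions}. The starting point is the fact (established in the SM from the construction of $\mathscr{H}_{d_1 \mid n_2}$ and its pseudoinverse) that $\mathscr{H}^\dagger_{d_1 \mid n_2}$ is a left inverse of $\mathscr{H}_{d_1 \mid n_2}$ on the image space, so that $\underline{f} = \mathscr{H}^\dagger_{d_1 \mid n_2}\bigl( \mathscr{H}_{d_1 \mid n_2}(\underline{f}) \bigr)$, with the normalizing factor $1/d_1$ absorbed into $\mathscr{H}^\dagger_{d_1 \mid n_2}$ to account for the $d_1$ stacked shifts.

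First I would insert the two identities into the Hankel matrix. Since $\underline{\tilde{\Xi}}\,\underline{\Xi}^\text{T} = \text{Id}_{n_1 \times n_1}$ acts as the identity from the left and $\underline{\Phi}\,\underline{\tilde{\Phi}}^\text{T} = \text{Id}_{n_2 d_1 \times n_2 d_1}$ from the right, multiplying by them leaves $\mathscr{H}_{d_1 \mid n_2}(\underline{f})$ unchanged, giving
\begin{align*}
 \mathscr{H}_{d_1 \mid n_2}(\underline{f}) = \underline{\tilde{\Xi}} \bigl( \underline{\Xi}^\text{T} \mathscr{H}_{d_1 \mid n_2}(\underline{f}) \underline{\Phi} \bigr) \underline{\tilde{\Phi}}^\text{T} \,.
\end{align*}
The bracketed factor is exactly the framelet coefficient matrix $\underline{c_f}$ of~\eqref{eq:FrameletCoefficients}, so $\mathscr{H}_{d_1 \mid n_2}(\underline{f}) = \underline{\tilde{\Xi}}\, \underline{c_f}\, \underline{\tilde{\Phi}}^\text{T}$; applying $\mathscr{H}^\dagger_{d_1 \mid n_2}$ to both sides immediately yields the first line of~\eqref{eq:FrameletDecomposition:prop}.

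The remaining and most delicate step is to unfold this compact matrix identity into the explicit triple sum in the second and third lines of~\eqref{eq:FrameletDecomposition:prop}. Writing $\underline{\Xi} = \begin{pmatrix} \xi_1 & \hdots & \xi_d \end{pmatrix}$ and $\underline{\Phi} = \begin{pmatrix} \phi_1 & \hdots & \phi_{d_2} \end{pmatrix}$ and expanding $\underline{c_f}$ entrywise, I would use the fact that multiplication by the (block) Hankel matrix realizes the convolution operator $\mathfrak{C}$, so that each entry of $\underline{\Xi}^\text{T} \mathscr{H}_{d_1 \mid n_2}(\underline{f}) \underline{\Phi}$ is an inner product $\langle f_k \,, \mathfrak{C}_{\phi_s^i}(\xi_l) \rangle_{\ell_2}$ of a column of $\underline{f}$ with a shifted-convolved basis element. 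Dually, $\mathscr{H}^\dagger_{d_1 \mid n_2}$ acting on the columns of $\underline{\tilde{\Xi}}\, \underline{c_f}\, \underline{\tilde{\Phi}}^\text{T}$ reproduces the synthesis convolutions $\mathfrak{C}_{\tilde{\phi}^i_s}(\underline{\tilde{\Xi}}\, c_{f,s})$, with the prefactor $1/d_1$ arising from the averaging over the $d_1$ shifts in the pseudoinverse. Summing over $s = 1 \,, \ldots \,, d_2$ and $l = 1 \,, \ldots \,, d$ and indexing the columns by $i = 1 \,, \ldots \,, n_2$ then gives precisely the stated expression.

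The main obstacle I anticipate is purely in the bookkeeping of the third step: correctly matching the abstract matrix products to the convolution operators $\mathfrak{C}$ and tracking the shift, transpose, and index conventions $(s, i, l)$ inherited from the filter-bank and local-basis definitions in the SM, together with verifying that $\mathscr{H}^\dagger_{d_1 \mid n_2}$ is a genuine left inverse with the normalization $1/d_1$. The algebraic heart of the argument — the double identity insertion — is short; the care lies in the index manipulation that translates it into convolutional form.
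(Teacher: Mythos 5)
Your proposal is correct and takes essentially the same route as the paper's proof: both rest on $\mathscr{H}^\dagger_{d_1 \mid n_2}$ being a (normalized, anti-diagonal-averaging) left inverse of $\mathscr{H}_{d_1 \mid n_2}$, on inserting the two unity conditions \eqref{eq:OrthonormalFrameConditions} to get $\mathscr{H}_{d_1 \mid n_2}(\underline{f}) = \underline{\tilde{\Xi}}\, \underline{c_f}\, \underline{\tilde{\Phi}}^\text{T}$ with $\underline{c_f} = \underline{\Xi}^\text{T} \mathscr{H}_{d_1 \mid n_2}(\underline{f}) \underline{\Phi}$, and on the Hankel--convolution correspondence (with the $1/d_1$ prefactor coming from the averaging in the pseudo-inverse) to unfold the matrix identity into the entrywise form of \eqref{eq:FrameletDecomposition:prop}. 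The index bookkeeping you flag as the delicate step is precisely the routine part the supplemental material carries out using its filter-bank and local-basis conventions, so nothing essential is missing.
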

\begin{proof}
For self-containment, we provide a proof of Proposition~\ref{prop:FrameletDecomposition} in Section 5.1. in the SM.
\end{proof}
Inspired by Proposition \ref{prop:FrameletDecomposition}, we introduce the proposed RQUNet-VAE expansion in the following.

\paragraph{Riesz-Quincunx wavelet expansion:}
To have wavelet expansion, firstly we provide a definition of frame by isotropic polyharmonic Bspline and $N$-th order Riesz transform, see~\cite{UnserVandeville2010}. In particular, an isotropic polyharmonic B-splines basis is defined in the Fourier domain as: $x \in \mathbb R^2$, 
\begin{align}  \label{eq:B-spline}
 \beta_\gamma(x) \stackrel{\cF}{\longleftrightarrow}
 \widehat \beta_\gamma(\omega) 
 = \frac{ \widehat{V}^{iso}(e^{j\omega})^\frac{\gamma}{2} }{ \norm{\omega}^\gamma_{\ell_2} } 
\end{align}
with a 2D localization operator:
\begin{align*}
 &\widehat{V}^\text{iso}(e^{j\omega}) = \frac{10}{3} 
 \\&
 - \frac{1}{3} \left[ 4 \cos \omega_1 + 4 \cos \omega_2 + \cos(\omega_1 + \omega_2) + \cos(\omega_1 - \omega_2) \right] \,.
\end{align*}  
Its dual function is defined via an auto-correlation function: $k \in \mathbb Z^2$,
\begin{align}  \label{eq:dualBspline_autocorrelation}
 \tilde \beta_\gamma(x)
 &\stackrel{\cF}{\longleftrightarrow}
 \widehat{\tilde \beta}_\gamma(\omega) = \frac{ \widehat \beta_\gamma(\omega) }{ \widehat{A}(e^{j\omega}) } \,,~
 \notag
 \\
 a(k) &\stackrel{\cF}{\longleftrightarrow}
 \widehat{A}(e^{j\omega}) = \sum_{m \in \mathbb Z^2} \widehat \beta_{2\gamma}(2\pi m + \omega) \,. 
\end{align}
An impulse response of the $L$-th order Riesz transform is: 
\begin{align*}
 \mathscr{R}^l \{ \delta \}(x) ~\stackrel{\cF}{\longleftrightarrow}~
 \widehat{\mathscr{R}^l}(\omega) = (-j)^L \sqrt{\frac{L!}{n!(L-l)!}} \frac{\omega_1^l \omega_2^{L-l}}{\left( \omega_1^2 + \omega_2^2 \right)^\frac{L}{2}} \,,
\end{align*}
for $l = 0 \,, \ldots \,, L$ and where $\delta(\cdot)$ is the Dirac delta function.
Secondly, wavelet expansion is defined as follows: Given a 2D dyadic sampling matrix
$\underline{\text{D}} = \begin{pmatrix} 1 & 1 \\ 1 & -1 \end{pmatrix}$, we have 
$\widehat{\rho}_0 = [0 \,, 0]^\text{T} \,, \widehat{\rho}_1 = [1 \,, 0]^\text{T}$
and
\begin{align*}
 &\underline{\text{D}}^i = \begin{cases} 2^\frac{i}{2} \text{Id}_2   \,, & i \text{ is even}
  \\ 2^\frac{i - 1}{2} \underline{\text{D}} \,, & i \text{ is odd}
 \end{cases} \,,~
 \abs{\text{det } \underline{\text{D}}^i} = 2^i \,.
\end{align*}
Non-subsampled scaling and wavelet spaces
$\mathcal{V}_i^\text{ns} = \text{span}_{m \in \mathbb Z^2} \left\{ \varphi_i \left( \cdot - m \right)
\right\}$
and
$\mathcal{W}_{il}^\text{ns} = \text{span}_{m \in \mathbb Z^2} \left\{ \psi_{il} \left( \cdot - m \right) \right\}$
(for $i = 0 \,, \ldots \,, I \,; l = 0 \,, \ldots \,, L$)
satisfy a multiscale decomposition
$\mathcal{V}_{i-1}^\text{ns} = \mathcal{V}_{i}^\text{ns} \oplus \sum_{l=0}^L \mathcal{W}_{il}^\text{ns} \,, i = 1 \,, \ldots \,, I
$
where primal/dual scaling and wavelet functions are:
\begin{align*}
 \varphi_i (x) &= 2^{-\frac{i}{2}} \beta_\gamma \left( \underline{\text{D}}^{-i} x \right) \,,~
 \tilde{\varphi}_i (x) = 2^{-\frac{i}{2}} \tilde{\beta}_\gamma \left( \underline{\text{D}}^{-i} x \right) \,,
 \\  % ----
 \psi_{il}(x) &= 2^{-\frac{i}{2}} \mathscr{R}^l \{ \psi \}\left( \underline{\text{D}}^{-i} x \right) \,,~
 \tilde{\psi}_{il}(x) = 2^{-\frac{i}{2}} \mathscr{R}^l \{ \tilde{\psi} \}\left( \underline{\text{D}}^{-i} x \right) \,.
\end{align*}
Their Fourier transforms are
\begin{align*}
 \widehat{\varphi}_i (\omega) &= 2^\frac{i}{2} \widehat{\beta_\gamma} \left( \underline{\text{D}}^{i \text{T}} \omega \right) \,,~
 \widehat{\tilde \varphi}_i (\omega) = 2^\frac{i}{2} \widehat{\tilde \beta_\gamma} \left( \underline{\text{D}}^{i \text{T}} \omega \right) \,,
 \\  % ----
 \widehat{\psi_{il}}(\omega) &= 2^\frac{i}{2} \widehat{\mathscr{R}^l} \left( \underline{\text{D}}^{i \text{T}} \omega \right) \widehat{\psi} \left( \underline{\text{D}}^{i \text{T}} \omega \right) \,,~
 % ----
 \\
 \widehat{\tilde{\psi}_{il}}(\omega) &= 2^\frac{i}{2} \widehat{\mathscr{R}^l} \left( \underline{\text{D}}^{i \text{T}} \omega \right) \widehat{\tilde{\psi}} \left( \underline{\text{D}}^{i \text{T}} \omega \right) \,.
\end{align*}
Due to a discrete Fourier transform of continuous functions, by Poisson summation we have the following proposition:
\begin{prop}  \label{prop:UnityCondition}
 The scaling and wavelet functions satisfy the unity condition in the Fourier domain:
 \begin{align}  \label{eq:identity:prop}
  \widehat{\tilde{\varphi}}^*_I (\omega) \widehat{\varphi}_I (\omega)
  + \sum_{i=0}^I \sum_{l=0}^L 
  \widehat{\tilde{\psi}}^*_{il} (\omega) \widehat{\psi}_{il}(\omega)
  + \widehat{e}(\omega) = 1
 \end{align}
 up to a discretization error:
 {\small
 \begin{align}  \label{eq:ErrorUnity}
  &\widehat{e}(\omega) = \widehat{\tilde{\varphi}}^*_I (\omega) \sum_{k \in \mathbb Z^2 \backslash \{ 0 \}} \widehat{\varphi}_I \left( 2\pi k + \omega \right) 
  \notag
  \\ %
  &+ \sum_{m \in \mathbb Z^2 \backslash \{0\}} \widehat{\tilde{\varphi}}^*_I \left( 2\pi m + \omega \right) \left( \widehat{\varphi}_I (\omega)
  + \sum_{k \in \mathbb Z^2 \backslash \{ 0 \}} \widehat{\varphi}_I \left( 2\pi k + \omega \right) \right)
  \notag
  \\  % ----
  &+ \sum_{i=0}^I \sum_{l=0}^L \Bigg[ \widehat{\tilde{\psi}}^*_{il} (\omega) \sum_{k \in \mathbb Z^2 \backslash \{ 0 \}} \widehat{\psi}_{il} \left( 2\pi k + \omega \right) 
  \notag
  \\ %
  &+ \sum_{m \in \mathbb Z^2 \backslash \{0\}} \widehat{\tilde{\psi}}^*_{il} \left( 2\pi m + \omega \right)
  \left( \widehat{\psi}_{il}(\omega)
  + \sum_{k \in \mathbb Z^2 \backslash \{ 0 \}} \widehat{\psi}_{il} \left( 2\pi k + \omega \right) \right)
  \Bigg] \,.
 \end{align}
 }
\end{prop}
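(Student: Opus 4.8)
\emph{Proof plan.} The content of \eqref{eq:identity:prop} is the perfect-reconstruction (partition-of-unity) property of the non-subsampled biorthogonal system, read off the multiscale decomposition $\mathcal{V}_{i-1}^{\text{ns}}=\mathcal{V}_i^{\text{ns}}\oplus\sum_{l=0}^{L}\mathcal{W}_{il}^{\text{ns}}$, while $\widehat{e}$ in \eqref{eq:ErrorUnity} is exactly the aliasing produced by sampling the continuous frame on the integer lattice $\mathbb{Z}^2$. Accordingly, I would prove the statement in three moves: (i) rewrite everything in terms of the discrete (periodized) symbols via the Poisson summation formula recorded in the Notations; (ii) establish the \emph{exact} unity condition at the level of these discrete symbols; and (iii) expand back to the continuous transforms, separating each product into a non-aliased ``diagonal'' term and its aliasing remainder, and check that the remainders assemble into exactly \eqref{eq:ErrorUnity}.

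For step (i), the Poisson identity $\sum_{k\in\mathbb{Z}^2}\widehat{a}(2\pi k+\omega)=\sum_{m\in\mathbb{Z}^2}a[m]\,e^{-j\langle m,\omega\rangle_{\ell_2}}$ applied to $a\in\{\varphi_I,\tilde\varphi_I,\psi_{il},\tilde\psi_{il}\}$ turns every periodization into the discrete-time Fourier transform of the corresponding sampled filter. The useful consequence is the factorization, for the scaling pair,
\begin{align}
 \Big(\sum_{m\in\mathbb{Z}^2}\widehat{\tilde{\varphi}}^*_I(2\pi m+\omega)\Big)\Big(\sum_{k\in\mathbb{Z}^2}\widehat{\varphi}_I(2\pi k+\omega)\Big)
 =\widehat{\tilde{\varphi}}^*_I(\omega)\,\widehat{\varphi}_I(\omega)+\widehat{e}_\varphi(\omega),
\end{align}
where $\widehat{e}_\varphi$ collects the three off-diagonal contributions with $(m,k)\neq(0,0)$. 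A direct inspection shows that these are precisely the first two lines of \eqref{eq:ErrorUnity}, and the analogous expansion for each pair $(\psi_{il},\tilde\psi_{il})$ reproduces the bracketed wavelet lines. Summing over the scaling pair and over all $(i,l)$ therefore shows that \eqref{eq:identity:prop} is \emph{equivalent} to the purely discrete unity condition obtained by setting the sum of all these periodization products equal to $1$.

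This reduces the proposition to step (ii): proving that the non-subsampled filter bank reproduces the identity exactly. Here I would use the two-scale relations underlying $\mathcal{V}_{i-1}^{\text{ns}}=\mathcal{V}_i^{\text{ns}}\oplus\sum_{l}\mathcal{W}_{il}^{\text{ns}}$ to obtain the single-stage reproduction identity relating the discrete scaling symbol at scale $i-1$ to the scaling symbol at scale $i$ plus the sum of the $L+1$ wavelet symbols, and then telescope it over $i=0,\dots,I$. Two structural facts make the telescope collapse to $1$: the sum over Riesz orders decouples because $\sum_{l=0}^{L}\lvert\widehat{\mathscr{R}^l}(\omega)\rvert^2=1$ by the binomial theorem applied to the symbol in the definition of $\mathscr{R}^l$, so the $N$-th order Riesz wavelets contribute exactly like a single scalar wavelet; and the residual coarsest-scale scaling term is normalized to $1$ by the autocorrelation/dual construction of $\tilde{\beta}_\gamma$ in \eqref{eq:dualBspline_autocorrelation}, exactly as in the unity conditions of Proposition~\ref{prop:FrameletDecomposition}.

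The main obstacle is step (ii): verifying the single-stage discrete reproduction identity from the polyharmonic B-spline refinement relation behind \eqref{eq:B-spline} together with the Riesz isometry, and pinning the base of the telescope to exactly $1$. Once the Riesz-order collapse and the B-spline biorthogonality are in hand, the aliasing bookkeeping of step (iii) is routine pattern-matching against \eqref{eq:ErrorUnity}.
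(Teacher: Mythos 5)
Your plan follows what is evidently the paper's own route: the preamble to Proposition~\ref{prop:UnityCondition} ("by Poisson summation\dots") signals exactly your steps (i) and (iii), and your factorization is verifiably correct --- for each pair $(\tilde a,a)$ one has
\begin{align*}
 \Big(\sum_{m\in\mathbb Z^2}\widehat{\tilde a}^{*}(2\pi m+\omega)\Big)\Big(\sum_{k\in\mathbb Z^2}\widehat{a}(2\pi k+\omega)\Big)
 &= \widehat{\tilde a}^{*}(\omega)\widehat{a}(\omega)
 + \widehat{\tilde a}^{*}(\omega)\sum_{k\neq 0}\widehat{a}(2\pi k+\omega)
 + \sum_{m\neq 0}\widehat{\tilde a}^{*}(2\pi m+\omega)\Big(\widehat{a}(\omega)+\sum_{k\neq 0}\widehat{a}(2\pi k+\omega)\Big)\,,
\end{align*}
whose off-diagonal groups are verbatim the lines of \eqref{eq:ErrorUnity}, so \eqref{eq:identity:prop} is indeed equivalent to exact unity at the level of the periodized (discrete) symbols. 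Your two structural inputs for step (ii) --- the Riesz-order collapse $\sum_{l=0}^{L}\lvert\widehat{\mathscr{R}^l}(\omega)\rvert^{2}=1$ by the binomial theorem, and the two-scale autocorrelation relation \eqref{eq:autocorrelationfunc:2scale} with the dual construction \eqref{eq:dualBspline_autocorrelation} --- are also the right ones.

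There is, however, one genuine soft spot in step (ii) as sketched. You pin the base of the telescope by "the autocorrelation/dual construction of $\tilde\beta_\gamma$", i.e.\ by the identity $\sum_{k}\widehat{\tilde\beta}^{*}_{\gamma}\widehat{\beta}_{\gamma}(2\pi k+\omega)=\widehat{A}(e^{j\omega})^{-1}\sum_{k}\widehat{\beta}_{2\gamma}(2\pi k+\omega)=1$, which is the periodization of the \emph{product}. But your step (i) reduction hands you the \emph{product of the periodizations}: since $\widehat A$ is $2\pi$-periodic and $\widehat\beta_{2\gamma}=\lvert\widehat\beta_\gamma\rvert^{2}$,
\begin{align*}
 \Big(\sum_{m}\widehat{\tilde\beta}^{*}_{\gamma}(2\pi m+\omega)\Big)\Big(\sum_{k}\widehat{\beta}_{\gamma}(2\pi k+\omega)\Big)
 = \frac{\big\lvert\sum_{k}\widehat{\beta}_{\gamma}(2\pi k+\omega)\big\rvert^{2}}{\sum_{k}\lvert\widehat{\beta}_{\gamma}(2\pi k+\omega)\rvert^{2}}\,,
\end{align*}
which equals $1$ only in the absence of aliasing overlap --- and polyharmonic B-splines are not bandlimited. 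Conflating the two objects makes the claimed collapse to $1$ circular. Relatedly, the multiscale decomposition supplies the single-stage reproduction only for $i=1,\dots,I$, while the sum in \eqref{eq:identity:prop} starts at $i=0$; the scale-$0$ band is special (it is precisely the band the paper redefines immediately afterwards "to compensate the error"), so it cannot enter your telescope as just another scale. To close step (ii) you must either run the telescope entirely in the discrete domain, where the relations among $\widehat H,\widehat{\tilde H},\widehat G,\widehat{\tilde G},\widehat A$ are exact $2\pi$-periodic identities and the scale-$i$ symbols factor into periodic filter products, or keep the residual aliasing cross terms explicit and verify they are exactly the ones absorbed into \eqref{eq:ErrorUnity}; as written, your base-case normalization does not yet do either.
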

\begin{proof}
We provide a proof of Proposition~\ref{prop:UnityCondition} in Section~5.2 in SM.
\end{proof}
To compensate the error in Equation~\ref{eq:ErrorUnity} in the unity condition (Equation~\ref{eq:identity:prop}), wavelet function at scale $i=0$ is defined as:
\begin{align*}
 \widehat{\psi}_{0}(\omega)
 = \frac{1}{\widehat{\tilde{\psi}}^*_{0} (\omega)}
 \left( 1 - \widehat{\tilde{\varphi}}^*_I (\omega) \widehat{\varphi}_I (\omega)
 - \sum_{i=1}^I \widehat{\tilde{\psi}}^*_{i} (\omega) \widehat{\psi}_{i}(\omega) \right) \,;
\end{align*}
then, we have a wavelet expansion for an image $f \in \ell_2(\mathbb Z^2)$:
\begin{align}  \label{eq:WaveletExpansion}
 f[k] &= \sum_{m \in \mathbb Z^2}
 \left\langle f \,, \tilde \varphi_I (\cdot - m) \right\rangle_{\ell_2} \varphi_I (k - m)
 \notag
 \\ %
 &+ \sum_{i=0}^I \sum_{l=0}^L \sum_{m \in \mathbb Z^2}
 \left\langle f \,, \tilde \psi_{il} (\cdot - m) \right\rangle_{\ell_2} \psi_{il} (k - m) \,.
\end{align}
Following \cite{UnserChenouardVandeville2011}, primal and dual wavelet functions are defined as:
\begin{align} \label{eq:waveletfunc}
 \widehat \psi(\omega) &= 2^{-\frac{1}{2}}
 \widehat{G}(e^{j\underline{\text{D}}^{-\text{T}} \omega}) \widehat \beta_\gamma(\underline{\text{D}}^{-\text{T}} \omega) \,,
 \notag
 \\ %
 \widehat{\tilde \psi}(\omega) &= 2^{-\frac{1}{2}} \widehat{\tilde{G}} \left( e^{j \underline{\text{D}}^{-\text{T}} \omega} \right) \widehat{\tilde \beta}_\gamma(\underline{\text{D}}^{-\text{T}} \omega) \,,
\end{align}
where the refinement and highpass filters are
\begin{align*} 
 % \label{eq:refinementfilter}
 h[k] &\stackrel{\cF}{\longleftrightarrow}
 \widehat{H}(e^{j \omega}) = 2^\frac{1}{2}  \frac{ \widehat \beta_\gamma(\underline{\text{D}}^\text{T} \omega) }{ \widehat \beta_\gamma(\omega) } \,,
 \notag
 \\
 \tilde h[k] &\stackrel{\cF}{\longleftrightarrow}
 \widehat{\tilde{H}}(e^{j \omega}) = \frac{ \widehat{A}(e^{j \omega}) }{ \widehat{A}(e^{j \underline{\text{D}}^\text{T} \omega}) } \widehat{L}(e^{j \omega}) \,,
 \notag
 \\
 % \label{eq:waveletfilter:VDVilleUnserPolyharmonicBspline}
 g[k] &\stackrel{\cF}{\longleftrightarrow}
 \widehat{G}(e^{j \omega}) = - e^{- j \omega_1} \widehat{H}( e^{- j (\omega + \pi)}) 
 \widehat{A}( e^{j (\omega + \pi)} ) \,,
 \notag
 \\ %
 \tilde{g}[k] &\stackrel{\cF}{\longleftrightarrow}
 \widehat{\tilde{G}}(e^{j \omega}) = - e^{- j \omega_1} \frac{ \widehat{H}( e^{- j (\omega + \pi)} ) }{ \widehat{A}(e^{j \underline{\text{D}}^\text{T} \omega }) } \,,  
\end{align*}
and a scaled auto-correlation function is:
{\small
\begin{align}   
 \label{eq:autocorrelationfunc:2scale}
 \widehat{A}(e^{j \text{D}^\text{T} \omega}) &=  \frac{1}{2} \abs{ \widehat{H}(e^{j \omega}) }^2 \widehat{A}(e^{j \omega}) + \frac{1}{2} \abs{ \widehat{H}( e^{j (\omega+\pi)} ) }^2 \widehat{A}( e^{j (\omega + \pi)} ) \,.
\end{align}
}

% ===========

%
\paragraph{Non-subsampled Riesz-Quincunx wavelet smoothing and Hankel matrix}
Denote $\underline{\mathfrak{C}}_{\underline{\phi}} := \underline{\mathfrak{C}}^*_{\underline{\varphi_I}} \underline{\mathfrak{C}}_{\underline{\tilde\varphi_I}}$
where a matrix kernel $\underline{\phi} \in \mathbb R^{n_1 \times n_2}$ and its matrix form $\underline{\Phi} := \underline{\tilde\Phi_I} \, \underline{\check\Phi_I^\text{T}}$ 
are defined in Equation~3 in the SM where $\underline{\tilde\Phi_I}$ and $\underline{\Phi_I}$ are matrix forms of $\underline{\tilde\varphi_I}$ and $\underline{\varphi_I}$, respectively.
Similarly, we denote wavelet kernel tensors
$\underline{\underline{\psi}} = \left\{ \psi_p \right\}_{p=1}^P := \left\{ \psi_{il} \right\}_{i=0,\ldots,I}^{l=0,\ldots,L}$
and 
$\underline{\underline{\tilde \psi}} = \left\{ \tilde{\psi}_p \right\}_{p=1}^P := \left\{ \tilde{\psi}_{il} \right\}_{i=0,\ldots,I}^{l=0,\ldots,L}$
with $\underline{\psi_p} \,, \underline{\tilde\psi_p} \in \mathbb R^{\abs{\Omega}} \, (p = 1 \,, \ldots \,, P)$ 
and their matrix form are
$\underline{\check{\Psi}} = \begin{pmatrix} \underline{\check{\Psi}_1} & \hdots & \underline{\check{\Psi}_P} \end{pmatrix}$ and 
$\underline{\tilde{\Psi}} = \begin{pmatrix} \underline{\tilde\Psi_1} & \hdots & \underline{\tilde\Psi_P} \end{pmatrix}$
where block element matrices 
$\underline{\Psi_p} \,, \underline{\tilde\Psi_p} \in \mathbb R^{n_2 n_1 \times n_2}$ are also are defined in Equation~3 in the supplemental material.
For an image $\underline{f} \in \mathbb R^{n_1 \times n_2}$, from proposition \ref{prop:FrameletDecomposition} we have the following proposition:
\begin{prop}  \label{prop:RieszQuincunxWavelet:FrameletDecomposition}
 A non-subsampled Riesz Quincunx wavelet  has a form of framelet decomposition (\ref{eq:FrameletDecomposition:prop}):
 \begin{align}  \label{eq:RieszQuincunxWavelet}
  \underline{f} &= \underline{\mathfrak{C}}_{\underline{\phi}}\left(\underline{f}\right)
  + \underline{\underline{\mathfrak{C}}}^*_{\underline{\underline{\psi}}}
  \circ \text{prox}_{\mu\mathscr{P}} \circ
  \underline{\underline{\mathfrak{C}}}_{\underline{\underline{\tilde\psi}}}\left(\underline{f}\right) 
  \notag
  \\ % ----
  &= n_1 \mathscr{H}^\dagger_{n_1 \mid n_2} \left( \mathscr{H}_{n_1 \mid n_2}\left(\underline{f}\right)
  \left( \underline{\Phi}
  + \underline{\tilde{\Psi}} \, \underline{\check{\Psi}^\text{T}} \right) \right) \,.
 \end{align}
 Scaling and wavelet filter bank matrices satisfy the unity condition:
 \begin{align}  \label{eq:RieszQuincunxUnityCondition}
  \underline{\Phi} + \underline{\tilde{\Psi}} \, \underline{\check{\Psi}^\text{T}} 
  = \frac{1}{n_1} \text{Id}_{n_1 n_2 \times n_1 n_2} \,.
\end{align}
 
\end{prop}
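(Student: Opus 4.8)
The plan is to prove the statement in two stages: first establish the matrix unity condition (\ref{eq:RieszQuincunxUnityCondition}), which is the Riesz-Quincunx analogue of the generic frame condition (\ref{eq:OrthonormalFrameConditions}), and then deduce the framelet/Hankel form in (\ref{eq:RieszQuincunxWavelet}) by invoking Proposition~\ref{prop:FrameletDecomposition} almost verbatim. The convolution-operator form on the first line of (\ref{eq:RieszQuincunxWavelet}) is then obtained by rewriting the same reconstruction through the operators $\underline{\mathfrak{C}}_{\underline{\phi}}$ and $\underline{\underline{\mathfrak{C}}}_{\underline{\underline{\psi}}}$.

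First I would prove the unity condition. The starting point is the Fourier-domain identity of Proposition~\ref{prop:UnityCondition}, which holds only up to the discretization error $\widehat{e}(\omega)$ of (\ref{eq:ErrorUnity}). The crucial observation is that the scale-$0$ wavelet $\widehat{\psi}_0$ was defined precisely so that inserting it into the sum cancels $\widehat{e}(\omega)$; substituting the definition of $\widehat{\psi}_0$ into (\ref{eq:identity:prop}) makes the left-hand side equal to $1$ exactly for every $\omega$. I would then translate this pointwise Fourier identity into matrix algebra: by the Poisson summation convention recorded in the notation section, each product such as $\widehat{\tilde{\varphi}}^*_I(\omega)\widehat{\varphi}_I(\omega)$ is the Fourier symbol of the corresponding family-matrix convolution, whose matrix representative is exactly the block $\underline{\Phi}=\underline{\tilde\Phi_I}\,\underline{\check\Phi_I^\text{T}}$ (respectively $\underline{\tilde{\Psi}}\,\underline{\check{\Psi}^\text{T}}$ for the wavelet terms, summed over $i,l$). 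Under the normalization fixed in Equation~3 of the SM, the constant symbol $1$ corresponds to $\frac{1}{n_1}\text{Id}$, so assembling the blocks yields (\ref{eq:RieszQuincunxUnityCondition}).

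With (\ref{eq:RieszQuincunxUnityCondition}) in hand, the second stage is almost immediate. Taking $d_1=n_1$, setting $\underline{\tilde{\Xi}}\,\underline{\Xi}^\text{T}=\text{Id}$, and identifying the combined filter bank $\underline{\Phi}+\underline{\tilde{\Psi}}\,\underline{\check{\Psi}^\text{T}}$ with the product $\underline{\Phi}\,\underline{\tilde{\Phi}}^\text{T}$ of Proposition~\ref{prop:FrameletDecomposition}, the unity condition (\ref{eq:RieszQuincunxUnityCondition}) is exactly hypothesis (\ref{eq:OrthonormalFrameConditions}) up to the scalar $\frac{1}{n_1}$. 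Feeding this into the framelet reconstruction (\ref{eq:FrameletDecomposition:prop}) gives $\underline{f}=n_1\,\mathscr{H}^\dagger_{n_1\mid n_2}\big(\mathscr{H}_{n_1\mid n_2}(\underline{f})(\underline{\Phi}+\underline{\tilde{\Psi}}\,\underline{\check{\Psi}^\text{T}})\big)$, the second line of (\ref{eq:RieszQuincunxWavelet}), the $n_1$ prefactor absorbing the $\frac{1}{n_1}$ carried by the unity condition and the $\frac{1}{d_1}$ of (\ref{eq:FrameletDecomposition:prop}). The first line of (\ref{eq:RieszQuincunxWavelet}) is the same reconstruction expressed through the convolution operators: right-multiplication by $\underline{\Phi}$ on the Hankel matrix is the matrix encoding of $\underline{\mathfrak{C}}_{\underline{\phi}}$, and right-multiplication by $\underline{\tilde{\Psi}}\,\underline{\check{\Psi}^\text{T}}$ encodes $\underline{\underline{\mathfrak{C}}}^*_{\underline{\underline{\psi}}}\circ\underline{\underline{\mathfrak{C}}}_{\underline{\underline{\tilde\psi}}}$, with $\text{prox}_{\mu\mathscr{P}}$ reducing to the identity in the perfect-reconstruction case $\mu=0$.

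The main obstacle I expect is the first stage: rigorously matching the pointwise Fourier unity condition to the block-matrix identity (\ref{eq:RieszQuincunxUnityCondition}). The bookkeeping requires showing that the aliasing terms collected in $\widehat{e}(\omega)$ are precisely what the definition of $\widehat{\psi}_0$ removes, and that the normalization of the matrix kernels $\underline{\Phi}$, $\underline{\tilde{\Psi}}$, $\underline{\check{\Psi}}$ in Equation~3 of the SM converts the scalar $1$ into $\frac{1}{n_1}\text{Id}$ with the correct $n_1$ tracking. Once that identification is secured, the passage to (\ref{eq:RieszQuincunxWavelet}) is a direct substitution into Proposition~\ref{prop:FrameletDecomposition}.
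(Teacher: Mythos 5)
Your proposal follows essentially the same route as the paper: it uses the scale-$0$ correction $\widehat{\psi}_0$ to turn the unity condition of Proposition~\ref{prop:UnityCondition} into an exact Fourier identity, translates this into the matrix identity (\ref{eq:RieszQuincunxUnityCondition}) via the Hankel-matrix and family-convolution calculus of the supplemental material, and then instantiates the framelet decomposition of Proposition~\ref{prop:FrameletDecomposition} with $d_1 = n_1$ to obtain (\ref{eq:RieszQuincunxWavelet}). Your reading of $\text{prox}_{\mu\mathscr{P}}$ as the identity in the exact-reconstruction case is also the intended one, so the argument is correct and matches the paper's proof.
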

\begin{proof}
We provide a proof of Proposition~\ref{prop:RieszQuincunxWavelet:FrameletDecomposition} in Section~5.3 in SM.
\end{proof}
A smoothing version of a framelet decomposition (Equation~\ref{eq:RieszQuincunxWavelet}) is defined with proximity operators $\text{prox}_{\mu\mathscr{P}} \, (\text{for } \mu > 0)$ (as defined in Equation~13 in the supplemental material):
{\small
\begin{align*}  
 \underline{\tilde f} &= n_1 \mathscr{H}^\dagger_{n_1 \mid n_2}  
 \left( \mathscr{H}_{n_1 \mid n_2}\left(\underline{f}\right) \underline{\Phi}
 + \text{prox}_{\mu\mathscr{P}} \left\{ \mathscr{H}_{n_1 \mid n_2}\left(\underline{f}\right) \underline{\tilde{\Psi}} \right\}
 \underline{\check{\Psi}^\text{T}} \right) \,.
\end{align*}
}
Its iterative scheme, called generalized intersection algorithm with fixpoints is described in \cite{RichterThaiHuckemann2020}.

% ============

%
\subsection{RQUNet-VAE expansion} 
Given a multi-channel image $\underline{\underline{f}} = \left\{ \underline{f_1} \,, \hdots \,, \underline{f_{P}} \right\}
\in \mathbb R^{\abs{\Omega} \times P}$, we introduce our RQUNet-VAE expansion via mappings for skip-connecting signal, latent variables, and a reconstructed signal. Then, we propose RQUNet-VAE expansion and its functional space for regularization. 

\paragraph{Mappings:}
We introduce filter banks in an encoder
$ \underline{\underline{\underline{\theta^{1(i)}}}} \in \mathbb R^{d_1 \times d_2 \times P \times 2^iL}$
and 
$\underline{\underline{\underline{\theta^{2(i)}}}} 
 \in \mathbb R^{d_1 \times n_2 \times P \times 2^iL}$
whose matrix forms are 
$\underline{\underline{\Theta^{1(i)}}} 
\in \mathbb R^{d_1 d_2 P \times d_2 \times 2^iL} \,,
\underline{\underline{\Theta^{1(i)}}} 
\in \mathbb R^{d_1 n_2 P \times n_2 \times 2^iL}$ 
(as defined in Equation~6 in the supplemental material).
Similar for filter banks $\left\{ \left( \underline{\underline{\underline{\tilde{\theta}^{1(i)}}}} \,, \underline{\underline{\underline{\tilde{\theta}^{2(i)}}}} \right) \right\}_{i-0}^{I-1}$ in a decoder.
For batch-normalization and dropout layers, we refer the readers to \cite{IoffeSzegedy2015, SrivastavaHintonKrizhevskySutskeverSalakhutdinov2014}.

{\bfseries a.1. Skip-connecting signal (encoder):}
Given a local basis $\underline{\Xi^{(i)}}$
and an analysis operator at scale $i = 1 \,, \ldots \,, I$: 
\begin{align} \label{eq:UnetEncoderOperator:prop}
 \mathscr{T}^{(i)} &:= \mathscr{R}_p \circ \mathscr{B}
 \circ \text{prox}_\text{ReLU} \circ \underline{\underline{\underline{\mathfrak{C}}}}^\text{iso}_{\underline{\underline{\underline{\theta^{2(i)}}}}}
 \circ \text{prox}_\text{ReLU} \circ  \underline{\underline{\underline{\mathfrak{C}}}}^\text{iso}_{\underline{\underline{\underline{\theta^{1(i)}}}}} 
 \notag
 \\& %
 \,:\, \mathbb R^{n_1 \times n_2 \times P} \rightarrow
 \mathbb R^{2^{-i} n_1 \times 2^{-i} n_2 \times 2^i L} \,,
\end{align}
we define an iterated mapping:
\begin{align*}
 \mathscr{C}^{(i)} &= \mathscr{T}^{(i)} \circ \underline{\Xi^{(i-1),\text{T}}} \mathscr{T}^{(i-1)} 
 \circ \ldots \circ 
 \underline{\Xi^{(1),\text{T}}} \mathscr{T}^{(1)} 
 \notag
 \\
 &\,:\, \mathbb R^{n_1 \times n_2 \times P} \rightarrow \mathbb R^{2^{-(i-1)} n_1 \times 2^{-(i-1)} n_2 \times 2^{(i-1)} L} \,.
\end{align*}
Then, we have the following proposition describing a mapping for the skip-connecting signal:
\begin{prop}  \label{prop:SkipConnectingMapping}
 A mapping for skip-connecting signal in RQUnet-VAE is:
 \begin{align}  \label{eq:SkipConnectingMapping}
  &\mathscr{C}_{\gamma_c} \,:\, \mathbb R^{n_1 \times n_2 \times P} \rightarrow \left\{ \mathbb{R}^{2^{-(i-1)} n_1 \times 2^{-(i-1)} n_2 \times 2^{(i-1)} L} \right\}_{i=1}^I \,;
  \notag
  \\ % ---
  &\underline{\underline{\underline{c}}} = \mathscr{C}_{\gamma_c} \left(\underline{\underline{f}} \right) 
  % ---
  = \left\{ \underline{\underline{c^{(i)}}} = \mathscr{C}^{(i)} \left( \underline{\underline{f}} \right) \right\}_{i=0}^{I-1} 
  % ---
  = \underline{\mathfrak{C}}_{\underline{\phi}} \left( \underline{\underline{\underline{c}}} \right)  
  + \underline{\underline{\mathfrak{C}}}^*_{\underline{\underline{\psi}}}
  \underline{\underline{\mathfrak{C}}}_{\underline{\underline{\tilde\psi}}} \left( \underline{\underline{\underline{c}}} \right) 
 \end{align}
 which is equivalent to:
 \begin{align*}
  \underline{\underline{c^{(i)}}}
  &= \left\{ \underline{c^{(i)}_{l}} \right\}_{l=0}^{2^{i-1} L-1} \,,~
  \underline{c^{(i)}_{l}} = \underline{\mathfrak{C}}_{\underline{\phi}} \left( \underline{c^{(i)}_{l}} \right)  
  + \underline{\underline{\mathfrak{C}}}^*_{\underline{\underline{\psi}}}
  \underline{\underline{\mathfrak{C}}}_{\underline{\underline{\tilde\psi}}} \left( \underline{c^{(i)}_{l}} \right)
 \end{align*}
 for $i = 0 \,, \ldots \,, I-1$
 where unknown filter banks are 
 $\gamma_c := \left\{ \left( \underline{\underline{\underline{\theta^{1(i)}}}} \,, \underline{\underline{\underline{\theta^{2(i)}}}} \right) \right\}_{i=1}^I$.

\end{prop}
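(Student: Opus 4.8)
The plan is to derive Equation~\ref{eq:SkipConnectingMapping} by applying the exact Riesz-Quincunx reconstruction of Proposition~\ref{prop:RieszQuincunxWavelet:FrameletDecomposition} to each signal produced by the encoder maps $\mathscr{C}^{(i)}$, and then to pass from the collective identity to the componentwise one using the channel-diagonal action of the decomposition operators.

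First I would fix a scale $i$ and regard the encoder output $\underline{\underline{c^{(i)}}} = \mathscr{C}^{(i)}(\underline{\underline{f}})$ as a multi-channel image in $\mathbb{R}^{2^{-(i-1)} n_1 \times 2^{-(i-1)} n_2 \times 2^{(i-1)} L}$, which is exactly the input format required by Proposition~\ref{prop:RieszQuincunxWavelet:FrameletDecomposition}. Taking the proximity step $\text{prox}_{\mu\mathscr{P}}$ in Equation~\ref{eq:RieszQuincunxWavelet} to be the identity (no thresholding), that proposition yields the exact reconstruction $\underline{\underline{c^{(i)}}} = \underline{\mathfrak{C}}_{\underline{\phi}}(\underline{\underline{c^{(i)}}}) + \underline{\underline{\mathfrak{C}}}^*_{\underline{\underline{\psi}}}\underline{\underline{\mathfrak{C}}}_{\underline{\underline{\tilde\psi}}}(\underline{\underline{c^{(i)}}})$, because by the unity condition~\ref{eq:RieszQuincunxUnityCondition} the decomposition collapses to $n_1 \mathscr{H}^\dagger_{n_1 \mid n_2}(\mathscr{H}_{n_1 \mid n_2}(\cdot)\,\frac{1}{n_1}\text{Id}) = \mathscr{H}^\dagger_{n_1 \mid n_2}\mathscr{H}_{n_1 \mid n_2} = \text{Id}$. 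Collecting these identities over $i = 0, \ldots, I-1$ and writing $\underline{\underline{\underline{c}}} = \{\underline{\underline{c^{(i)}}}\}$ gives the first displayed chain of equalities in the statement.

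Next I would establish the equivalence with the componentwise decomposition. The key observation is that the filter-bank matrices $\underline{\Phi}$ and $\underline{\tilde{\Psi}}\,\underline{\check{\Psi}^\text{T}}$ act only on the spatial indices of a signal and are applied identically to every channel $\underline{c^{(i)}_l}$, so the operator $\underline{\mathfrak{C}}_{\underline{\phi}} + \underline{\underline{\mathfrak{C}}}^*_{\underline{\underline{\psi}}}\underline{\underline{\mathfrak{C}}}_{\underline{\underline{\tilde\psi}}}$ is block-diagonal with respect to the channel splitting $\underline{\underline{c^{(i)}}} = \{\underline{c^{(i)}_l}\}_{l=0}^{2^{i-1}L-1}$. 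Since the unity condition~\ref{eq:RieszQuincunxUnityCondition} makes this operator act as the identity on each single channel independently, the reconstruction holds channel by channel, which is precisely $\underline{c^{(i)}_l} = \underline{\mathfrak{C}}_{\underline{\phi}}(\underline{c^{(i)}_l}) + \underline{\underline{\mathfrak{C}}}^*_{\underline{\underline{\psi}}}\underline{\underline{\mathfrak{C}}}_{\underline{\underline{\tilde\psi}}}(\underline{c^{(i)}_l})$ for each $l$.

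The step I expect to be the main obstacle is justifying the channel decoupling rigorously: the convolution operators are built from the isotropic family-matrix convolution introduced for the encoder, which in general mixes channels, so one must verify that the specific combination $\underline{\Phi} + \underline{\tilde{\Psi}}\,\underline{\check{\Psi}^\text{T}}$ inherited from Proposition~\ref{prop:RieszQuincunxWavelet:FrameletDecomposition} reduces to a per-channel identity rather than coupling distinct $\underline{c^{(i)}_l}$. This is essentially a bookkeeping check that the Hankel embeddings $\mathscr{H}_{n_1 \mid n_2}$ and $\mathscr{H}^\dagger_{n_1 \mid n_2}$ respect the channel-stacking convention of Section~\ref{sec:NotationsDefinitions}; once the unity condition~\ref{eq:RieszQuincunxUnityCondition} is seen to hold blockwise along the channel axis, the collective and componentwise forms are immediately interchangeable.
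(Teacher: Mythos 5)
Your proposal is correct and takes essentially the same route as the paper: define $\mathscr{C}^{(i)}$ by composing the encoder maps $\mathscr{T}^{(i)}$, then invoke the exact reconstruction of Proposition~\ref{prop:RieszQuincunxWavelet:FrameletDecomposition} (the unity condition $\underline{\Phi} + \underline{\tilde{\Psi}} \, \underline{\check{\Psi}^\text{T}} = \frac{1}{n_1}\text{Id}$, with the proximal step taken as the identity) applied channelwise to each $\underline{c^{(i)}_{l}}$ at the correspondingly rescaled spatial dimensions. The channel decoupling you flag as the main obstacle is indeed only bookkeeping, since the operators $\underline{\mathfrak{C}}_{\underline{\phi}}$ and $\underline{\underline{\mathfrak{C}}}^*_{\underline{\underline{\psi}}} \underline{\underline{\mathfrak{C}}}_{\underline{\underline{\tilde\psi}}}$ are by construction applied identically and independently to every channel of the skip-connecting tensor, exactly as the componentwise form of the proposition asserts.
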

\begin{proof}
We provide a proof of Proposition~\ref{prop:SkipConnectingMapping} in Section~5.4 in SM.
\end{proof}
%

% ----------------------------

%
{\bfseries a.2. Variational term (encoder):}
We firstly define linear mappings as a perceptron network for latent variable: 
\begin{align}  \label{eq:MappingLatentVariable}
 \mathscr{F}^\mu(\cdot) &= \underline{W^\mu} \text{vec}(\cdot) + b^\mu \,,
 \notag
 \\
 \mathscr{F}^\sigma(\cdot) &= \underline{W^\sigma} \text{vec}(\cdot) + b^\sigma \,:\, \mathbb R^{2^{-i-1} n_1 \times 2^{-i-1} n_2 \times 2^i L} \rightarrow \mathbb R^d \,;
\end{align}
where 
$\underline{W^\mu} \,, \underline{W^\sigma} \in \mathbb R^{d \times 2^{-(I+1)} n_1 n_2 L}$ and 
$b^\mu \,, b^\sigma \in \mathbb R^{d}$
and $d$ is latent dimension and $\text{vec}(\cdot)$ is a vectorize operation. 
Denote Hadamard product $\odot$ and model's parameters as:
\begin{align*}
 \gamma_c \cup \gamma_s \cup \gamma_m
 := \left\{ \underline{W^\mu} \,, b^\mu \,, \underline{W^\sigma} \,, b^\sigma \,, \left\{ \left( \underline{\underline{\underline{\theta^{1(i)}}}} \,, \underline{\underline{\underline{\theta^{2(i)}}}} \right) \right\}_{i=1}^I
 \right\} \,;
\end{align*}
then, we have the following proposition for a variational term:
\begin{prop}  \label{prop:RQUnetVAE:VariationalTerm}
 A latent variable in RQUnet-VAE is sampled from a distribution:
 \begin{align}  \label{eq:latentscaling:VUAE:1}
  z &= \mathscr{M}_{\gamma_m} \left( \underline{\underline{f}} \right) 
  + \mathscr{S}_{\gamma_s}^\frac{1}{2} \left( \underline{\underline{f}} \right) \odot \epsilon
  \,,~ \epsilon \stackrel{\text{i.i.d.}}{\sim} \mathcal{N}_d \left( \mathbf{0}_d \,, \text{Id}_d \right) 
 \end{align} 
 where maps for the mean and variance for the latent variable are:
 \begin{align}  \label{eq:Variation:mean}
  \mathscr{M}_{\gamma_m} := \mathscr{F}^\mu \circ \mathscr{R}_p \circ \underline{\Xi^{(I),\text{T}}} \mathscr{T}^{(I)} \circ \ldots \circ \underline{\Xi^{(1),\text{T}}} \mathscr{T}^{(1)} \,,
  % --
  \\  \label{eq:Variation:variance}
  \mathscr{S}_{\gamma_s}^\frac{1}{2} := \mathscr{F}^\sigma \circ \mathscr{R}_p \circ \underline{\Xi^{(I),\text{T}}} \mathscr{T}^{(I)} \circ \ldots \circ \underline{\Xi^{(1),\text{T}}} \mathscr{T}^{(1)} \,.
 \end{align}
 
\end{prop}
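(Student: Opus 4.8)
The plan is to recognize this proposition as the construction of the Gaussian encoder together with the reparameterization trick of the variational auto-encoder \cite{kingma2013auto}, so the argument splits into two largely independent parts: first, exhibiting the two deterministic maps $\mathscr{M}_{\gamma_m}$ and $\mathscr{S}_{\gamma_s}^\frac{1}{2}$ as well-defined compositions of encoder building blocks already introduced; and second, verifying that the location-scale sample $z = \mathscr{M}_{\gamma_m}(\underline{\underline{f}}) + \mathscr{S}_{\gamma_s}^\frac{1}{2}(\underline{\underline{f}}) \odot \epsilon$ realizes a draw from the intended diagonal Gaussian posterior.

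For the first part, I would start from the analysis operator $\mathscr{T}^{(i)}$ of Equation~(\ref{eq:UnetEncoderOperator:prop}) and the local-basis contractions $\underline{\Xi^{(i),\text{T}}}$ already used in Proposition~\ref{prop:SkipConnectingMapping}, and compose them along the full encoder path from scale $i=1$ down to the bottleneck scale $i=I$, followed by the reshape/pool operator $\mathscr{R}_p$. Tracking the domains and codomains declared in Equation~(\ref{eq:UnetEncoderOperator:prop}), where each stage halves the spatial resolution and multiplies the directional channel count, the common trunk $\mathscr{R}_p \circ \underline{\Xi^{(I),\text{T}}}\mathscr{T}^{(I)} \circ \ldots \circ \underline{\Xi^{(1),\text{T}}}\mathscr{T}^{(1)}$ lands in the flattened bottleneck space on which the two perceptron heads of Equation~(\ref{eq:MappingLatentVariable}) act. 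Post-composing this trunk with the affine map $\mathscr{F}^\mu$ then yields $\mathscr{M}_{\gamma_m}$ of Equation~(\ref{eq:Variation:mean}) and post-composing with $\mathscr{F}^\sigma$ yields $\mathscr{S}_{\gamma_s}^\frac{1}{2}$ of Equation~(\ref{eq:Variation:variance}), both mapping into $\mathbb R^d$; the parameter set collects exactly the trunk filter banks together with $\underline{W^\mu}, b^\mu, \underline{W^\sigma}, b^\sigma$, as listed just before the proposition.

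For the second part, I would invoke the standard variational assumption that the approximate posterior is a diagonal Gaussian $q_\gamma(z \mid \underline{\underline{f}}) = \mathcal{N}_d\bigl(z;\, \mathscr{M}_{\gamma_m}(\underline{\underline{f}}),\, \text{diag}(\mathscr{S}_{\gamma_s}(\underline{\underline{f}}))\bigr)$, and then appeal to the affine (location-scale) stability of the Gaussian family: if $\epsilon \stackrel{\text{i.i.d.}}{\sim} \mathcal{N}_d(\mathbf{0}_d, \text{Id}_d)$ and we set $z = \mu + \Sigma^\frac{1}{2}\odot\epsilon$ with $\mu = \mathscr{M}_{\gamma_m}(\underline{\underline{f}})$ and the diagonal square root given coordinatewise by $\mathscr{S}_{\gamma_s}^\frac{1}{2}(\underline{\underline{f}})$, then the pushforward of the standard normal under this affine map is precisely $\mathcal{N}_d(\mu, \text{diag}(\mathscr{S}_{\gamma_s}(\underline{\underline{f}})))$. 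Because the covariance is diagonal, the matrix square root reduces to the coordinatewise Hadamard scaling, so the sample is exactly Equation~(\ref{eq:latentscaling:VUAE:1}); this reparameterization additionally renders the draw differentiable in the network parameters, which is what the later training objective requires.

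The mathematical content here is light: the only genuine work is the bookkeeping in the first part, namely confirming that the declared intermediate dimensions chain consistently so that the flattened trunk output matches the input dimension assumed by $\underline{W^\mu}$ and $\underline{W^\sigma}$ in Equation~(\ref{eq:MappingLatentVariable}). The distributional statement of the second part is an immediate consequence of the location-scale property of Gaussians and needs no further estimate; it is the diagonality of the covariance that licenses writing the scale action as the Hadamard product $\mathscr{S}_{\gamma_s}^\frac{1}{2}(\underline{\underline{f}})\odot\epsilon$ rather than a full matrix multiplication.
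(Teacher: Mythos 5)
Your proposal is correct and takes essentially the same route as the paper's proof (given in Section~5.5 of the SM): you construct $\mathscr{M}_{\gamma_m}$ and $\mathscr{S}_{\gamma_s}^{\frac{1}{2}}$ as the encoder trunk $\mathscr{R}_p \circ \underline{\Xi^{(I),\text{T}}}\mathscr{T}^{(I)} \circ \ldots \circ \underline{\Xi^{(1),\text{T}}}\mathscr{T}^{(1)}$ post-composed with the perceptron heads $\mathscr{F}^\mu$ and $\mathscr{F}^\sigma$ of Equation~(\ref{eq:MappingLatentVariable}), and you derive the sampling formula~(\ref{eq:latentscaling:VUAE:1}) from the location-scale pushforward of $\mathcal{N}_d(\mathbf{0}_d,\text{Id}_d)$, exactly the reparameterization argument underlying the Gaussian approximation in Equation~(\ref{eq:Encoder:approxBayes}). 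The two substantive checks you identify --- the dimension bookkeeping along the encoder chain and the diagonality of the covariance licensing the Hadamard form of the scale action --- are precisely what the paper's construction rests on, so no gap remains.
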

\begin{proof}
We provide a proof of Proposition~\ref{prop:RQUnetVAE:VariationalTerm} in Section~5.5 in SM.
\end{proof}
%

% ===========

%
{\bfseries a.3. Decoder:}
Firstly, we define a mapping in the decoder: 
{\small
\begin{align*}
 \tilde{\mathscr{T}}^{(i)} &= \left(
 \mathcal{Z}_1^{-1} \,,
 \text{prox}_\text{ReLU} \circ \underline{\underline{\underline{\mathfrak{C}}}}^\text{iso}_{\underline{\underline{\underline{\tilde{\theta}^{1(i+1)}}}}} \circ \text{prox}_\text{ReLU} \circ \underline{\underline{\underline{\mathfrak{C}^\text{iso}}}}_{\underline{\underline{\underline{\tilde{\theta}^{2(i+1)}}}}} \circ \mathscr{P}_{\underline{\tilde{\Xi}^{(i)}_{\text{aug}}}} \right),
 \\  % ---
 \tilde{\mathscr{T}}^{(0)} &= \text{prox}_\text{ReLU} \circ \underline{\underline{\underline{\mathfrak{C}}}}^\text{iso}_{\underline{\underline{\underline{\tilde{\theta}^{1(1)}}}}} \circ \text{prox}_\text{ReLU} \circ \underline{\underline{\underline{\mathfrak{C}^\text{iso}}}}_{\underline{\underline{\underline{\tilde{\theta}^{2(1)}}}}} \circ \mathscr{P}_{\underline{\tilde{\Xi}^{(0)}_{\text{aug}}}}  \,,
\end{align*}
}
where $\mathcal{Z}_1^{-1}$ is scale-delayed 1 step back of the 1st argument and a concatenate layer is defined as an operation
$\mathscr{P}_{\underline{\tilde{\Xi}^{(i)}_{\text{aug}}}} \left( \underline{\underline{\widehat{c}^{(i)}_{\text{aug}}}} \right)
= \begin{pmatrix} \underline{\underline{c^{(i)}}} & \mathscr{B} \circ \underline{\tilde{\Xi}^{(i)}} \, \underline{\underline{\widehat{s}^{(i+1)}}} \end{pmatrix}$,
i.e. input signal at concatenation layer includes bypass signal $c^{(i)}$ and unpooled-batchnormed lowpass signal $\mathscr{B} \circ \underline{\tilde{\Xi}^{(i)}} \, \underline{\underline{\widehat{s}^{(i+1)}}}$.

Next, we define an iterated mapping for a skip-connecting signal $\underline{\underline{\underline{c}}}$ and a lowpass signal $\underline{\underline{s^{(I)}}}$ at scale $I$ as:
\begin{align*}
 \tilde{\mathscr{T}}_I \left( \underline{\underline{\underline{c}}} \,, \underline{\underline{s^{(I)}}} \right) = \tilde{\mathscr{T}}^{(0)} \circ \tilde{\mathscr{T}}^{(1)} \circ \ldots \circ \tilde{\mathscr{T}}^{(I-1)} \begin{pmatrix} \underline{\underline{c^{(I-1)}}} & \underline{\underline{s^{(I)}}} \end{pmatrix} \,.
\end{align*}
Note that $\underline{\underline{s^{(I)}}}$ is reconstructed from a latent variable $z$.
Denote unknown parameters in a decoder as:
\begin{align*}
 \alpha := \left\{ \underline{W^s} \,, b^s \,, \underline{\underline{\underline{\theta^0}}} \,, \left\{ \left( \underline{\underline{\underline{\tilde\theta^{1(i)}}}} \,, \underline{\underline{\underline{\tilde\theta^{2(i)}}}} \right) \right\}_{i=1}^I \right\} \,.
\end{align*}
Then, we have the following proposition:
\begin{prop}  \label{prop:RQUnetVAE:decoder}
 Given a skip-connecting signal $\underline{\underline{\underline{c}}}$ and a latent variable $y$ from an encoder, 
 a decoder mapping in RQUNet-VAE is defined as
 \begin{align}  \label{eq:DecoderMapping}
  \mathscr{D}_\alpha \left( \underline{\underline{\underline{c}}} \,, z \right) &:= \text{prox}_\text{ReLU} \circ \underline{\underline{\underline{\mathfrak C}}}^\text{iso}_{\underline{\underline{\underline{\theta^0}}}} 
  \circ \tilde{\mathscr{T}}_I \left( \underline{\underline{\underline{c}}} \,, \text{uvec} \left( \underline{W^s} \, z + b^s \right) \right) \,.
 \end{align}

\end{prop}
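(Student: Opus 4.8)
The plan is to construct $\mathscr{D}_\alpha$ by composing the scale-wise synthesis maps $\tilde{\mathscr{T}}^{(i)}$ already introduced above, and to verify that this composition is well-defined on the stated domains, ending in a reconstructed multi-channel image in $\mathbb R^{\abs{\Omega} \times P}$. The argument mirrors the encoder construction of Proposition~\ref{prop:SkipConnectingMapping}, but runs in the synthesis (upsampling) direction, so the essential task is dimensional bookkeeping across scales together with a check that each concatenation layer correctly fuses the skip-connecting signal with the unpooled lowpass signal.

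First I would reconstruct the bottleneck lowpass signal from the latent variable: applying the affine perceptron $z \mapsto \underline{W^s} z + b^s$ and then reshaping by $\text{uvec}$ produces $\underline{\underline{s^{(I)}}} \in \mathbb R^{2^{-I} n_1 \times 2^{-I} n_2 \times 2^I L}$, a tensor living in the coarsest-scale space of the encoder. This supplies the second argument of $\tilde{\mathscr{T}}_I$, while the first argument is the full skip-connecting stack $\underline{\underline{\underline{c}}}$ delivered by $\mathscr{C}_{\gamma_c}$ in Proposition~\ref{prop:SkipConnectingMapping}.

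Next I would proceed by downward induction on the scale index $i = I-1, \ldots, 0$, tracking the domain and codomain of the partial composition $\tilde{\mathscr{T}}^{(i)} \circ \cdots \circ \tilde{\mathscr{T}}^{(I-1)}$. At scale $i$ the concatenation layer $\mathscr{P}_{\underline{\tilde{\Xi}^{(i)}_{\text{aug}}}}$ takes the signal $\underline{\underline{\widehat{s}^{(i+1)}}}$ arriving from the coarser level, applies the dual local basis $\underline{\tilde{\Xi}^{(i)}}$ and batch-normalization $\mathscr{B}$ to unpool it to the grid $2^{-i} n_1 \times 2^{-i} n_2$, and stacks it beside the bypass signal $\underline{\underline{c^{(i)}}}$; the scale-delay operator $\mathcal{Z}_1^{-1}$ then manages the skip stack so that the next layer consumes the correct bypass signal at the next finer scale. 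Since the two isotropic convolution-and-ReLU blocks $\text{prox}_\text{ReLU} \circ \underline{\underline{\underline{\mathfrak C}}}^\text{iso}$ preserve the spatial grid, the induction closes with the output of $\tilde{\mathscr{T}}^{(0)}$ living at full resolution $n_1 \times n_2$, after which the final convolution $\text{prox}_\text{ReLU} \circ \underline{\underline{\underline{\mathfrak C}}}^\text{iso}_{\underline{\underline{\underline{\theta^0}}}}$ collapses the channel dimension to $P$.

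The main obstacle I anticipate is the consistency of the unpooling with the encoder's pooling: one must verify that the dual bases $\underline{\tilde{\Xi}^{(i)}}$ satisfy unity conditions of the type in \eqref{eq:OrthonormalFrameConditions}, so that $\underline{\tilde{\Xi}^{(i)}}$ acts as a genuine left inverse of the analysis projection $\underline{\Xi^{(i),\text{T}}}$ appearing in $\mathscr{T}^{(i)}$ from \eqref{eq:UnetEncoderOperator:prop}. Once this inverse relationship between $\underline{\Xi^{(i)}}$ and $\underline{\tilde{\Xi}^{(i)}}$ is established, the remaining steps — tracking dimensions through $\mathcal{Z}_1^{-1}$ and confirming that each concatenation matches the channel counts $2^i L$ scale by scale — are routine, and the displayed formula for $\mathscr{D}_\alpha$ follows as the telescoped composition.
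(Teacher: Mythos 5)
Your dimensional bookkeeping is the right reading of this proposition, and it is essentially all the statement requires: Proposition~\ref{prop:RQUnetVAE:decoder} is a construction, and its proof amounts to verifying well-definedness of the composition --- that $\text{uvec}\left( \underline{W^s} z + b^s \right)$ lands in the coarsest-scale feature space (check this shape against $\underline{W^\mu} \in \mathbb R^{d \times 2^{-(I+1)} n_1 n_2 L}$ from \eqref{eq:MappingLatentVariable}, which includes the extra pooling $\mathscr{R}_p \circ \underline{\Xi^{(I),\text{T}}}$ after $\mathscr{T}^{(I)}$, so your stated shape for $\underline{\underline{s^{(I)}}}$ is off by that factor), that each concatenation $\mathscr{P}_{\underline{\tilde{\Xi}^{(i)}_{\text{aug}}}}$ receives a bypass signal $\underline{\underline{c^{(i)}}}$ and an unpooled, batch-normalized lowpass signal on a matching grid with matching channel count, with $\mathcal{Z}_1^{-1}$ advancing the skip stack, and that the final block $\text{prox}_\text{ReLU} \circ \underline{\underline{\underline{\mathfrak C}}}^\text{iso}_{\underline{\underline{\underline{\theta^0}}}}$ returns a $P$-channel image. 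Up to this point your downward induction over scales matches the intended argument.

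The genuine flaw is your declared ``main obstacle'': requiring that $\underline{\tilde{\Xi}^{(i)}}$ satisfy a unity condition of the type \eqref{eq:OrthonormalFrameConditions} and act as a left inverse of the analysis projection $\underline{\Xi^{(i),\text{T}}}$ in \eqref{eq:UnetEncoderOperator:prop}. That condition is neither needed for this proposition nor generally true in this model. The statement makes no claim that $\mathscr{D}_\alpha$ inverts the encoder; the decoder filter banks and the bases used for unpooling are learnable parameters collected in $\alpha$, and any reconstruction property is obtained only a posteriori through the minimization of Proposition~\ref{prop:RQUnetVAEparameters} --- indeed, the paper only \emph{assumes} perfect reconstruction at an idealized optimum when introducing the generalized Besov space \eqref{eq:GeneralizedBesovSpace}. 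Exact layerwise duality cannot hold anyway: $\mathscr{T}^{(i)}$ interposes ReLU nonlinearities and batch normalization, so no choice of $\underline{\tilde{\Xi}^{(i)}}$ inverts $\underline{\Xi^{(i),\text{T}}} \mathscr{T}^{(i)}$; in the UNet design the information lost to pooling is compensated by the skip connections, not by an inverse unpooling. Since your write-up makes the conclusion contingent on establishing this inverse relationship, the proof as proposed hinges on a condition that is both unnecessary and false; deleting that requirement and keeping the dimension-tracking induction yields the correct and complete argument.
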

\begin{proof}
We provide a proof of Proposition~\ref{prop:RQUnetVAE:decoder} in Section~5.6 in SM.
\end{proof}
%
% ==========

% 
\paragraph{RQUNet-VAE expansion}
An auto-encoder is recast as a latent factor model (a decoder) whose latent variable is computed from the observed data as an encoder:
\begin{align}  
 \label{eq:Encoder:1}
 z &= \mathscr{M}_{\gamma_m} \left(\underline{\underline{f}} \right) + \mathscr{S}_{\gamma_s}^\frac{1}{2} \left( \underline{\underline{f}} \right) \odot \epsilon \,,~
 \epsilon \stackrel{\text{i.i.d.}}{\sim} \mathcal{N}_d( \mathbf{0}_d \,, \text{Id}_d) \,,~
 % ---
 \\ \label{eq:Encoder:4} 
 \underline{\underline{\underline{c}}} &= \mathscr{C}_{\gamma_c} \left(\underline{\underline{f}} \right) \,,
 \\ 
 \label{eq:Decoder:1}
 \underline{\underline{\tilde{f}}} &= \mathscr{D}_\alpha \left( \underline{\underline{\underline{c}}} \,, z \right) + \sigma \underline{\underline{e}} \,,~ \underline{\underline{e}} \stackrel{\text{i.i.d.}}{\sim} \mathcal{N}_{\abs{\Omega} \times P}(0 \,, \text{Id}) \,,~ 
 \\ \label{eq:Encoder:3}
 \underline{\underline{\tilde{f}}} &= \underline{\underline{f}} \in \textfrak{F} \,,
\end{align}
with a known standard deviation $\sigma > 0$ and the data set $\textfrak{F} = \left\{ \underline{\underline{f_i}} \right\}_{i=1}^{T} \subset \mathbb R^{\abs{\Omega} \times P}$.
Combining Equation~\ref{eq:Decoder:1},  Equation~\ref{eq:Encoder:1},  Equation~\ref{eq:Encoder:4}, and  Equation~\ref{eq:Encoder:3}, we obtain a variational auto-encoder:
\begin{align}  \label{eq:autoencodersummary}  
 \underline{\underline{f}} &= \mathscr{D}_\alpha \left( \mathscr{C}_{\gamma_c} \left(\underline{\underline{f}} \right) \,, \mathscr{M}_{\gamma_m} \left(\underline{\underline{f}} \right) + \mathscr{S}_{\gamma_s}^\frac{1}{2} \left( \underline{\underline{f}} \right) \odot \epsilon \right) 
 + \sigma \underline{\underline{e}} \,,
\end{align}
with standard normal random variables $\epsilon \stackrel{\text{i.i.d.}}{\sim} \mathcal{N}_d( \mathbf{0}_d \,, \text{Id}_d)$ and $\underline{\underline{e}} = \left[ e_{l,c} \right]_{l \in \Omega}^{c = 1, \ldots, p} \,, e_{l,c} \stackrel{\text{i.i.d.}}{\sim} \mathcal{N}(0 \,, 1)$.

The auto-encoder system (Equations \ref{eq:Decoder:1}-\ref{eq:Encoder:3}) is recast as Bayesian inference:
\begin{align}  \label{eq:Decoder:Bayes}
 \underline{\underline{\tilde{f}}} \mid z  \stackrel{\text{i.i.d.}}{\sim}& \mathcal{N}_{n_1 \times n_2 \times P} \left( \mathscr{D}_\alpha \left( \underline{\underline{\underline{c}}} \,, z \right) \,, \sigma^2 \text{Id} \right) 
 \notag
 \\
 &= \mathbb{H}_\alpha\left(\underline{\underline{\tilde{f}}} \mid z \right) = \mathtt{h}_\alpha\left(\underline{\underline{\tilde{f}}} \mid z\right) \text{d}\underline{\underline{\tilde{f}}} \,,~
 \\ \label{eq:Encoder:approxBayes}
 z \mid \underline{\underline{f}} \stackrel{\text{i.i.d.}}{\sim}& 
 \mathbb{K}_\alpha\left(z \mid \underline{\underline{f}}\right) = \mathtt{k}_\alpha\left(z \mid \underline{\underline{f}}\right) \text{d}y
 \notag
 \\ %
 &\approx \, \mathcal{N}_d \left( \mathscr{M}_{\gamma_m}\left(\underline{\underline{f}}\right) \,, \text{diag} \left\{ \mathscr{S}_{\gamma_s} \left( \underline{\underline{f}} \right) \right\} \right) \,.
\end{align}
An explanation for the above hierarchical model is: an observed signal $\underline{\underline{\tilde{f}}}$ is assumed to be sampled from a normal distribution whose latent variable $z$ is sampled from an unknown distribution $\mathbb{K}_\alpha$ which is approximated by a normal distribution (Equation~\ref{eq:Encoder:approxBayes})
parameterized by $(\gamma_m, \gamma_s)$.
The key idea for $z \mid \underline{\underline{f}} \sim \mathbb{K}_\alpha$ to depend on a parameter $\alpha$ is because of the auto-encoder (Equations \ref{eq:Encoder:1}-\ref{eq:Encoder:3}), i.e. $\underline{\underline{\tilde{f}}} = \underline{\underline{f}}$. To see this, Bayes' rule for a conditional density of random variable $z \mid \underline{\underline{f}}$ is:
\begin{align*}
 \mathtt{p} \left( z \mid \underline{\underline{f}} \right) &= \frac{\mathtt{p}\left(\underline{\underline{f}} \mid z \right) \mathtt{p}(z)}{\mathtt{p}\left(\underline{\underline{f}}\right)}   
 = \frac{\mathtt{p}\left(\underline{\underline{\tilde{f}}} \mid z\right) \mathtt{p}(z)}{\mathtt{p}\left(\underline{\underline{f}}\right)}
 \\ %
 &\stackrel{(\ref{eq:Decoder:Bayes})}{=}
 \frac{\mathtt{h}_\alpha\left(\underline{\underline{\tilde{f}}} \mid z\right) \mathtt{p}(z)}{\mathtt{p}\left(\underline{\underline{f}}\right)}
 := \mathtt{k}_\alpha\left(z \mid \underline{\underline{f}}\right) \,,
\end{align*}
which depends on a decoder's parameter $\alpha$.
Since the above density is intractible because of the incomputable integral in a marginal distribution $\mathtt{p}\left(\underline{\underline{f}}\right) = \int_{\mathbb R^d} \mathtt{h}_\alpha\left(\underline{\underline{\tilde{f}}} \mid z\right) \mathtt{p}(z) \text{d}z$, 
a distribution $\mathbb{K}_\alpha$ is approximated by a normal distribution:
\begin{align}  \label{eq:BayesRule:2}
 \mathtt{k}_\alpha\left(z \mid \underline{\underline{f}}\right) &= \frac{\mathtt{h}_\alpha\left(\underline{\underline{f}} \mid z\right) \mathtt{p}(z)}{\mathtt{p}\left(\underline{\underline{f}}\right)}
 \notag
 \\
 &\approx \mathcal{N}_d \left( z \,; \mathscr{M}_{\gamma_m}\left(\underline{\underline{f}}\right) \,, \text{diag} \left\{ \mathscr{S}_{\gamma_s} \left( \underline{\underline{f}} \right) \right\} \right) \,.
\end{align}
Denote variance vector as
\begin{align*}
 \mathscr{S}_{\gamma_s} \left( \underline{\underline{f}} \right) &= \mathscr{S}_{\gamma_s}^\frac{1}{2} \left( \underline{\underline{f}} \right)
 \odot \mathscr{S}_{\gamma_s}^\frac{1}{2} \left( \underline{\underline{f}} \right) 
 \notag
 \\
 &= \begin{pmatrix} \mathscr{S}_{\gamma_s} \left( \underline{\underline{f}} \right)_1 & \hdots & \mathscr{S}_{\gamma_s} \left( \underline{\underline{f}} \right)_d \end{pmatrix}^\text{T} \in \mathbb R^d \,.
\end{align*}
Choose a standard normal prior distribution $\mathbb{P}(z) = \mathtt{p}(z) \text{d}z = \mathcal{N}_d \left( \mathbf{0}_d \,, \text{Id}_d \right)$,
we have the following proposition for finding model's parameters: 
\begin{prop}  \label{prop:RQUnetVAEparameters}
 Unknown parameters in RQUnet-VAE are obtained from the following minimization problem:
 \begin{align}  \label{eq:RQUnetVAELossFunc:prop}
  \left( \gamma_c^\dagger \,, \gamma_m^\dagger \,, \gamma_s^\dagger \,, \alpha^\dagger \right) 
  &= \argmin \mathscr{L} \left( \gamma_c \,, \gamma_m, \gamma_s \,, \alpha \right)
 \end{align}
 where:
 {\small
 \begin{align*}
  &\mathscr{L} (\cdot) := \frac{1}{2 \sigma^2} \sum_{i=1}^{T} \mathbb E_{\epsilon \sim \mathcal{N}(0 \,, \text{Id})} \bigg[
  \\&
  \norm{\underline{\underline{f_i}} 
  - \mathscr{D}_\alpha \left( \mathscr{C}_{\gamma_c} \left(\underline{\underline{f_i}} \right) \,, \mathscr{M}_{\gamma_m} \left( \underline{\underline{f_i}} \right) + \mathscr{S}_{\gamma_s}^\frac{1}{2} \left( \underline{\underline{f_i}} \right) \odot \epsilon \right)}^2_{\ell_2} \bigg]
  \\ %
  &+ \sum_{i=1}^{T} \text{KL} \left[ \mathcal{N}_d \left( \mathscr{M}_{\gamma_m}\left(\underline{\underline{f}}\right) \,, \text{diag} \left\{ \mathscr{S}_{\gamma_s} \left( \underline{\underline{f}} \right) \right\} \right) \mid \mid \mathcal{N}_d \left( \mathbf{0}_d \,, \text{Id}_d \right) \right]
 \end{align*}
 }
 and KL-divergence is defined as:
 {\small
 \begin{align}  \label{eq:RQUnetVAE:KLloss2}
 &\text{KL} \left[ \cdot \mid \mid \cdot \right]
 \notag
 \\ %
 &= \frac{1}{2} \left( 
 \norm{\mathscr{M}_{\gamma_m}\left(\underline{\underline{f}}\right)}^2_{\ell_2} - d 
 + \sum_{i=1}^d \left( \mathscr{S}_{\gamma_s} \left( \underline{\underline{f}} \right)_i
 - \log \mathscr{S}_{\gamma_s} \left( \underline{\underline{f}} \right)_i \right)
 \right) \,.
\end{align}
}
\end{prop}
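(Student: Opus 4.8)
The plan is to recognize Proposition~\ref{prop:RQUnetVAEparameters} as the standard evidence-lower-bound (ELBO) derivation for a variational auto-encoder, specialized to the Gaussian decoder and the Gaussian variational posterior already fixed by the preceding constructions. I would begin from the marginal log-likelihood $\log \mathtt{p}(\underline{\underline{f}})$ of a single observation and introduce the variational family $\mathcal{N}_d(\mathscr{M}_{\gamma_m}(\underline{\underline{f}}), \text{diag}\{\mathscr{S}_{\gamma_s}(\underline{\underline{f}})\})$ from Equation~\ref{eq:Encoder:approxBayes} as a surrogate for the intractable posterior $\mathtt{k}_\alpha(z \mid \underline{\underline{f}})$. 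Using the Bayes'-rule identity already established in the excerpt, I would multiply and divide inside the logarithm by this surrogate density and take the expectation over $z$ drawn from it; this splits $\log \mathtt{p}(\underline{\underline{f}})$ into an expected log-conditional-likelihood term, minus a KL-divergence of the surrogate against the prior, plus a non-negative KL between the surrogate and the true posterior. Discarding the last (non-negative) term yields the ELBO, and maximizing the ELBO is equivalent to minimizing its negative, which I claim equals $\mathscr{L}$.

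Next I would evaluate the two surviving terms explicitly. For the reconstruction term, the Gaussian decoder $\underline{\underline{\tilde f}} \mid z \sim \mathcal{N}(\mathscr{D}_\alpha(\underline{\underline{c}}, z), \sigma^2 \text{Id})$ of Equation~\ref{eq:Decoder:Bayes} gives $\log \mathtt{h}_\alpha(\underline{\underline{f}} \mid z) = -\tfrac{1}{2\sigma^2}\norm{\underline{\underline{f}} - \mathscr{D}_\alpha(\underline{\underline{c}}, z)}_{\ell_2}^2$ up to a $z$-independent constant. I would then invoke the reparameterization $z = \mathscr{M}_{\gamma_m}(\underline{\underline{f}}) + \mathscr{S}_{\gamma_s}^{1/2}(\underline{\underline{f}}) \odot \epsilon$ with $\epsilon \sim \mathcal{N}_d(\mathbf{0}_d, \text{Id}_d)$ from Proposition~\ref{prop:RQUnetVAE:VariationalTerm} to rewrite the expectation over the surrogate posterior as an expectation over $\epsilon$, recovering exactly the first line of $\mathscr{L}$. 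For the KL term I would use the closed form between diagonal-covariance Gaussians: for $\mathcal{N}_d(\mu, \text{diag}\{s\})$ against $\mathcal{N}_d(\mathbf{0}_d, \text{Id}_d)$ one has $\text{KL} = \tfrac{1}{2}\big(\norm{\mu}_{\ell_2}^2 - d + \sum_{i=1}^d (s_i - \log s_i)\big)$, which reproduces Equation~\ref{eq:RQUnetVAE:KLloss2} after substituting $\mu = \mathscr{M}_{\gamma_m}(\underline{\underline{f}})$ and $s_i = \mathscr{S}_{\gamma_s}(\underline{\underline{f}})_i$.

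Finally, since the dataset $\textfrak{F} = \{\underline{\underline{f_i}}\}_{i=1}^T$ consists of i.i.d. samples, the joint objective is the sum of the per-sample ELBOs, producing the two summations over $i = 1, \ldots, T$ and hence the stated minimization over $(\gamma_c, \gamma_m, \gamma_s, \alpha)$. I expect the main obstacle to be conceptual rather than computational: justifying that the intractable marginal $\mathtt{p}(\underline{\underline{f}})$ is circumvented through the variational bound, and verifying that the auto-encoder constraint $\underline{\underline{\tilde f}} = \underline{\underline{f}}$ from Equation~\ref{eq:Encoder:3} correctly couples the decoder parameter $\alpha$ into the surrogate posterior, so that the joint minimization is well-posed. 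Once that coupling is made precise, the two closed-form evaluations and the reparameterization step are routine.
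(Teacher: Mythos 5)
Your proposal is correct and follows essentially the same route as the paper: the standard variational auto-encoder derivation that bounds the marginal log-likelihood by the ELBO using the Gaussian decoder of Equation~\ref{eq:Decoder:Bayes} and the Gaussian surrogate posterior of Equation~\ref{eq:Encoder:approxBayes}, rewrites the reconstruction expectation via the reparameterization $z = \mathscr{M}_{\gamma_m}(\underline{\underline{f}}) + \mathscr{S}_{\gamma_s}^{1/2}(\underline{\underline{f}}) \odot \epsilon$, evaluates the KL term in closed form for diagonal Gaussians against the standard normal prior, and sums over the i.i.d.\ dataset. Your identification of the $\alpha$-dependence of the true posterior through the Bayes'-rule identity as the point requiring care matches the paper's own discussion preceding the proposition.
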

\begin{proof}
We provide a proof of Proposition~\ref{prop:RQUnetVAEparameters} in Section~5.7 in SM.
\end{proof}
It is clear that the only one random variable in the above minimization is $\epsilon$, so the gradient descent method can be applied for model parameters $\left( \gamma_c \,, \gamma_m, \gamma_s \,, \alpha \right)$.
Since high order Riesz-Quincunx wavelet expansion is an identity operator, for training these model parameters, we remove this layer in the training procedure. But, later we use it in its smoothing version with the trained parameters $\left( \gamma_c^\dagger \,, \gamma_m^\dagger \,, \gamma_s^\dagger \,, \alpha^\dagger \right)$ to truncate small wavelet coefficients of signals in skip-connection. A summary of RQUNet-VAE with a training procedure (without Riesz-Quincunx wavelet) is shown in the SM for encoder and decoder, respectively.

% --------------------

% --------------------

%
\paragraph{Generalized Besov space by RQUNet-VAE and proximal operators}
We note that Equation~\ref{eq:RQUnetVAELossFunc:prop} is a non-convex minimization problem with potentially many local minimas. Assume an existence of minima $\left( \alpha^\dagger \,, \gamma_c^\dagger \,, \gamma_m^\dagger \,, \gamma_s^\dagger \right)$ such that a condition of perfect reconstruction occurs, i.e. $\underline{\underline{\tilde f}} = \underline{\underline{f}}$; then, the auto-encoder (Equations \ref{eq:Decoder:1}-\ref{eq:Encoder:3}) plays as a generalized wavelet expansion with learnable parameters:
\begin{align*}  
 &\lim_{\sigma \rightarrow 0}
 \norm{ \underline{\underline{f}} 
 - \mathscr{D}_{\alpha^\dagger} \left( \mathscr{C}_{\gamma_c^\dagger} \left(\underline{\underline{f}} \right) \,, \mathscr{M}_{\gamma_m^\dagger} \left(\underline{\underline{f}} \right) + \mathscr{S}_{\gamma_s^\dagger}^\frac{1}{2} \left( \underline{\underline{f}} \right) \odot \epsilon \right) 
 + \sigma \underline{\underline{e}} }^2_{\ell_2} 
 \\
 &= 0 \,,
\end{align*}
having its deterministic version:
\begin{align*}
 \underline{\underline{f}} = \mathscr{D}_{\alpha^\dagger} \left( \mathscr{C}_{\gamma_c^\dagger} \left(\underline{\underline{f}} \right) \,, \mathscr{M}_{\gamma_m^\dagger} \left(\underline{\underline{f}} \right) + \mathscr{S}_{\gamma_s^\dagger}^\frac{1}{2} \left( \underline{\underline{f}} \right) \odot \epsilon \right) \,,
\end{align*}
where encoder and decoder are forward and backward wavelet transform. Then, it induces a generalized Besov space:
\begin{align}  \label{eq:GeneralizedBesovSpace}
 \textfrak{B} := \left\{ \underline{\underline{f}} \in \mathbb R^{\abs{\Omega} \times P} \,:\, \norm{\underline{\underline{f}}}_{\textfrak{B}} := \mathscr{P} \circ \mathscr{C}_{\gamma_c} \left(\underline{\underline{f}} \right) < \infty \right\} \,,
\end{align}
where function $\mathscr{P}(\cdot)$ acts on the skip-connecting signal $\underline{\underline{\underline{c}}} = \mathscr{C}_{\gamma_c} \left(\underline{\underline{f}} \right)$ for  proximal mapping as described in Equation~13 in the supplemental material.

Then, given an image $\underline{\underline{f}} \in \mathbb R^{\abs{\Omega} \times P}$ with an expansion in a space $\textfrak{B}$ (\ref{eq:GeneralizedBesovSpace}),
we have a regularization in that space $\textfrak{B}$ with the optimally trained parameters as generalized wavelet smoothing via a proximal operator:
\begin{align}  \label{eq:RieszQuincunxWaveletSmoothing}
 \underline{\underline{\underline{\tilde{c}}}} &= \argmin_{\underline{\underline{\underline{w}}}} \left\{ \mathscr{P} \left( \underline{\underline{\underline{w}}} \right) + \frac{1}{2\mu} \norm{ \mathscr{C}_{\gamma_c} \left(\underline{\underline{f}} \right) 
 - \underline{\underline{\underline{w}}}
 }^2_\text{F} \right\}
 \notag
 \\  % ---
 &= \text{prox}_{\mu \mathscr{P}} \circ \mathscr{C}_{\gamma_c} \left(\underline{\underline{f}} \right)  
 \notag
 \\  % ---
 &= \underline{\mathfrak{C}}_{\underline{\phi}} \circ \mathscr{C}_{\gamma_c} \left(\underline{\underline{f}} \right)  
 + \underline{\underline{\mathfrak{C}}}^*_{\underline{\underline{\psi}}}
 \circ \text{prox}_{\mu\mathscr{P}} \circ
 \underline{\underline{\mathfrak{C}}}_{\underline{\underline{\tilde\psi}}} \circ \mathscr{C}_{\gamma_c} \left(\underline{\underline{f}} \right) \,,
\end{align}
where its element form is:
\begin{align*}
 \underline{\underline{\underline{\tilde c}}} 
 &= \left\{ \underline{\underline{\tilde c^{(i)}}} \right\}_{i=0}^{I-1} \,,~
 \underline{\underline{\tilde c^{(i)}}} = \left\{ \underline{\tilde c^{(i)}_{l}} \right\}_{l=0}^{2^{i-1} L-1} \,,
 \\  % ----
 \underline{\tilde c^{(i)}_{l}} &= \underline{\mathfrak{C}}_{\underline{\phi}} \left( \underline{c^{(i)}_{l}} \right)  
 + \underline{\underline{\mathfrak{C}}}^*_{\underline{\underline{\psi}}}
 \circ \text{prox}_{\mu\mathscr{P}} \circ 
 \underline{\underline{\mathfrak{C}}}_{\underline{\underline{\tilde\psi}}} \left( \underline{c^{(i)}_{l}} \right)
\end{align*}
for $l = 0 \,, \ldots \,, 2^{i-1} L-1 \,,~ i = 0 \,, \ldots \,, I-1$.
Note that $\text{prox}_{\mu\mathscr{P}}(\cdot)$ in Equation~\ref{eq:RieszQuincunxWaveletSmoothing} acts on wavelet coefficients only. A reason is scaling coefficients contain main energy of a signal whose values are large. These scaling coefficients should be preserved during a shrinking process while wavelet coefficients mainly contain oscillating signals in different scales, including noise which should be removed.

Then, we have a smoothed image as:
\begin{align*}
 \underline{\underline{\tilde f}} = \mathscr{D}_{\alpha^\dagger} \left( \underline{\underline{\underline{\tilde{c}}}} \,,~ \mathscr{M}_{\gamma_m} \left(\underline{\underline{f}} \right) + \mathscr{S}_{\gamma_s}^\frac{1}{2} \left( \underline{\underline{f}} \right) \odot \epsilon \right) %
 \in \mathbb R^{\abs{\Omega} \times P} \,.
\end{align*}
In summary, we have the following proposition:
\begin{prop}  \label{prop:RQUnetVAEsmoothing}
Given a learned parameter $\left( \alpha^\dagger \,, \gamma_c^\dagger \,, \gamma_m^\dagger \,, \gamma_s^\dagger \right)$ obtain by training RQUNet-VAE with a dataset, we have RQUNet-VAE smoothing (with a parameter $\mu > 0$) for an image $\underline{\underline{\tilde f}}$ as a solution of a regularization in a generalized Besov space $\textfrak{B}$ (\ref{eq:GeneralizedBesovSpace}):
\begin{align}  \label{eq:RQUnetVAEsmoothing}
 \underline{\underline{\tilde f}} = \mathscr{D}_{\alpha^\dagger} \bigg(& \underline{\mathfrak{C}}_{\underline{\phi}}
 \circ \mathscr{C}_{\gamma_c^\dagger} \left(\underline{\underline{f}} \right)
 + \underline{\underline{\mathfrak{C}}}^*_{\underline{\underline{\psi}}}
 \circ \text{prox}_{\mu\mathscr{P}} \circ 
 \underline{\underline{\mathfrak{C}}}_{\underline{\underline{\tilde\psi}}} 
 \circ \mathscr{C}_{\gamma_c^\dagger} \left(\underline{\underline{f}} \right)
 \notag
 \\ %
 &\,, \mathscr{M}_{\gamma_m^\dagger} \left(\underline{\underline{f}} \right) + \mathscr{S}_{\gamma_s^\dagger}^\frac{1}{2} \left( \underline{\underline{f}} \right) \odot \epsilon \bigg) \,.
\end{align}

\end{prop}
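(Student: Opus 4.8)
The plan is to realize Equation~\ref{eq:RQUnetVAEsmoothing} as a composition of three ingredients that are already in place: the encoder mappings, the proximal regularization in the generalized Besov space $\textfrak{B}$, and the decoder mapping. I would start from the hypothesis stated just before the proposition, namely that the trained parameters $\left( \alpha^\dagger \,, \gamma_c^\dagger \,, \gamma_m^\dagger \,, \gamma_s^\dagger \right)$ realize perfect reconstruction, so that the encoder--decoder pair acts as a generalized forward/backward wavelet transform. By Proposition~\ref{prop:SkipConnectingMapping} the encoder produces the skip-connecting signal $\underline{\underline{\underline{c}}} = \mathscr{C}_{\gamma_c^\dagger}\left(\underline{\underline{f}}\right)$, and by Proposition~\ref{prop:RQUnetVAE:VariationalTerm} it produces the latent variable $z = \mathscr{M}_{\gamma_m^\dagger}\left(\underline{\underline{f}}\right) + \mathscr{S}_{\gamma_s^\dagger}^\frac{1}{2}\left(\underline{\underline{f}}\right) \odot \epsilon$; these are exactly the two arguments that appear on the right-hand side of Equation~\ref{eq:RQUnetVAEsmoothing}.

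The central step is to evaluate the regularization in the space $\textfrak{B}$. By the definition of the generalized Besov norm (Equation~\ref{eq:GeneralizedBesovSpace}) the penalty $\mathscr{P}$ acts on the skip-connecting coefficients, so the minimization in Equation~\ref{eq:RieszQuincunxWaveletSmoothing} reduces to evaluating $\text{prox}_{\mu\mathscr{P}} \circ \mathscr{C}_{\gamma_c^\dagger}\left(\underline{\underline{f}}\right)$. Here I would invoke the framelet structure: Proposition~\ref{prop:SkipConnectingMapping} shows that $\mathscr{C}_{\gamma_c^\dagger}$ is itself a framelet decomposition, while Proposition~\ref{prop:RieszQuincunxWavelet:FrameletDecomposition} with the unity condition (Equation~\ref{eq:RieszQuincunxUnityCondition}) guarantees that the scaling and wavelet filter-bank matrices form a tight frame, $\underline{\Phi} + \underline{\tilde{\Psi}} \, \underline{\check{\Psi}^\text{T}} = \frac{1}{n_1} \text{Id}$. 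Consequently each component of the skip-connecting signal splits additively into a scaling part $\underline{\mathfrak{C}}_{\underline{\phi}} \circ \mathscr{C}_{\gamma_c^\dagger}\left(\underline{\underline{f}}\right)$ and a wavelet part $\underline{\underline{\mathfrak{C}}}^*_{\underline{\underline{\psi}}} \underline{\underline{\mathfrak{C}}}_{\underline{\underline{\tilde\psi}}} \circ \mathscr{C}_{\gamma_c^\dagger}\left(\underline{\underline{f}}\right)$.

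The main obstacle, and the heart of the argument, is to show that the proximal operator commutes with this tight-frame analysis/synthesis in the stated way: the scaling coefficients are left untouched while the shrinkage is applied only to the wavelet coefficients. This holds because $\mathscr{P}$ is separable across frame coefficients and is identically zero on the scaling block (as noted after Equation~\ref{eq:RieszQuincunxWaveletSmoothing}, the scaling coefficients carry the main energy and are preserved), and because the Parseval property of the frame lets the proximal operator of $\mu\mathscr{P}$ on the signal be computed by synthesizing the coefficient-domain shrinkage. I would verify this by writing the first-order optimality condition for the minimization in Equation~\ref{eq:RieszQuincunxWaveletSmoothing} in the transform domain, where the unity condition decouples the scaling and wavelet blocks, and reading off that the minimizer coincides with the smoothing version of Proposition~\ref{prop:RieszQuincunxWavelet:FrameletDecomposition} applied componentwise over scales $i$ and channels $l$:
\begin{align*}
 \underline{\underline{\underline{\tilde c}}} = \underline{\mathfrak{C}}_{\underline{\phi}} \circ \mathscr{C}_{\gamma_c^\dagger}\left(\underline{\underline{f}}\right) + \underline{\underline{\mathfrak{C}}}^*_{\underline{\underline{\psi}}} \circ \text{prox}_{\mu\mathscr{P}} \circ \underline{\underline{\mathfrak{C}}}_{\underline{\underline{\tilde\psi}}} \circ \mathscr{C}_{\gamma_c^\dagger}\left(\underline{\underline{f}}\right) \,.
\end{align*}

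Finally, I would apply the decoder $\mathscr{D}_{\alpha^\dagger}$ of Proposition~\ref{prop:RQUnetVAE:decoder} to the smoothed skip-connecting signal $\underline{\underline{\underline{\tilde c}}}$ together with the latent variable $z$. Since under perfect reconstruction the decoder inverts the forward transform, feeding the regularized coefficients through $\mathscr{D}_{\alpha^\dagger}$ yields the smoothed image, which is precisely Equation~\ref{eq:RQUnetVAEsmoothing}. The remaining bookkeeping --- matching the block and element form of $\underline{\underline{\underline{\tilde c}}}$ scale by scale and channel by channel --- is routine and relies only on the definitions of the mappings $\mathscr{C}^{(i)}$ and $\tilde{\mathscr{T}}_I$ from the earlier propositions.
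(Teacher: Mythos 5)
Your proposal is correct and follows essentially the same route as the paper: the paper's own proof is a one-line remark that the proposition is obtained directly from the preceding paragraphs, and your write-up is a faithful expansion of exactly those paragraphs --- the encoder outputs $\left( \underline{\underline{\underline{c}}} \,, z \right)$ via Propositions~\ref{prop:SkipConnectingMapping} and~\ref{prop:RQUnetVAE:VariationalTerm}, the Besov-space regularization (\ref{eq:RieszQuincunxWaveletSmoothing}) yields the smoothed coefficients with $\text{prox}_{\mu\mathscr{P}}$ acting only on the wavelet branch, and decoding with $\mathscr{D}_{\alpha^\dagger}$ gives (\ref{eq:RQUnetVAEsmoothing}). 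The only place you go beyond the paper is in proposing to derive the scaling/wavelet splitting of the proximal operator from first-order optimality conditions; the paper instead takes that splitting as the definition of the smoothing (justified informally by the remark that scaling coefficients carry the main signal energy and are to be preserved), so no such verification is attempted or needed there.
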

\begin{proof}
A proof of Proposition~\ref{prop:RQUnetVAEsmoothing} is directly obtained from the above paragraphs.
\end{proof}
%

% =======

%
\subsection{RQUNet-VAE iterative shrinkage Lagrangian system}\label{theory:shrinkage}
\paragraph{Decomposition for multi-band image}
\RQUnetVAE~ iterative shrinkage algorithm for multi-band image $\underline{\underline{f}} \in \mathbb R^{\abs{\Omega} \times P}$ is described in Algorithm~1 in the SM.
This is equivalent to a nonlinear mapping with unknown model's parameters $\Gamma$:
$\underline{\underline{u}}^{(0)} = \underline{\underline{f}} \,, \underline{\underline{\lambda}}^{(0)} = \mathbf{0}$ for 
$\tau = 1 \,, \ldots \,, N$ and
\begin{align}  
 \left( \underline{\underline{u}}^{(\tau)} \,, \underline{\underline{\lambda}}^{(\tau+1)} \right) &= \mathscr{K}_\mu\left( \underline{\underline{f}} \,, \underline{\underline{u}}^{(\tau-1)} \,, \underline{\underline{\lambda}}^{(\tau)} \,; \Gamma \right) \,.
\end{align}

\paragraph{Diffusion process and spectral decomposition for multiband image}
For a multiband image $\underline{\underline{f}} \in \mathbb R^{\abs{\Omega} \times P}$, a diffusion process by a Lagrangian system (\ref{eq:RQUnetVAELagrangian:regularization}) is:
$\underline{\underline{u}}^{(0)} = \underline{\underline{f}} \,, \tau = 1 \,, \ldots \,, N$,
\begin{align}  \label{eq:RQUnetVAELagrangian:regularization}
 \left( \underline{\underline{u}}^{(\tau)} \,, \underline{\underline{\lambda}}^{(\tau+1)} \right) &= \mathscr{K}_\mu\left( \underline{\underline{u}}^{(\tau-1)} \,, \underline{\underline{u}}^{(\tau-1)} \,, \underline{\underline{\lambda}}^{(\tau)} \,; \Gamma \right) \,.
\end{align}
%
%Then, a spectral image decomposition~\cite{Gilboa2014} is: 
%
Given $\left\{ \underline{\underline{u}}^{(\tau)} \right\}_{\tau = 1}^N$ generated by  diffusion process (\ref{eq:RQUnetVAELagrangian:regularization}) in Algorithm 2, we have a discrete TV-like transform as in~\cite{Gilboa2014}: 
\begin{align*}
 \underline{\underline{\phi}}^{(\tau)} := \frac{\tau}{\beta} \left( \underline{\underline{u}}^{(\tau+1)} - 2 \underline{\underline{u}}^{(\tau)} + \underline{\underline{u}}^{(\tau-1)} \right) \,,
\end{align*}
whose inverse transform is: $\underline{\underline{\tilde{f}}}^{(N)} := \left( 1 + N \right) \underline{\underline{u}}^{(N)} - N \underline{\underline{u}}^{(N + 1)}$,
\begin{align*}
 \underline{\underline{f}} = \underline{\underline{\tilde{f}}}^{(N)} + \beta \sum_{\tau = 1}^N \underline{\underline{\phi}}^{(\tau)} \,.
\end{align*}
Its filtered version 
\begin{align*}
 \underline{\underline{f_{H_N}}} = H^{(N)} \underline{\underline{\tilde{f}}}^{(N)} + \beta \sum_{\tau = 1}^N H^{(\tau)} \underline{\underline{\phi}}^{(\tau)} 
\end{align*}    
is defined, e.g. by the ideal spectral-filter $H^{(\tau)}$:
\begin{align*}
 H_\text{lowpass}^{(\tau)} &= \begin{cases} 0 \,, & \tau \in \{ 0 \,, \ldots \,, \tau_1 \} \,, \\ 1 \,, & \tau \in \{ \tau_1 \,, \ldots \,, N \} \end{cases} \,,
 \\
 H_\text{highpass}^{(\tau)} &= \begin{cases} 1 \,, & \tau \in \{0 \,, \ldots \,, t_1\} \,, \\ 0 \,, & \tau \in \{\tau_1 \,, \ldots \,, N \} \end{cases} \,,~
 \\
 H_\text{bandpass}^{(\tau)} &= \begin{cases} 0 \,, & \tau \in \{0 \,, \ldots \,, \tau_1\} \,, \\ 1 \,, & \tau \in \{\tau_1 \,, \ldots \,,\tau_2\} \,, \\ 0 \,, & \tau \in \{\tau_2 \,, \ldots \,, N\} \end{cases} \,,
 \\
 H_\text{bandstop}^{(\tau)} &= \begin{cases} 1 \,, & \tau \in \{0 \,, \ldots \,, \tau_1\} \,, \\ 0 \,, & \tau \in \{\tau_1 \,, \ldots \,, \tau_2\} \,, \\ 1 \,, & \tau \in \{\tau_2 \,, \ldots \,, N\} \,, \end{cases} 
\end{align*}
where the time threshold $\tau_i$ are selected by the TV-spectrum $S^{(\tau)} := \norm{ \underline{\underline{\phi}}^{(\tau)} }_{\ell_1}$.

% ======

%
\paragraph{Decomposition for multiband time series}\label{sec:decomposition_time_series}
Next, we show how our \RQUnetVAE can be applied to satellite image time series, that is, a sequence of images of the same area.

Given a multi-band video 
$\underline{\underline{\underline{f}}}
= \left\{ \underline{\underline{f_1}} \,, \hdots \,, \underline{\underline{f_{T}}} \right\}
\in \mathbb R^{\abs{\Omega} \times P \times T}$,
$\underline{\underline{f_t}} = \left\{ \underline{f_{t,1}} \,, \hdots \,, \underline{f_{t,P}} 
 \right\}$,
$\underline{f_{t,p}} \in \mathbb R^{\abs{\Omega}}$
and scalling and wavelet bases, e.g. 1D Haar bases:
\begin{align}  \label{eq:1DHaarBases}
 \phi_{I,t,m} &= \begin{cases} 1 \,, & t \geq 2^{I} m ~\&~  t < 2^{I} (1 + m)
 \\ 0 \,, & \text{else} \end{cases} \,,
 \notag
 \\ % ----
 \xi_{i,t,m} &= \begin{cases} 1 \,,& t \geq 2^{i} m ~\&~ t < 2^{i} \left( \frac{1}{2} + m \right)
 \\ -1 \,,& t  \geq 2^{i} \left( \frac{1}{2} + m \right) ~\&~ t < 2^{i} \left( 1 + m \right)
 \\ 0 \,,& \text{else} \end{cases} \,,
\end{align}
a time-wavelet smoothing expansion for a video $\underline{\underline{\underline{f}}} \in \mathbb R^{\abs{\Omega} \times P \times T}$ by proximal operator $\mu \mathscr{P}$ is defined as: $t = 1 \,, \ldots \,, T$,
\begin{align*}  
 \underline{\underline{\tilde f_t}} &= \frac{1}{2^\frac{I}{2}} \sum_{s = 0}^{2^{-I} T - 1}  \frac{1}{2^\frac{I}{2}} \sum_{t'=1}^{T} \underline{\underline{f_{t'}}} \, \phi_{I,t',s} \phi_{I,t,s}
 \\ %
 &+ \sum_{i=1}^I \frac{1}{2^\frac{i}{2}} \sum_{s=0}^{2^{-i} T - 1} \text{prox}_{\mu\mathscr{P}} \bigg\{ \underbrace{ \frac{1}{2^\frac{i}{2}} \sum_{t'=1}^{T} \underline{\underline{f_{t'}}} \, \xi_{i,t',s} }_{ := \underline{\underline{w_{i,m}}} } \bigg\} \xi_{i,t,s} \,,
\end{align*}
which is equivalent to:
\begin{align}  \label{eq:TimeWaveletDecomposition1}
 \underline{\underline{\underline{\tilde f}}}
 &:= \mathscr{G}_\phi \left( \underline{\underline{\underline{f}}} \right)
 + \mathscr{W}^*_{\xi} \circ \text{prox}_{\mu\mathscr{P}} \circ \mathscr{W}_{\xi} \left( \underline{\underline{\underline{f}}} \right) 
 % ----
 = \left\{ \underline{\underline{\tilde f_1}} \,, \hdots \,, \underline{\underline{\tilde f_{T}}} \right\} \,.
\end{align}
Wavelet coefficient in (\ref{eq:TimeWaveletDecomposition1}) is
$\underline{\underline{\underline{w_i}}}
= \left\{ \underline{\underline{w_{i,1}}} \,, \hdots \,, \underline{\underline{w_{i,T}}} \right\}
:= \mathscr{W}_{\xi} \left\{ \underline{\underline{\underline{f}}} \right\}
\in \mathbb R^{\abs{\Omega} \times P \times 2^{-i} T} \,,~
\underline{\underline{w_{i,m}}} = \left\{ \underline{w_{i,m,1}} \,, \hdots \,, \underline{w_{i,m,P}} 
\right\}$
and $\underline{w_{i,m,p}} \in \mathbb R^{\abs{\Omega}}$.
Combined with Equation~(\ref{eq:RQUnetVAELagrangian:regularization}), RQUnet-VAE's iterative shrinkage algorithm for multi-band video $\underline{\underline{\underline{f}}} \in \mathbb R^{n_1 \times n_2 \times P \times T}$ is described in Algorithm~2 in SM, called \RQUnetVAE scheme 2.

% ---------------

%
\subsection{RQUNet-VAE based segmentation}  \label{sec:DenoisingSegmentation:theory}
In this section, we propose a mathematical framework for a segmentation problem with our RQUNet-VAE which serves as a smoothing term.
Given a dataset of original images (without artificially added noise) with ground-truth masks, we first train the UNet-VAE to obtain the optimal model parameters. Next we predict a segmented image from a noisy image using the RQUNet-VAE with a smoothing term as a truncation scheme of the $N$-th order Riesz-Quincunx wavelet expansion on the skip-connecting signals. 
If the training data is $\left( \textfrak{F} \,, \textfrak{F}^\text{gt} \right) := \left\{ \underline{\underline{f_i}} \,, \underline{\underline{f_i^\text{gt}}} \right\}_{i=1}^n \subset \mathbb R^{\abs{\Omega} \times P} \times \mathbb R^{\abs{\Omega} \times K}$ and $K$-dimensional hot key tensors of the ground-truth masks, i.e. pixel intensity and its allocation are: 
\begin{align*}
 \underline{\underline{f_i}} &:= \left[ f_{i,l} \right]_{l \in \Omega} \,,~
 f_{i,l} \in \mathbb R^P \,,~ 
 % ----
 \underline{\underline{f_i^\text{gt}}} := \left[ f_{i,l}^\text{gt} \right]_{l \in \Omega} \,,
\end{align*}
where $\underline{f^\text{gt}_{i,l}} = \left[ f^\text{gt}_{i,l,k} \right]_{k=1}^K \in \mathbb R^K$ is a one-hot-key vector.
Given the loss function: $\underline{\underline{x}} \,, \underline{\underline{y}} \in \mathbb R^{\abs{\Omega} \times K}$,
\begin{align*}
 \mathscr{H} \left( \underline{\underline{x}} \,, \underline{\underline{y}} \right)
 &= \sum_{l \in \Omega} \sum_{k=1}^K x_{l,k} 
 \log \frac{ \exp \left( y_{l,k} \right) }{ \sum_{h=1}^K \exp \left( y_{l,h} \right) } \,;
\end{align*}
similar to proposition \ref{prop:RQUnetVAEparameters}, a loss function of our RQUNet-VAE based segmentation problem is as follows:
\begin{prop}  \label{prop:RQUnetVAEsegmentation}
 Unknown parameters in RQUNet-VAE based segmentation are obtained from the following minimization problem:
 \begin{align}  \label{eq:RQUnetVAELossFunc:segmentation}
  &\left( \underline{\underline{\underline{\theta}}}^\dagger \,, \gamma_c^\dagger \,, \gamma_m^\dagger \,, \gamma_s^\dagger \,, \alpha^\dagger \right)
  = \argmin \sum_{i=1}^{n} \mathscr{L} \left( \underline{\underline{\underline{\theta}}} \,, \gamma_c \,, \gamma_m \,, \gamma_s \,, \alpha \right)
 \end{align}
 where:
 {\scriptsize
 \begin{align*}
  &\mathscr{L} \left( \cdot \right)
  := \text{KL} \left[ \mathcal{N}_d \left( \mathscr{M}_{\gamma_m}\left(\underline{\underline{f_i}}\right) \,, \text{diag} \left\{ \mathscr{S}_{\gamma_s} \left( \underline{\underline{f_i}} \right) \right\} \right) \mid \mid \mathcal{N}_d \left( \mathbf{0}_d \,, \text{Id}_d \right) \right] 
  \\&
  - \frac{1}{2 n \sigma^2} \mathbb E_{\epsilon \sim \mathcal{N}(0 \,, \text{Id})} \Big[ 
  \mathscr{H} \Big( \underline{\underline{f^\text{gt}_i}} \,,~ \underline{\underline{\underline{\mathfrak{C}^\text{iso}}}}_{\underline{\underline{\underline{\theta}}}}
  \circ 
  \mathscr{D}_\alpha \Big( \mathscr{C}_{\gamma_c} \left(\underline{\underline{f_i}} \right) \,, \mathscr{M}_{\gamma_m} \left( \underline{\underline{f_i}} \right) 
  \\& % ---
  + \mathscr{S}_{\gamma_s} \left( \underline{\underline{f_i}} \right)^\frac{1}{2} \odot \epsilon \Big) \Big) \Big] \,,
 \end{align*}
 }
 and the K-L divergence is defined in (\ref{eq:RQUnetVAE:KLloss2}). 
  
\end{prop}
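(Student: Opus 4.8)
\emph{Proof proposal.} The plan is to re-run the variational Bayesian argument behind Proposition~\ref{prop:RQUnetVAEparameters}, retaining the encoder and its Gaussian approximate posterior verbatim and modifying only the observation model attached to the decoder output. Whereas in~(\ref{eq:Decoder:Bayes}) the reconstructed signal $\underline{\underline{\tilde f}}$ is Gaussian around $\mathscr{D}_\alpha(\underline{\underline{\underline{c}}}, z)$, here the observed quantity is the one-hot mask $\underline{\underline{f_i^\text{gt}}}$, so I would posit a per-pixel categorical likelihood: conditionally on the latent variable $z$, the label at pixel $l \in \Omega$ follows a multinomial distribution whose class probabilities are the softmax of the logits $\underline{\underline{\underline{\mathfrak{C}^\text{iso}}}}_{\underline{\underline{\underline{\theta}}}} \circ \mathscr{D}_\alpha(\mathscr{C}_{\gamma_c}(\underline{\underline{f_i}}), z)$, i.e. the decoded image pushed through the extra classifier convolution. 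The approximate posterior $\mathbb{K}_\alpha(z \mid \underline{\underline{f_i}})$ and the standard normal prior $\mathbb{P}(z)$ are taken exactly as in~(\ref{eq:Encoder:approxBayes}), so the encoder maps $\mathscr{M}_{\gamma_m}, \mathscr{S}_{\gamma_s}$ are unchanged.

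First I would write the evidence lower bound for this model,
\[
\log \mathtt{p}(\underline{\underline{f_i^\text{gt}}}) \geq \mathbb{E}_{z \sim \mathbb{K}_\alpha}[\log \mathtt{p}(\underline{\underline{f_i^\text{gt}}} \mid z)] - \text{KL}[\mathbb{K}_\alpha(z \mid \underline{\underline{f_i}}) \,\|\, \mathbb{P}(z)] ,
\]
exactly as in the proof of Proposition~\ref{prop:RQUnetVAEparameters}, and then minimize the negative lower bound summed over the $n$ training pairs. The expected log-likelihood term is where the segmentation structure enters: since $\underline{\underline{f_i^\text{gt}}}$ is one-hot, $\log \mathtt{p}(\underline{\underline{f_i^\text{gt}}} \mid z)$ collapses to $\sum_{l \in \Omega} \sum_{k=1}^K f^\text{gt}_{i,l,k} \log \operatorname{softmax}(\cdot)_{l,k}$, which is exactly $\mathscr{H}(\underline{\underline{f_i^\text{gt}}}, \cdot)$ from the statement. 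Applying the reparameterization $z = \mathscr{M}_{\gamma_m}(\underline{\underline{f_i}}) + \mathscr{S}_{\gamma_s}^\frac{1}{2}(\underline{\underline{f_i}}) \odot \epsilon$ with $\epsilon \sim \mathcal{N}_d(\mathbf{0}_d, \text{Id}_d)$ then rewrites the intractable posterior expectation as the expectation over $\epsilon$ displayed in~(\ref{eq:RQUnetVAELossFunc:segmentation}).

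The KL term is inherited unchanged from the regression case: substituting the diagonal Gaussian posterior $\mathcal{N}_d(\mathscr{M}_{\gamma_m}(\underline{\underline{f}}), \text{diag}\{\mathscr{S}_{\gamma_s}(\underline{\underline{f}})\})$ and the standard normal prior into the closed-form Gaussian-to-Gaussian KL divergence reproduces~(\ref{eq:RQUnetVAE:KLloss2}) term by term, so no new computation is needed there. The one genuinely non-mechanical point, and what I expect to be the main obstacle, is justifying the relative weight $\tfrac{1}{2n\sigma^2}$ on the data-fidelity term: a pure categorical likelihood would not produce the factor $\sigma^{-2}$, so I would argue that it is carried over from the Gaussian observation-noise scaling of~(\ref{eq:Decoder:Bayes}) (treating the logits as a $\sigma$-perturbed continuous response feeding the softmax), with the $1/n$ coming from averaging the fidelity term over the training set. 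Checking the sign convention completes the argument: because $\mathscr{H} \leq 0$ for one-hot targets, $-\mathscr{H} \geq 0$ is a bona fide loss and minimizing $\mathscr{L}$ indeed maximizes the lower bound.
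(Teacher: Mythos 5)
Your proposal is correct and follows essentially the same route as the paper: the paper's proof likewise reruns the variational/ELBO argument of Proposition~\ref{prop:RQUnetVAEparameters}, keeping the Gaussian approximate posterior, the reparameterization $z = \mathscr{M}_{\gamma_m}\left(\underline{\underline{f_i}}\right) + \mathscr{S}_{\gamma_s}^\frac{1}{2}\left(\underline{\underline{f_i}}\right) \odot \epsilon$, and the closed-form KL term (\ref{eq:RQUnetVAE:KLloss2}) unchanged, while replacing the Gaussian reconstruction log-likelihood with the softmax cross-entropy $\mathscr{H}$ evaluated at the classifier convolution composed with the decoder. Your flagged concern about the weight $\tfrac{1}{2 n \sigma^2}$ is handled the same way in the paper, which carries this factor over from the Gaussian observation model of (\ref{eq:Decoder:Bayes}) as a fidelity weighting rather than deriving it from a normalized categorical likelihood, so your treatment is faithful to the paper's argument.
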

\begin{proof}
We provide a proof of Proposition~\ref{prop:RQUnetVAEsegmentation} in Section~5.8 in SM.
\end{proof}
After training the model parameters by minimizing the loss function (\ref{eq:RQUnetVAELossFunc:segmentation}), the segmentation of a new noisy image is predicted:
\begin{align*}
 f^\text{new} = f_0^\text{new} + \sigma \epsilon \,,
\end{align*}
with standard Gaussian noise $\epsilon = \left[ \epsilon_{l,c} \right]^{c=1,\ldots,p}_{l \in \Omega} \in \mathbb R^{\abs{\Omega} \times P}$ 
and $\epsilon_{l,c} \sim \mathcal{N}(0 \,, 1)$. 
From proposition \ref{prop:RQUnetVAEsmoothing} by adding $N$-th order Riesz Quincunx wavelet truncation with proximal operator $\text{prox}_{\mu\mathscr{P}}(\cdot)$ parameterized by a smoothing parameter $\mu$ as:
{\small
\begin{align*}  
 &u \left( \underline{\underline{f^\text{new}}} \right) 
 = \left\{ u_1 \left( \underline{\underline{f^\text{new}}} \right) \,, \hdots \,, 
 u_K \left( \underline{\underline{f^\text{new}}} \right) \right\}
 \in \mathbb R^{\abs{\Omega} \times K}
 \notag
 \\  % ----
 &= \underline{\underline{\underline{\mathfrak{C}^\text{iso}}}}_{\underline{\underline{\underline{\theta}}}}
 \circ 
 \mathscr{D}_{\alpha^\dagger} \bigg( \underline{\mathfrak{C}}_{\underline{\phi}}
 \circ \mathscr{C}_{\gamma_c^\dagger} \left(\underline{\underline{f^\text{new}}} \right)
 + \underline{\underline{\mathfrak{C}}}^*_{\underline{\underline{\psi}}}
 \circ \text{prox}_{\mu\mathscr{P}} \circ 
 \underline{\underline{\mathfrak{C}}}_{\underline{\underline{\tilde\psi}}} 
 \circ \mathscr{C}_{\gamma_c^\dagger} \left(\underline{\underline{f^\text{new}}} \right)
 \,,~ 
 \\ % ----
 &\qquad \qquad \qquad \quad
 \mathscr{M}_{\gamma_m^\dagger} \left(\underline{\underline{f_i^\text{new}}} \right) + \mathscr{S}_{\gamma_s^\dagger}^\frac{1}{2} \left( \underline{\underline{f_i^\text{new}}} \right) \odot \epsilon \bigg) \,,
 \\  % ---
 &\underline{\underline{y^\text{new}}} = \text{argmax}_{k = 1 \,, \ldots \,, K} u_k \left( \underline{\underline{f^\text{new}}} \right)
 \in \mathbb R^{\abs{\Omega} \times K} \,,~ u_k \left( \underline{\underline{f^\text{new}}} \right) \in \mathbb R^{\abs{\Omega}} \,.
\end{align*}
}
%
%For experimental result, see Section 
%\ref{sec:DenoisingSegmentation}.

% ===================================
% \newpage

%
\section{Experimental Results}  \label{sec:ExperimentalResults}

%
%\subsection{Image Denoising by RQUNet-VAE schemes 1 and 2}  \label{sec:ImageDenoising}
%

%\textcolor{red}{Duy, Move the following sentence to introduction or technical part: - where does it belong?}
%Note that scheme 1 of RQUNet-VAE is a generalized wavelet smoothing technique and scheme 2 is based on GIAF \cite{RichterThaiHuckemann2020}.
% Why is this stateed here and not in introduction or in the methods section? Agreed

\subsection{Experimental Setup}
This section describes the dataset used for our experiments, introduces competitor solutions, and defines evaluation metrics to measure the ability of an algorithm to reduce the noise of an image.

\subsubsection{Datasets}\label{subsec:datasets}

We used Sentinel-2 satellite images (S30:MSI harmonized, V1.5) processed as part of the Harmonized Landsat Sentinel-2 (HLS) dataset obtained from USGS DAAC (https://lpdaac.usgs.gov/data/get-started-data/collection-overview/missions/harmonized-landsat-sentinel-2-hls-overview/) \cite{claverie2018harmonized}. The S30:MSI harmonized surface reflectance product is resampled from the original 10m to 30m resolution and adjusted to Landsat8 spectral response function in order to ultimately create a harmonized time series with a 2-3 day revisit.  Radiometric and geometric corrections are applied to convert data to surface reflectance, adjust for BRDF differences, and spectral bandpass differences. The study was focused on the HLS tile 18STH covering a large part of Northern Virginia in the US, for 2016-2020. From the main HLS tile's time series, 510 images were randomly created with a size of $256 \times 256$ pixels and including only the three visible bands (4R, 3G, 2B). Quality Assessment (QA) layers were used to exclude images  with more than 30 percent cloud shadow, adjacent cloud, cloud and cirrus clouds. The 500 images were used for training of our \RQUnetVAE  and competitor approaches, while ten images were used for testing in the denoising and decomposition experiments. The ten test images are visualized in SM.

For the time series decomposition experiments the above Sentinel-2 data were used to create 80, randomly located time series of length 99 images and image size $40 \times 40$ pixels. Images were normalized from 0 to 1 using image minimum and maximum. The RQUNet-VAE was trained on each image in the dataset to obtain all unknown model parameters.  All the Sentinel-2 datasets are available on Github~\footnote{\url{https://github.com/trile83/RQUNetVAE}}.

For our image segmentation experiments in Section~\ref{sec:DenoisingSegmentation} we used National Agriculture Imagery Program (NAIP) images coinciding with high resolution ground truth land cover data of each pixel in an image \cite{national2012national}.  We acquired NAIP Nature Color imagery of northern Virginia consisting of RGB bands which is similar to some commercial satellite imagery (e.g. BlackSky, Planet). For training and validation (ground truth) we used the 1m resolution land cover dataset for the Chesapeake Bay watershed \cite{allenby2013implementing}. The dataset contains six classes - Water, Tree and shrubs, Herbaceous, Barren, Impervious (roads) and Impervious (other). We combined Impervious (other) and Impervious (roads) together into an integrated Impervious class. This provided us with four classes Water, Tree and Shrubs, Grass, and Impervious. For the segmentation experiments these classes were grouped into three classes, Vegetated (tree, grass, shrubs), Water and Impervious.

\subsubsection{Evaluated Algorithms}\label{subsec:competitors}
Since \RQUnetVAE Scheme 1 is based on harmonic analysis, we compared it to state-of-the-art approaches including wavelet CDF 9/7~\cite{UnserBlu2003, Sweldens1997, daubechies1998factoring}, curvelet~\cite{CandesDonoho2004} and Riesz Dyadic wavelet kernel~\cite{RichterThaiHuckemann2020}, while the iterative Scheme 2 method was compared to state-of-the-art iterative methods such as directional TV-L2~\cite{RudinOsherFatemi1992, GoldsteinOsher2009, ThaiGottschlich2016DG3PD} and GIAF \cite{RichterThaiHuckemann2020}.

% To ensure a fair comparison, all parameters of these competitors are obtained by heuristic search to minimize a mean square error between denoised images and the  original. 

%\textcolor{red}{do we need to include "competitor" solutions - its just established "baseline" methods - not sure. Some of these are from 2020 papers. I would think that these are quite state-of-the-art, not just (naive) baselines. The term baseline is usually used for something very naive/simple. Competitor is usually used for (not naive) state-of-the-art solutions that solve the same problem. - Duy?}

Note that since the competitor methods were designed for gray-scale images, the methods were applied independently to each band of every multi-band image. The RQUNet-VAE, on the other hand, was designed for multi-channels images.

\subsubsection{Evaluation Metrics}\label{subsec:evaluationmetrics}
\paragraph{Image reconstruction}
To evaluate the ability of an algorithm to reduce the noise of an image, this section defines two commonly used measures. Given a clean multi-channel image $\underline{\underline{f}} \in \mathbb R^{\abs{\Omega} \times P}$ as ground-truth and a denoised image $\underline{\underline{f}}^\dagger \in \mathbb R^{\abs{\Omega} \times P}$, we use 
\begin{itemize}
 \item peak-signal-to-noise-ratio (PSNR):
  \begin{align*}
   \text{PSNR} = 10 \log \frac{\max \left( \underline{\underline{f}} \right)}{\text{MSE}} \,,
   \text{MSE} = \frac{1}{n_1 n_2 P} \norm{ \underline{\underline{f}} - \underline{\underline{f}}^\dagger }^2_{\ell_2}
  \end{align*} 

 \item Structural similarity index (SSIM) \cite{WangBovik2009}:
  \begin{align*}
   \text{SSIM} = \frac{ \left( 2 \mu_f \mu_{f^\dagger} + c_1 \right) \left( 2 \sigma_{f f^\dagger} \right) }{ \left( \mu_f^2 + \mu_{f^\dagger}^2 + c_1 \right) \left( \sigma_f^2 + \sigma_{f^\dagger}^2 + c_2 \right) }
  \end{align*}
  where $(\mu_f \,, \sigma_f^2)$ and $(\mu_{f^\dagger} \,, \sigma_{f^\dagger}^{2})$ are mean and variance of ground-truth $\underline{\underline{f}}$ and its denoised image $\underline{\underline{f}}^\dagger$ and $\sigma_{f f^\dagger}$ are their covariance. 
  $c_1 = (0.01 r)^2 \,, c_2 = (0.03 r)^2$ where $r$ is the dynamic range of pixel-values. 
 
\end{itemize}

%\textcolor{red} {Is the "ground truth" image the original image without added noise? I suggest we use "original" image. If we are talking about a denoised image, where do we describe the generation of the noisy image?}

%
\paragraph{Image segmentation}
To evaluate our RQUNet-VAE for segmentation of noisy test data, given that our statistical method provides uncertainty quantification, we propose the following evaluation framework:  

Given a noisy test image $\underline{\underline{f}} \in \mathbb{R}^{\abs{\Omega} \times P}$ as an input of the trained RQUNet-VAE segmentation and its ground-truth $\underline{\underline{f}}^\text{gt} = \left[ f_l^\text{gt} \right]_{l \in \Omega} \in \mathbb R^{n_1 \times n_2 \times K} \,, f_l \in \mathbb R^K$ and the trained RQUNet-VAE by minimizing the loss function (\ref{eq:RQUnetVAELossFunc:segmentation}) from the clean dataset,
we run prediction $n$ time to obtain segmented masks $u^{(i)} \left( \underline{\underline{f}} \right) = \left[ u^{(i)}_l \left( \underline{\underline{f}} \right) \right]_{l \in \Omega}
\in \mathbb R^{\abs{\Omega} \times K} \,, u^{(i)}_l \left( \underline{\underline{f}} \right) \in \mathbb R^K$ for $i = 1 \,, \ldots \,, n$. 
For segmentation problem, class-balanced accuracy is defined for every pixel $l \in \Omega$:
\begin{align}  \label{eq:Accuracy}
 \widehat{p}^n_l = \frac{1}{n} \sum_{i=1}^n \delta_{ \{ u^{(i)}_l( \underline{\underline{f}}) \,, f_l^\text{gt} \} } \,,~
 \delta_{ \{ u^{(i)}_l( \underline{\underline{f}}) \,, f_l^\text{gt} \} } 
 = \begin{cases} 1 \,,& u^{(i)}_l( \underline{\underline{f}}) = f_l^\text{gt} 
 \\ 0 \,,& \text{else} \end{cases} 
\end{align}
which is modeled by a Binomial random variable approximated by Normal distribution (for large $n$):
\begin{align}
 Y_i &:= \delta_{ \{ u^{(i)}_l( \underline{\underline{f}}) \,, f_l^\text{gt} \} } 
 \sim \text{Bernoulli} \left( \widehat{p}^n_l \,, 1 \right) \,,
 \\ % ----
 \label{eq:AccuracyRandom}
 p^n_l &= \frac{1}{n} \sum_{i=1}^n Y_i
 \sim \frac{1}{n} \text{Binomial} \left( \widehat{p}^n_l \,, n \right)
 % ---
 \approx \mathcal{N} \left( \widehat{p}^n_l \,, \frac{\widehat{p}^n_l(1-\widehat{p}^n_l)}{n} \right) \,.
\end{align}

% ----------

%\newpage

%
\subsubsection{Training Procedure}

For comparision with other state-of-the-art-methods, all  parameters have  been  optimized  for  minimal mean-square-error (MSE) via heuristic search for  each  individual  training  image.

Since RQUNet-VAE is a hybrid model of deterministic high-order Riesz-Quincunx wavelet, we apply a training procedure for UNet-VAE  with 100 epochs and a batch size of 16 using the Adam optimization method \cite{kingma2014adam}. The source code of our training in PyTorch can be found at \url{https://github.com/trile83/RQUNetVAE}.

%

% Begin Table
\begin{table}[ht]
\begin{center}
\begin{tabular}{|c|c|c|}
 \hline
 Image Set & &Sentinel-2, std = 0.04
 \\ \hline
 Number of Images & &10
 \\ \hline
 \multirow{2}{*}{RQUNet-VAE scheme 1} 
  &PSNR &$\mathbf{38.693}$ \\ 
  &SSIM &$\mathbf{0.969}$
 \\ \hline
 \multirow{2}{*}{Riesz Dyadic} 
  &PSNR &$37.433$ \\ 
  &SSIM &$0.959$
 \\ \hline
 \multirow{2}{*}{curvelet} 
  &PSNR &$36.314$ \\ 
  &SSIM &$0.942$
 \\ \hline
 \multirow{2}{*}{wavelet CDF 9/7} 
  &PSNR &$36.061$ \\ 
  &SSIM &$0.945$
 \\ \hline %   
 \bottomrule
 \multirow{2}{*}{RQUNet-VAE scheme 2} 
  &PSNR &$\mathbf{39.087}$ \\ 
  &SSIM & $\mathbf{0.971}$
 \\ \hline
 \multirow{2}{*}{GIAF-Riesz Dyadic}
  &PSNR &$38.987$ \\ 
  &SSIM &$0.969$
 \\ \hline
 \multirow{2}{*}{TV-L2 $(L=2)$} 
  &PSNR &$38.522$ \\ 
  &SSIM &$0.968$
 \\ \hline
 \multirow{2}{*}{TV-L2 $(L=9)$} 
  &PSNR &$38.53$ \\ 
  &SSIM &$0.968$ 
  \\ \hline
\end{tabular}
\end{center}
\caption{PSNR and SSIM: Mean over the three image sets (1st nd 2nd best in bold).}
\label{MeanVarianceComparison:ImageDenoising}
\end{table}

% % Begin Table
% \begin{table}[ht]
% \begin{center}
% \begin{tabular}{|c|c|c|}
%  \hline
%  %
%  Image Set & &Sentinel-2, std = 0.04
%  \\ \hline
%  %
%  Number of Images & &10
%  \\ \hline
%  %
%  \multirow{2}{*}{RQUnet-VAE scheme 1} 
%   &PSNR &$\mathbf{38.693 \pm 0.876}$ \\ 
%   &SSIM &$\mathbf{0.969 \pm 0.0046}$
%  \\ \hline
%  %
%  \multirow{2}{*}{Riesz Dyadic} 
%   &PSNR &$37.433 \pm
%  1.402$ \\ 
%   &SSIM &$0.959 \pm
%  0.0098$
%  \\ \hline
%  % 
%  \multirow{2}{*}{curvelet} 
%   &PSNR &$36.314 \pm 1.47$ \\ 
%   &SSIM &$0.942 \pm 0.0244$
%  \\ \hline
%  % 
%  \multirow{2}{*}{wavelet CDF 9/7} 
%   &PSNR &$36.061 \pm 0.814$ \\ 
%   &SSIM &$0.945 \pm
%  0.0077$
%  \\ \hline %   
%  \bottomrule
%  % 
%  \multirow{2}{*}{RQUnet-VAE scheme 3} 
%   &PSNR &$\mathbf{39.087 \pm 0.9264}$ \\ 
%   &SSIM & $\mathbf{0.971 \pm 0.0049}$
%  \\ \hline
%  % 
%  \multirow{2}{*}{GIAF-Riesz Dyadic}
%   &PSNR &$38.987 \pm 1.1592$ \\ 
%   &SSIM &$0.969 \pm 0.0066$
%  \\ \hline
%  %  
%  \multirow{2}{*}{TV-L2 $(L=2)$} 
%   &PSNR &$38.522 \pm 1.21$ \\ 
%   &SSIM &$0.968 \pm 0.006$
%  \\ \hline
%  %   
%  \multirow{2}{*}{TV-L2 $(L=9)$} 
%   &PSNR &$38.53 \pm 1.1345$ \\ 
%   &SSIM &$0.968 \pm 0.0056$ 
%   \\ \hline
% \end{tabular}
% \end{center}
% \caption{PSNR and SSIM: Mean $\pm$ standard deviation over the three image sets (1st nd 2nd best in bold).}
% \label{MeanVarianceComparison:ImageDenoising}
% \end{table}

\if\hidefigures0

\begin{figure*}
\begin{center}  

% Original:
\subfigure[Original image]{\includegraphics[width=0.24\textwidth]{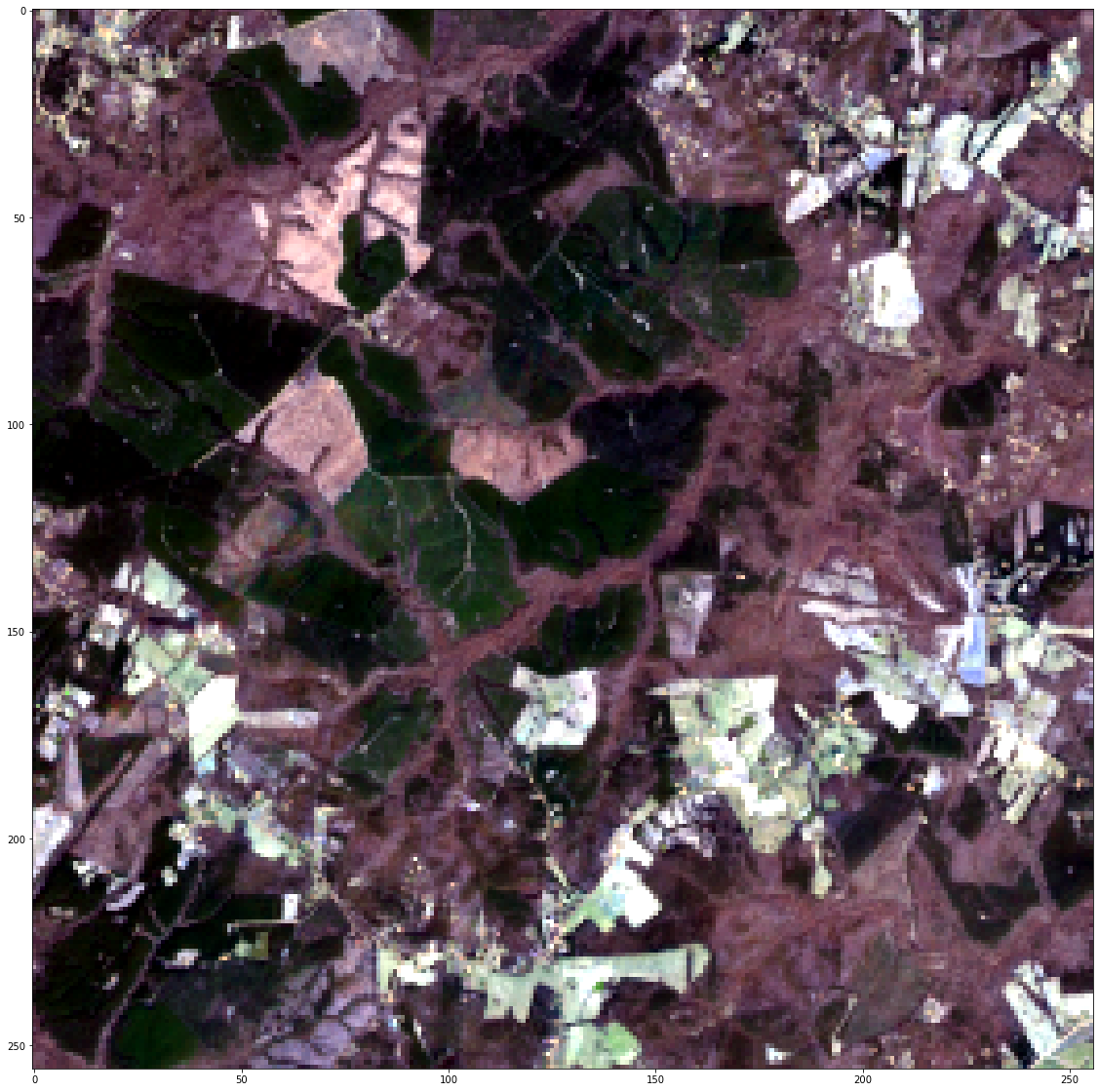}\label{subfig:original_image}}
% Noisy image:
\subfigure[Noisy image, $\sigma = 0.04$]{\includegraphics[width=0.24\textwidth]{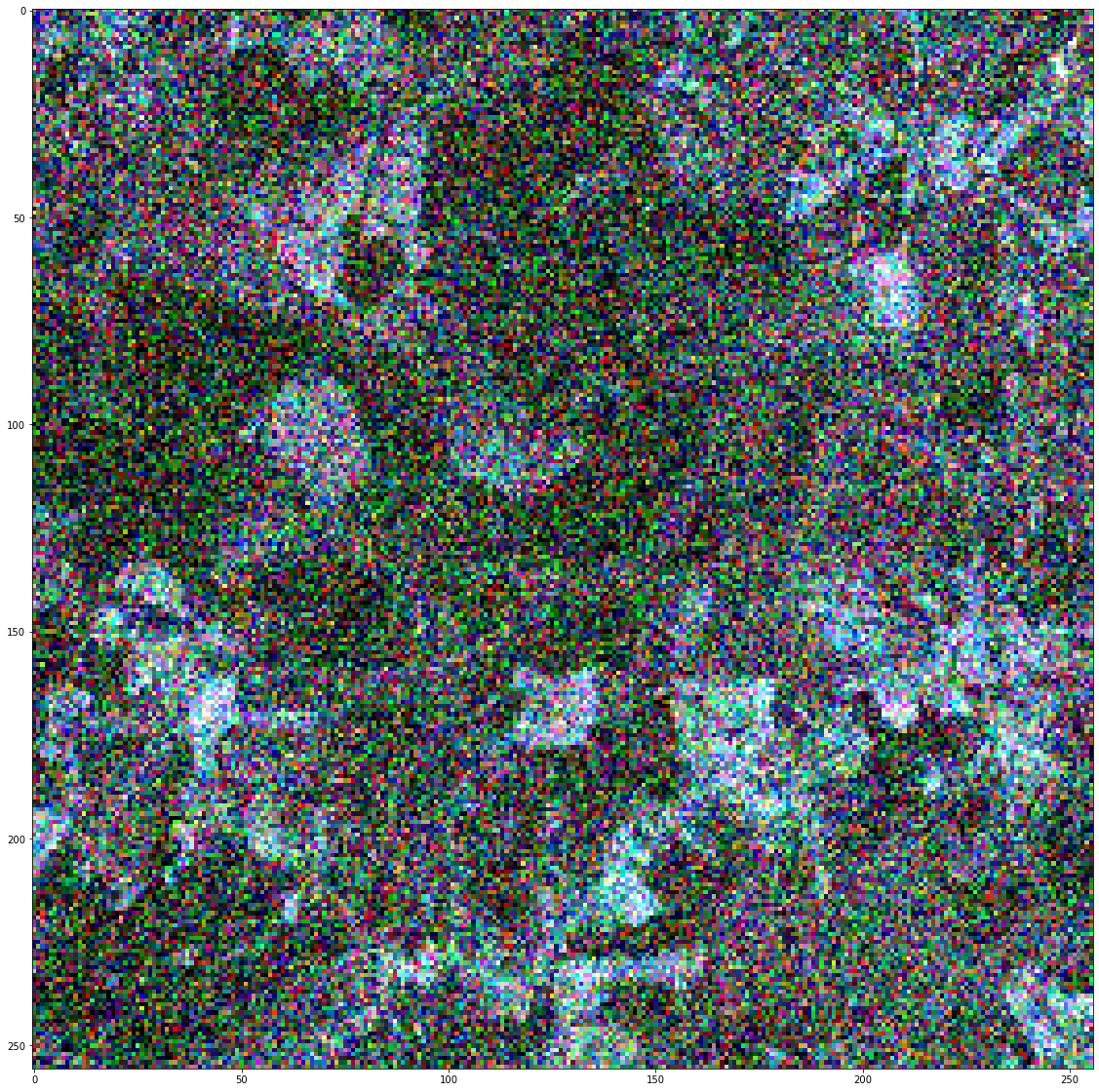}\label{subfig:original_image_with_noise}}

% RQUnet-VAE:
\subfigure[RQUnet-VAE, $(\text{PSNR}, \text{SSIM}) = (38.502, 0.966)$]{\includegraphics[width=0.24\textwidth]{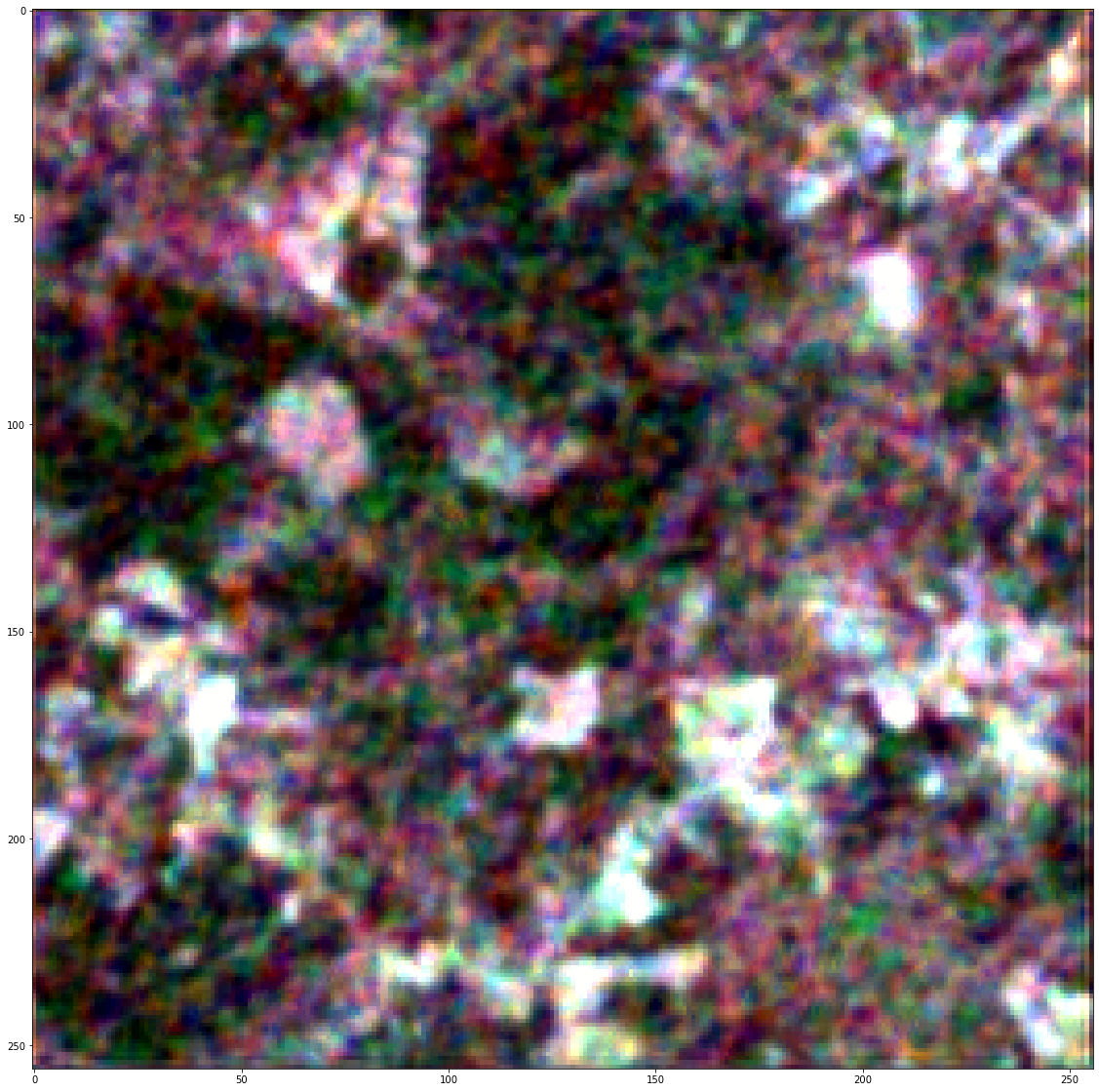}\label{subfig:original_image_scheme1}}
% GIAF-RieszDyadic:
\subfigure[Riesz Dyadic, $(\text{PSNR}, \text{SSIM}) = (37.555, 0.96)$]{\includegraphics[width=0.24\textwidth]{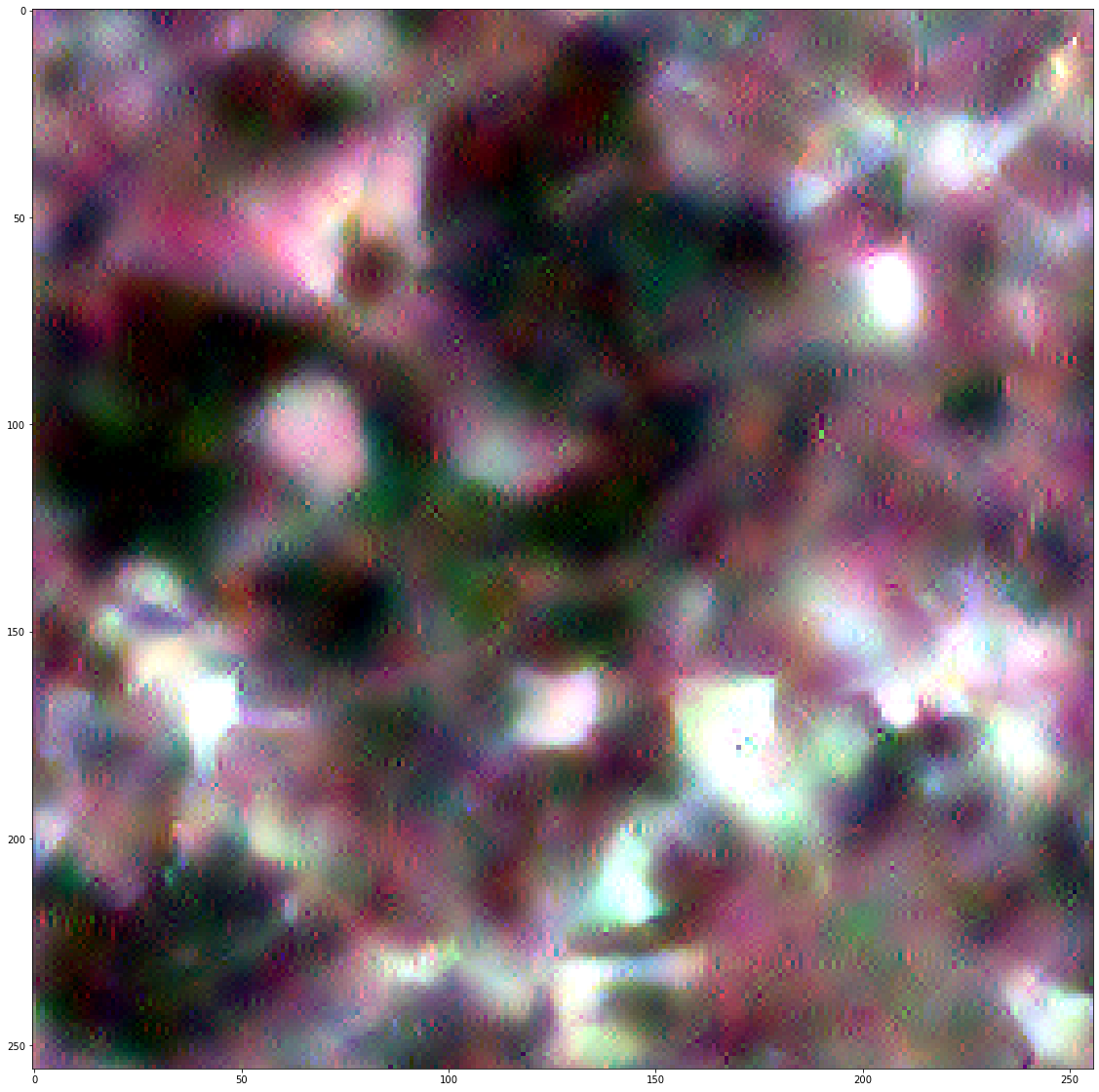}\label{subfig:original_image_riesz}}
% Curvelet-Smoothing:
\subfigure[curvelet, $(\text{PSNR}, \text{SSIM}) = (36.494, 0.944)$]{\includegraphics[width=0.24\textwidth]{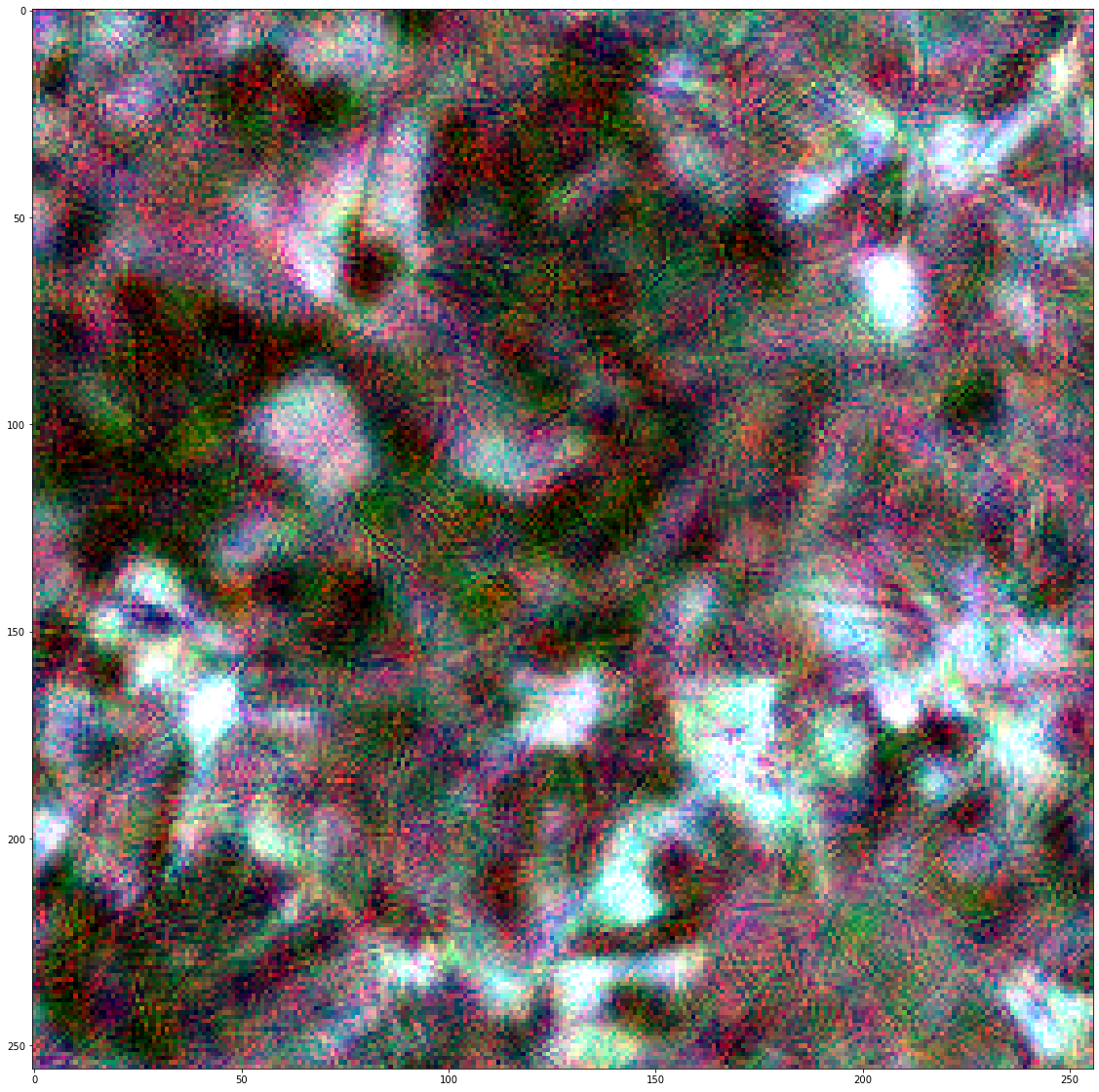}\label{subfig:original_image_curvelet}}
% Wavelet-Smoothing:
\subfigure[wavelet CDF 9/7, $(\text{PSNR}, \text{SSIM}) = (35.646, 0.936)$]{\includegraphics[width=0.24\textwidth]{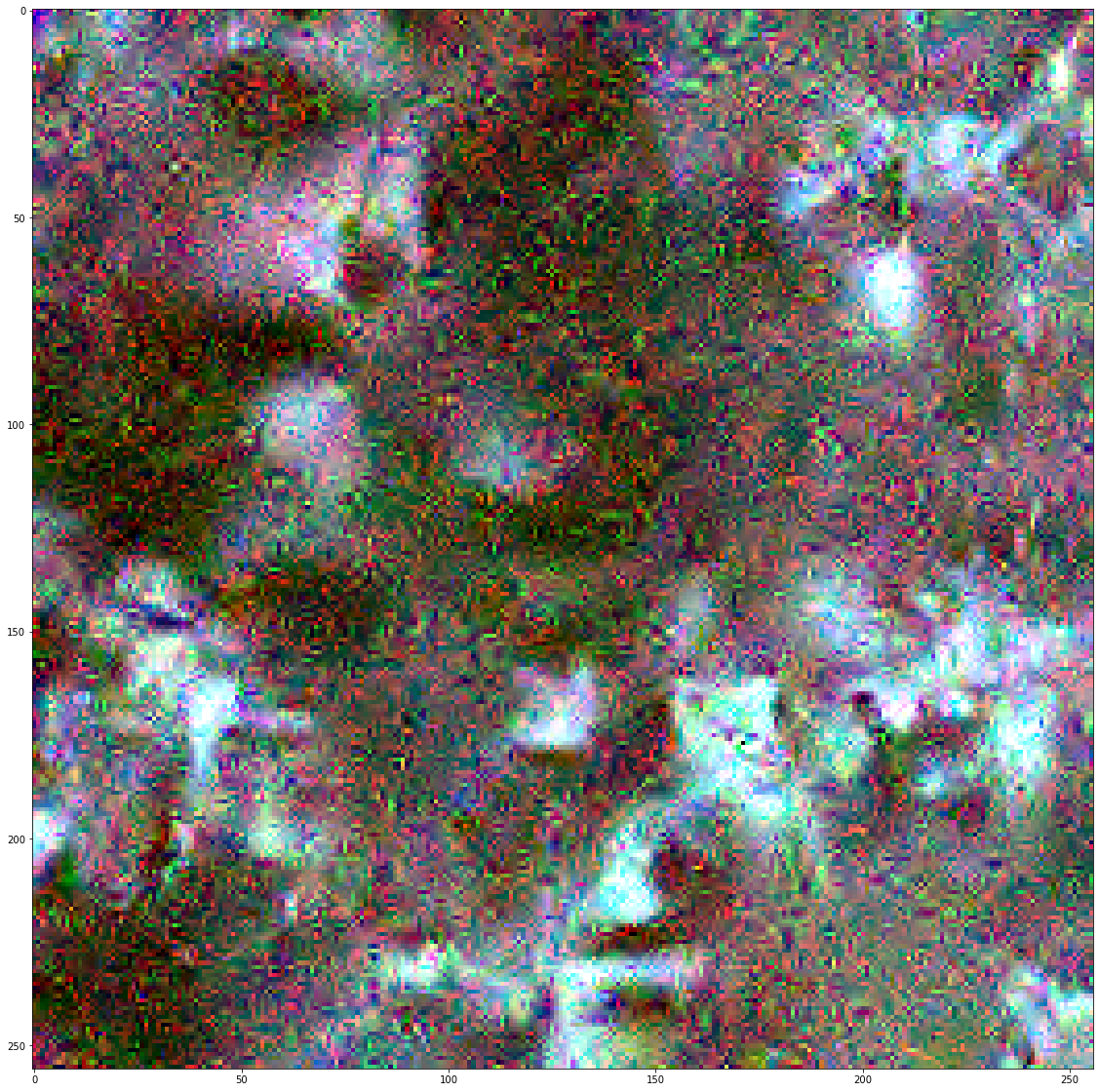}\label{subfig:original_image_with_noise_wavelet_cdf}}

% Iterative Methods:
% RQUnet-VAE-scheme3:
\subfigure[RQUnet-VAE scheme 3, $(\text{PSNR}, \text{SSIM}) = (38.822, 0.97)$]{\includegraphics[width=0.24\textwidth]{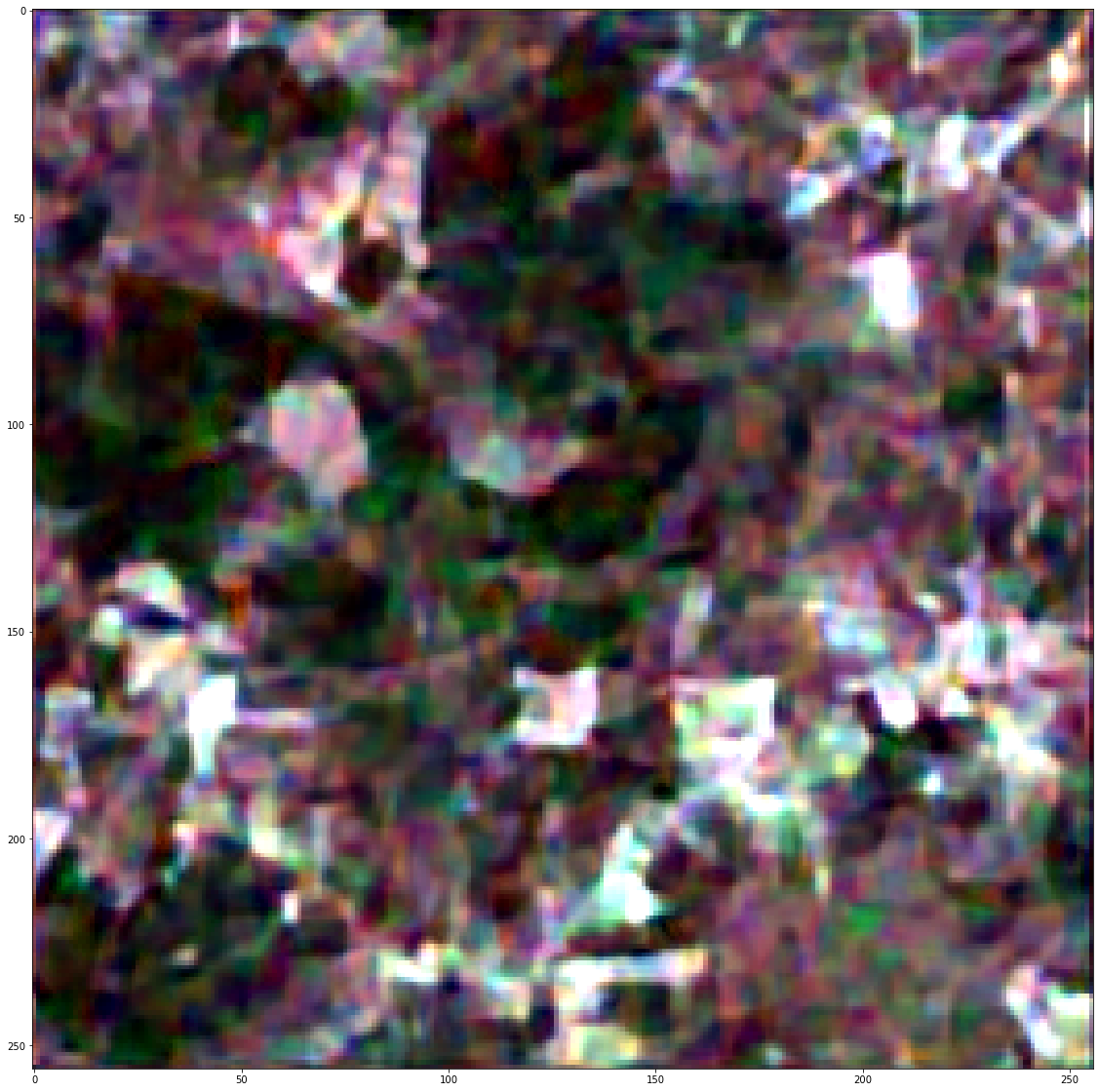}\label{subfig:original_image_scheme2}}
% GIAF-RieszDyadic:
\subfigure[GIAF-Riesz Dyadic, $(\text{PSNR}, \text{SSIM}) = (38.936, 0.97)$]{\includegraphics[width=0.24\textwidth]{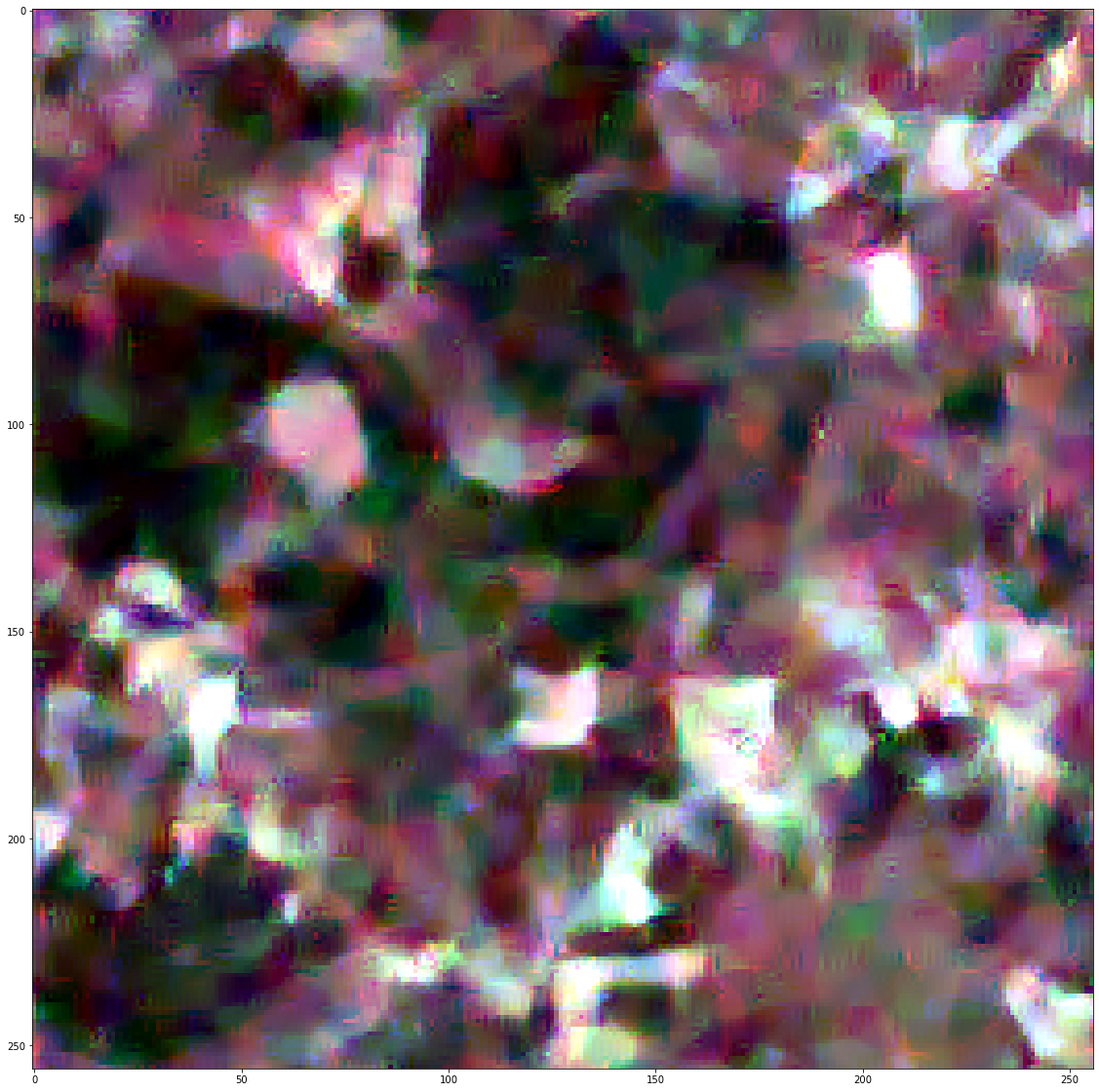}\label{subfig:original_image_giaf}}
% TVL2 L=2:
\subfigure[TV-L2 $L=2$, $(\text{PSNR}, \text{SSIM}) = (38.513, 0.967)$]{\includegraphics[width=0.24\textwidth]{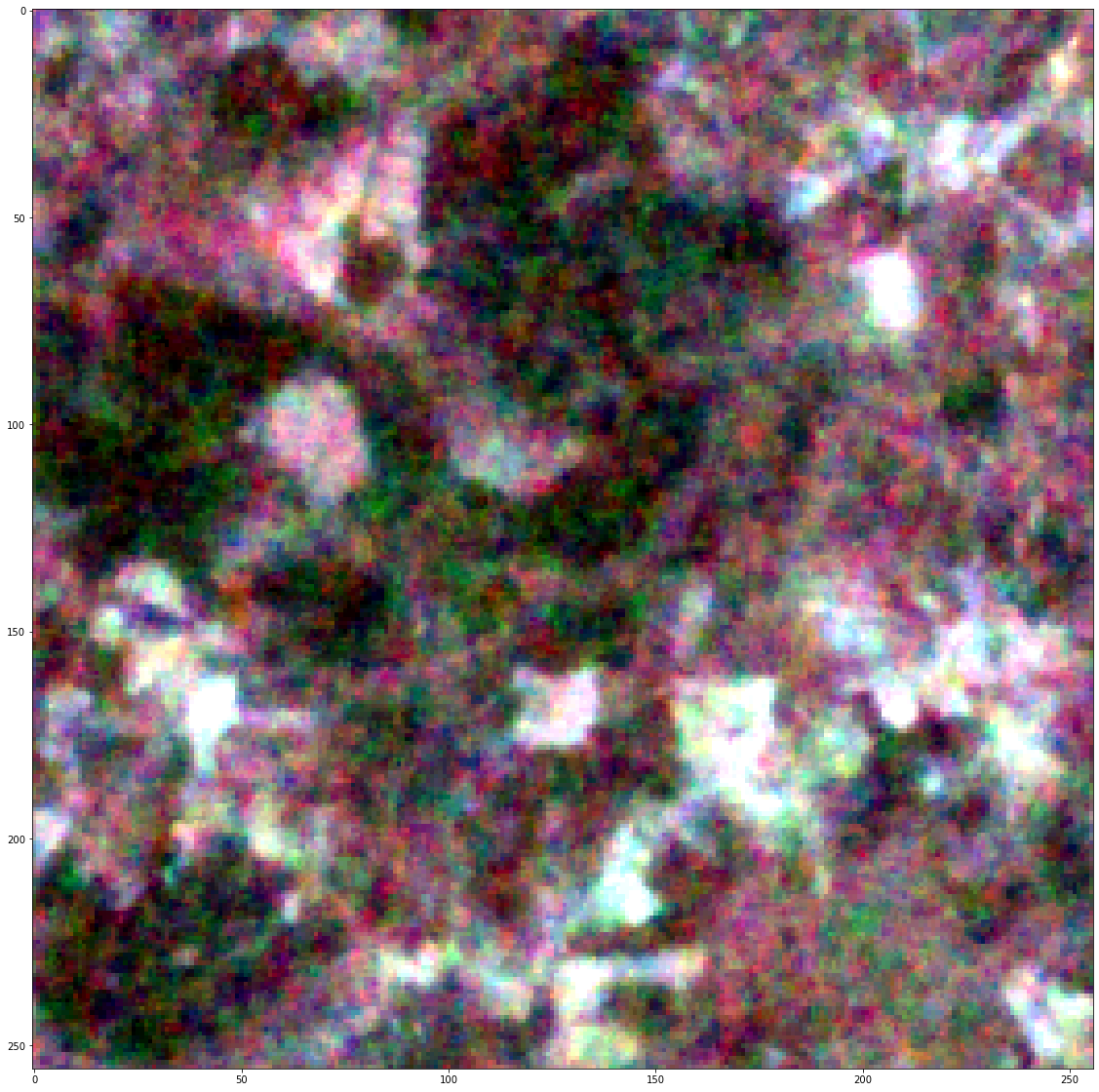}\label{subfig:original_image_tvl22}}
% TVL2 L=2:
\subfigure[TV-L2 $L=9$, $(\text{PSNR}, \text{SSIM}) = (38.535, 0.966)$]{\includegraphics[width=0.24\textwidth]{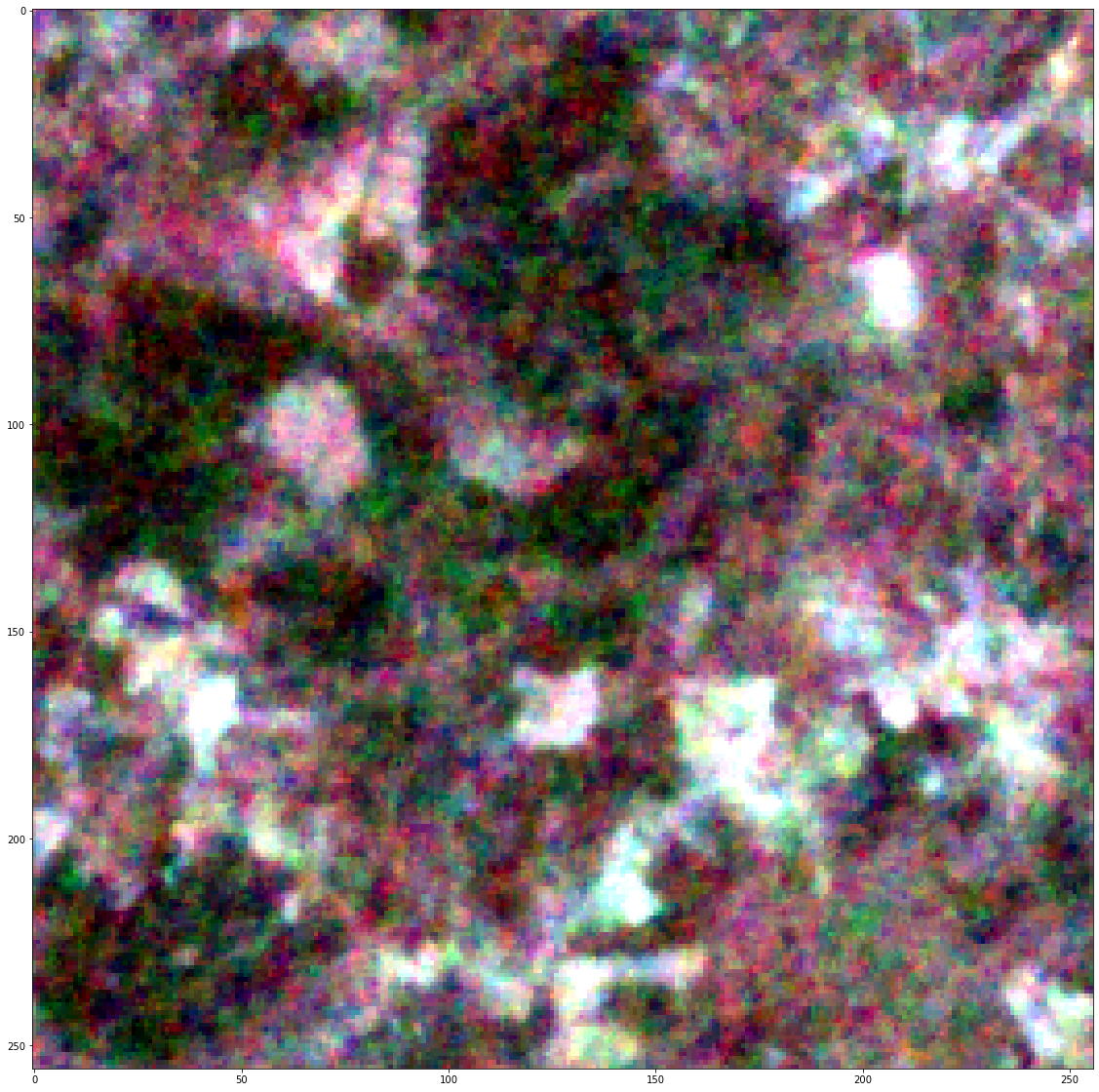}\label{subfig:original_image_tvl29}}

\caption{Qualitative image denoising results for RQUnet-VAE and competitor approaches. 
Image denoising by the RQUnet-VAE with standard deviation $\sigma = 0.04$ and optimal $\alpha^*$.}\vspace{-0.8cm} 

%The RQUnet-VAE provides robustness to the quantile $\alpha$, see MSE (f). Note that in general, we don not know how to compute the optimal $\alpha^*$ for large noise; then, for our method we simply choose some large enough $\alpha$ in practice. }
%\color{red} {Compare with TV-L2 (L=2, 10), TVG, GIAF, non-local mean}

\label{fig:ImageDenoisingScheme1:2019059hdf}
\end{center}
\end{figure*}

\fi

\if\hidefigures0

\fi

\subsection{Denoising Experiments}\label{subsec:denoising_experiments}\vspace{-0.3cm}
To impose artificial noise on images for evaluation, Gaussian noise was added to each band of the Sentinel-2 images with a standard deviation of $\text{std} =  0.04$.  All images were normalized to interval $[0 \,, 1]$ using the image minimum and maximum. To give some examples, original image shown in Figure~\ref{subfig:original_image} and the corresponding image with added noise in Figure~\ref{subfig:original_image_with_noise}. (All ten test images with added noise are visualized in SM).

% Below we present both quantitative (based on metrics in Section~\ref{exp:Quant}) and qualitative results that suggest that (i) RQUNet-VAE has an improved ability to remove noise from images compared to the baseline solutions described in Section~\ref{subsec:competitors}, and (ii) we also provide qualitative results by visually assessing the noise-reduced images resulting from the RQUNet-VAE compared to baseline approaches.

\subsubsection{Quantitative Results}\label{exp:Quant}
Table \ref{MeanVarianceComparison:ImageDenoising} provides the results of the comparison of the RQUNet-VAE Schemes 1 and 2 with the state-of-the-art methods described in Section~\ref{subsec:competitors} using the evaluation metrics described in Section~\ref{subsec:evaluationmetrics}. 
For the ten test images described in Section~\ref{subsec:datasets} the proposed RQUNet-VAE Scheme 2 yields the highest Peak-Signal-to-Noise-Ratio (PSNR) and the highest Structural Similarity Index (SSIM) of all competitors (Table 1). Among all the approaches based on harmonic analysis, RQUNet-VAE Scheme 1 yields the best results. 
Note GIAF also  iteratively computes scaling and wavelet coefficients, similar to our RQUNet-VAE Scheme 2. The main difference, however, is that GIAF employs Riesz Dyadic wavelet kernel for scaling and wavelet functions whereas RQUNet-VAE Scheme 2 uses our adapted RQUNet-VAE. Using this scheme, allows more redundant parameters to better learn specific features from the data and thus, better adapt to specific images. (Denoising results for all other test images are visualized in SM).

\begin{figure*}[tbh]
\begin{center}  

% Image 1:
% Original
\subfigure[Original image%, $\text{MSE} = 2.83 \, 10^{-6}$
]{\includegraphics[width=0.24\textwidth]{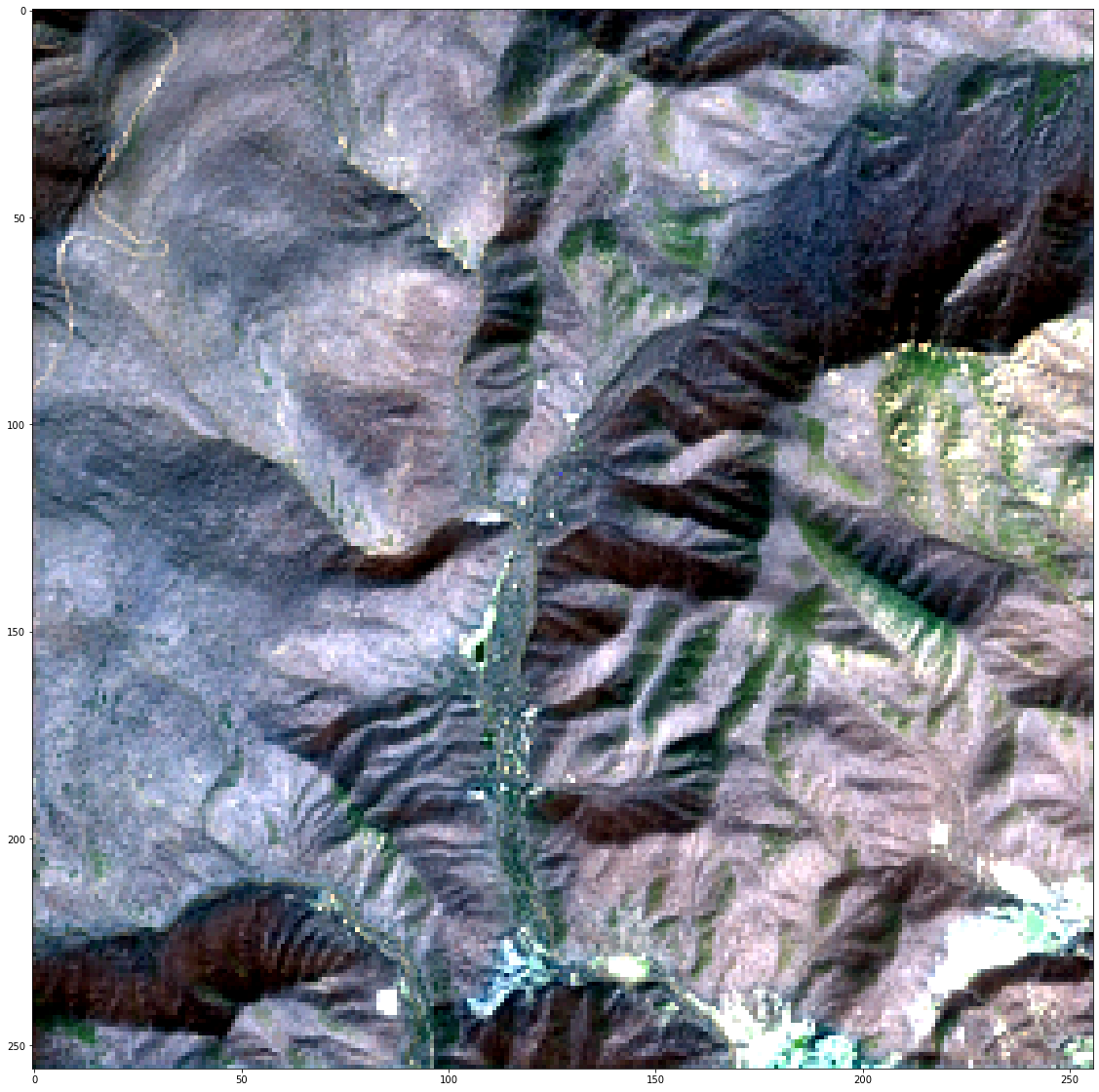}}
% Lowpass
\subfigure[Lowpass image]{\includegraphics[width=0.24\textwidth]{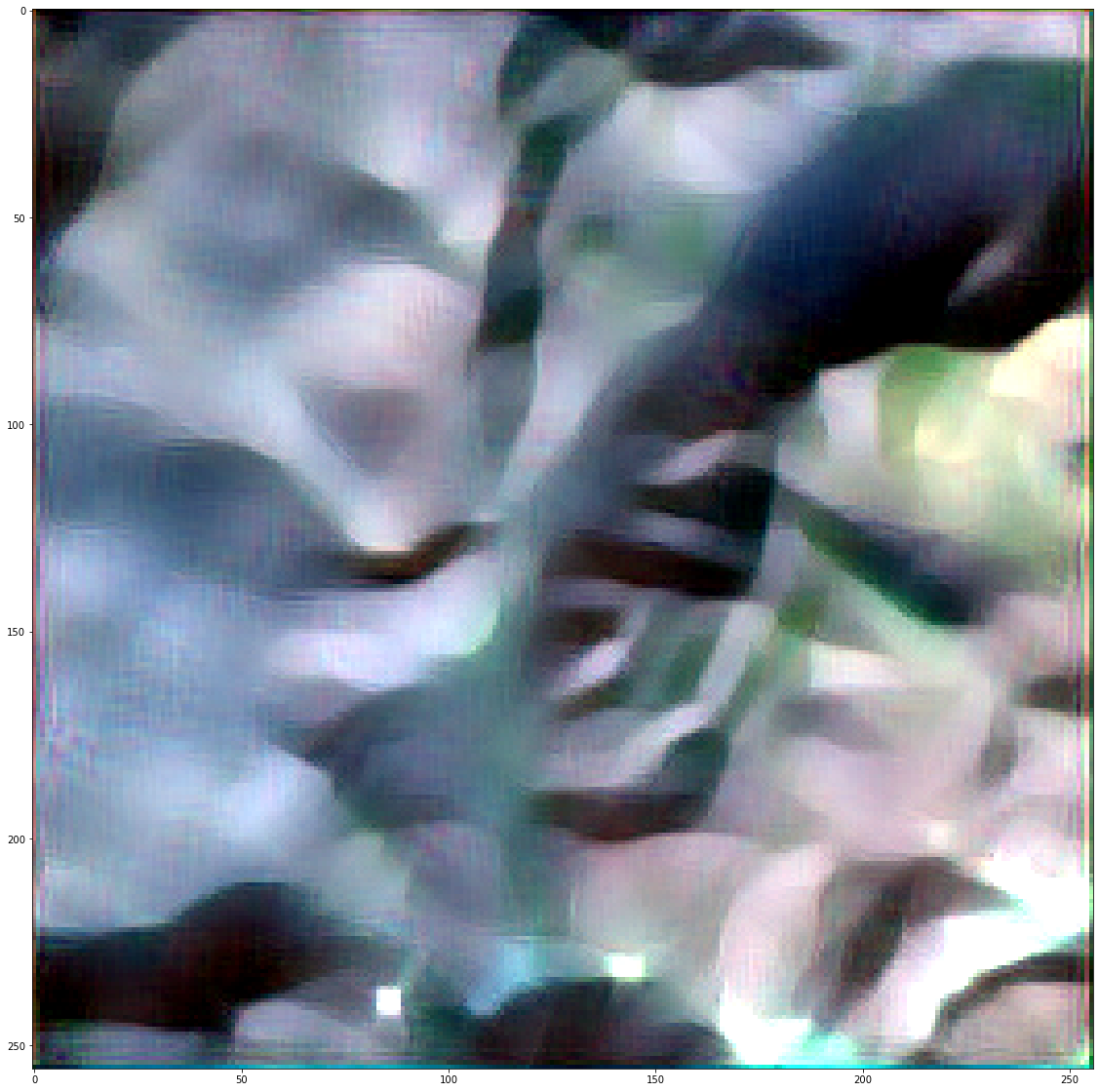}}
% Bandpass
\subfigure[Bandpass image]{\includegraphics[width=0.24\textwidth]{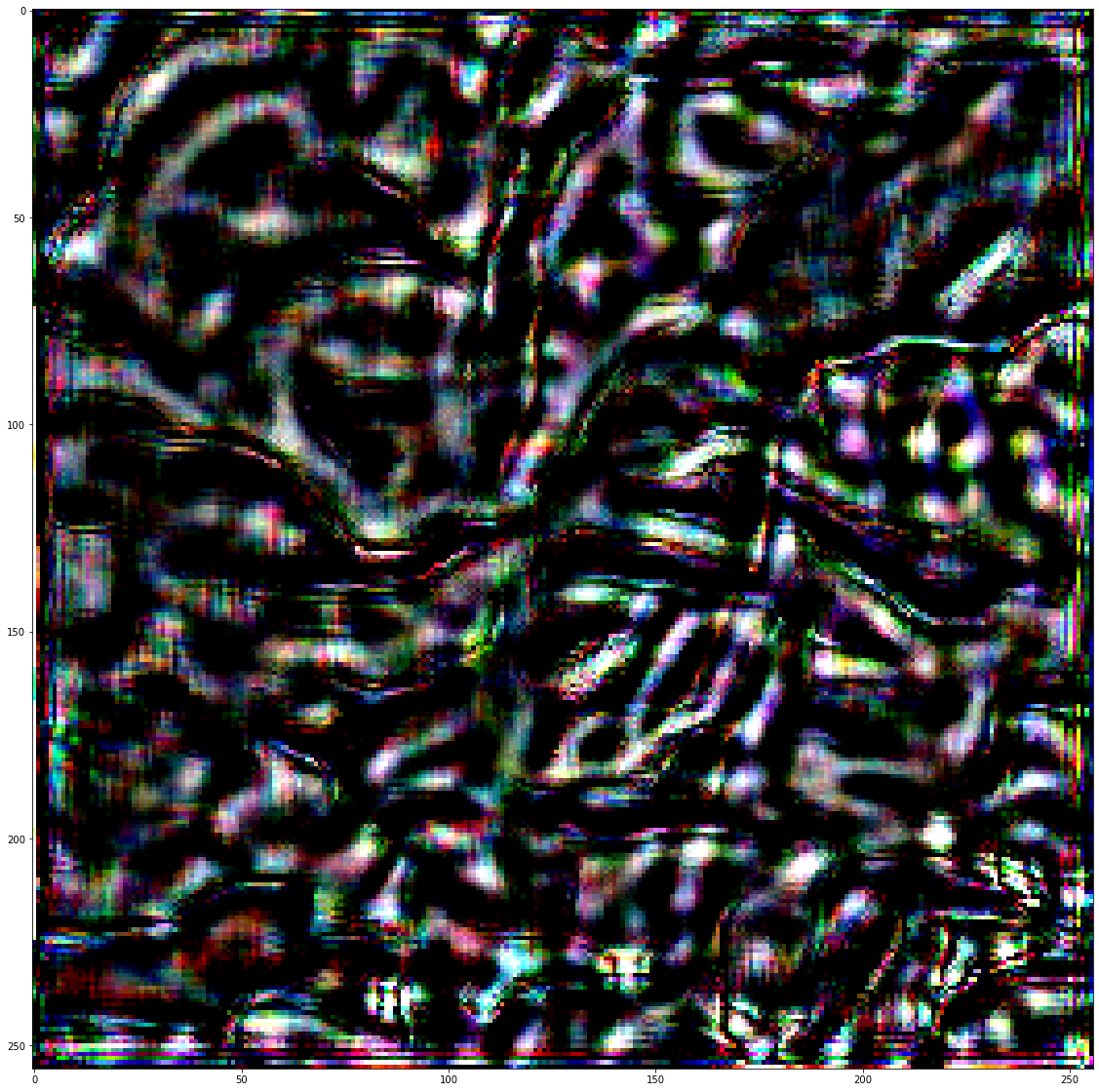}}
% Highpass
\subfigure[Highpass image]{\includegraphics[width=0.24\textwidth]{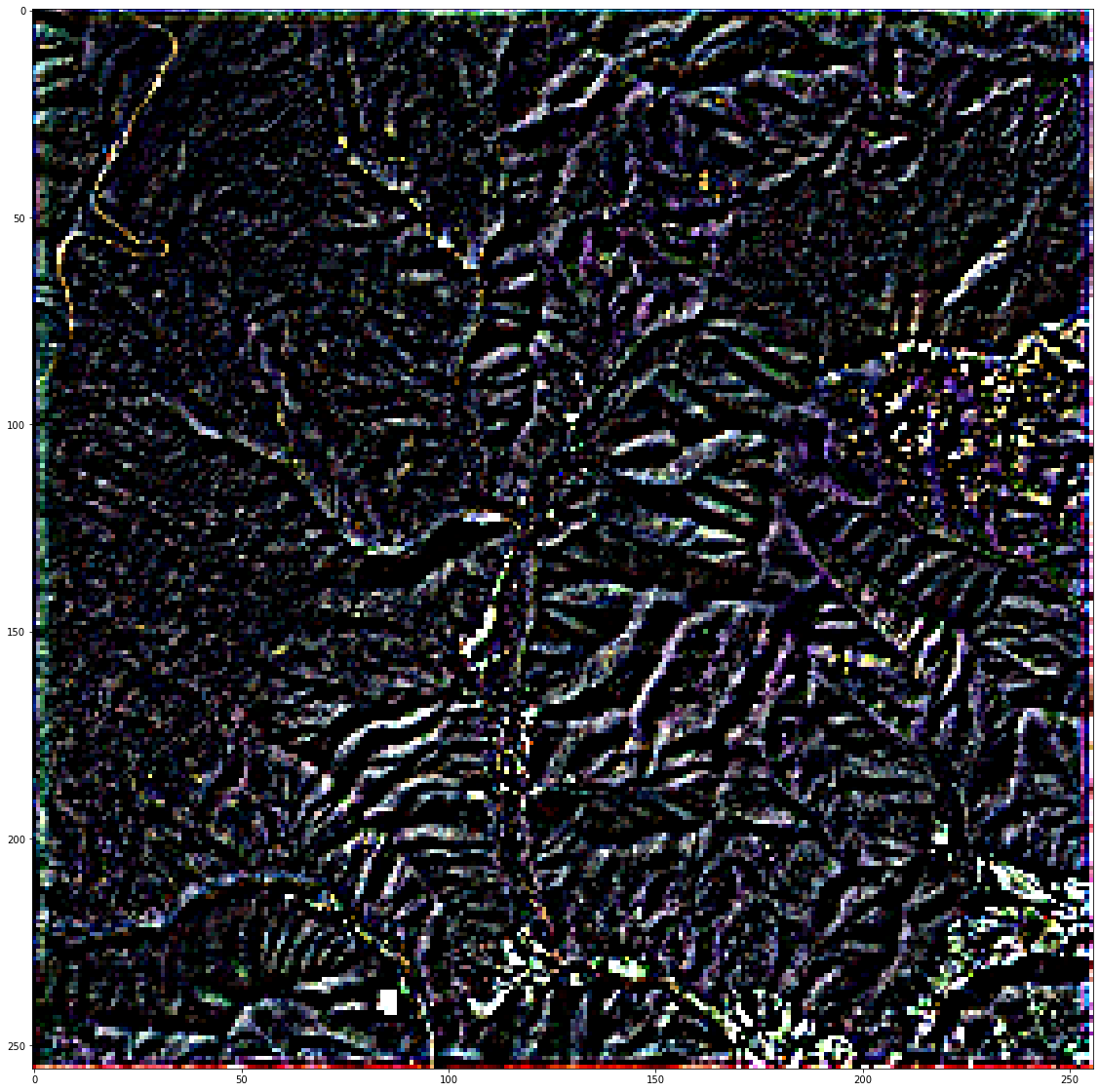}}

\vspace{-0.2cm}
% Image 2:
% Original
\subfigure[Original image%, $\text{MSE} = 4.65 \, 10^{-6}$
]{\includegraphics[width=0.24\textwidth]{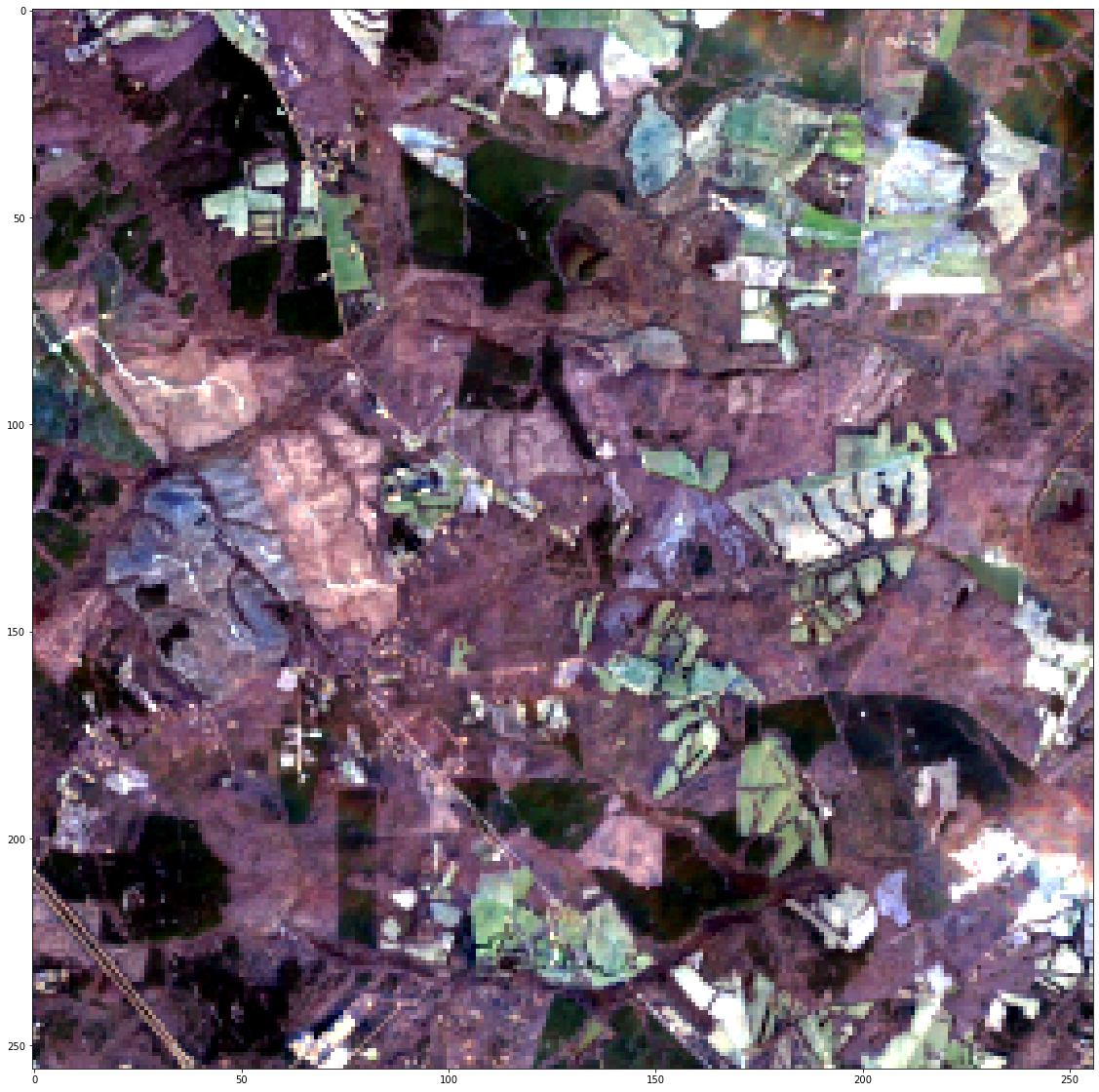}}
% Lowpass
\subfigure[Lowpass image]{\includegraphics[width=0.24\textwidth]{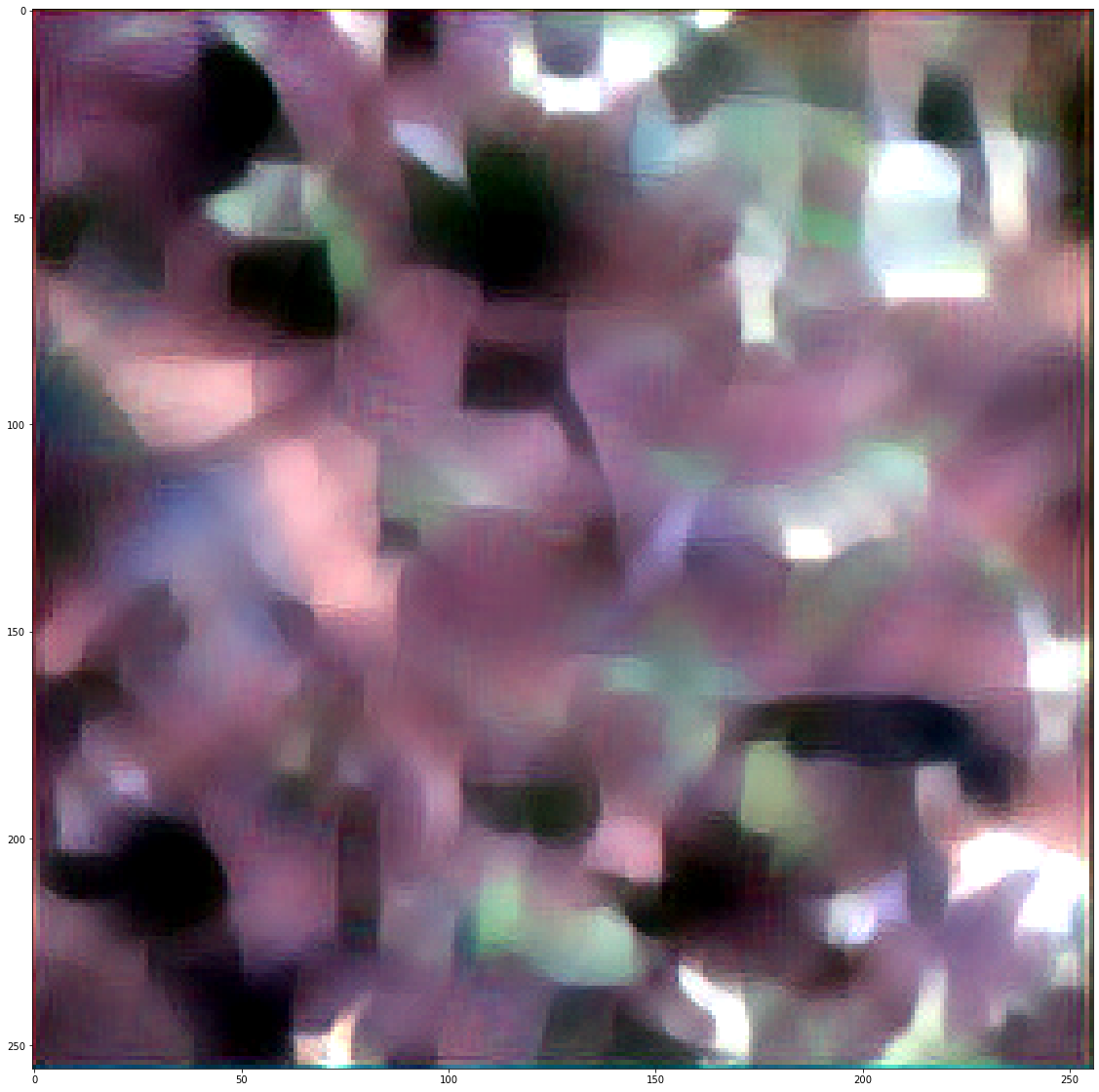}}
% Bandpass
\subfigure[Bandpass image]{\includegraphics[width=0.24\textwidth]{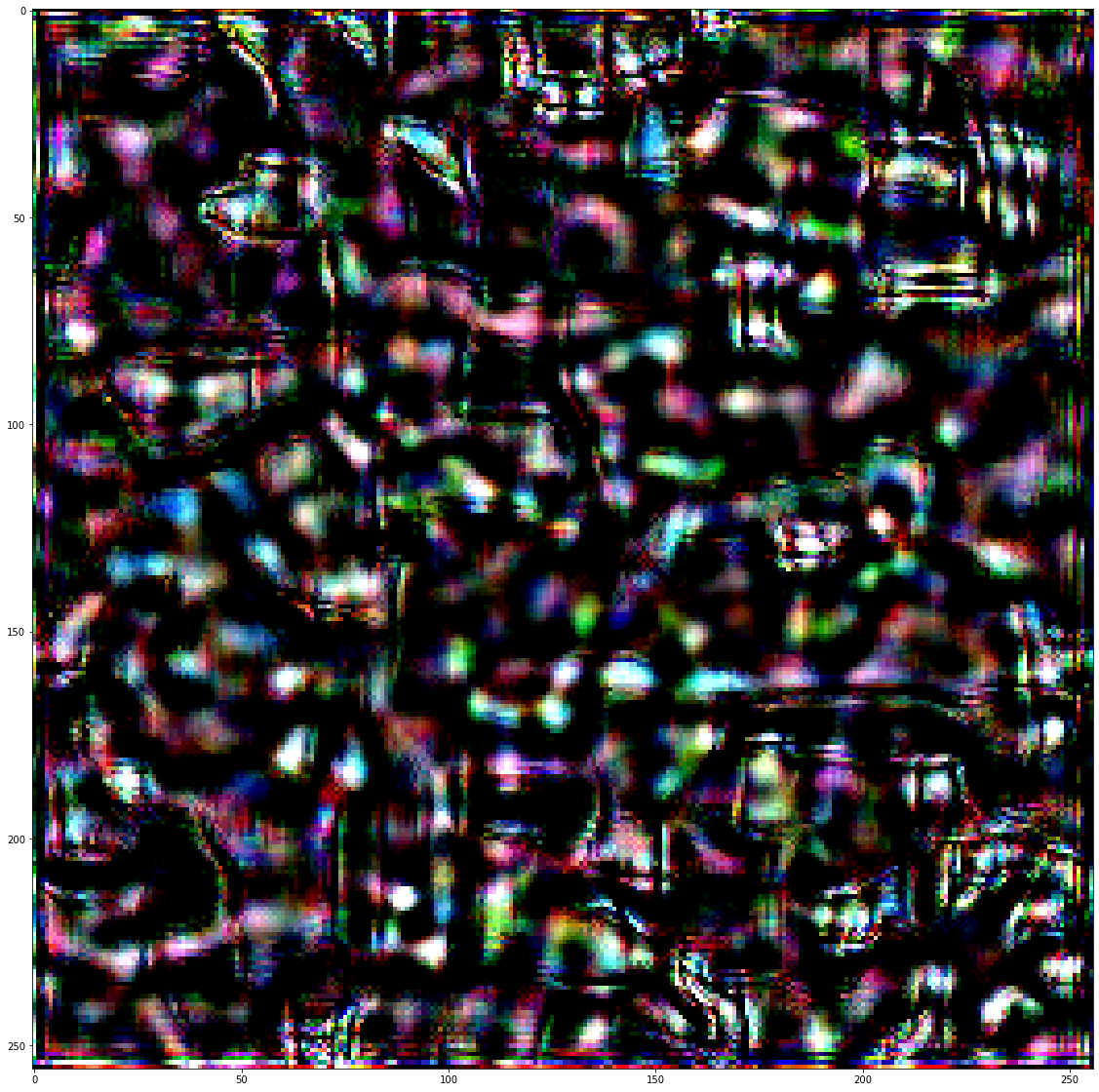}}
% Highpass
\subfigure[Highpass image]{\includegraphics[width=0.24\textwidth]{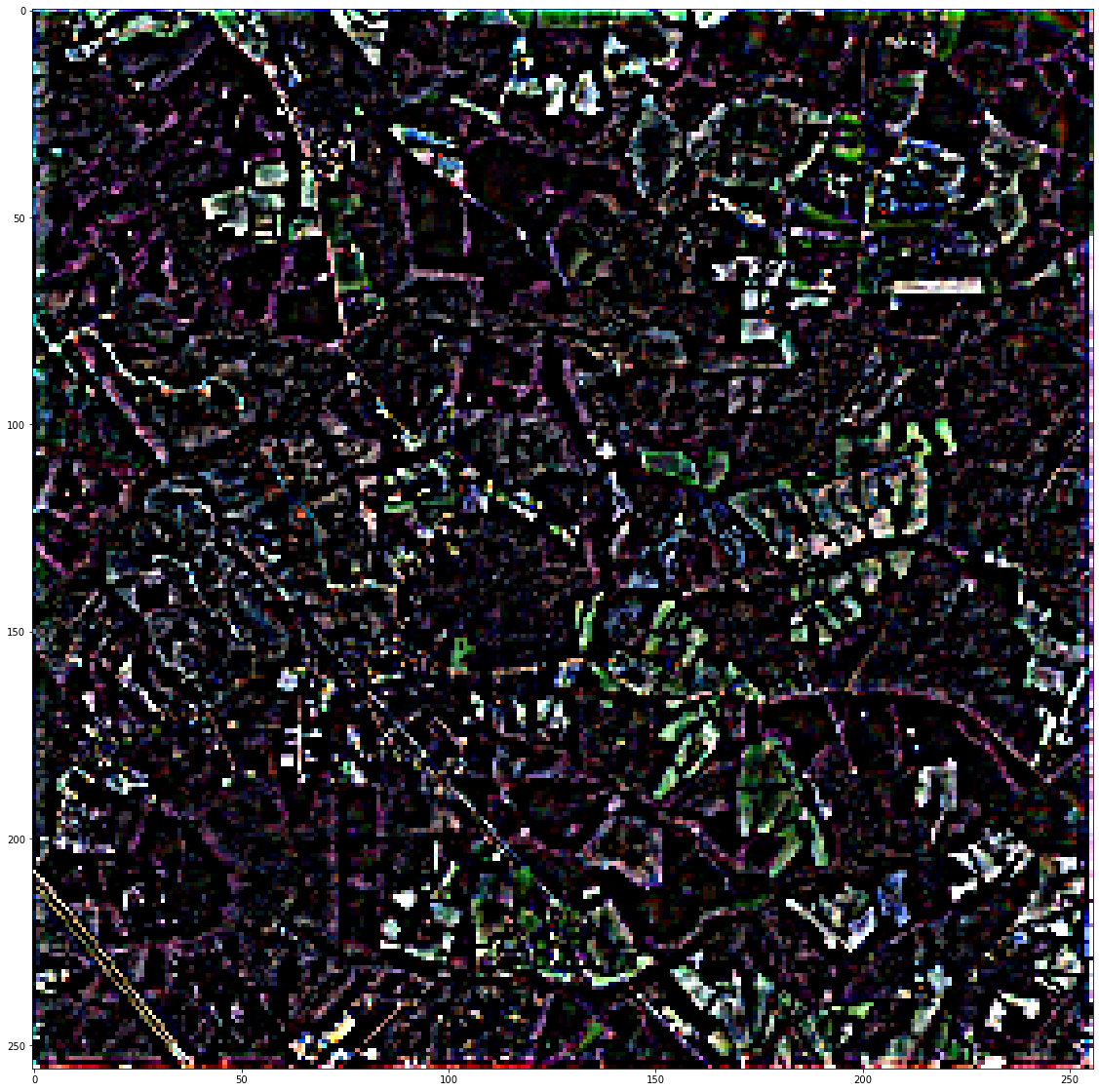}}

%\vspace{-0.2cm}
%
% Image 3:
% Original
%\subfigure[Original image%, $\text{MSE} = 6.53 \, 10^{-6}$]{\includegraphics[width=0.24\textwidth]{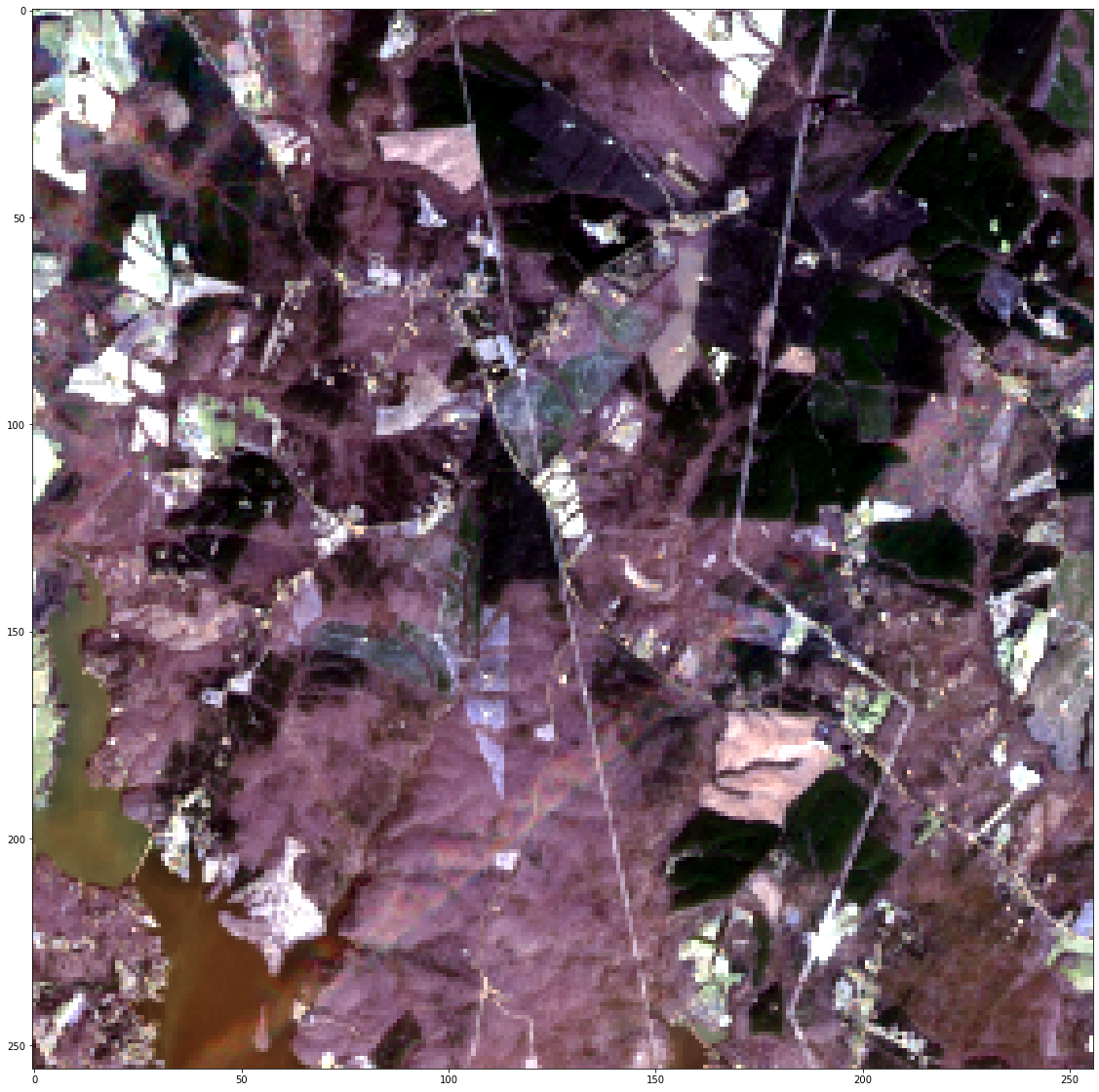}}
% Lowpass
%\subfigure[Lowpass image]{\includegraphics[width=0.24\textwidth]{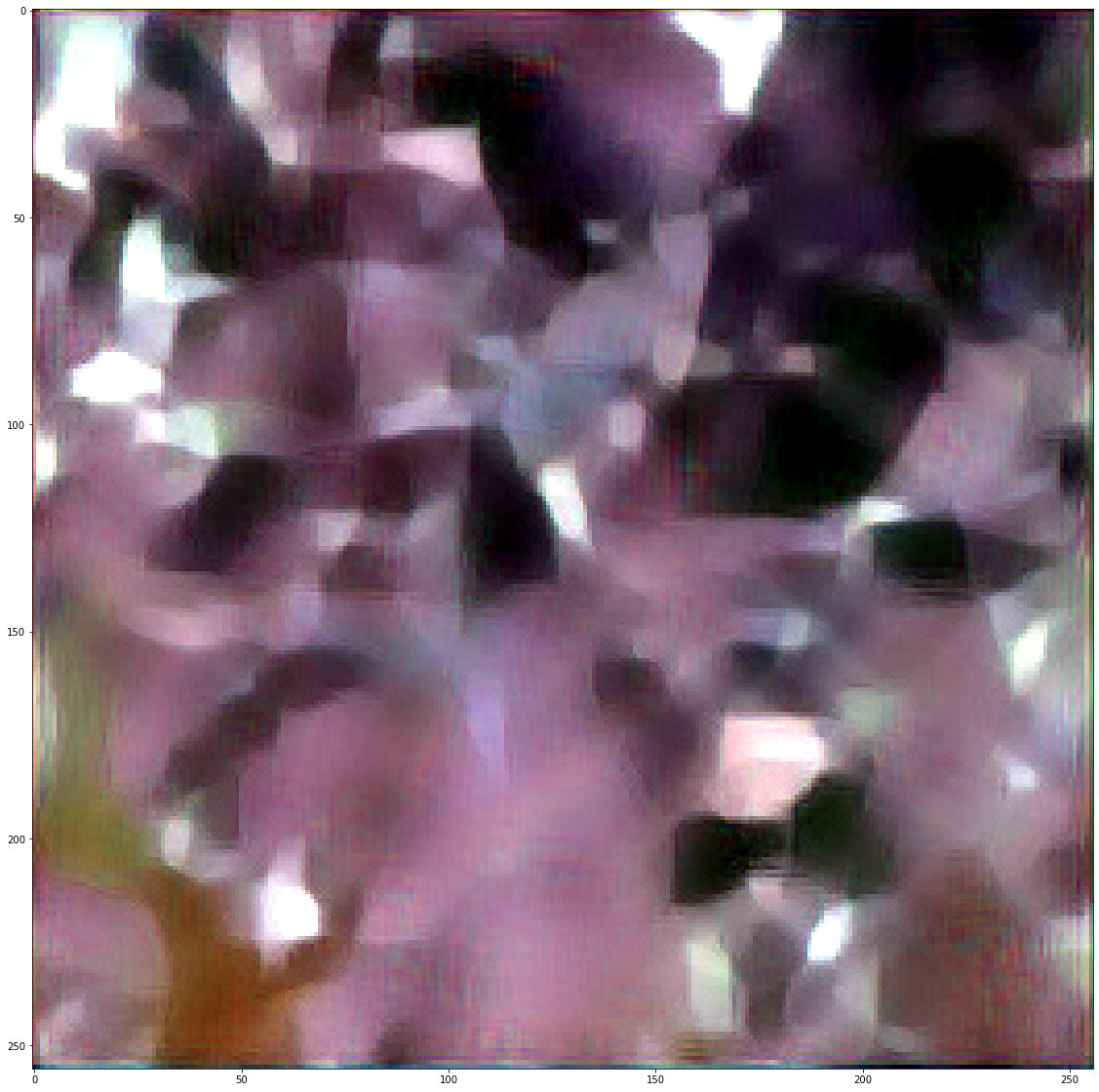}}
% Bandpass
%\subfigure[Bandpass image]{\includegraphics[width=0.24\textwidth]{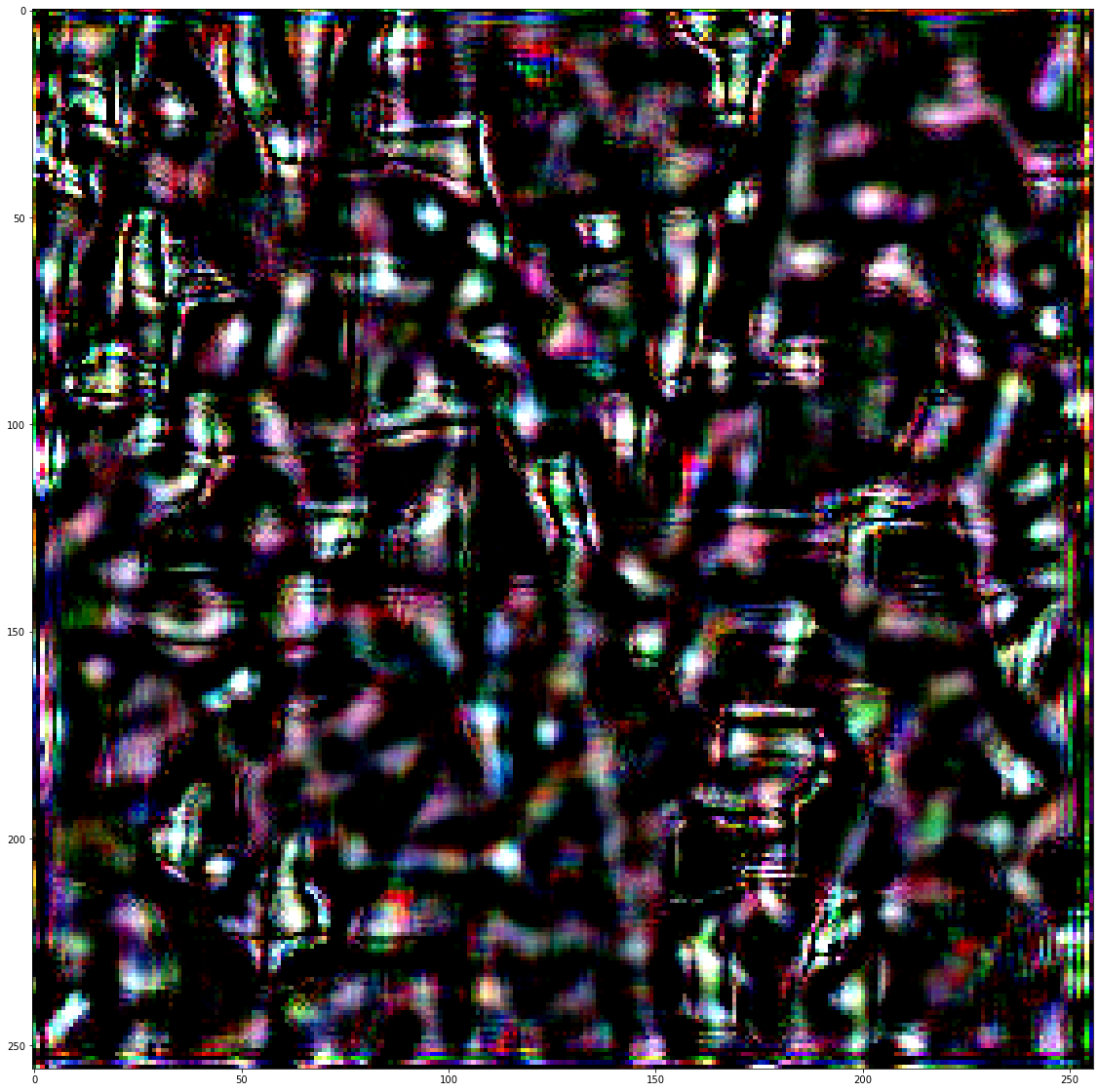}}
% Highpass
%\subfigure[Highpass image]{\includegraphics[width=0.24\textwidth]{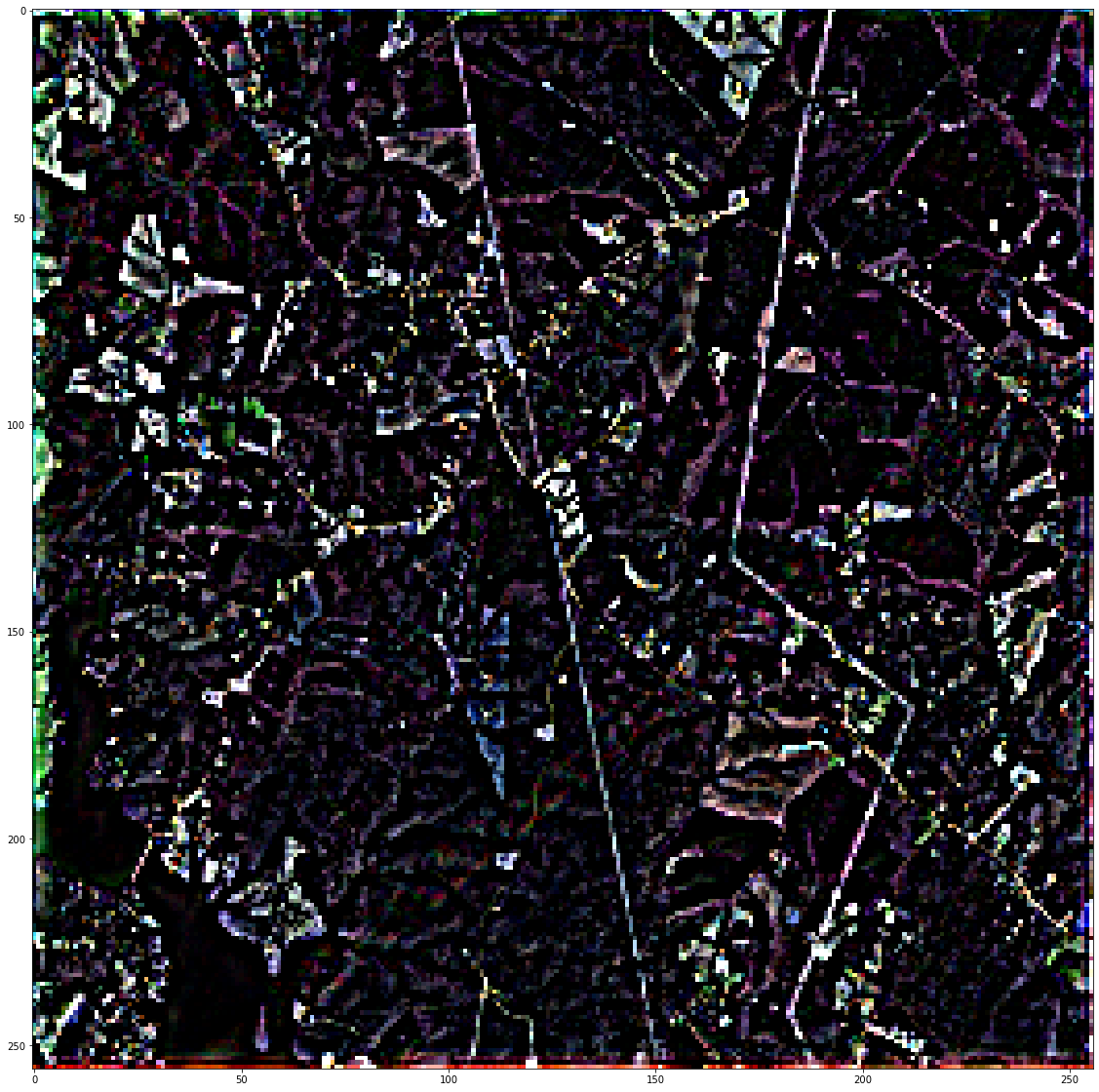}}

%\vspace{-0.2cm}
%
% Image 4:
% Original
%\subfigure[Original image%, $\text{MSE} = 6.53 \, 10^{-6}$]{\includegraphics[width=0.24\textwidth]{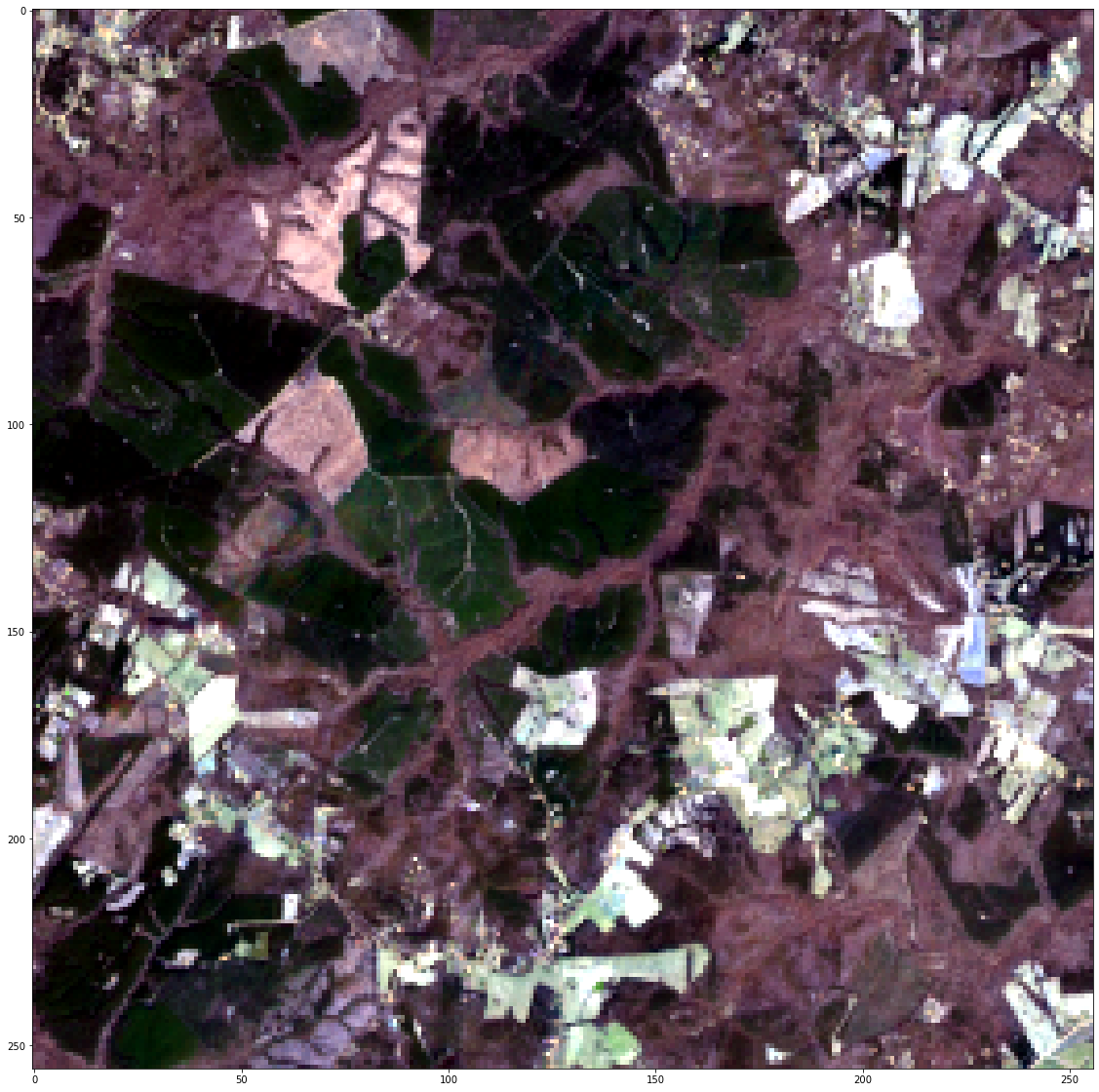}}
% Lowpass
%\subfigure[Lowpass image]{\includegraphics[width=0.24\textwidth]{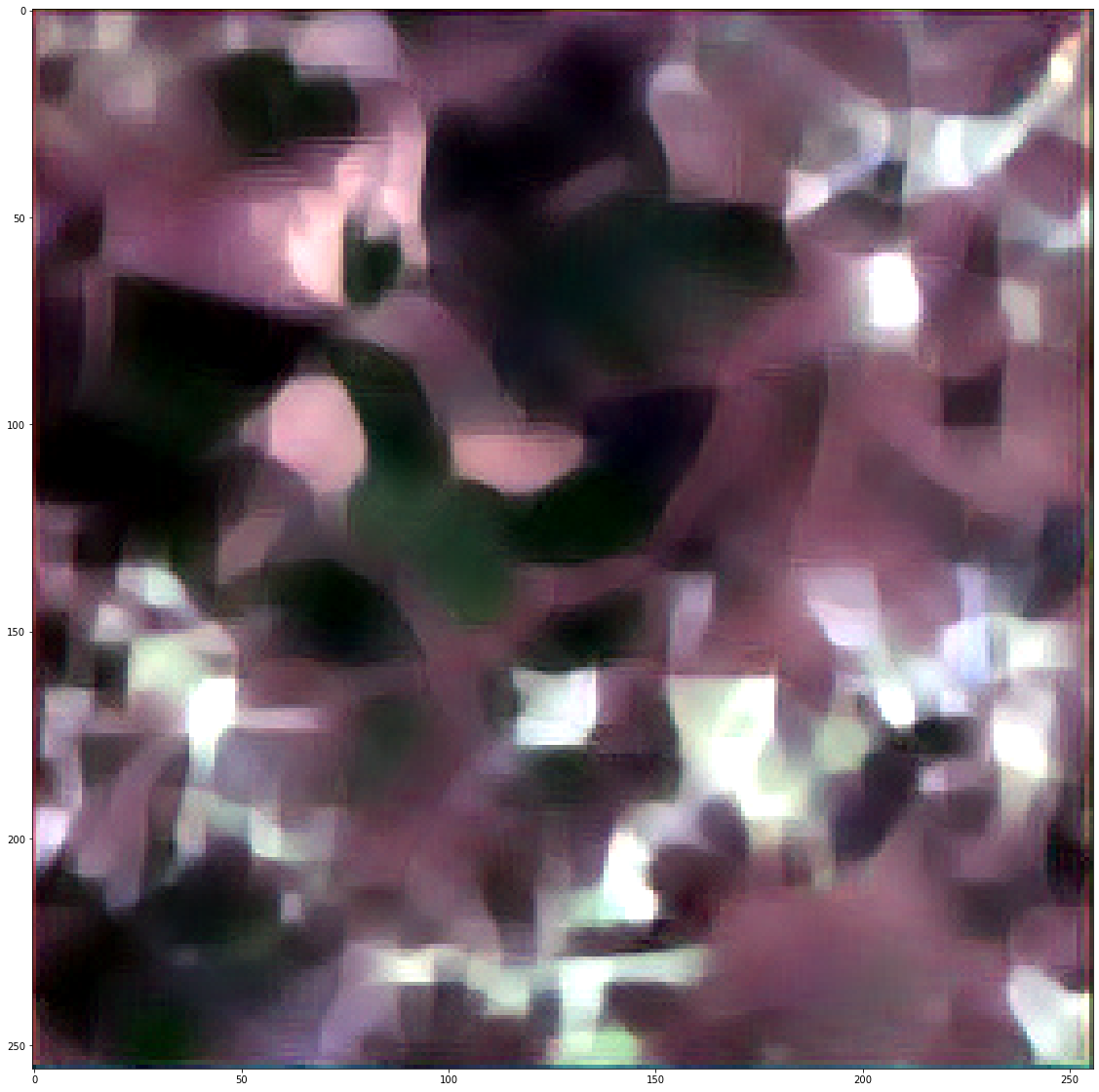}}
% Bandpass
%\subfigure[Bandpass image]{\includegraphics[width=0.24\textwidth]{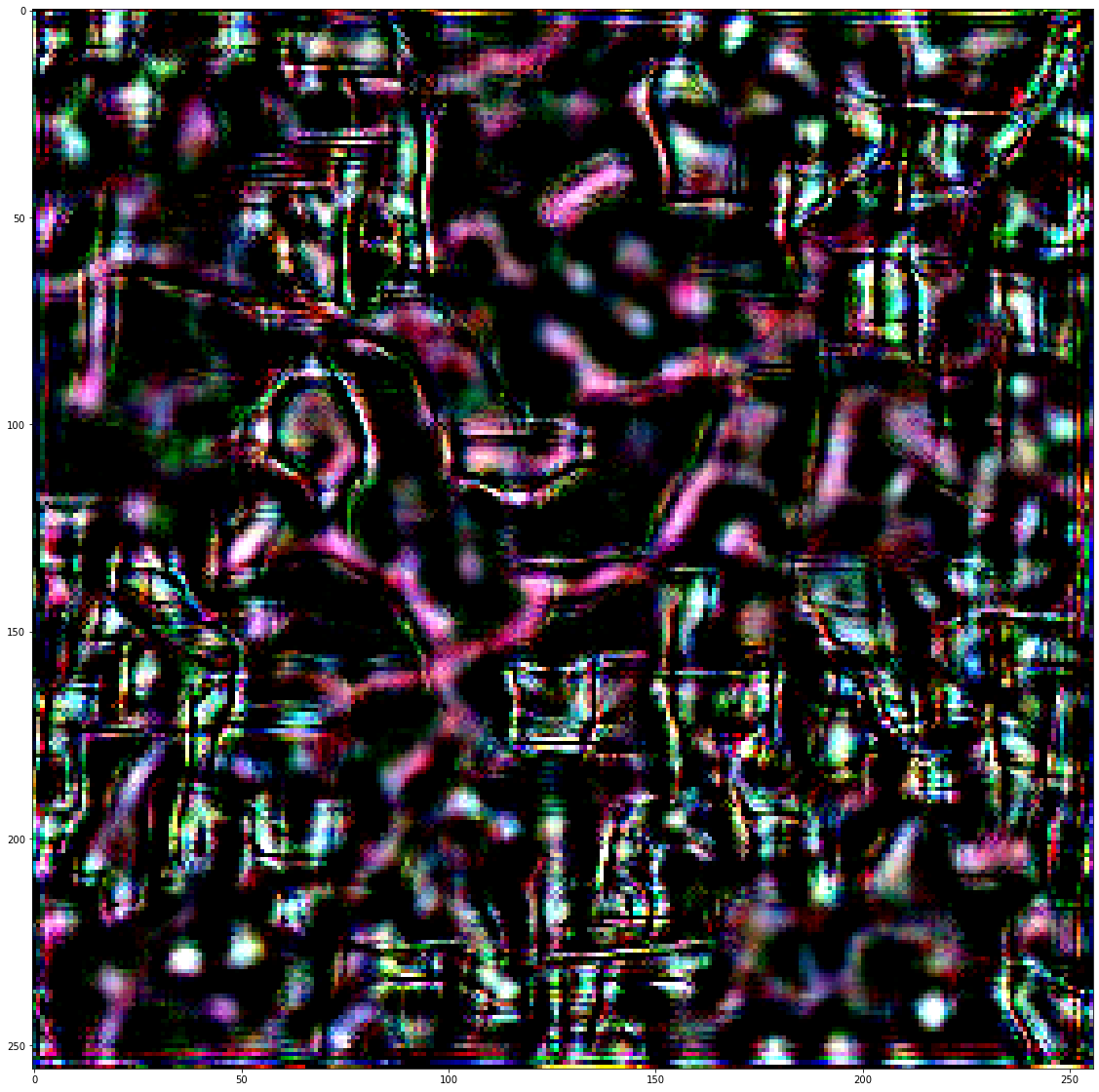}}
% Highpass
%\subfigure[Highpass image]{\includegraphics[width=0.24\textwidth]{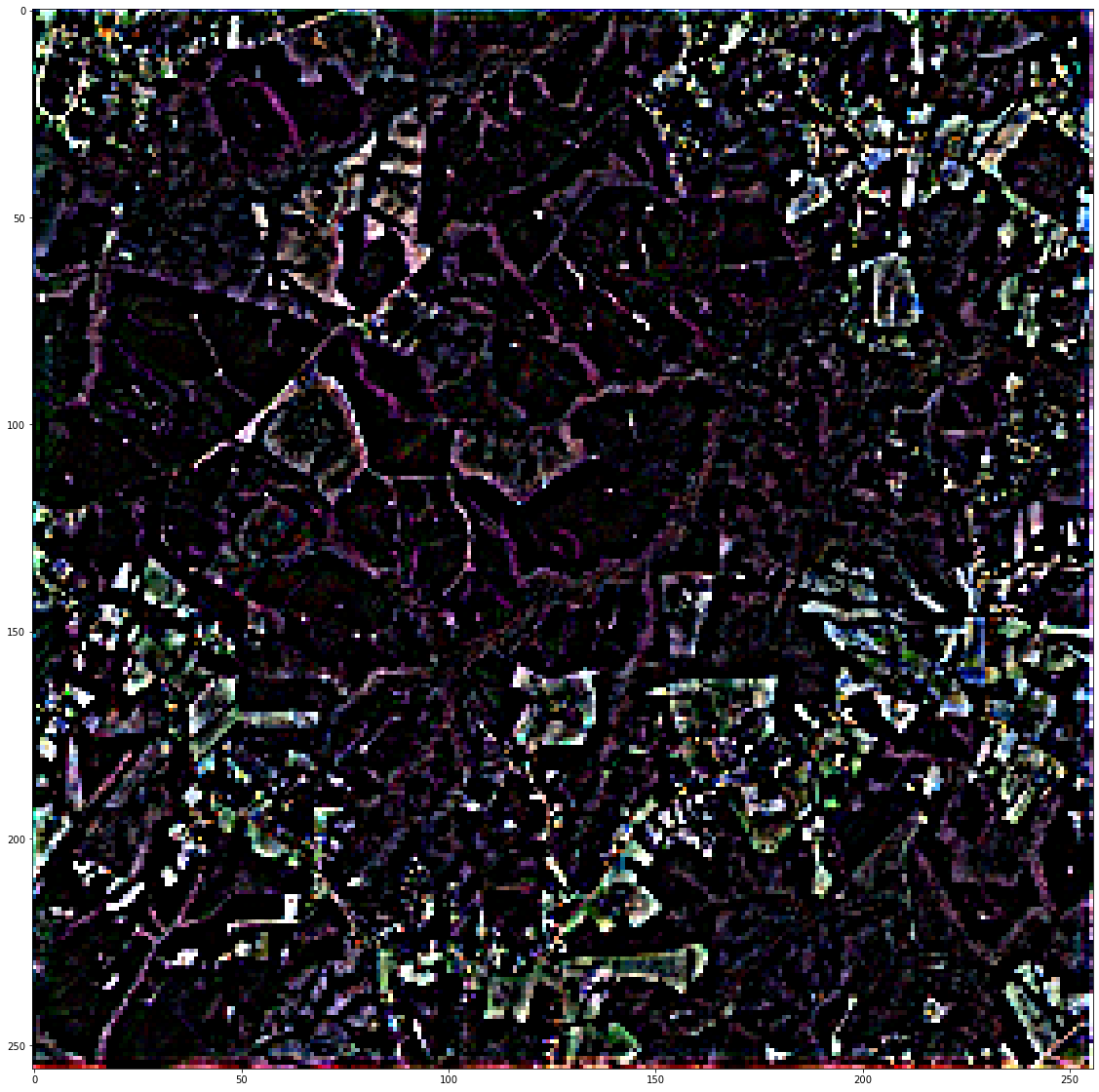}}
\vspace{-0.3cm}
\caption{Image decomposition by RQUNet-VAE by spectral analysis with a threshold $T = 3$ for spectral histogram. Parameters: number of iteration $\text{Iter} = 30$, a smoothing parameter $\alpha = 0.4$. %Mean-square-error of reconstructed images versus their original images are reported in a figure.
}
\label{fig:ImageDecomposition}\vspace{-0.7cm}
\end{center}
\end{figure*}

\subsubsection{Qualitative Results}

Figures 1, 2 and 3 in SM %\ref{fig:OriginalTestImages:19059hdf}, \ref{fig:NoisyTestImagesvar0_04:19059hdf} and \ref{fig:DenoisedTestImagesRQUnetVAEScheme3var0_04:19059hdf}
are original, noisy and denoised images with the 2nd scheme of our RQUNet-VAE. Figures~\ref{subfig:original_image_scheme1}-\ref{subfig:original_image_tvl29} provide qualitative results to assess how well the RQUNet-VAE schemes are able to reduce artificially added noise individual Sentinel2 images.  Figure~\ref{subfig:original_image_scheme1} shows the result of denoising the noisy image of Figure~\ref{subfig:original_image_with_noise} using our RQUNet-VAE Scheme~1. Since Scheme~1 is based on harmonic analysis, we compare it to noise reduction using a Riesz Dyadic wavelet kernel~\cite{RichterThaiHuckemann2020} in Figure~\ref{subfig:original_image_riesz}, using curvelets~\cite{CandesDonoho2004} in Figure~\ref{subfig:original_image_curvelet}, and using wavelet CDF 9/7~\cite{UnserBlu2003, Sweldens1997, daubechies1998factoring} in Figure~\ref{subfig:original_image_with_noise_wavelet_cdf}. The RQUNet-VAE Scheme~1 best preserves the edges of objects of the original image. The Riez Dyadic wavelet kernel in Figure~\ref{subfig:original_image_riesz} yields a blurred denoised image. In contrast, wavelets (Figure~\ref{subfig:original_image_with_noise_wavelet_cdf}) better retain edges, but much of the noise remains. Across all these figures, it is clearly discernible that the proposed RQUNet-VAE Scheme~1 provides the best combination of noise reduction and delineation of edges and objects.

Figure~\ref{subfig:original_image_scheme2} shows the denoised image using the RQUNet-VAE Scheme~2, compared to other iterative methods including GIAF~\cite{RichterThaiHuckemann2020} in Figure~\ref{subfig:original_image_giaf} and  directional TV-L2~\cite{RudinOsherFatemi1992, GoldsteinOsher2009,ThaiGottschlich2016DG3PD} using a number of directions $L=2$ in Figure~\ref{subfig:original_image_tvl22} and $L=9$ in Figure~\ref{subfig:original_image_tvl29}. %It is evident that RQUNet-VAE Scheme~2 again provides the best trade-off between reduction of noise and delineation of edges and objects, where GIAF (Figure~\ref{subfig:original_image_giaf}) oversmoothes edges between objects while TV-L2 (Figure~\ref{subfig:original_image_tvl22} and Figure~\ref{subfig:original_image_tvl29}) still retain clearly visible noise.
Figure \ref{fig:ImageDenoisingScheme1:2019059hdf} (g) illustrates that RQUNet-VAE significantly reduces artefacts while preserving texture pattern, contrast and sharp edges of objects in the reconstructed image, such that it is most similar to the original image. 
%\textcolor{red}{I am not sure one can claim that that texture pattern is preserved looking at these images. At this scale it is hard to see any texture within objects in the original image.}
%
%\textcolor{red}{Duy: no method can perfectly preserve small scale texture mixed with noise, but some large scale texture can be visible (= "preserved") while smoothing out noise!}
%
Note that the GIAF-Riesz dyadic method in Figure \ref{fig:ImageDenoisingScheme1:2019059hdf} (h) also performs very well, but a reconstructed images also contained some artefacts.  The RQUNet-VAE Scheme~2 again provides the best trade-off between reduction of noise and delineation of edges and objects, where GIAF (Figure~\ref{subfig:original_image_giaf}) oversmoothes edges between objects while TV-L2 (Figure~\ref{subfig:original_image_tvl22} and Figure~\ref{subfig:original_image_tvl29}) still retain clearly visible noise.
%\textcolor{red}{what do we define as artifacts here? compared to the RQUNET-VAE this method preserved the the edges and geometery of  objects, but lost most of the texture in the inside objects. I'm trying to make sure we describe the visual assessment objectively using the correct terminology}
%
%\textcolor{red}{Duy: no method can pareserve texture mixed with noise! but better methon perform better for some certain scale textures!}
%
This demonstrates that the RQUNet-VAE with learned and deterministic frames increases sparsity in a generalized Besov space. 
Section~\ref{sec:DenoisingSegmentation} will demonstrate how this balance of noise reduction and discrimination of edges improves machine learning results when applying the RQUNet-VAE to segmentation of high resolution images for land cover mapping.

% ========================

%\begin{figure*}
%\begin{center}  
 
% Original:
%\subfigure[Original image]{\includegraphics[width=0.24\textwidth]{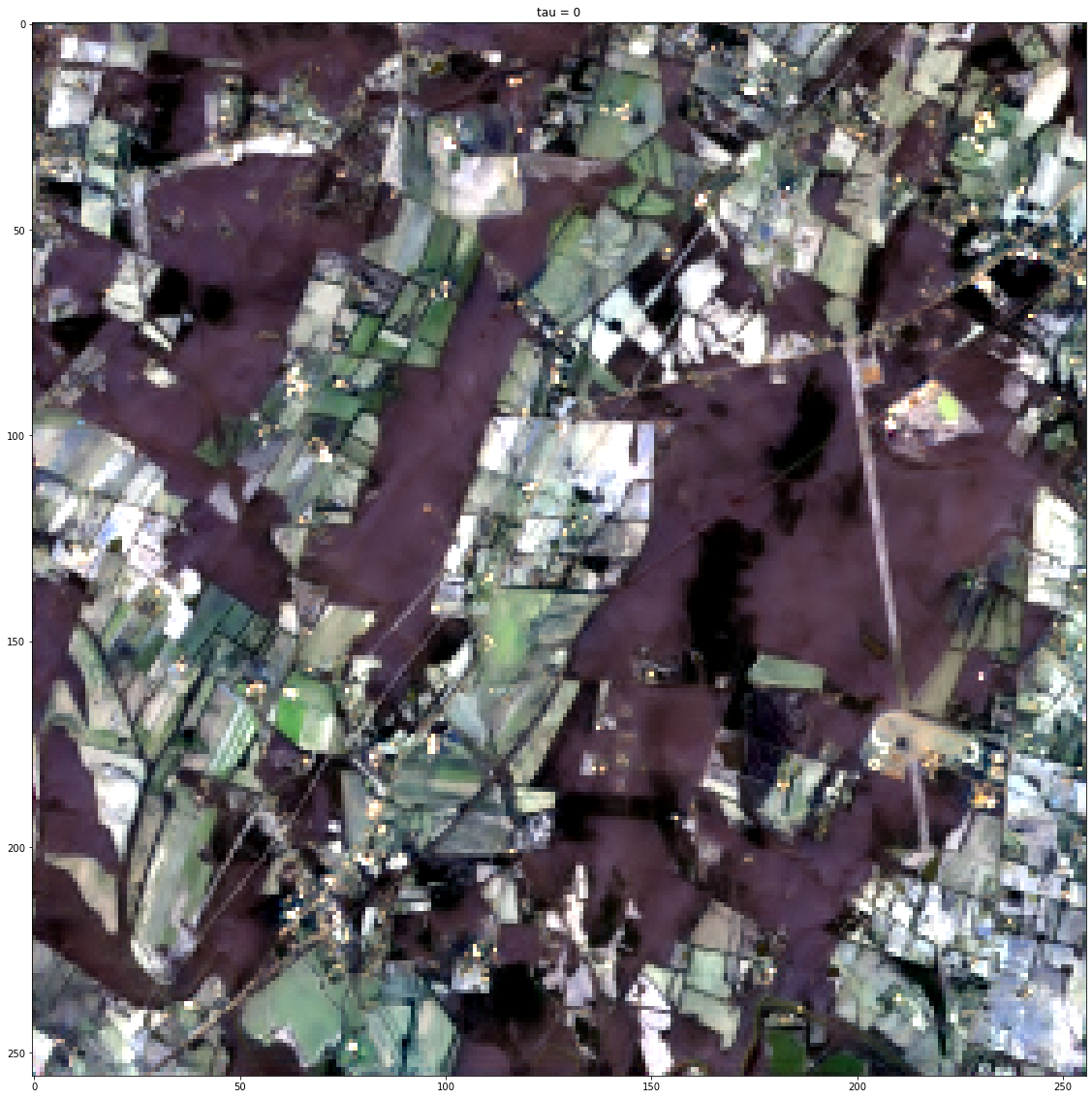}}
% Iteration:
%\subfigure[Iteration 4]{\includegraphics[width=0.24\textwidth]{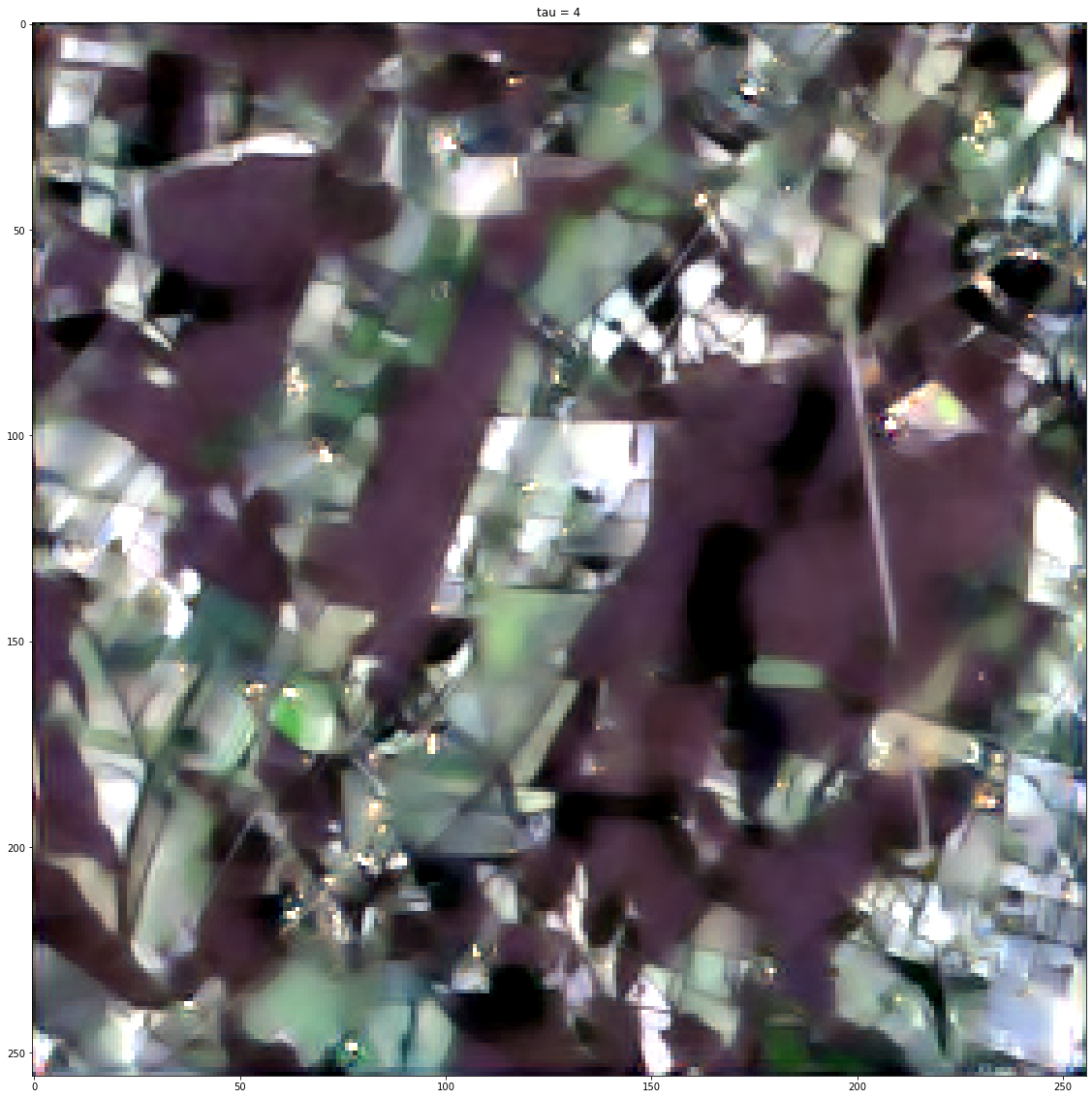}}
% Iteration:
%\subfigure[Iteration 8]{\includegraphics[width=0.24\textwidth]{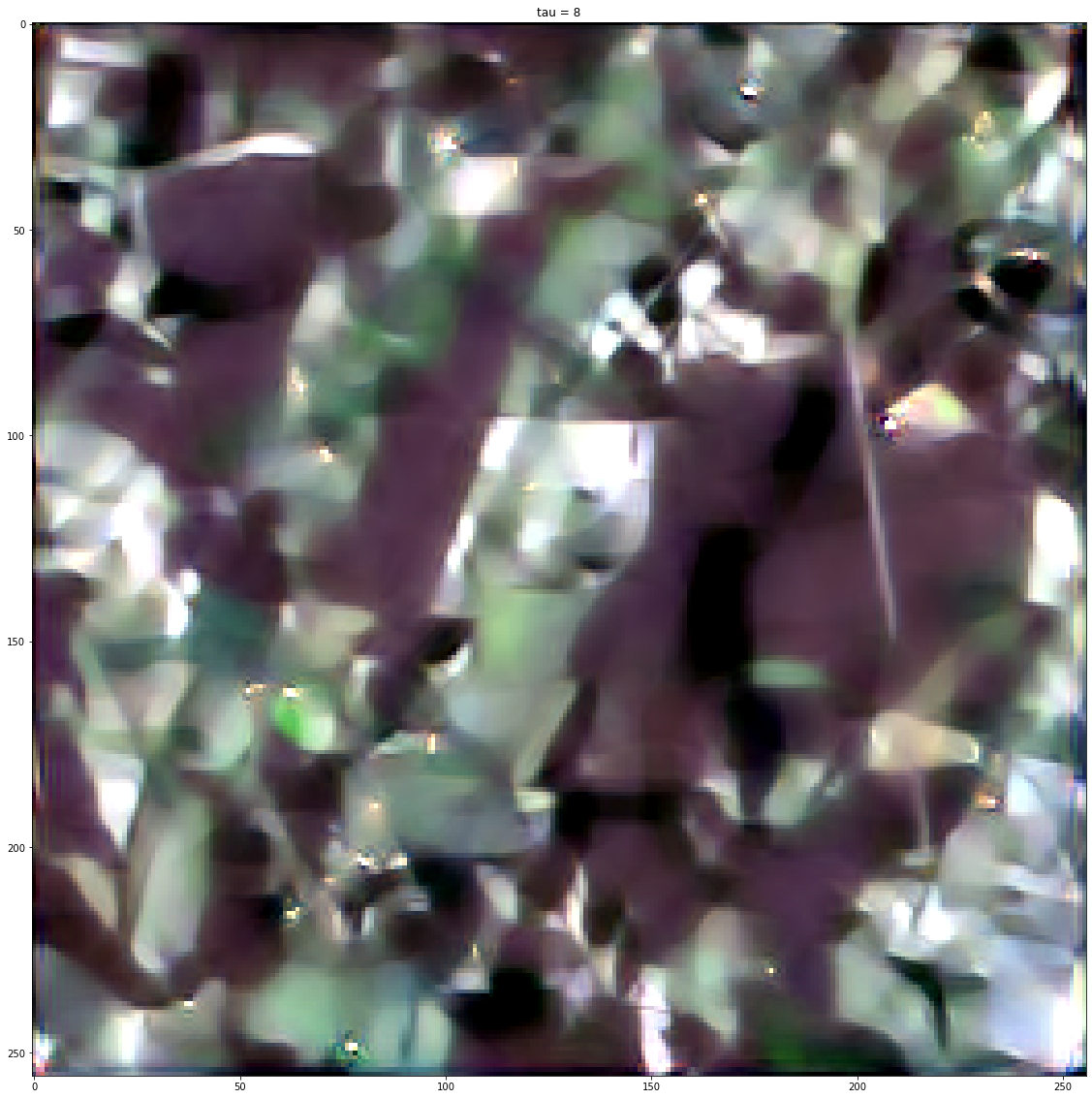}}
% Iteration:
%\subfigure[Iteration 20]{\includegraphics[width=0.24\textwidth]{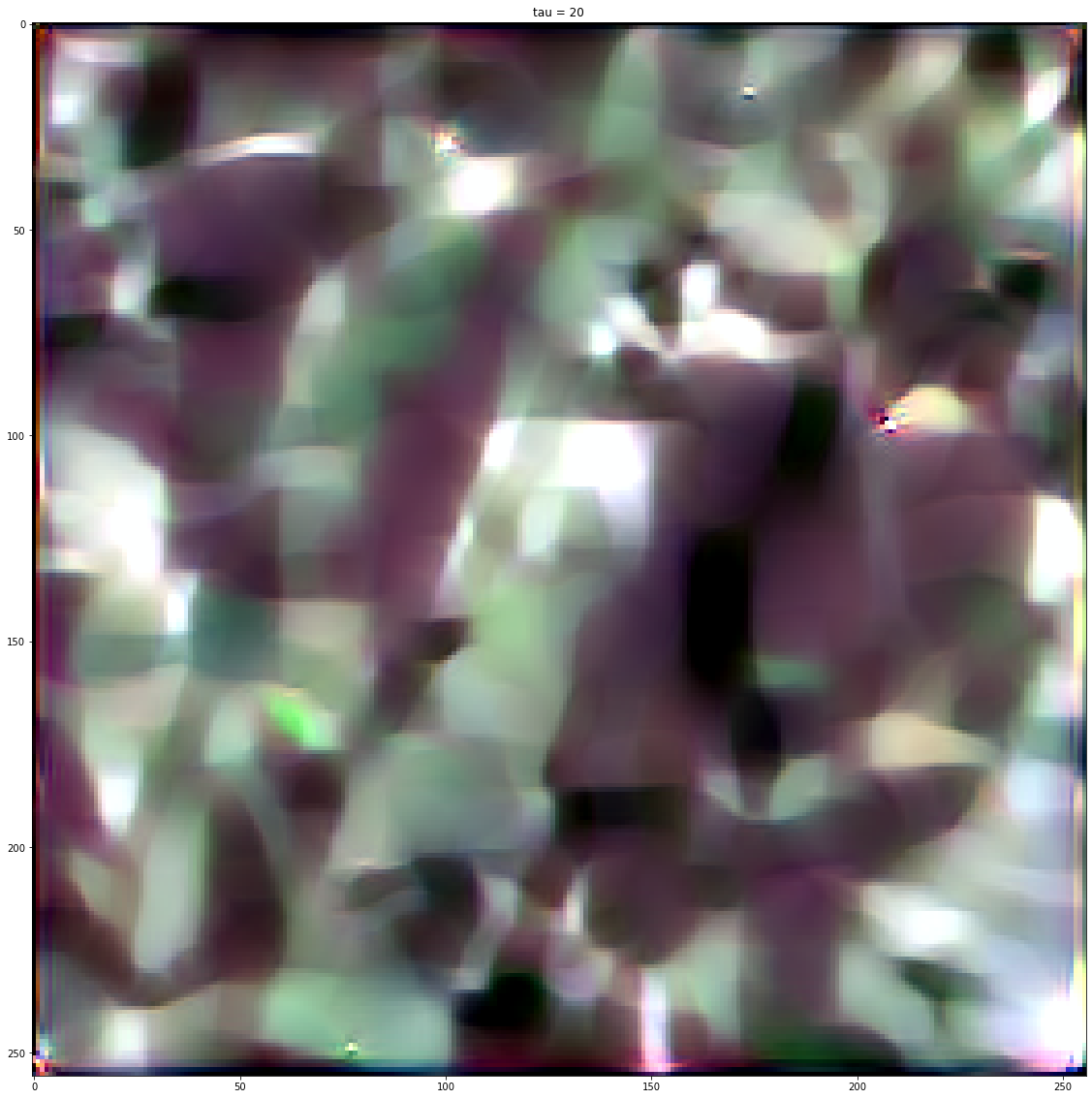}}
%%
%
% Iteration:
%\subfigure[Iteration 32]{\includegraphics[width=0.24\textwidth]{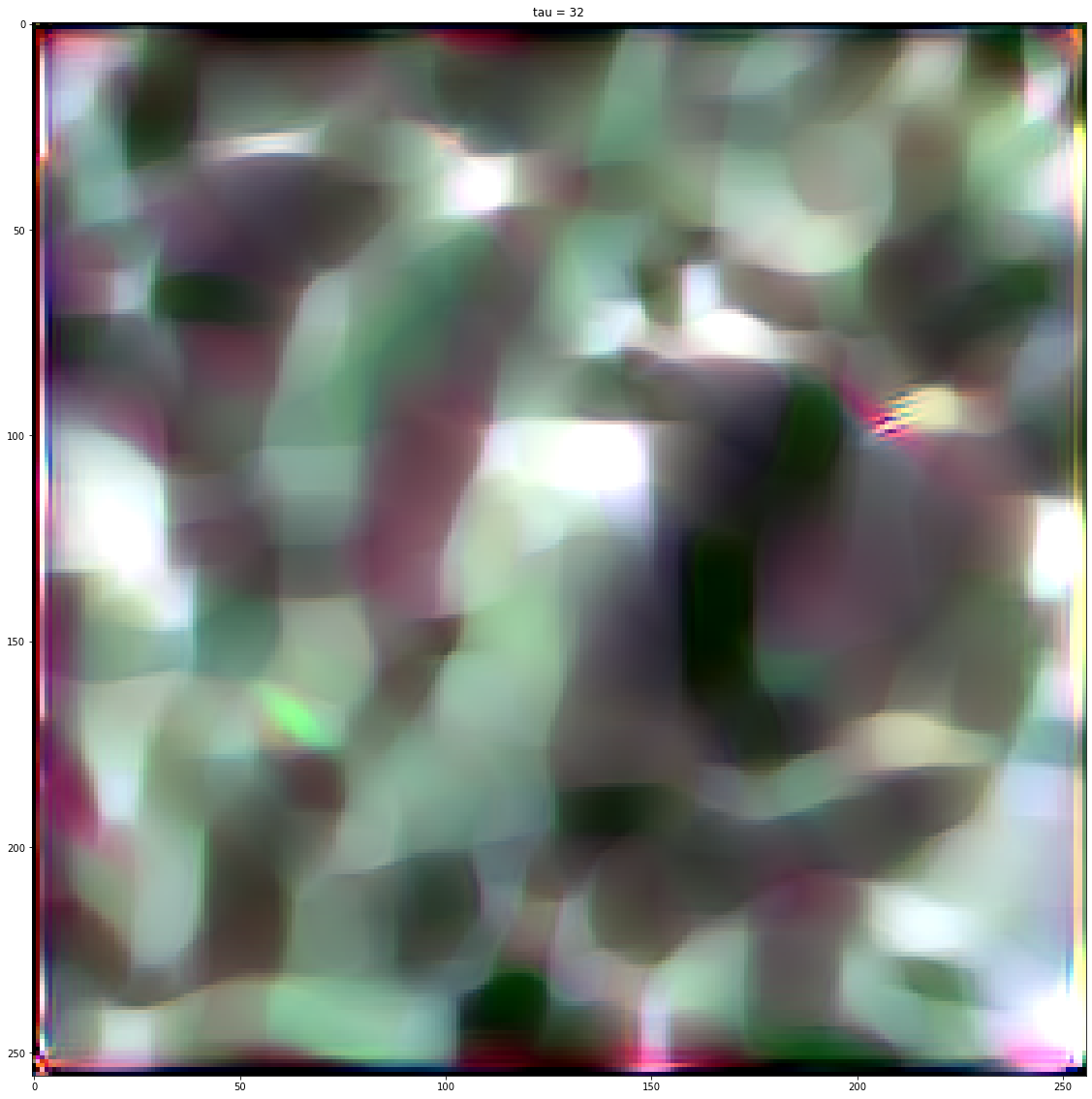}}
%
%\subfigure[Iteration 52]{\includegraphics[width=0.24\textwidth]{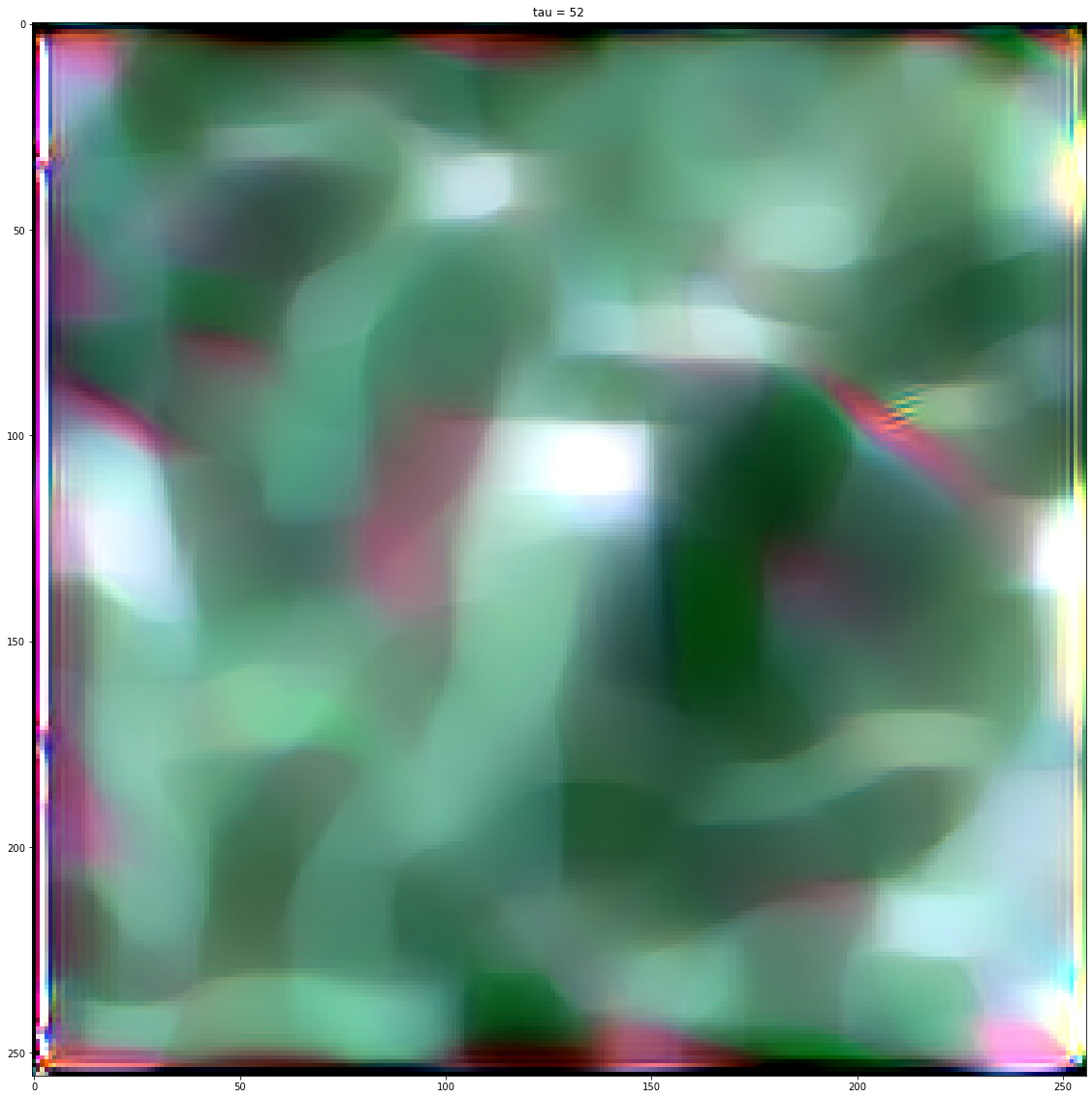}}
% Iteration:
%\subfigure[Iteration 72]{\includegraphics[width=0.24\textwidth]{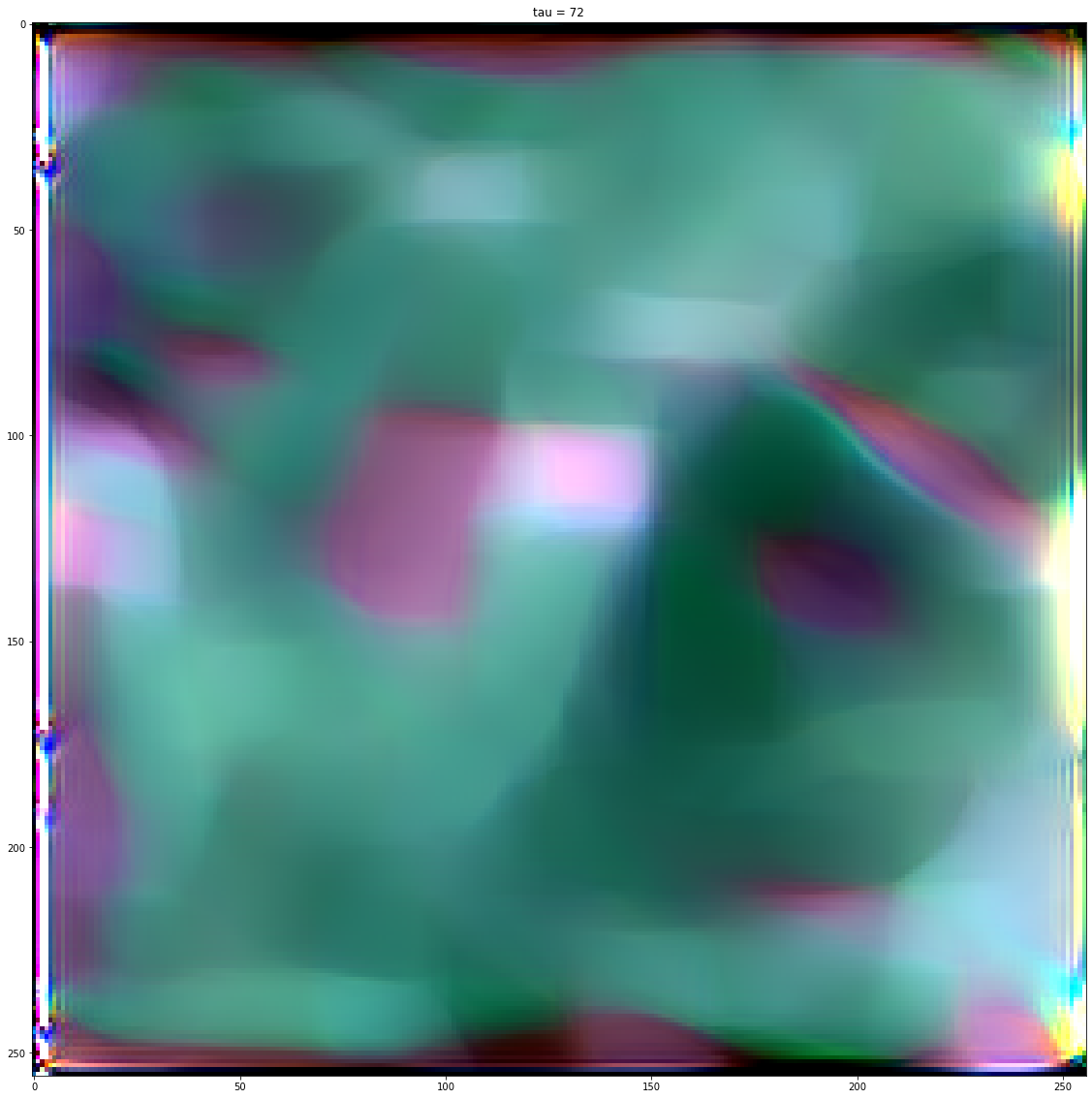}}
% Iteration:
%\subfigure[Iteration 84]{\includegraphics[width=0.24\textwidth]{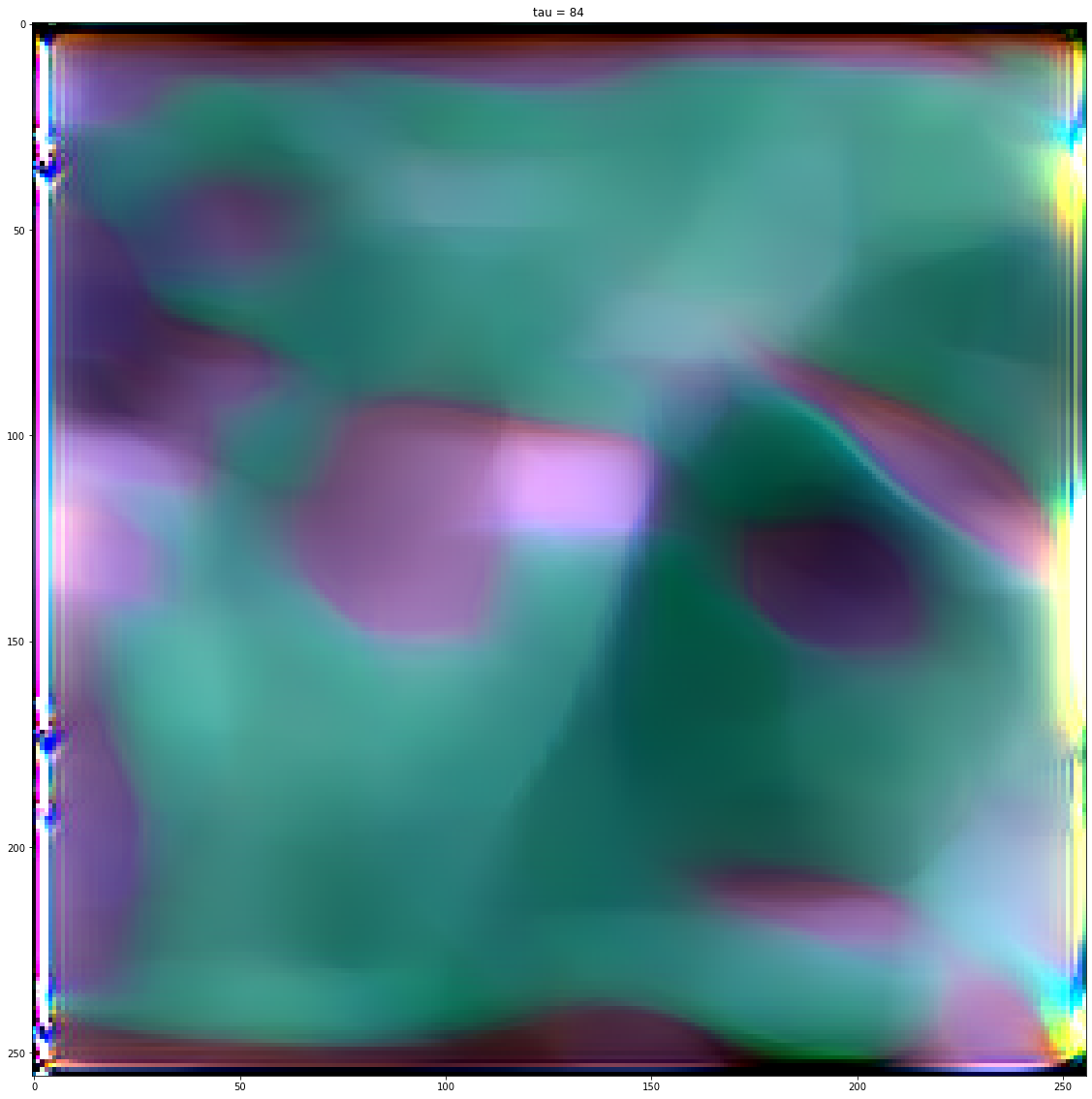}}

%
%\caption{Diffusion process by RQUNet-VAE ($\alpha = 0.03$).
%A process eliminates small scale objects over iterations and converges to a mean of the original image, see (h). 
%}
%\label{fig:ImageDiffusionScheme3:2019059hdf}
%\end{center}
%\end{figure*}

% =======================================================

% -----
%

\vspace{-0.4cm}
\subsection{Image Decomposition Experiments}  \label{sec:ImageDecomposition}\vspace{-0.2cm}
%
%\paragraph{a. Image decomposition:}
%

This section evaluates how \RQUnetVAE Scheme 2 decomposes and subsequently denoises images. \RQUnetVAE Scheme 2 uses an iterative scheme to decompose an image into (i) a lowpass image, (ii) a bandpass image, and (iii) a highpass image. Denoising can be achieved by truncating highpass features. The RQUNet-VAE was applied to the Sentinel-2 image dataset described in Section~\ref{subsec:datasets}.
Figure~\ref{fig:ImageDecomposition} shows examples of the decomposition for four images. Figures~\ref{fig:ImageDecomposition}(a,e) shows the original Sentinel2 images of a typical rural landscape with forest cover, cultivated fields and small, distributed structures. The corresponding lowpass images in Figures~\ref{fig:ImageDecomposition}(b,f) captures the larger land cover parcels, without the fine-scale texture which has been removed. The corresponding bandpass in Figures~\ref{fig:ImageDecomposition}(c,g) captures most of the signal of texture, whereas the highpass image in Figure Figures~\ref{fig:ImageDecomposition}(d,h) captures very fine-scale texture, oscillating patterns, along with noise.

\begin{figure*}
\begin{center}  

% Frame 1:
% Original
\subfigure[Original frame 1]{\includegraphics[width=0.24\textwidth ]{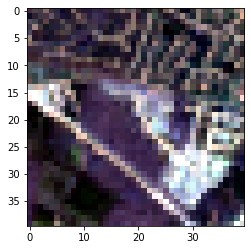}\vspace{-0.5cm}}
% Lowpass
\subfigure[Lowpass]{\includegraphics[width=0.24\textwidth]{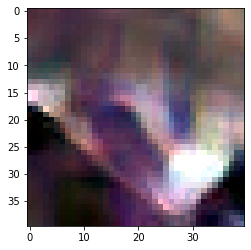}}
% Bandpass
\subfigure[Bandpass]{\includegraphics[width=0.24\textwidth]{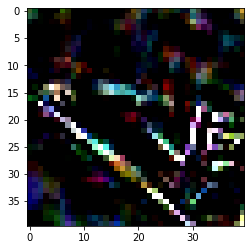}}
% Highpass
\subfigure[Highpass]{\includegraphics[width=0.24\textwidth]{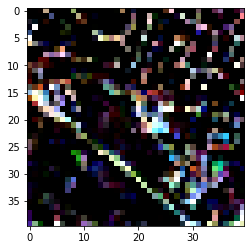}}

\vspace{-0.2cm}
% Frame 30:
% Original
\subfigure[Original frame 30]{\includegraphics[width=0.24\textwidth]{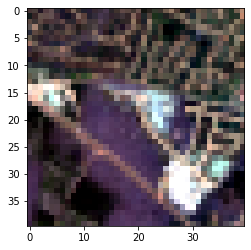}}
% Lowpass
\subfigure[Lowpass]{\includegraphics[width=0.24\textwidth]{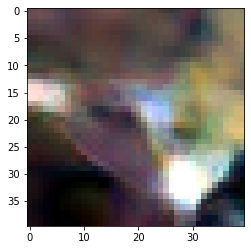}}
% Bandpass
\subfigure[Bandpass]{\includegraphics[width=0.24\textwidth]{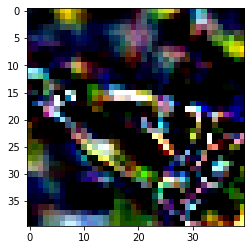}}
% Highpass
\subfigure[Highpass]{\includegraphics[width=0.24\textwidth]{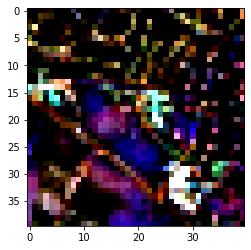}}

% % Frame 60:
% % Original
% \subfigure[Original frame 30]{\includegraphics[width=0.24\textwidth]{site296_frame60_1.png}}
% % Lowpass
% \subfigure[Lowpass]{\includegraphics[width=0.24\textwidth]{site296_frame60_4.png}}
% % Bandpass
% \subfigure[Bandpass]{\includegraphics[width=0.24\textwidth]{site296_frame60_5.png}}
% % Highpass
% \subfigure[Highpass]{\includegraphics[width=0.24\textwidth]{site296_frame60_6.png}}

% Frame 90:
% Original
\vspace{-0.2cm}
\subfigure[Original frame 60]{\includegraphics[width=0.24\textwidth]{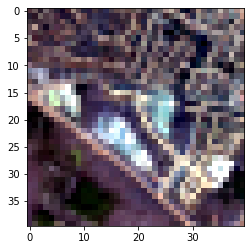}}
% Lowpass
\subfigure[Lowpass]{\includegraphics[width=0.24\textwidth]{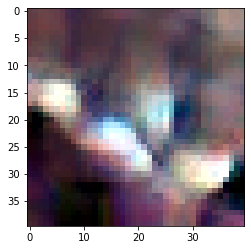}}
% Bandpass
\subfigure[Bandpass]{\includegraphics[width=0.24\textwidth]{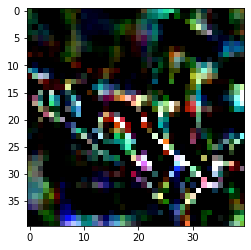}}
% Highpass
\subfigure[Highpass]{\includegraphics[width=0.24\textwidth]{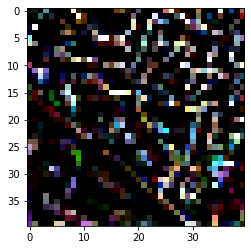}}

%\vspace{-0.2cm}
% Lowpass + Bandpass
% Frame 1:
%\subfigure[Lowpass + Bandpass for frame 1]{\includegraphics[width=0.25\textwidth]{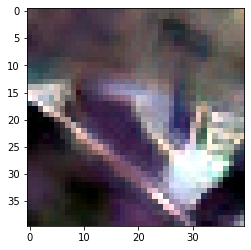}}
% Frame 30:
%\subfigure[Lowpass + Bandpass for frame 30]{\includegraphics[width=0.25\textwidth]{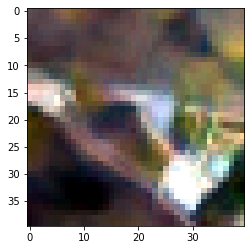}}
% Frame 90:
%\subfigure[Lowpass + Bandpass for frame 90]{\includegraphics[width=0.25\textwidth]{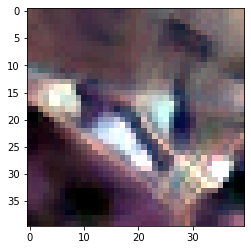}}

% \subfigure[Histogram]{\includegraphics[width=0.32\textwidth]{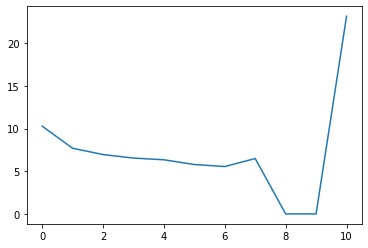}}

% \includegraphics[width=1\textwidth]{Fig15.png}

\vspace{-0.2cm}
\caption{Time series decomposition by RQUNet-VAE with GIAF based spectral decomposition for site 296. %Threshold for histogram in spectral decomposition is 5, number of iterations is 10.
}
\label{fig:Video Decomposition:site296}\vspace{-0.7cm}
\end{center}
\end{figure*}

\paragraph{Time series decomposition}

The RQUNet-VAE was applied to a Sentinel2 time series decomposition by using Haar wavelet smoothing in time as a diffusion process, (Section~\ref{theory:shrinkage}). The proposed smoothing technique is therefore simultaneously applied in spatial domain (by RQUNet-VAE) and temporal domain (by Haar wavelet in time), following Algorithm 2 in SM. The smoothing parameter was set at $\alpha = 0.03$ for all decompositions.

The Sentinel2 time series was comprised as follows: $\left\{ \underline{\underline{\underline{f_i}}} \right\}_{i=1}^{80}$, $\underline{\underline{\underline{f_i}}} \in \mathbb R^{40 \times 40 \times 3 \times 99}$, where each of the 99 images have  three channels of image size $40 \times 40$.
The time series is then padded to $\underline{\underline{\underline{f_i}}} \in \mathbb R^{64 \times 64 \times 3 \times 120}$. The RQUNet-VAE is trained on each image in the padded dataset to obtain all unknown model parameters. An Adam optimizer was used to train RQUNet-VAE with 200 epochs and batch-size of 16.

The trained parameters are then applied to Algorithm 2 with 10 iterations for spectral decomposition. Note that Algorithm 2 is an extension of generalized intersection algorithms with fixpoints \cite{RichterThaiHuckemann2020} for video processing via spectral decomposition into lowpass, bandpass and highpass videos.

Figure~\ref{fig:Video Decomposition:site296} illustrates the time series decomposition with the RQUNet-VAE and diffusion process. The smoothing parameter was set at $\alpha = 0.03$ for all decompositions. Similar to image decomposition in the previous sections, this time series decomposition extracts a  homogeneous time series and a residual time series which contains small objects (e.g. roads), noise, and texture.

%Note that an environment change in these data is a serious problem in imaging. Indeed, a correct video has a very slow motion changes. Then, common time series techniques, e.g. time delay embedding, can remove these slow changes. However, satellite video does not have this property because each video frame is discretely captured during experiment years; then, all these frames are attached into a video. This causes significant changes in background and objects, i.e. noise, in video. 

% ======================================

% ------

\begin{figure*}[tbh]
\begin{center}  

% Frame 1:
% Original:
\subfigure[Noisy image 1]{\includegraphics[width=0.24\textwidth]{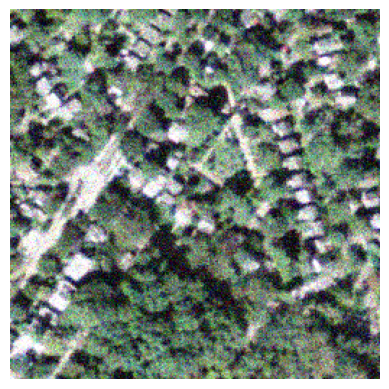}\label{fig:seg_noise_1}}
% groundtruth:
\subfigure[Ground truth]{\includegraphics[width=0.24\textwidth]{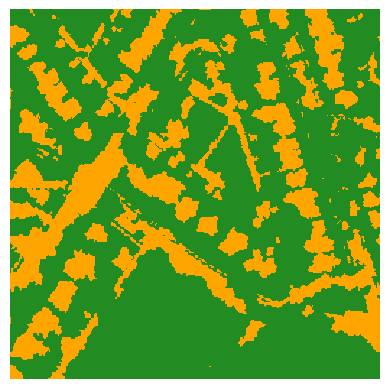}\label{fig:seg_gt_1}}
% RQUnet-VAE:
\subfigure[RQUNet-VAE]{\includegraphics[width=0.24\textwidth]{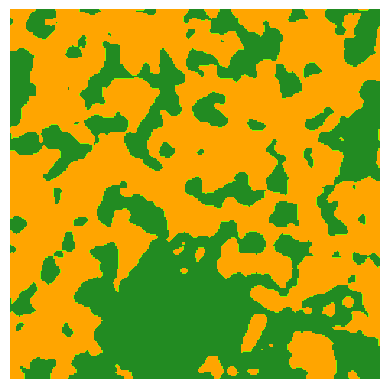}\label{fig:seg_rq_1}}
% Typical Unet:
\subfigure[Typical UNet]{\includegraphics[width=0.24\textwidth]{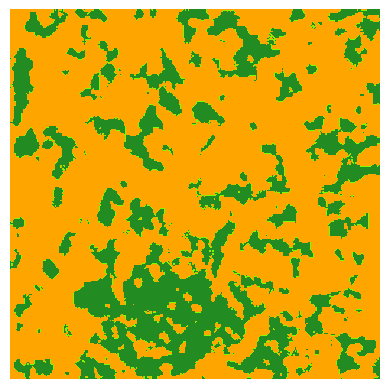}\label{fig:seg_unet_1}}
% % Confusiom matrix RQUnetVAE:
% \subfigure[Confusion matrix of RQUnet-VAE]{\includegraphics[width=0.24\textwidth]{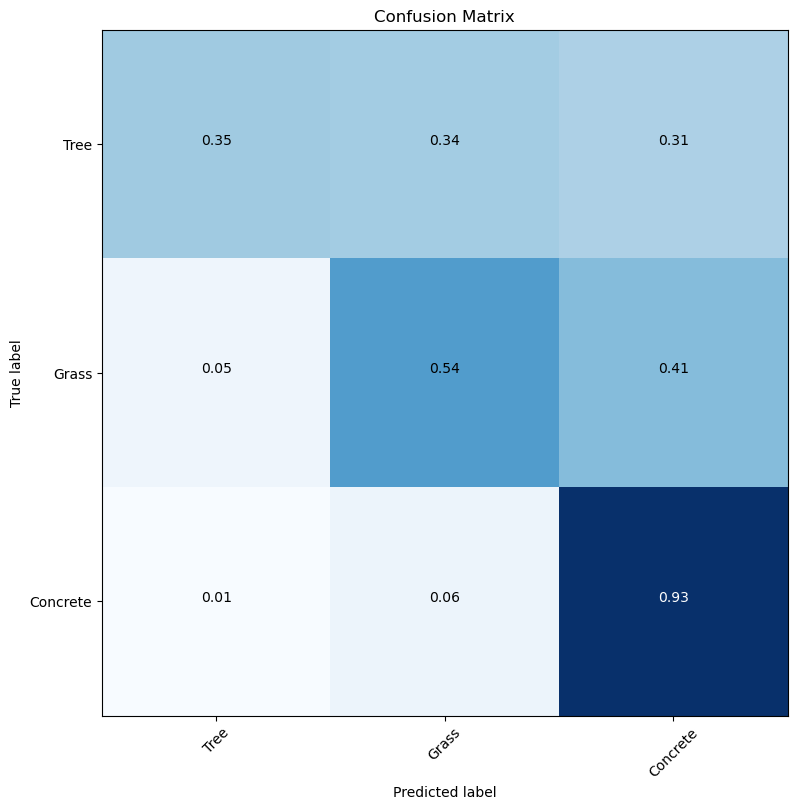}}
% % Confusiom matrix typical Unet:
% \subfigure[Confusion matrix of typical Unet]{\includegraphics[width=0.24\textwidth]{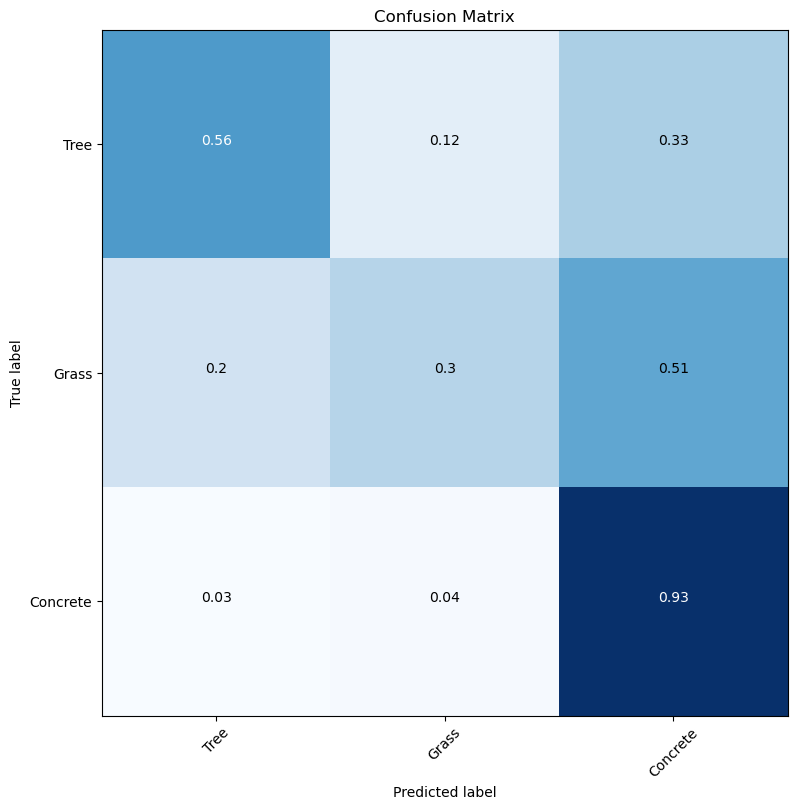}}
%\color{red}{legend? of colors, prediction has light green where ground truth does not}
% -------

% Original:
\subfigure[Noisy image 2]{\includegraphics[width=0.24\textwidth]{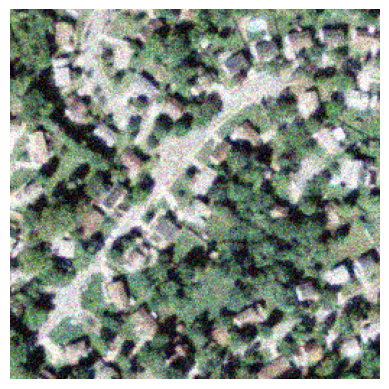}\label{fig:seg_noise_2}}
% groundtruth:
\subfigure[Ground truth]{\includegraphics[width=0.24\textwidth]{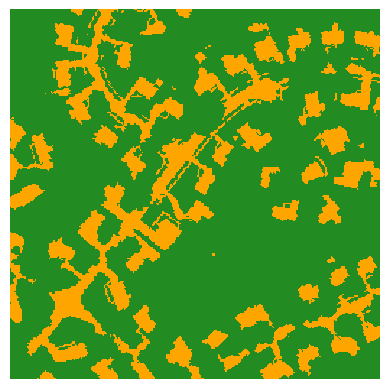}\label{fig:seg_gt_2}}
% RQUnet-VAE:
\subfigure[RQUNet-VAE]{\includegraphics[width=0.24\textwidth]{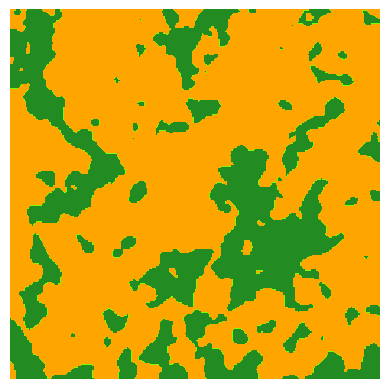}\label{fig:seg_rq_2}}
% Typical Unet:
\subfigure[Typical UNet]{\includegraphics[width=0.24\textwidth]{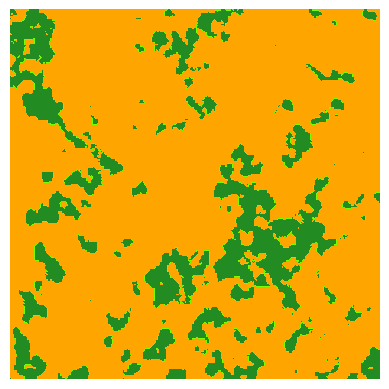}\label{fig:seg_unet_2}}
% % Confusiom matrix RQUnetVAE:
% \subfigure[Confusion matrix of RQUnet-VAE]{\includegraphics[width=0.24\textwidth]{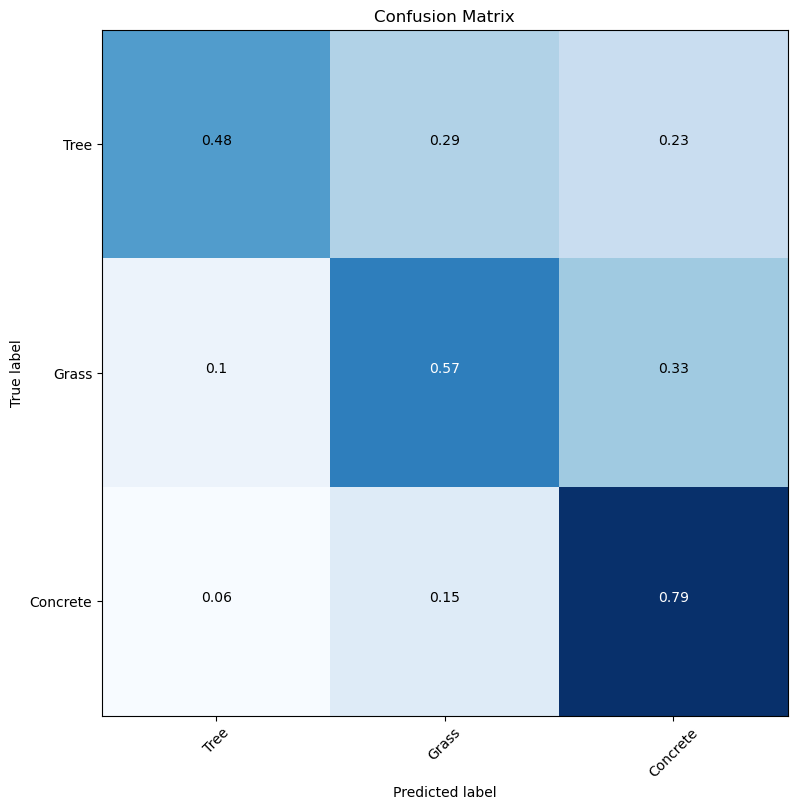}}
% % Confusiom matrix typical Unet:
% \subfigure[Confusion matrix of typical Unet]{\includegraphics[width=0.24\textwidth]{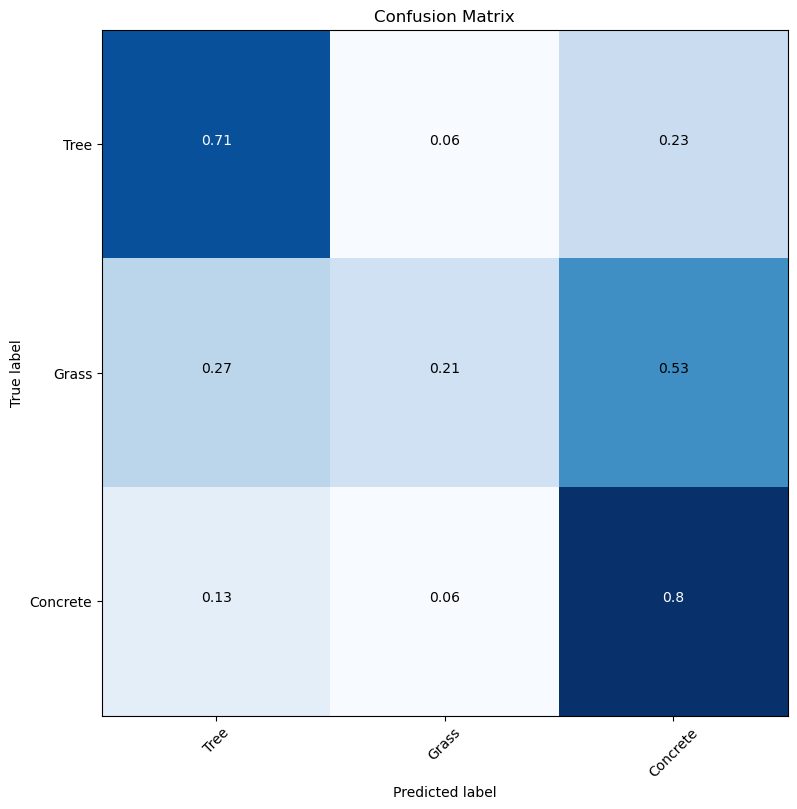}}

% -------

\caption{Segmentation result with RQUNet-VAE as a smoothing term for noisy images. a,e) NAIP images with artificial noise added  ($\text{std} = 0.08$) as input to segmentation; b,f) ground-truth segmentation masks; (c,g) segmentation masks returned by our RQUNet-VAE with a smoothing parameter $\alpha = 0.5$; (d,h) segmentation mask returned by the typical UNet architecture \cite{ronneberger2015u}.}\vspace{-0.8cm}
%Our proposed method provides smooth segmented image over a typical UNet. 
%
%
%See figures \ref{fig:ImageSegmentation:moreimages} for more test images.
%\color{red}{The accuracy map mostly shows 0 or 1 and illustrates that green class / forest is completely underestimated - there is a problem with this image - try to find better example. describe confusion matrix! try with $\alpha = 0.1$}
\label{fig:ImageSegmentation:moreimages}
\end{center}
\end{figure*}

\vspace{-0.6cm}
\subsection{Image Segmentation by \RQUnetVAE  \label{sec:DenoisingSegmentation}}\vspace{-0.2cm}
The \RQUnetVAE was applied to the task of high resolution aerial image segmentation. The goal is to demonstrate that the \RQUnetVAE makes the image segmentation more robust to noise compared to the existing U-Net architecture. We first provide a brief background to the problem of image segmentation in Section~\ref{exp:image_seg_overview}. Then, in Section~\ref{exp:image_seg_experiments} we apply 1) the traditional U-Net architecture and 2) the \RQUnetVAE to the problem of image segmentation and show that \RQUnetVAE yields better segmentation results where artificial noise is added to images. This shows that our proposed \RQUnetVAE makes the existing U-Net architecture more robust to noise.

\vspace{-0.2cm}
\subsubsection{Background on Image Segmentation}\label{exp:image_seg_overview}
The conventional image segmentation procedure has the following steps,  (i) pre-processing to remove noise and unwanted small objects, (ii) segmentation, (iii) post-processing with morphological operators. This 3-step process requires a large number of parameters that need to be selected by an expert. Therefore, attempts have been made to incorporate a smoothing term into various models. The Mumford-Shah model \cite{MumfordShah1989} was proposed to introduce a smoothing term in a minimization approach for segmentation. However, computation of this smoothing term is an NP-hard problem and therefore not feasible \cite{MumfordShah1989}. Later, the Chan-Vese segmentation model \cite{ChanVese2001} was proposed that was solvable by relying on the level set method \cite{OsherSethian1988}. By adding a smoothing term in a variational formulae, this model successfully segments objects in a noisy background. 
Inspired by Chan-Vese model \cite{ChanVese2001}, we use RQUNet-VAE for segmentation by introducing a smoothing term, i.e. Riesz-Quincunx wavelet truncation, directly into the UNet-VAE. The advantage is that the \RQUnetVAE includes the smoothing term inside the Unet and thus combines (1) the ability of a Unet to learn features from an image, with (2) denoising capabilities enabled by the Riesz-Quincunx wavelet truncation. This smoothing term should eliminate small scale objects in a segmented image, e.g. texture and noise. Then, there is no need for separate pre- or post-processing steps, as these are all performed by the RQUNet-VAE segmentation.

\begin{table*}[t]
\centering
\caption{Quantitative Comparison between \RQUnetVAE and the traditional UNet architecture. Since \RQUnetVAE uses a variational (non-deterministic) approach, we provide mean and standard deviation aggregated over 20 runs. Precision, recall, and F1-Score are provided for both classes [vegetated, impervious].
\label{tab:segmentation}}
\begin{tabular}{|l|c|c|c|c|} 
 \hline
\multirow{2}{*}{Algorithm}  & Accuracy & Precision & Recall & F1-Score \\
& (overall) & (per class) & (per class) & (per class) \\
 \hline
 \RQUnetVAE Mean (Image 1) & 0.7057 & [0.9047 0.4498] & [0.5457 0.8660] & [0.6807  0.5921]  \\
 \RQUnetVAE Stdev (Image 1)   & $\pm$0.0013 & $\pm$[0.0008 0.0014] & $\pm$[0.0030 0.0016] & $\pm$[0.0024 0.0012] \\\hline
 Traditional UNet (Image 1) & 0.6640 & [0.9460 0.3960] & [0.3780 0.9500] & [0.5400  0.5590] \\\hline 
 \RQUnetVAE Mean (Image 2) & 0.6734 & [0.9837  0.3258] & [0.3662  0.9806] & [0.5337 0.4890] \\
 \RQUnetVAE Stdev (Image 2) & $\pm$0.0012 & $\pm$[0.0006 0.0009] & $\pm$[0.0027 0.0007] & $\pm$[0.0028 0.0010] \\\hline
 Traditional UNet (Image 2) & 0.6170 & [0.9860 0.2900] & [0.2440 0.9890] & [0.3920 0.4490] \\\hline 
\end{tabular}
\vspace{-0.4cm}
\label{table:1}
\end{table*}

\begin{figure*}[tbh]
\begin{center}  

% Frame 1:
% Experiment 1:
{\includegraphics[width=0.6\textwidth]{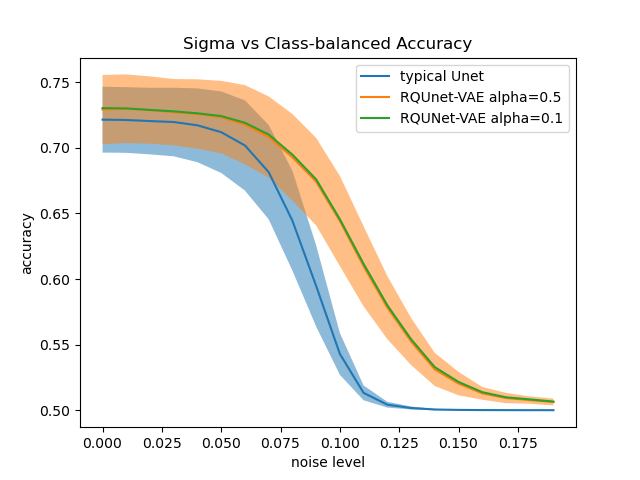}}\vspace{-0.2cm}

\caption{Outcome of segmentation experiments using the RQUNet-VAE and traditional UNet applied to 20 NAIP images with increasing added noise (0-0.20). The confidence intervals (blue and orange bands) are calculated from the mean and standard deviation of accuracy of 50 prediction iterations for each noise level and alpha value (0.5, 0.1) in the RQ scheme.  This illustrates the results of variational terms of RQUNet-VAE.\label{fig:exp:segm_accuracy}}\vspace{-0.7cm}
%\textcolor{red}{expand the description of a STD anf mean - green on top of orange}
\end{center}
\end{figure*}

\vspace{-0.2cm}
\subsubsection{Experimental Evaluation of \RQUnetVAE for Image Segmentation}\label{exp:image_seg_experiments}
We apply both 1) the traditional U-Net~\cite{ronneberger2015u} and 2) \RQUnetVAE to the problem of segmentation using the 20 NAIP images described in Section~\ref{subsec:datasets} using two classes of land cover, impervious and vegetated. Figure~\ref{fig:ImageSegmentation:moreimages} shows the segmentation results for two sample images after noise has been added (Section~\ref{subsec:denoising_experiments}, standard deviation of $0.08$). The corresponding ground truth land cover masks are shown in Figures~\ref{fig:seg_gt_1} and~\ref{fig:seg_gt_2}. 
% For both images, the resulting land cover segmentation using the \RQUnetVAE (Figures~\ref{fig:seg_rq_1} and~\ref{fig:seg_rq_2}) and, for comparison, using the traditional U-Net (Figures~\ref{fig:seg_unet_1} and~\ref{fig:seg_unet_2}). 
Table~\ref{tab:segmentation} provides quantitative segmentation results, including overall accuracy, precision, recall, and F1-score for both of the two classes. Due to the added noise, the segmentation accuracy of the U-Net is very low at $0.6640$ and $0.6170$ for the two images, respectively Table~\ref{tab:segmentation}. The precision and recall for the two classes (non-impervious and impervious), indicate that this low accuracy is attributed to a bias towards the impervious class, as suggested by the low recall for the non-impervious class (0.3780 for the traditional UNet). This implies that more than two out of five impervious pixels are incorrectly classified as vegetated. %We also observe a low precision in the impervious class (0.4498 for the traditional UNet), implying that more than half of the pixel that are classified as impervious are actually non-impervious. On the flipside, we observe a high recall for the impervious class (0.9500), showing that most impervious pixels are correctly classified as impervious. 

The performance of the traditional UNet is much lower with the added noise than on the original, which indicates that the addition of the noise substantially confuses the U-Net. In contrast, the \RQUnetVAE coped much better with the noise, yielding higher accuracy values of $0.7057 \pm 0.0013$ and $0.6634\pm 0.0012$ for the two images, respectively. The negative impact of noise on the segmentation accuracy is substantially reduced when using the \RQUnetVAE which makes the U-Net more robust to noise. Furthermore, the \RQUnetVAE is better able to discriminate the vegetated class, shown by the higher F1-scores Table~\ref{tab:segmentation}.

 %Figure~\ref{fig:ImageSegmentation:moreimages} however only shows two example images and uses a specific standard deviation of noise of $\sigma=0.08$. 
 A more comprehensive evaluation of \RQUnetVAE vs U-Net on 20 images with various levels of noise is given in Figure~\ref{fig:exp:segm_accuracy}. This figure shows the mean accuracy (solid line) of the U-Net and the \RQUnetVAE across all 20 images. 
 %In addition, minimum and maximum accuracy are shown ?? bands.
 When using the original image with no noise ($\sigma=0$), the \RQUnetVAE only provides a marginal improvement over U-Net. As the noise level $\sigma$ is increased, both \RQUnetVAE and U-Net exhibit a drop in accuracy. However, that the drop in accuracy is substantially slower for \RQUnetVAE. 
 In particular, for noise values around $\sigma=0.1$, the \RQUnetVAE has up to 10\% higher accuracy than U-Net.
 This improvement stems from the variational auto-encoder approach, which, through variational changes to the latent representation of an image, allows the identification of pixels with a high probability of belonging to one class while being assigned to another class by the deterministic U-Net.
 Figure~\ref{fig:exp:segm_accuracy} gives the results using two parameters setting of \RQUnetVAE, $\alpha=0.5$ and $\alpha=1.0$. In both cases the resulting accuracy is nearly identical (the two lines are almost perfectly on top of each other), showing that the the performance is not very sensitive to the choice of $\alpha$ and thus that \RQUnetVAE is robust to non-optimal choices of $\alpha$.
In summary, Figure~\ref{fig:exp:segm_accuracy} shows that our proposed \RQUnetVAE is much more robust to Gaussian noise added to an image, as the reduction of segmentation accuracy with higher noise levels, is less pronounced.

\vspace{-0.5cm}
\section{Conclusion}  \label{sec:Conclusion}\vspace{-0.2cm}
In this work, we introduced the \RQUnetVAE which augments the existing UNet architecture with a generalized wavelet expansion approach that we extended to a diffusion process to enable spectral decomposition. To the best of our knowledge this is the first approach that enables image decomposition and image denoising using a variational variant of the UNet architecture. 
An important application of this decomposition is denoising, achieved by truncating highpass features, that is, discarding information of decomposed images having the highest variance. We apply our proposed \RQUnetVAE to image denoising and segmentation of multi-band satellite images and their time-series which often contain noise due to multiple causes. 
During quantitative comparisons of noise reduction the RQUNet-VAE yields the highest PSNR and SSIM of all competitor methods. Among all the approaches based on harmonic analysis, RQUNet-VAE Scheme 2 yields the best results. For the application of satellite image denoising our proposed \RQUnetVAE provides superior qualitative performance compared to other competitors. The denoising by RQUNet-VAE Scheme~1 was visually compared to the noise reduction using a Riesz Dyadic wavelet and curvelets and using wavelet CDF and it provided the best combination of noise reduction and delineation of edges and objects. Furthermore the propose RQUNet-VAE Scheme~2, was compared to other iterative methods including GIAF and  directional TV-L2 and it resulted in the best trade-off between reduction of noise and delineation of edges and objects, whereas the GIAF oversmoothed edges between objects while TV-L2  clearly retained visible noise.

To quantitatively measure the improvement of the \RQUnetVAE against the traditional UNet, we applied it to high resolution aerial image segmentation. Our experiments show only a slight improvement over the traditional UNet for segmantation of the original images with little noise. However, as artificial noise is added to images, we observe that the segmentation quality of the UNet decreases more rapidly than when using the \RQUnetVAE. The superior performance of \RQUnetVAE is due to a neural network in UNet-VAE with a learned dictionary obtained from the training procedure to increase the level of sparsity for an input image in a generalized Besov space. This property is due to a fundamental concept in signal processing, that the signal is sparse in some transformed domain. This demonstrates that the \RQUnetVAE architecture is able to substantially increase the robustness to noise compared to the traditional UNet for the application of satellite image segmentation.

% ========================================

%
\vspace{-0.5cm}
\section{Acknowledgments}  \label{sec:Acknowledgment}\vspace{-0.2cm}
This research is based upon work supported in part by the Office of the Director of National Intelligence (ODNI), Intelligence Advanced Research Projects Activity (IARPA), Space-based Machine Automated Recognition Technique (SMART) program, via contract 2021-20111000003. The views and conclusions contained herein are those of the authors and should not be interpreted as necessarily representing the official policies, either expressed or implied, of ODNI, IARPA, or the U.S. Government. The U.S. Government is authorized to reproduce and distribute reprints for governmental purposes notwithstanding any copyright annotation therein. The research was conducted in collaboration with BlackSky. We thank Diego Torrejon and Isaac Corley of BlackSky for valuable comments on the manuscript. 
\vspace{-0.4cm}

% Bibliography: ------------------

{\scriptsize
\bibliography{Current}

\bibliographystyle{abbrv}
}

\end{document}

% --- supplement: supplement.tex ---

\maketitle

\author{Duy~H.~Thai$^1$,
        Xiqi~Fei$^1$,
        Tri~M.~Le$^1$,
        Andreas~Z\"ufle$^2$,
        Konrad Wessels$^1$\\
        
        $^1$George Mason University, Department of Geography and Geoinformation Science, USA
        
        $^2$Emory University, Department of Computer Science, USA
 }

\section{Additional Images}
This section provides details on the ten satellite images used for testing of our proposed RQUNet-VAE~approach. 
Figure~\ref{fig:OriginalTestImages:19059hdf} shows the original images, Figure~\ref{fig:NoisyTestImagesvar0_04:19059hdf} shows the same images with artificial noise added, and Figure~\ref{fig:DenoisedTestImagesRQUnetVAEScheme3var0_04:19059hdf} shows the denoised images using our proposed RQUNet-VAE.

\begin{figure}[tbh]
\begin{center}  

%
 {\includegraphics[width=0.24\textwidth]{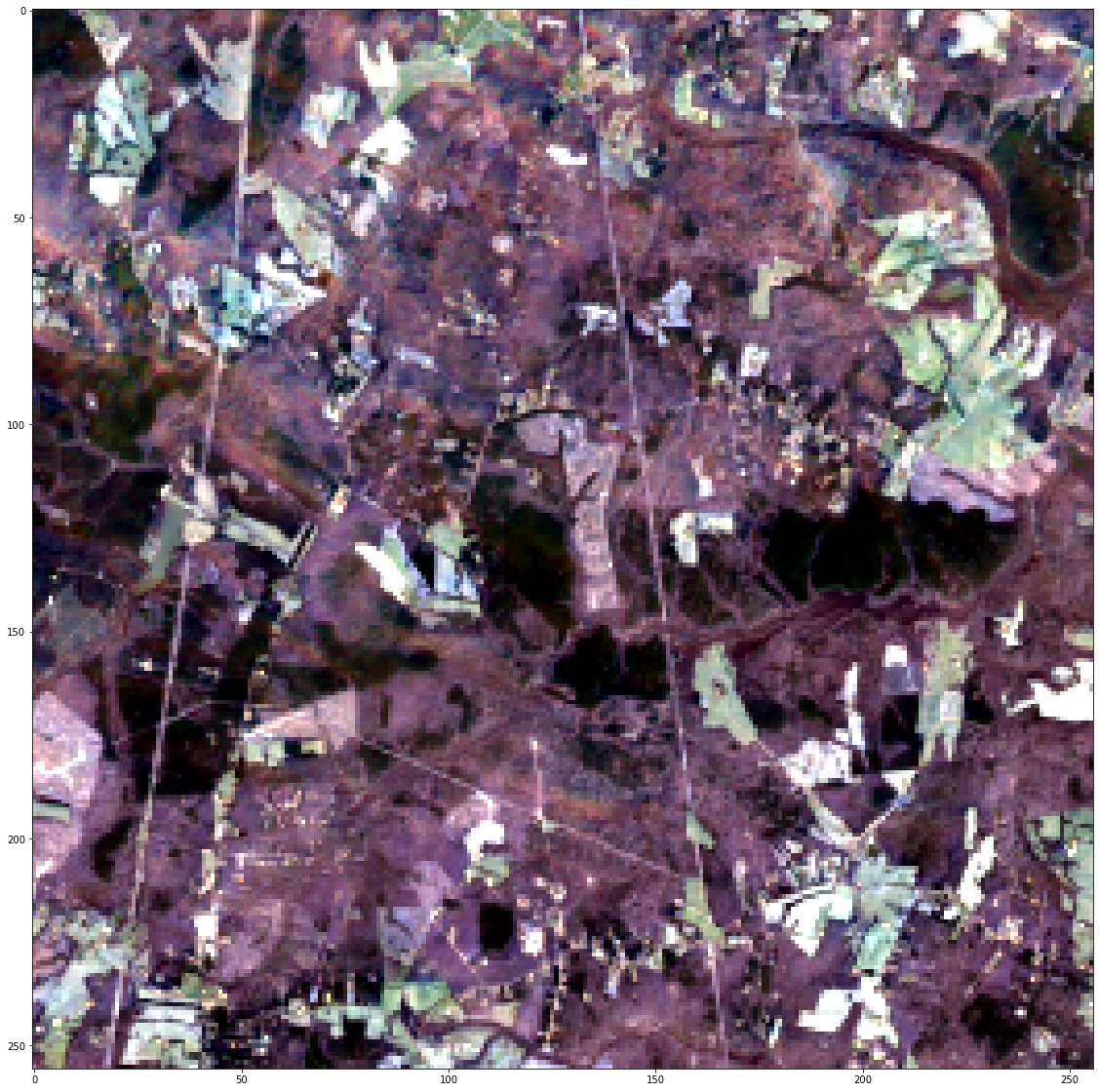}}
%
 {\includegraphics[width=0.24\textwidth]{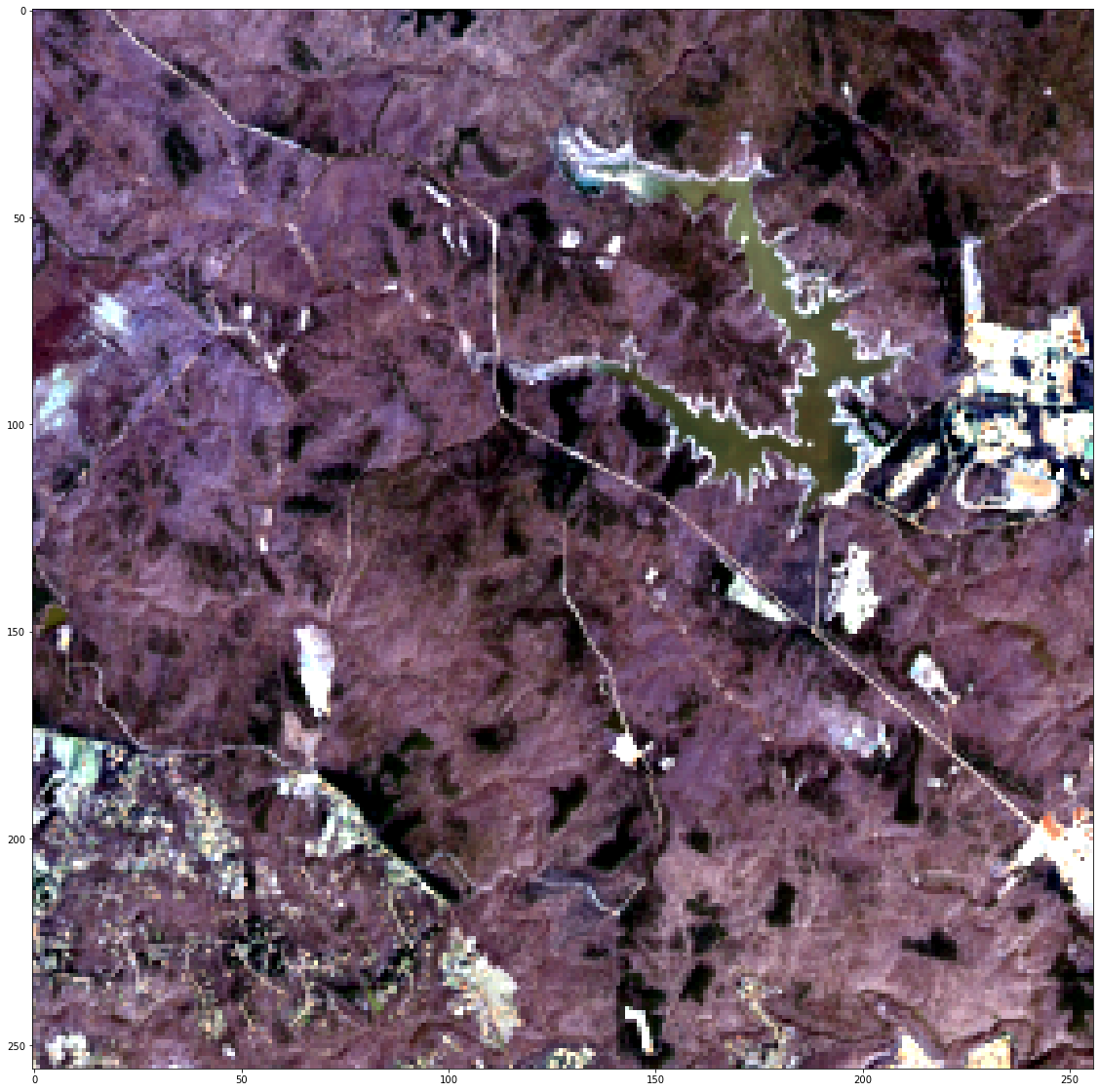}}
%
 {\includegraphics[width=0.24\textwidth]{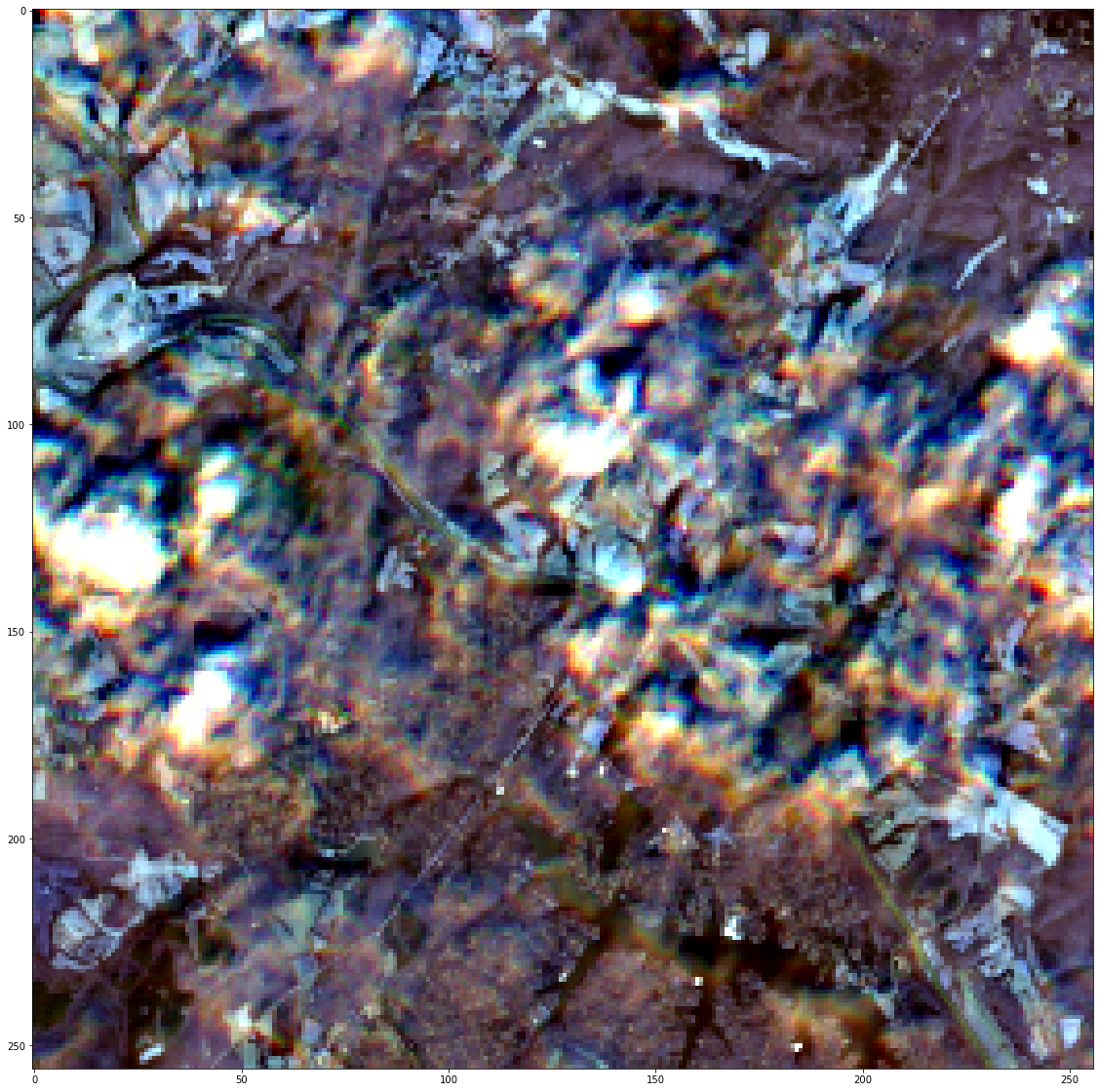}}
%
 {\includegraphics[width=0.24\textwidth]{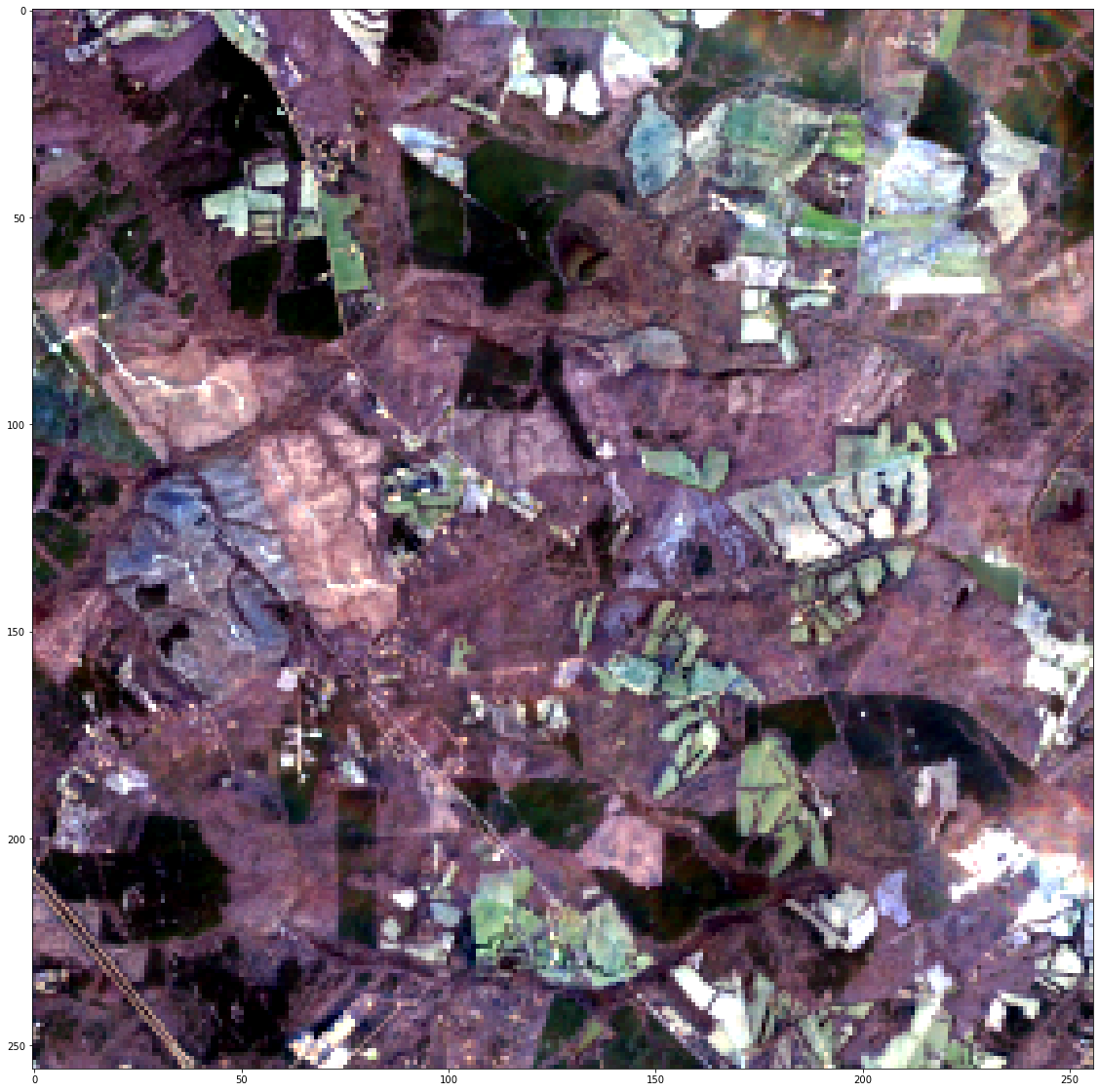}}
%
 {\includegraphics[width=0.24\textwidth]{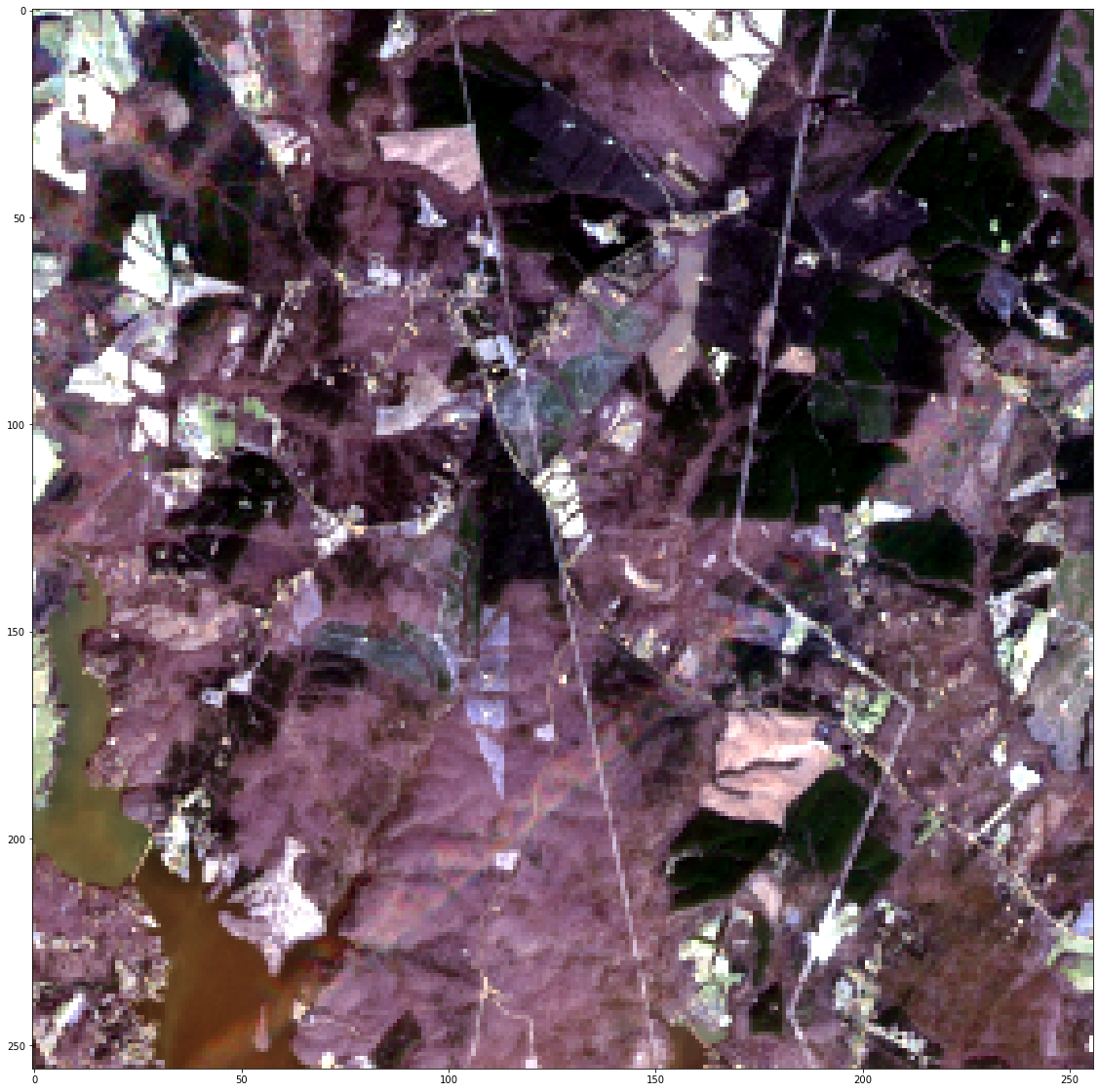}}
%
 {\includegraphics[width=0.24\textwidth]{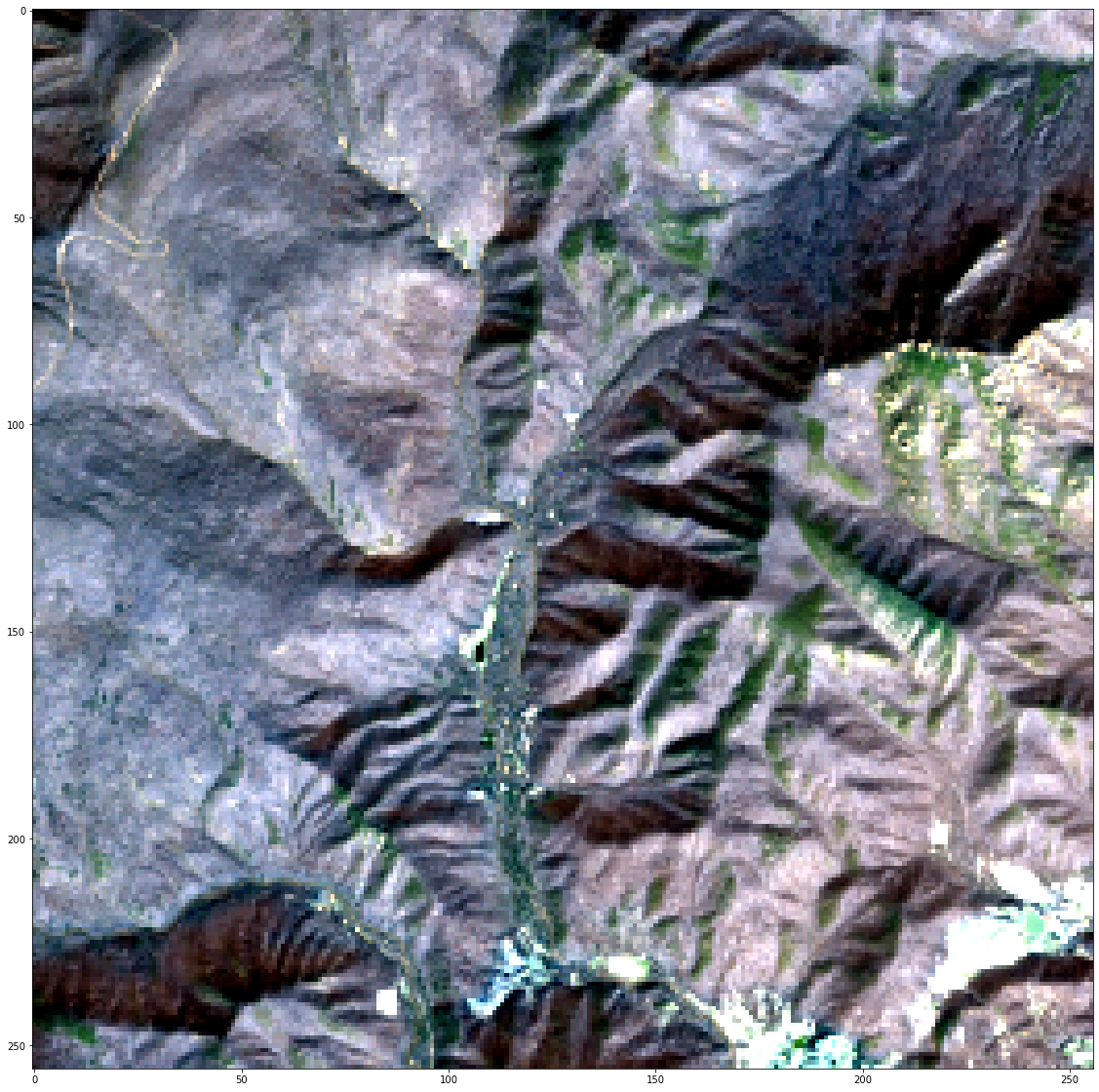}}
%
 {\includegraphics[width=0.24\textwidth]{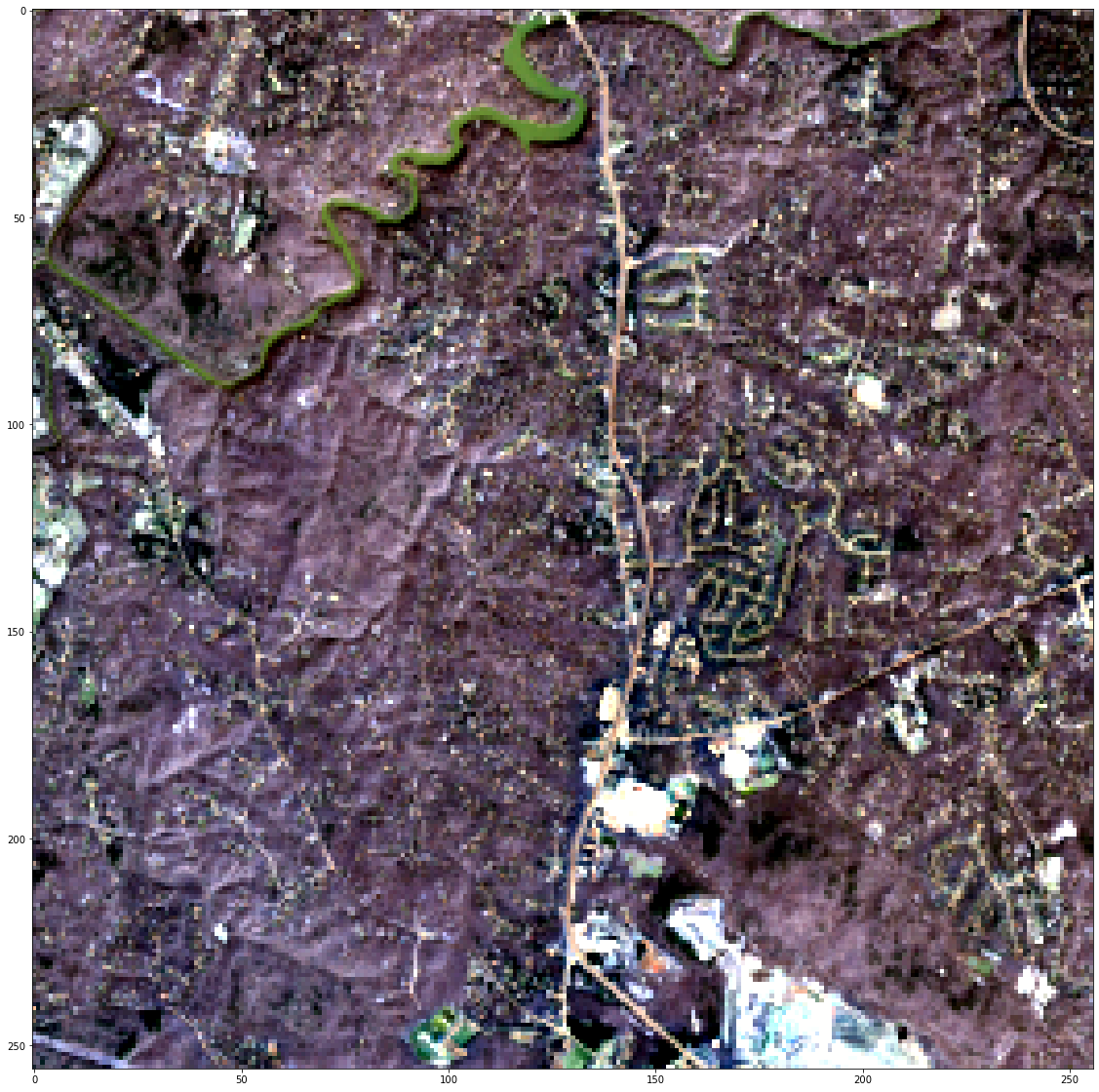}}
%
 {\includegraphics[width=0.24\textwidth]{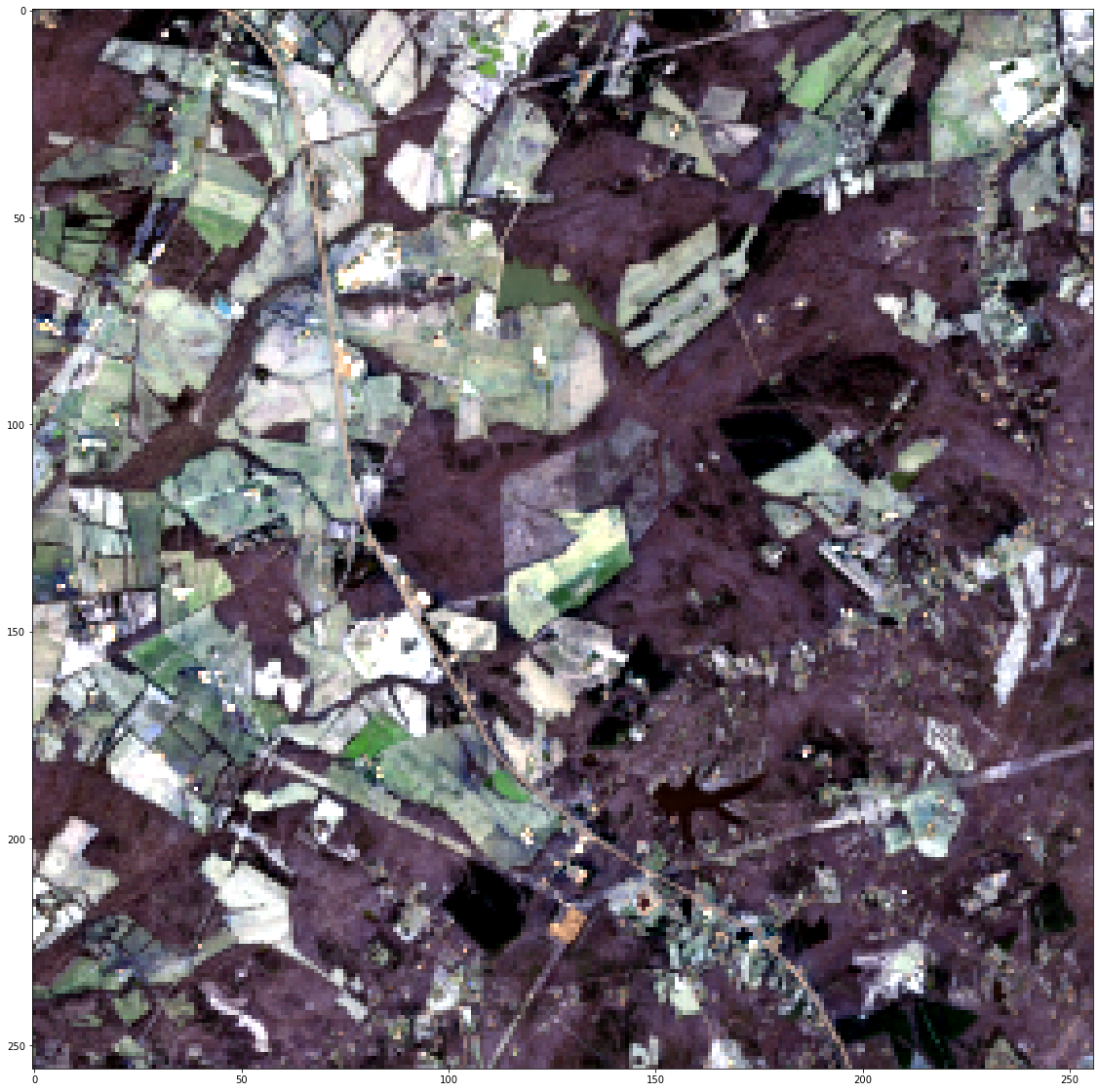}}
%
 {\includegraphics[width=0.24\textwidth]{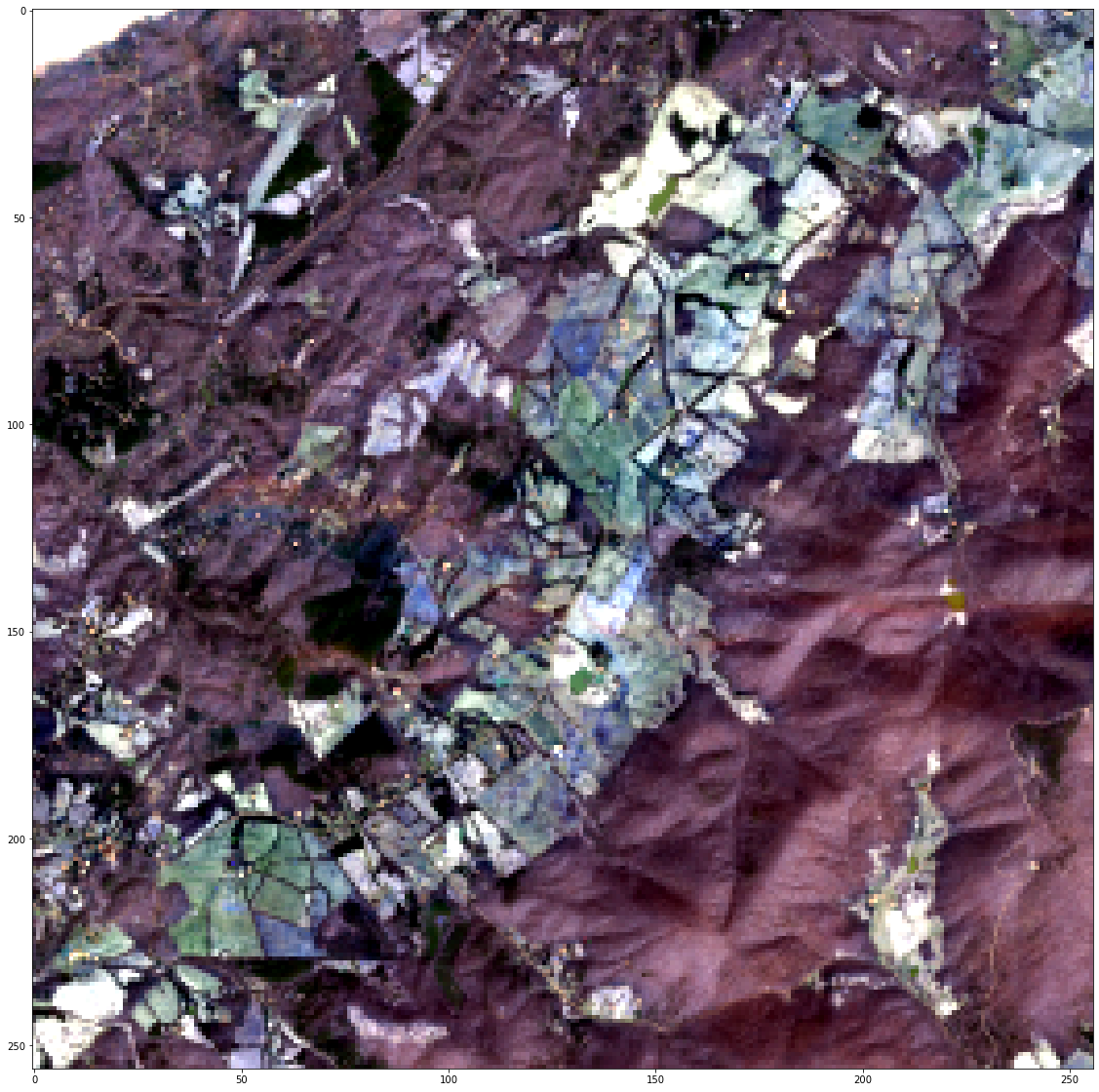}}
%
 {\includegraphics[width=0.24\textwidth]{figures/OriginalImage9.png}}\vspace{-0.4cm}

%
\caption{Original images. 
}\vspace{-0.8cm} 
\label{fig:OriginalTestImages:19059hdf}
\end{center}
\end{figure}

\begin{figure}[tbh]
\begin{center}  

%
 {\includegraphics[width=0.22\textwidth]{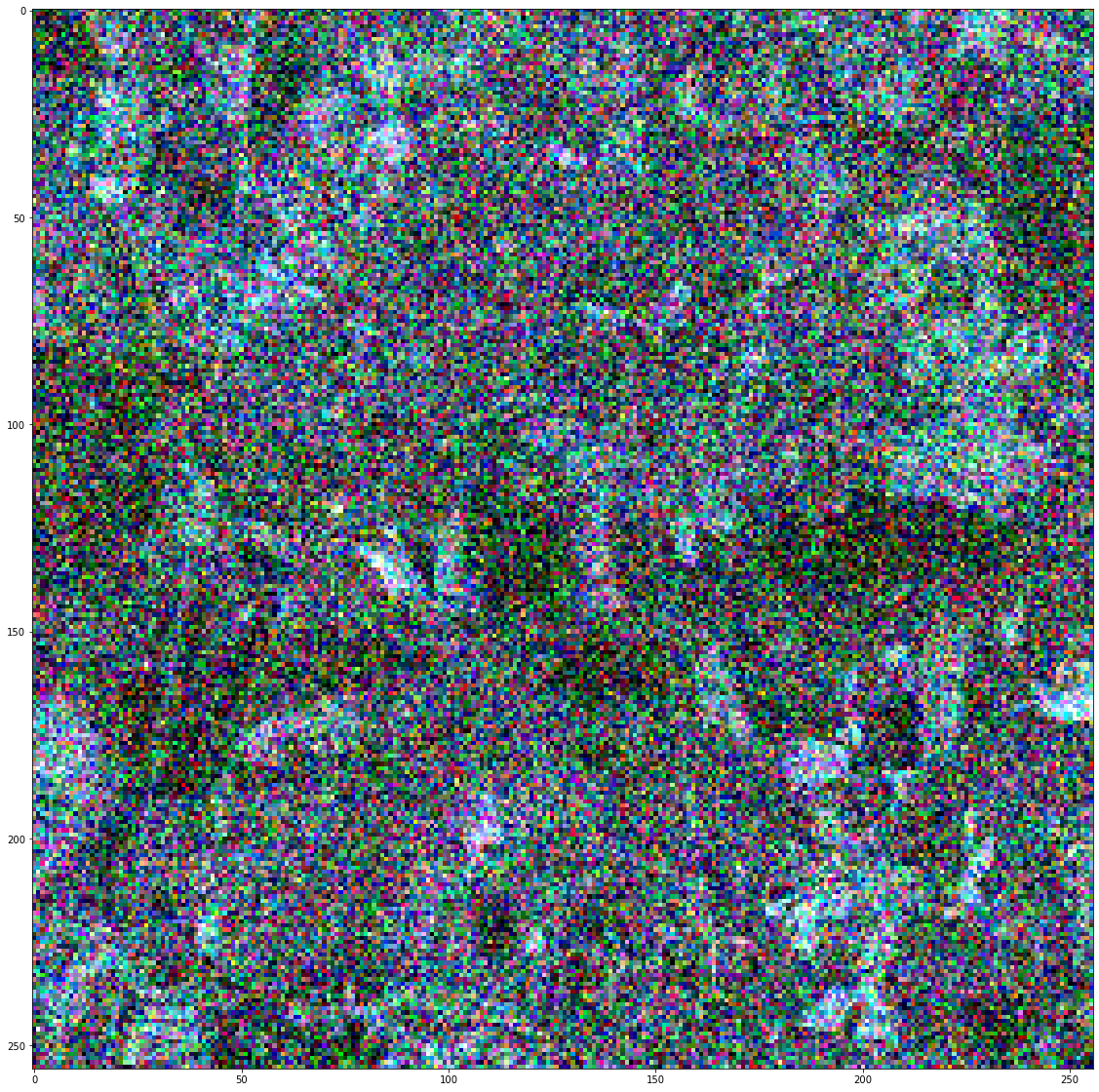}}
%
 {\includegraphics[width=0.22\textwidth]{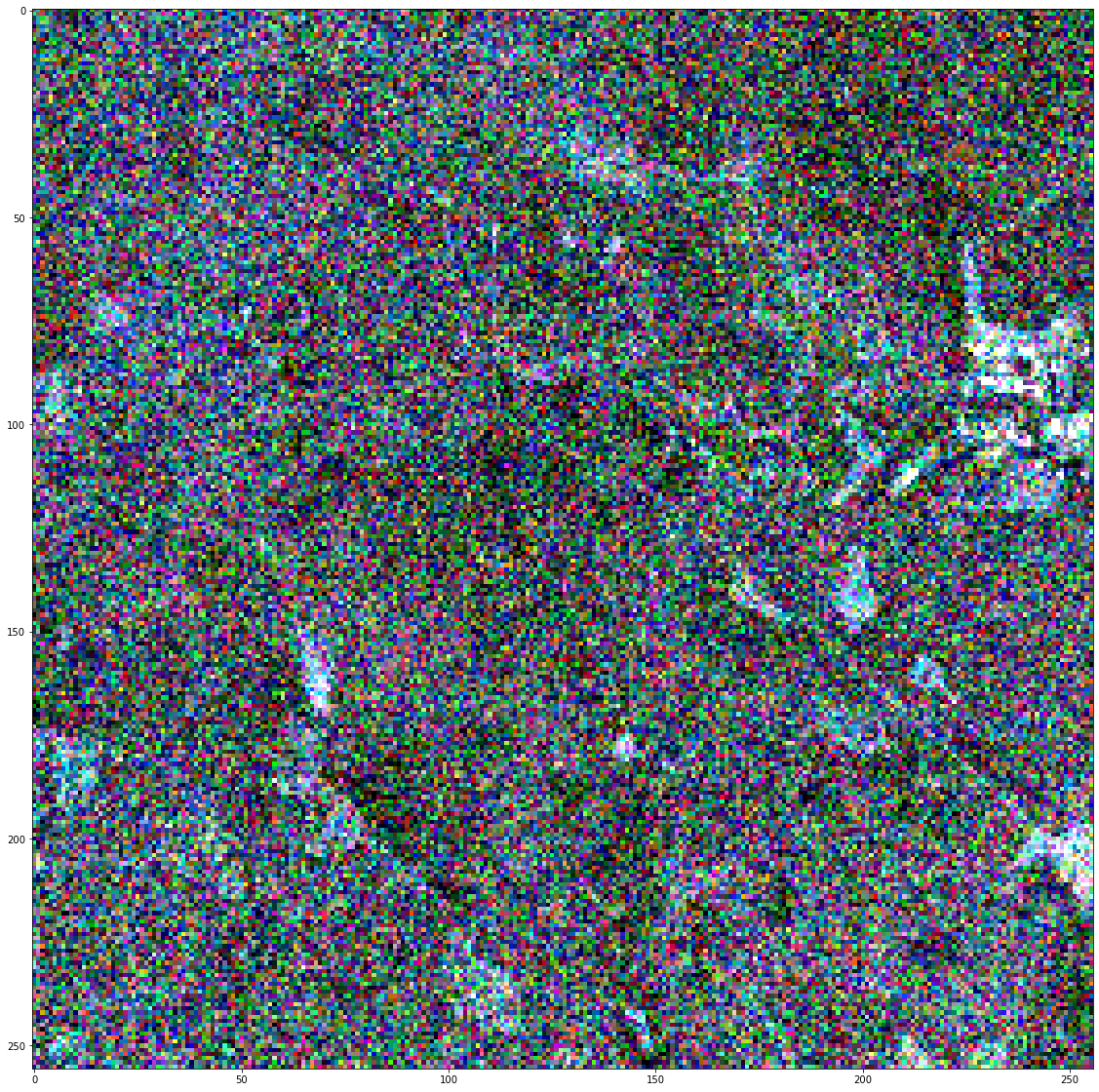}}
%
 {\includegraphics[width=0.22\textwidth]{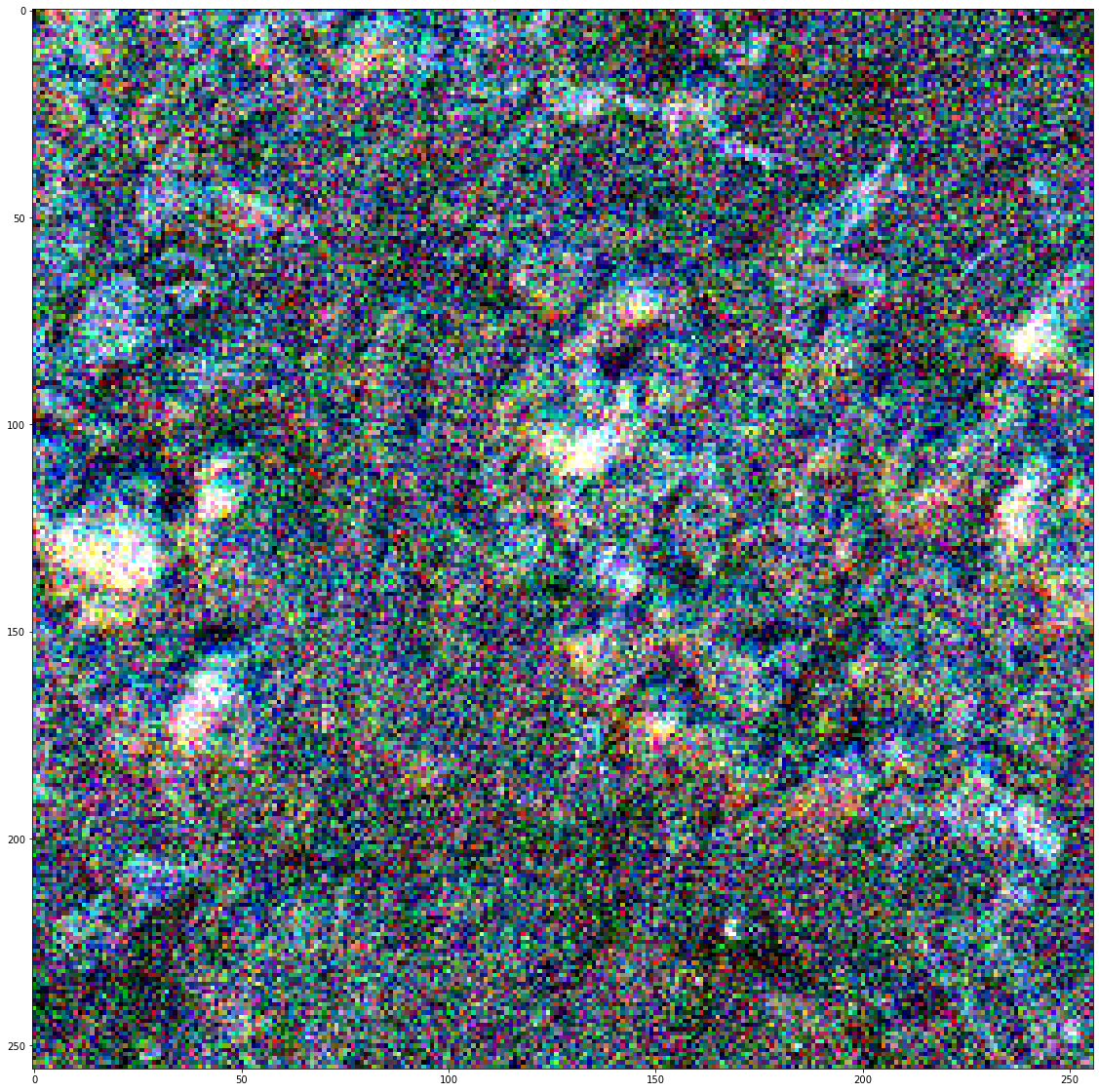}}
%
 {\includegraphics[width=0.22\textwidth]{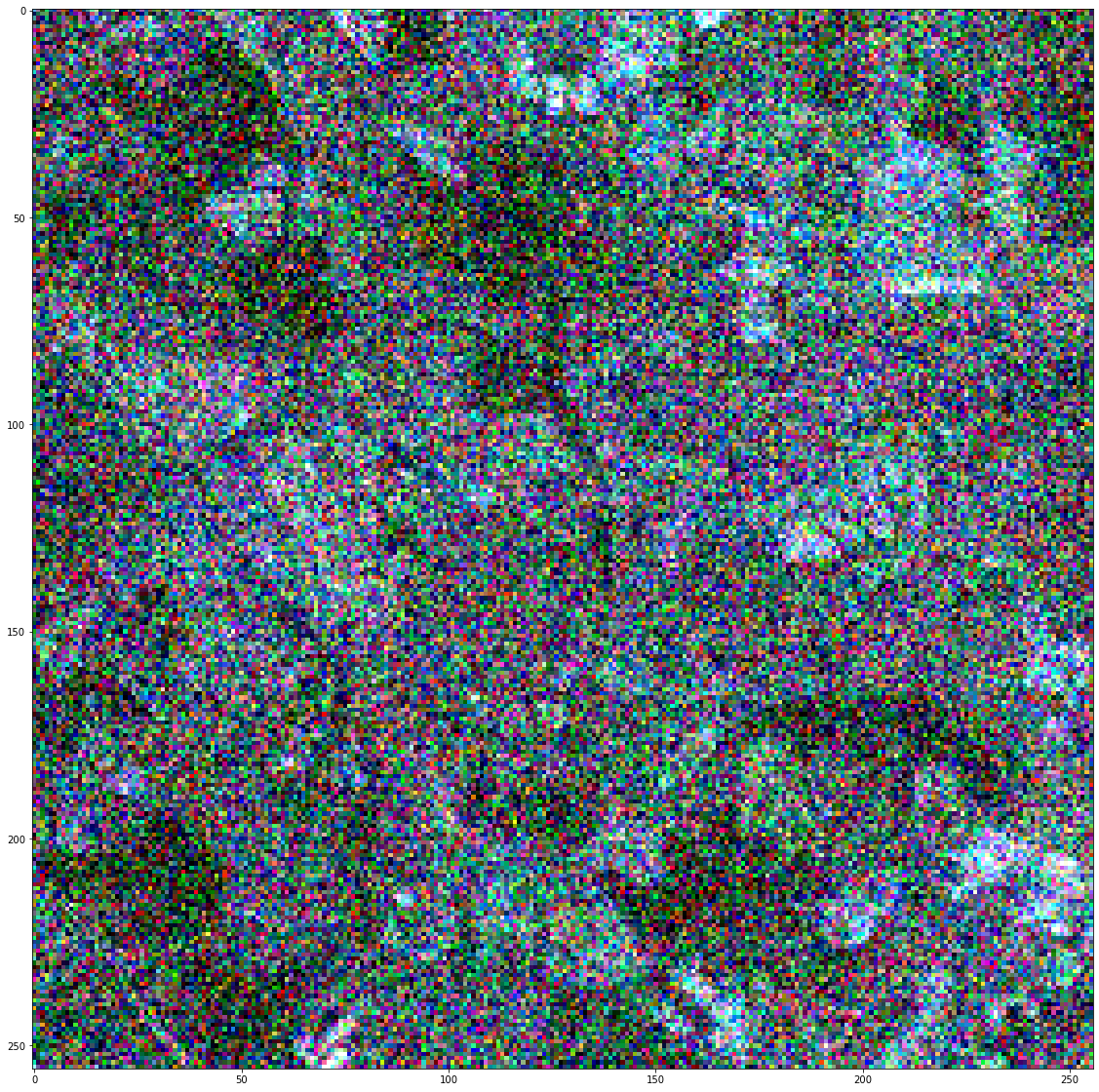}}
%
 {\includegraphics[width=0.22\textwidth]{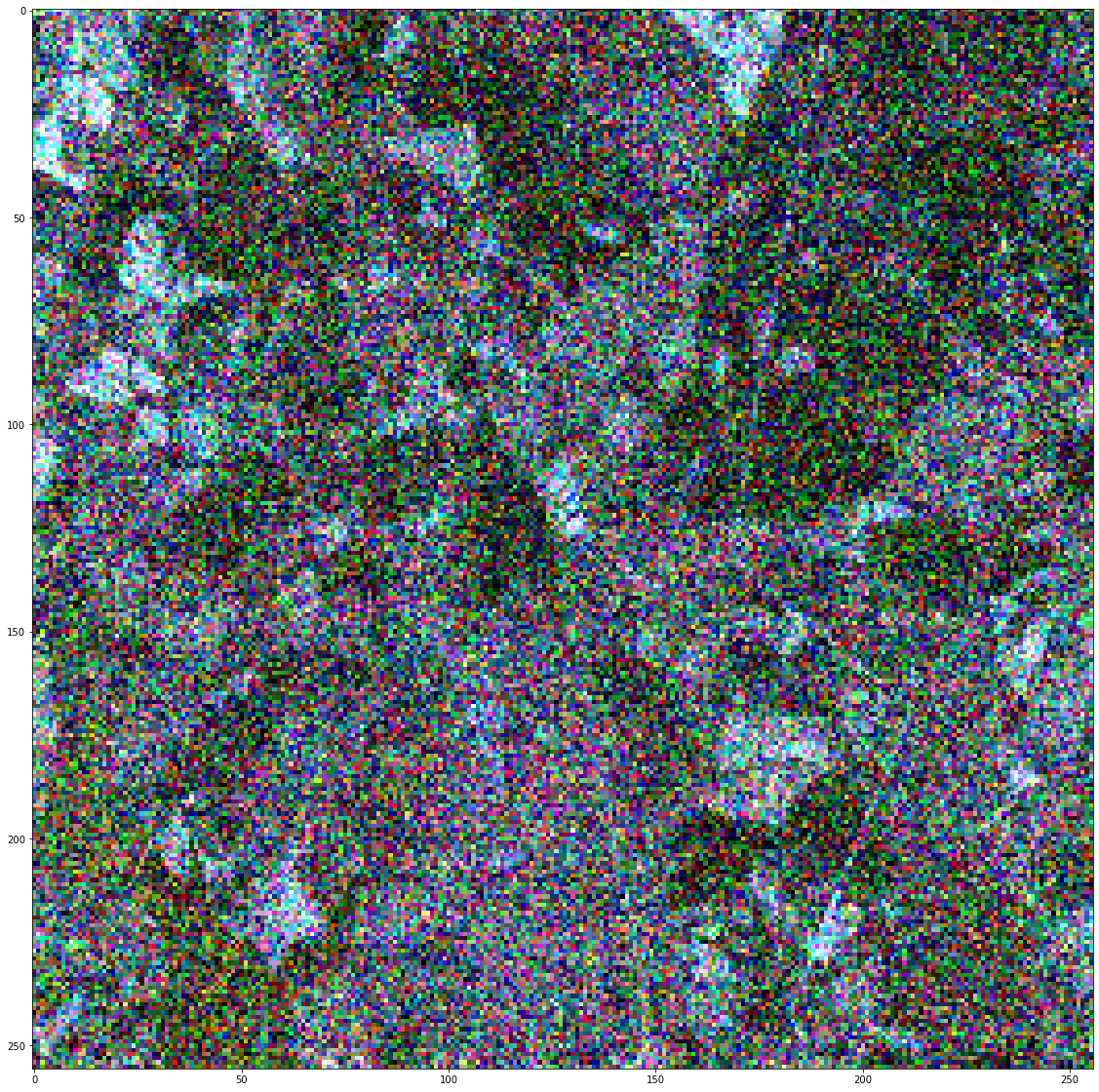}}
%
 {\includegraphics[width=0.22\textwidth]{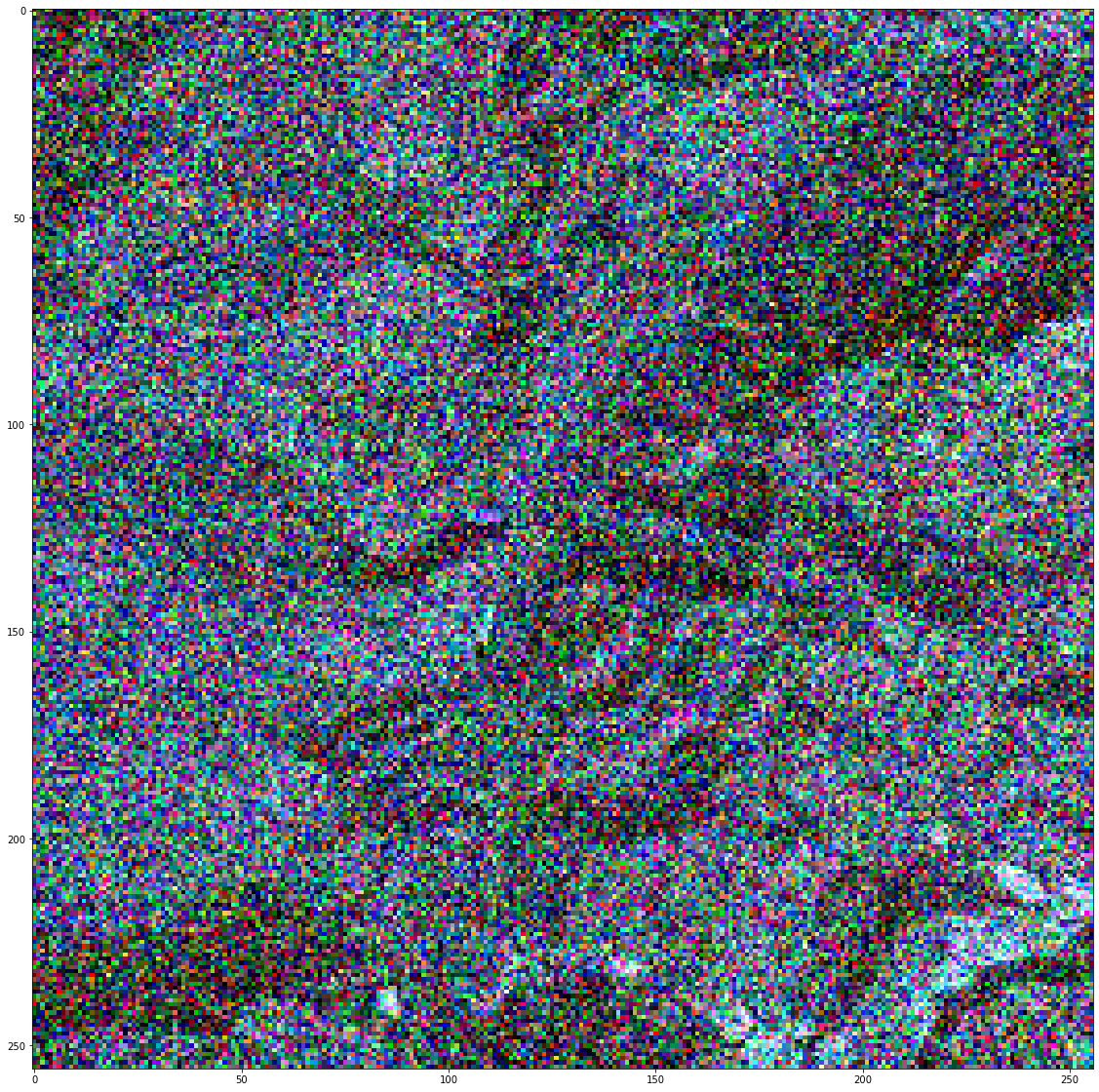}}
%
 {\includegraphics[width=0.22\textwidth]{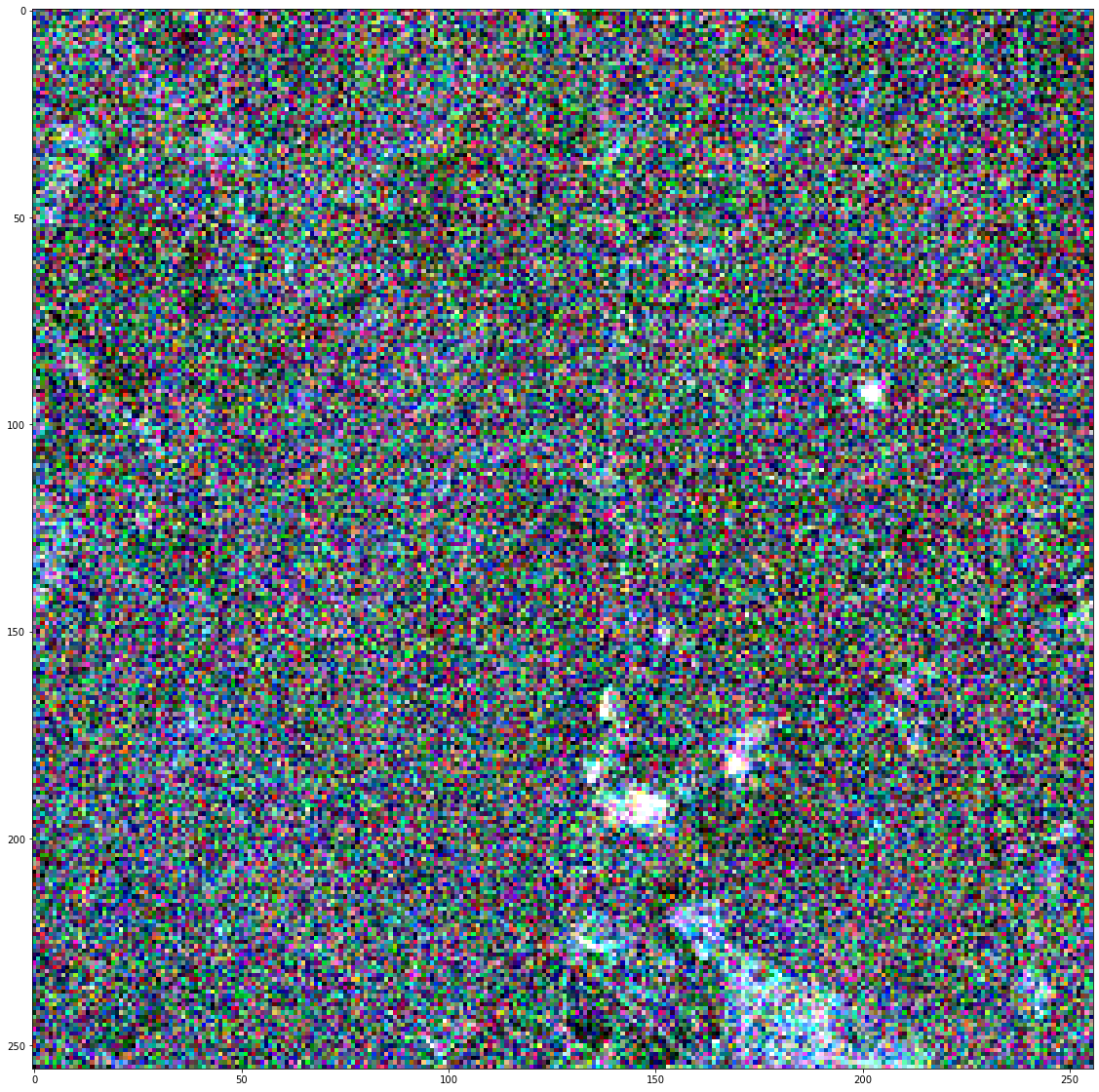}}
%
 {\includegraphics[width=0.22\textwidth]{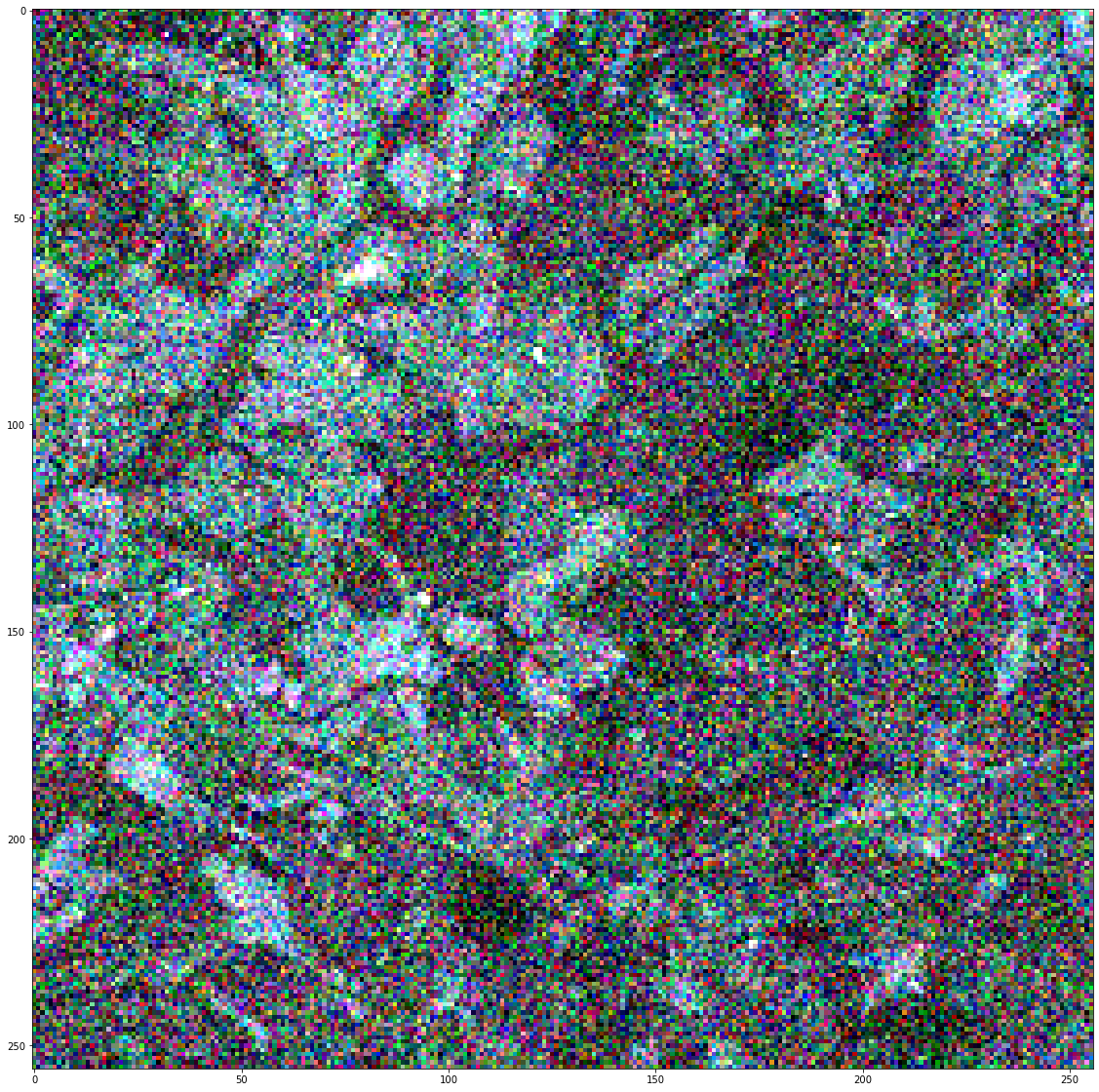}}
%
 {\includegraphics[width=0.22\textwidth]{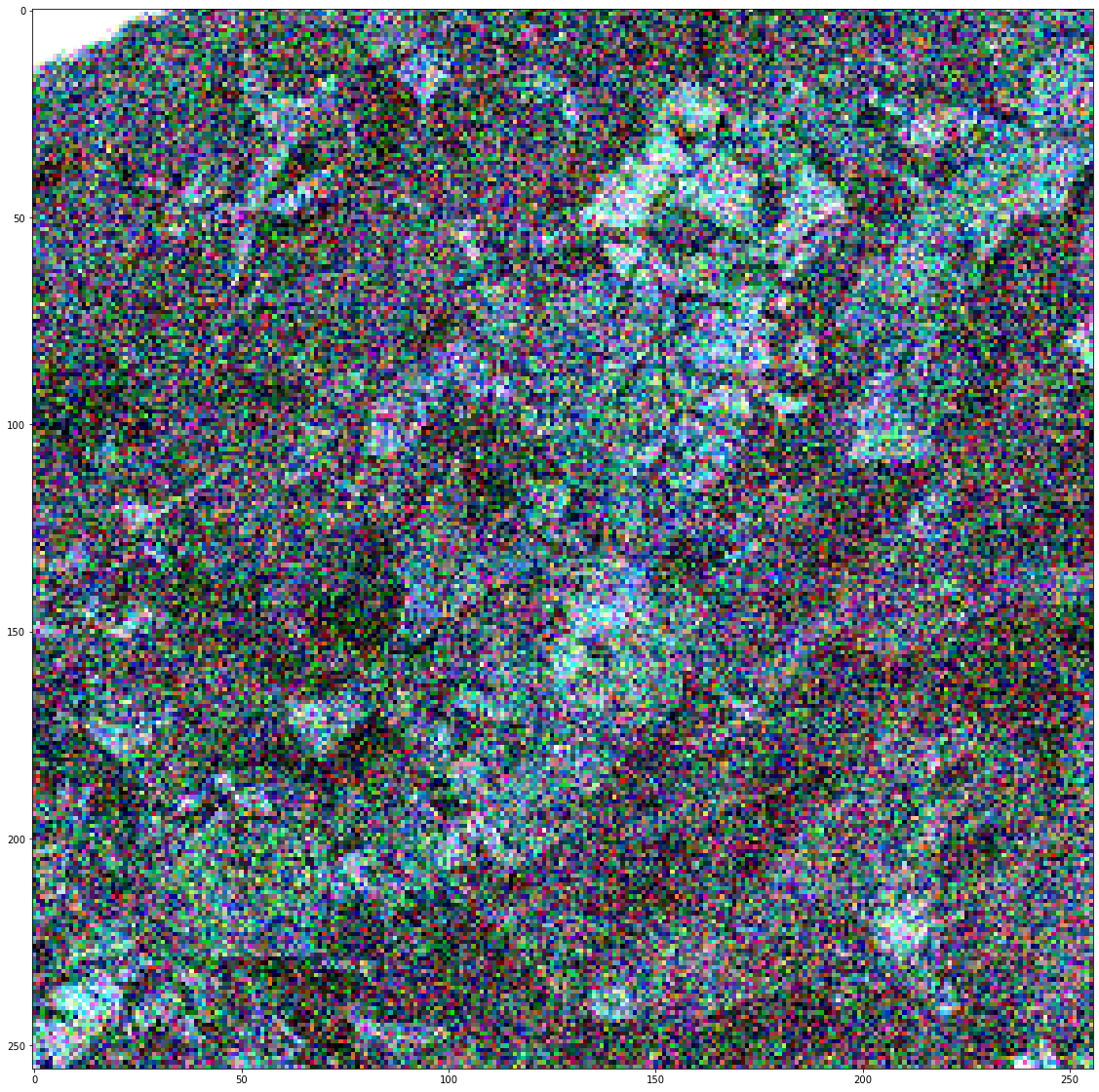}}
%
 {\includegraphics[width=0.22\textwidth]{figures/NoisyImage9_sigma0p04.png}}

%
\caption{Noisy images with variance $\sigma = 0.04$. 
}
\label{fig:NoisyTestImagesvar0_04:19059hdf}
\end{center}
\end{figure}

\begin{figure}[tbh]
\begin{center}  

%
 {\includegraphics[width=0.22\textwidth]{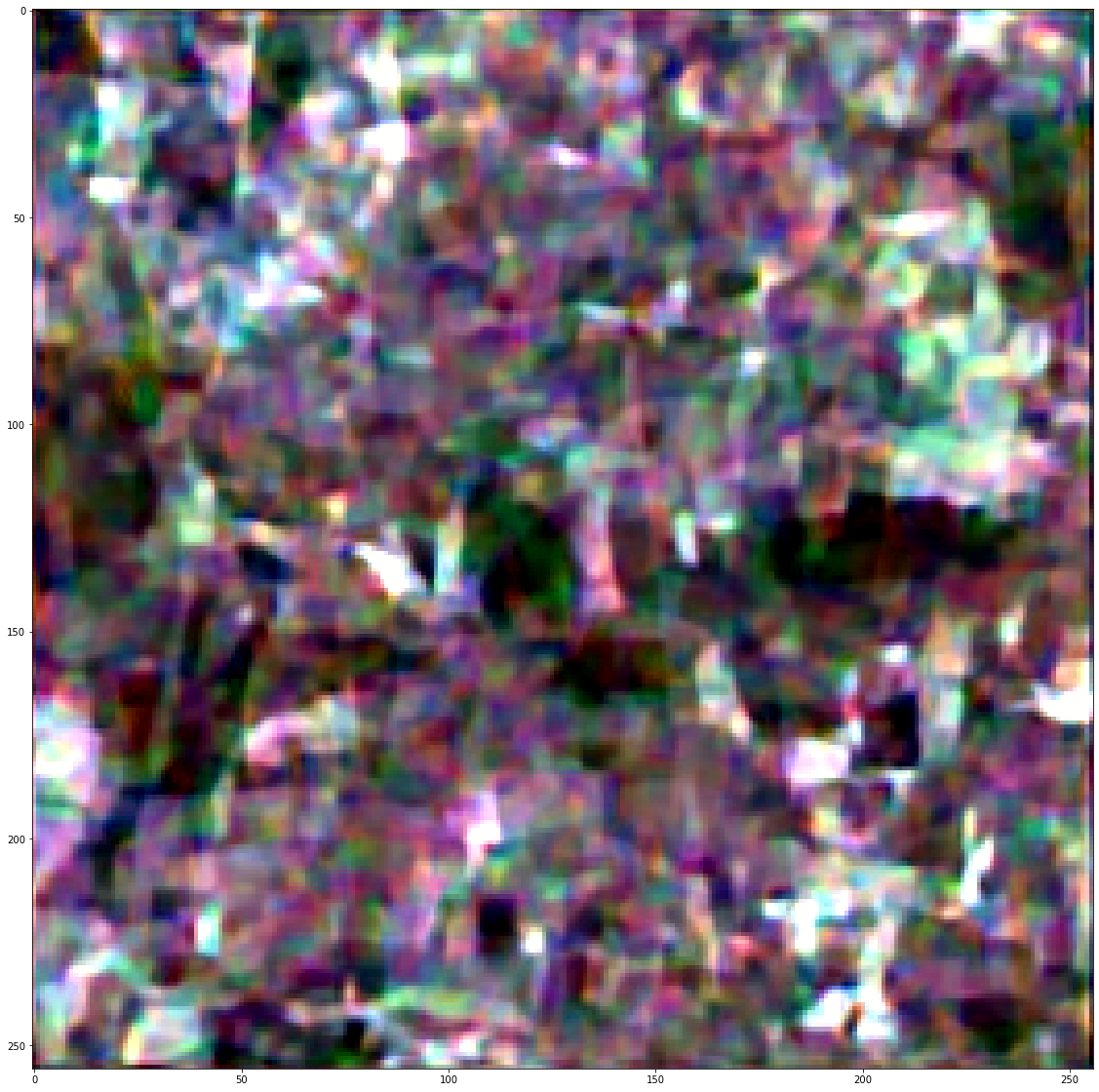}}
%
 {\includegraphics[width=0.22\textwidth]{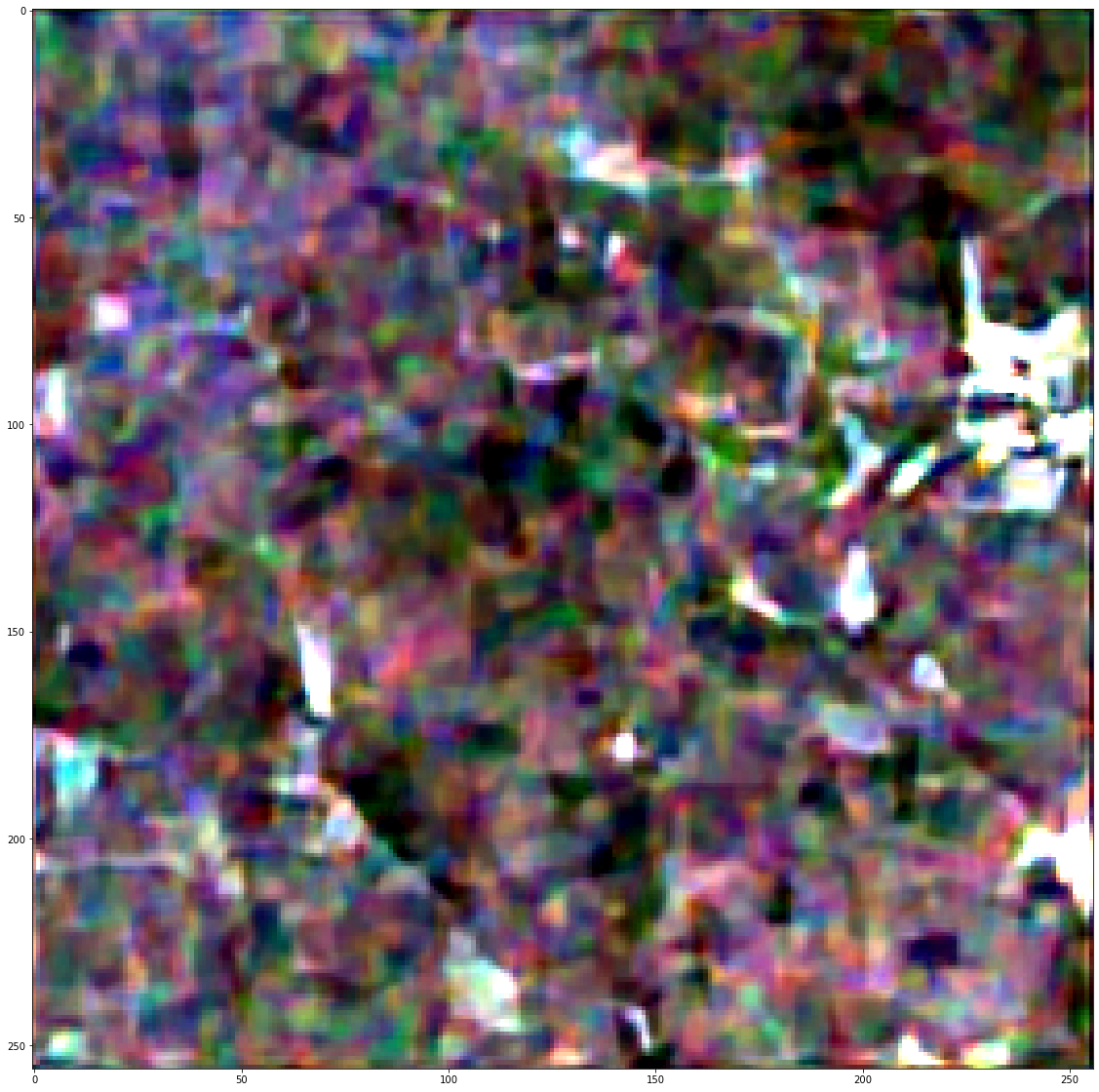}}
%
 {\includegraphics[width=0.22\textwidth]{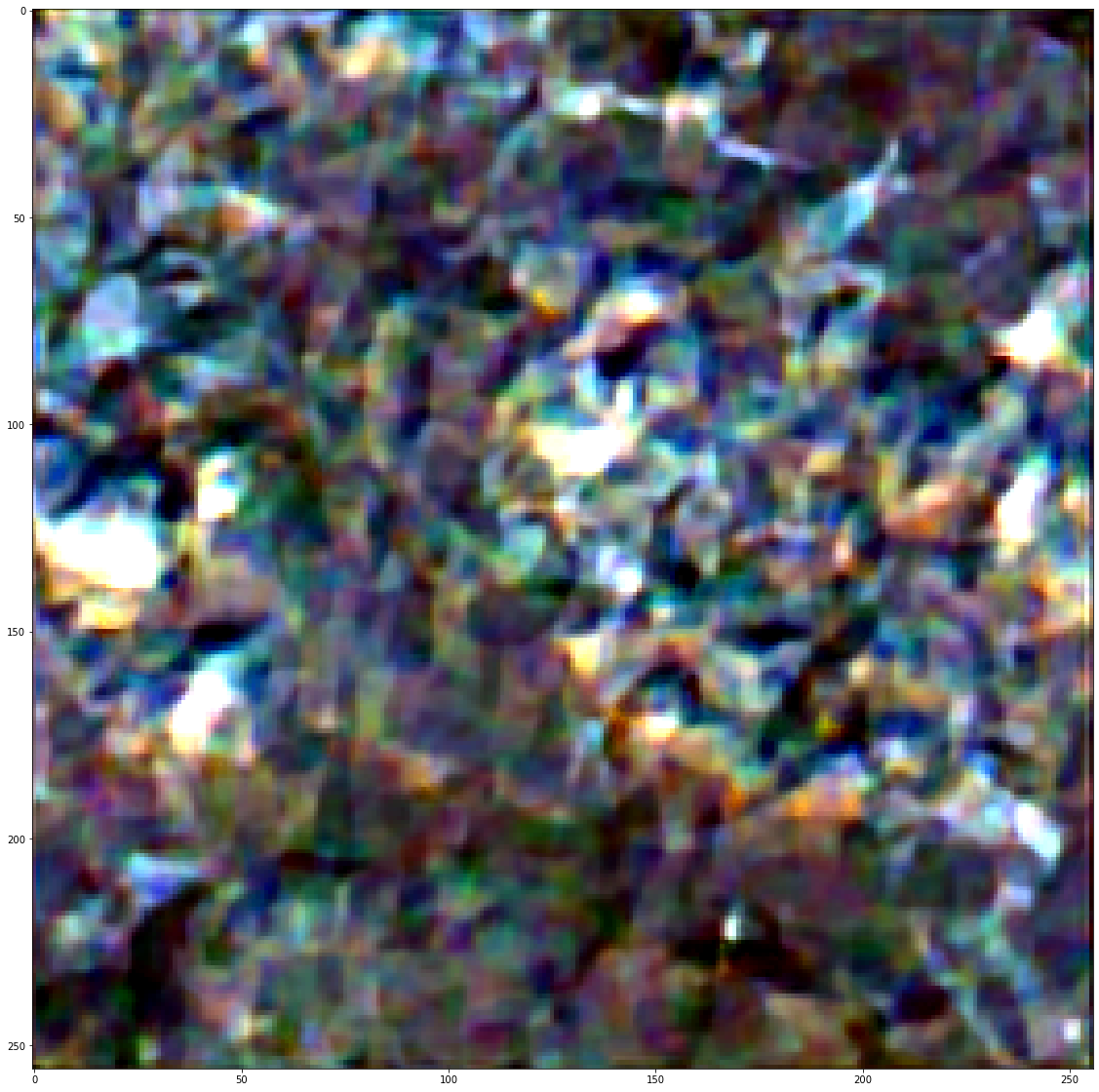}}
%
 {\includegraphics[width=0.22\textwidth]{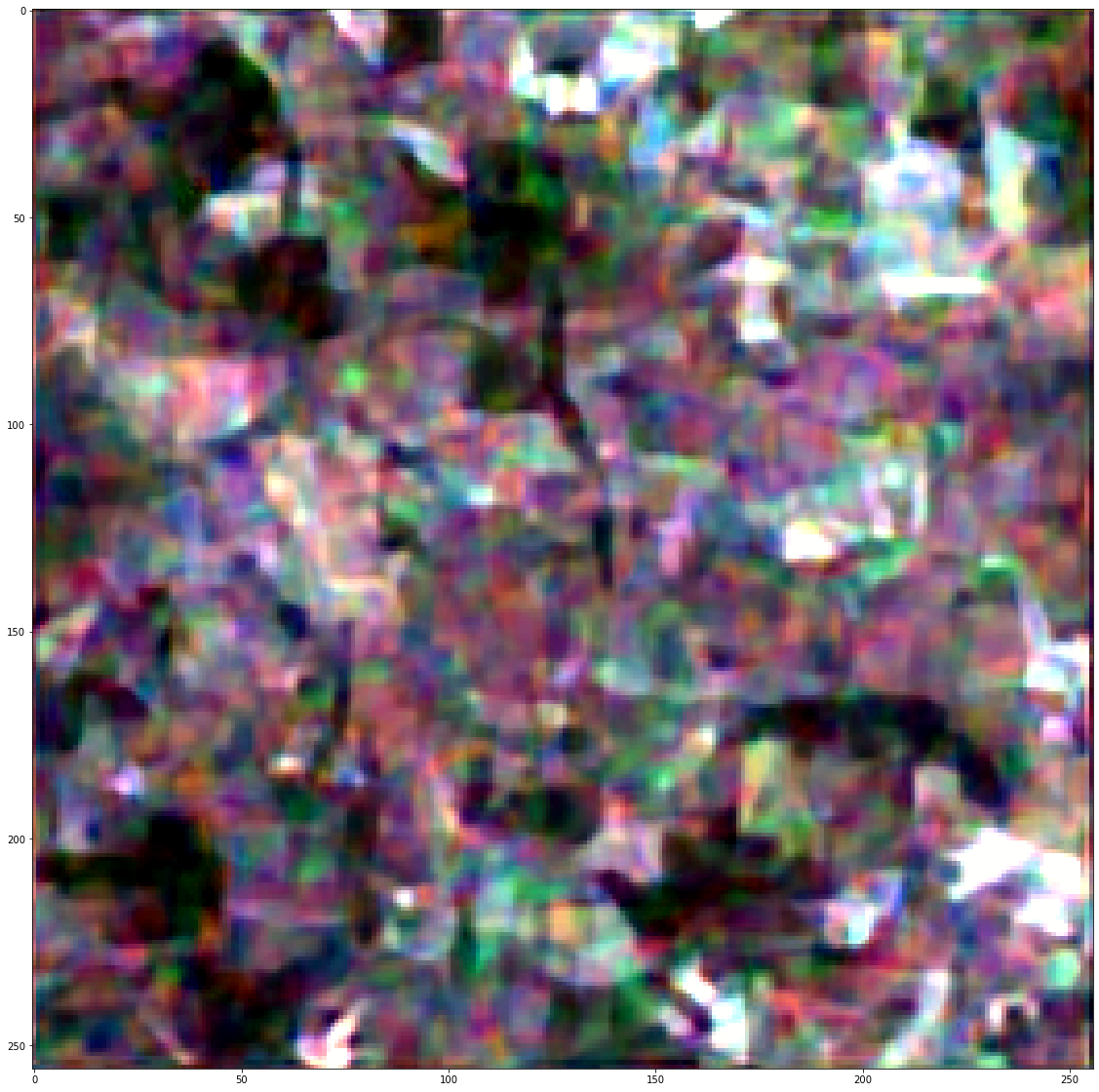}}
%
 {\includegraphics[width=0.22\textwidth]{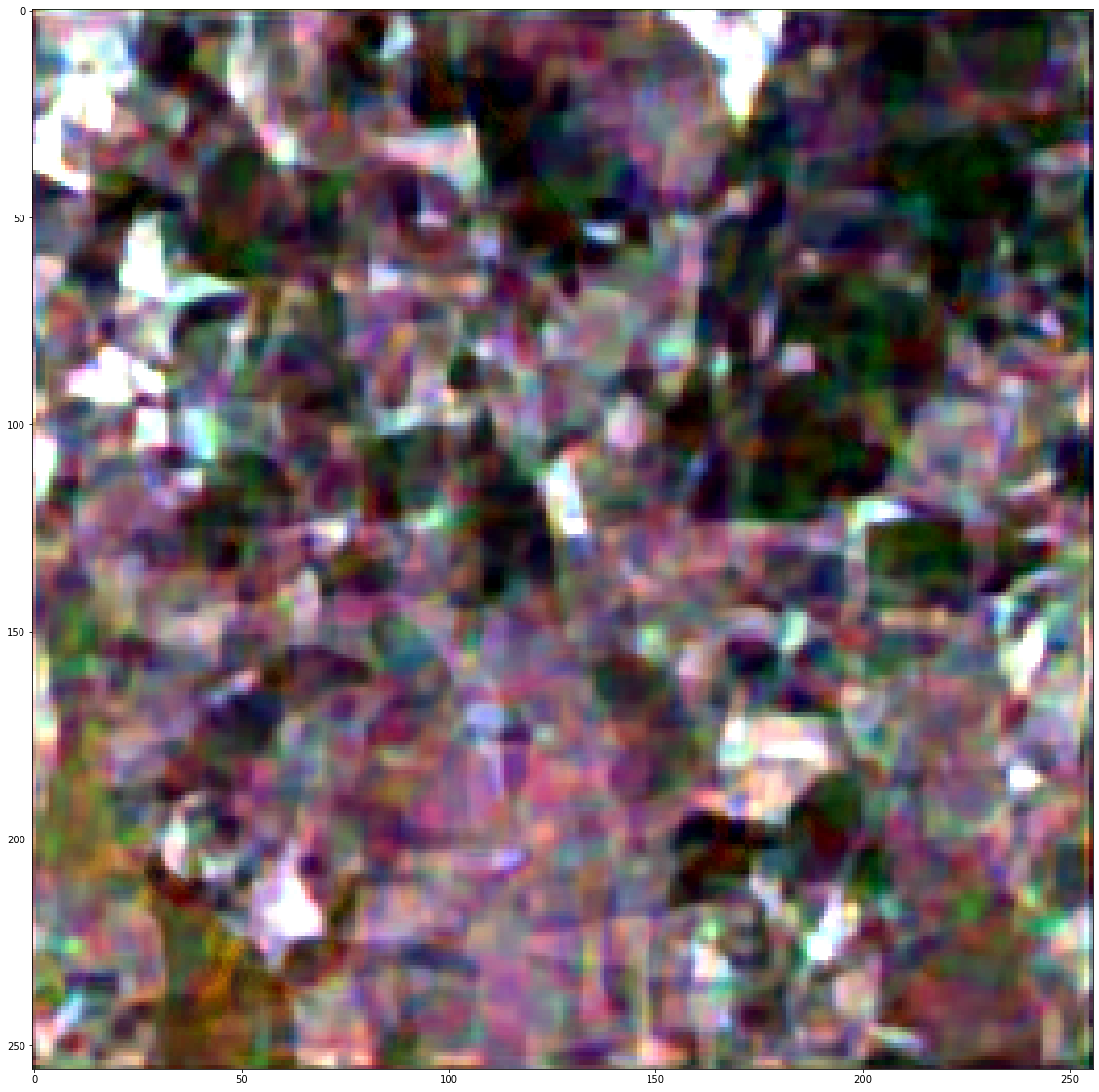}}
%
 {\includegraphics[width=0.22\textwidth]{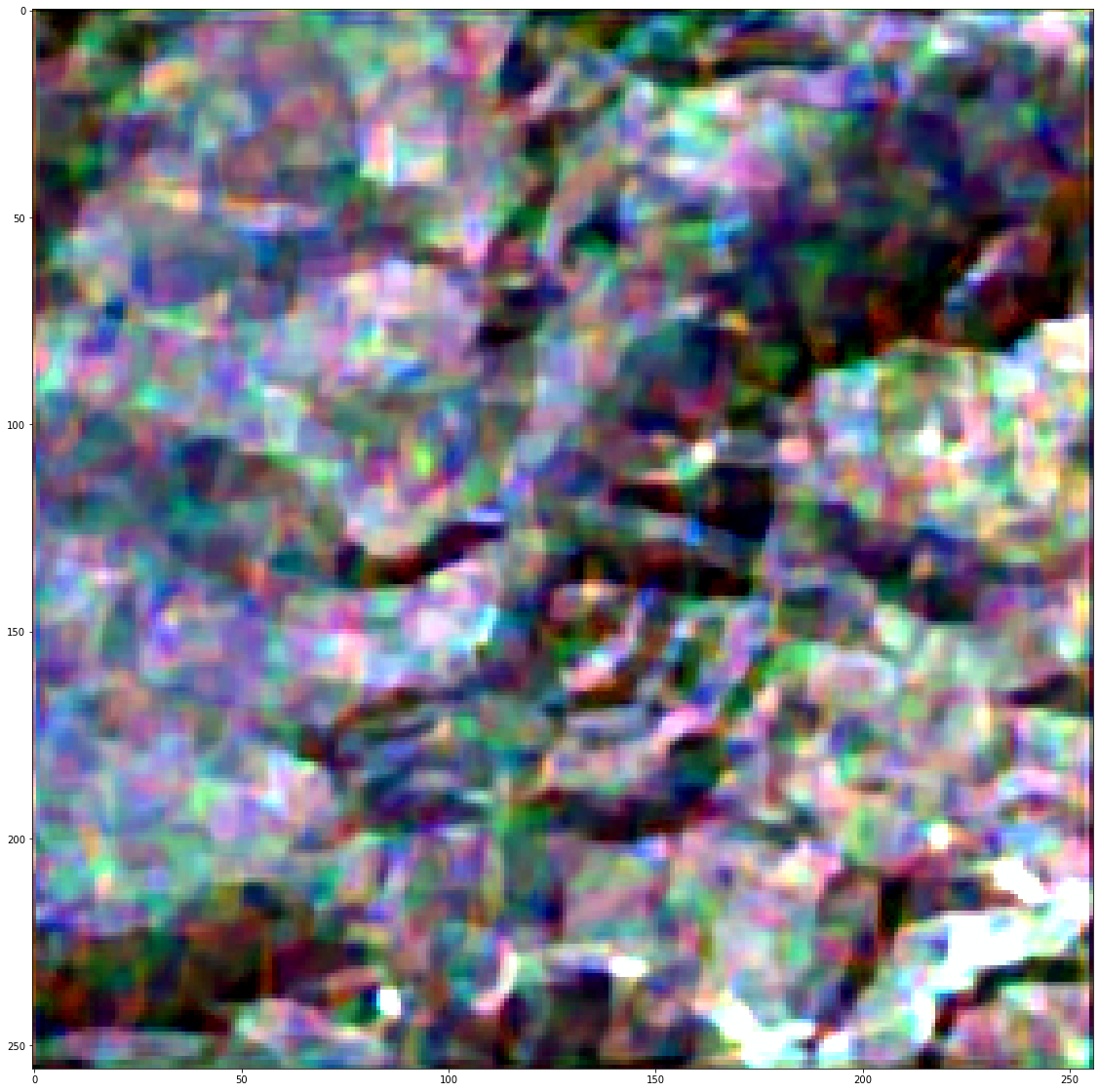}}
%
 {\includegraphics[width=0.22\textwidth]{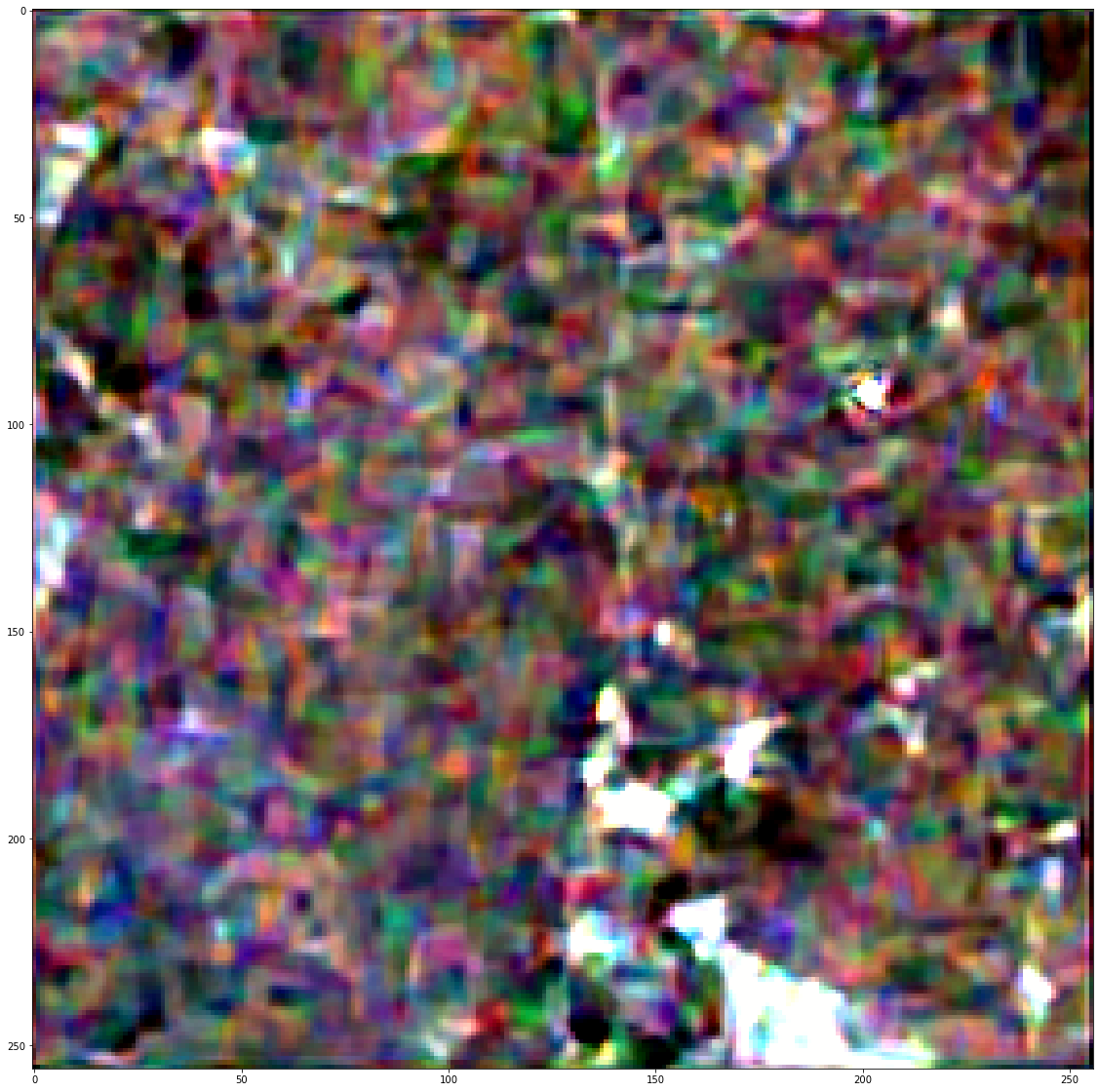}}
%
 {\includegraphics[width=0.22\textwidth]{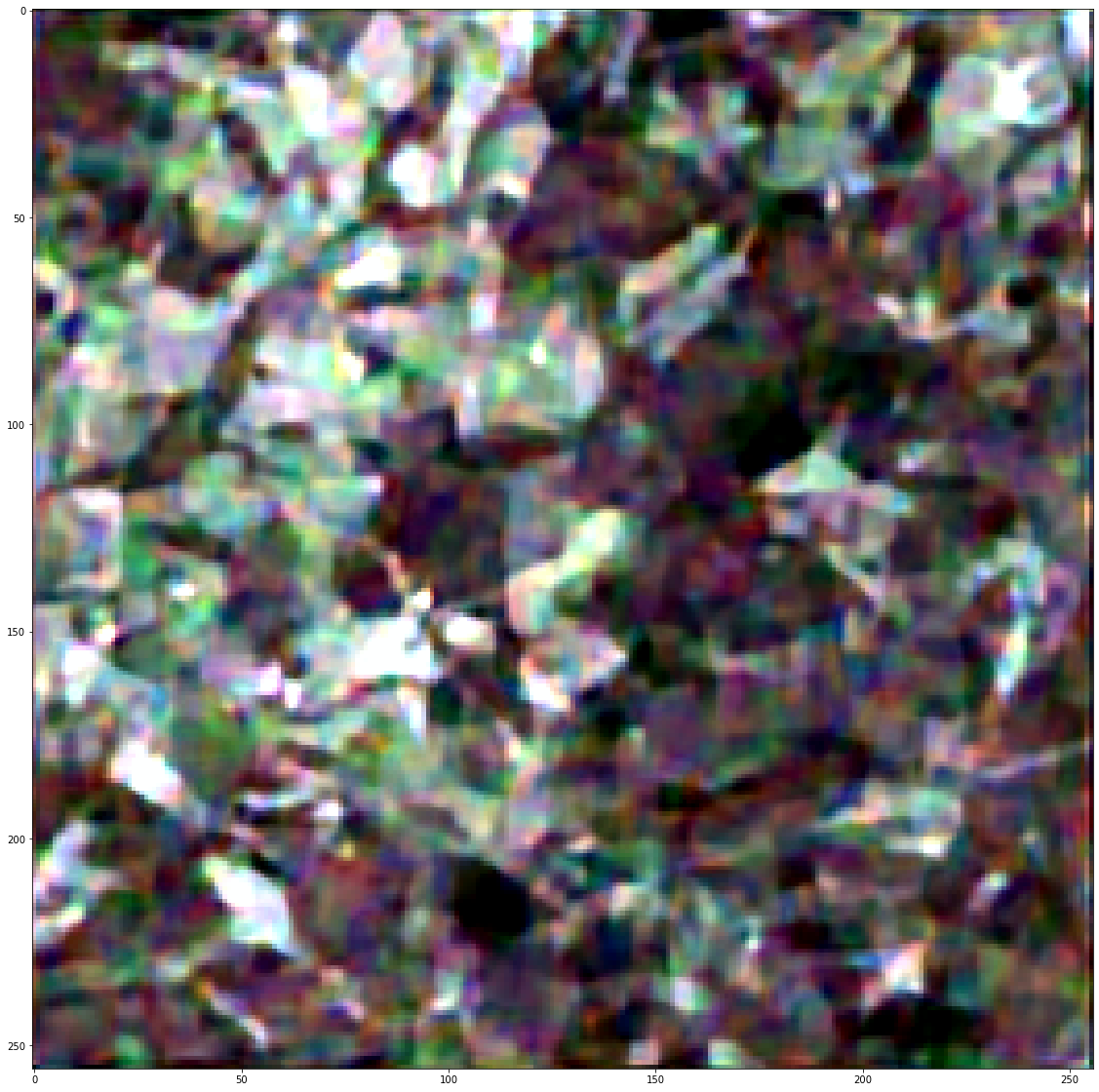}}
%
 {\includegraphics[width=0.22\textwidth]{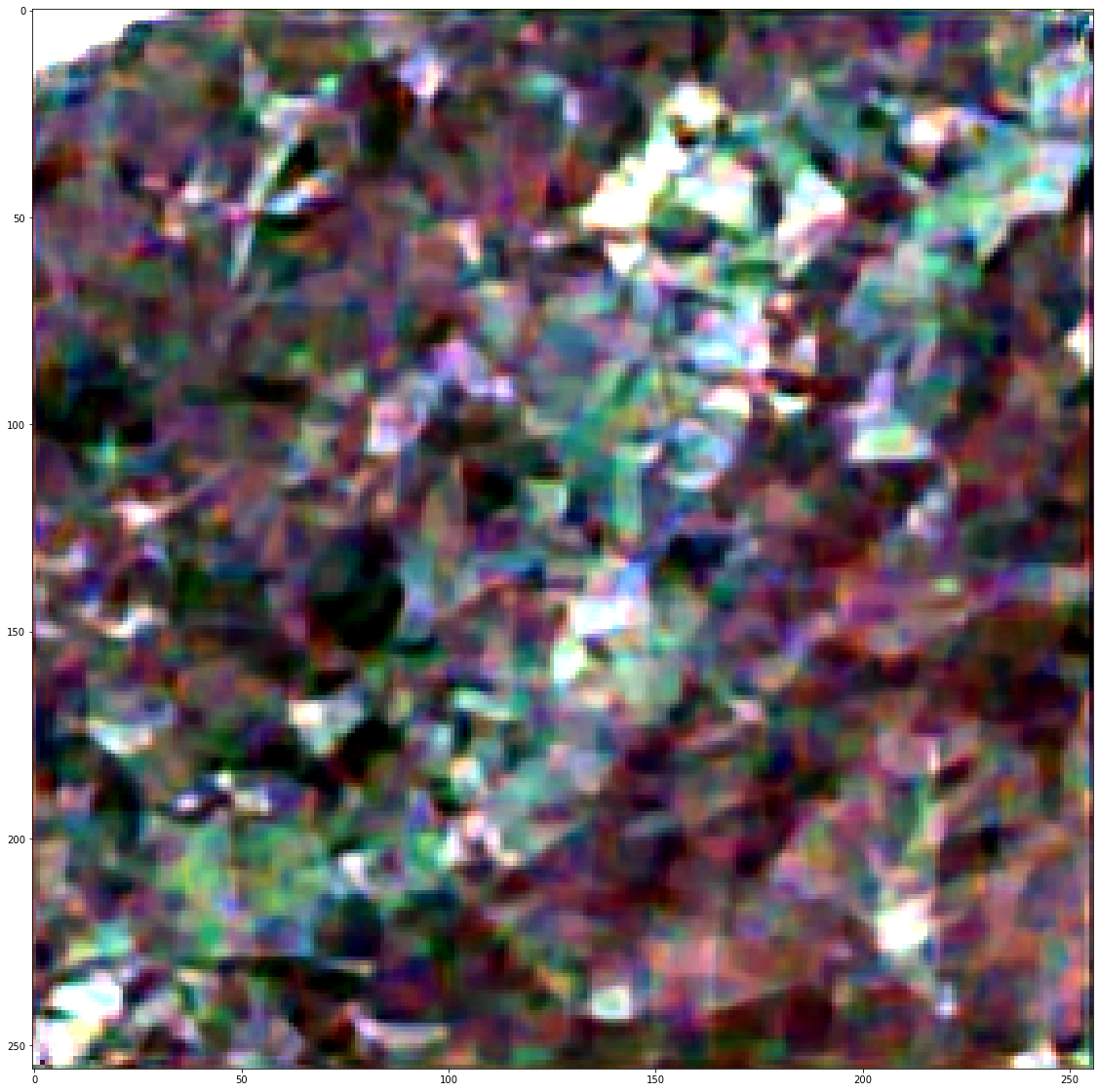}}
%
 {\includegraphics[width=0.22\textwidth]{figures/RQUnetVAE_smoothing_scheme3_image9_sigma0p04.png}}

%
\caption{Denoised images for $\sigma = 0.04$ by RQUnet-VAE smoothing. %(scheme 3) with the optimal oracle $\beta^* = 601$. 
}
\label{fig:DenoisedTestImagesRQUnetVAEScheme3var0_04:19059hdf}
\end{center}
\end{figure}

\clearpage

\section{Riesz Quincunx Filter Banks}
\begin{figure*}[!h]
\begin{center}  

% Scaling base:
\subfigure[Scaling base]{\includegraphics[width=0.20\textwidth]{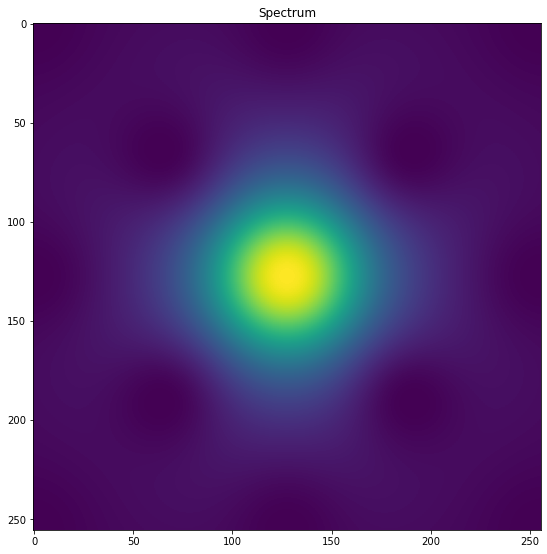}}

\vspace{-0.4cm}
% Wavelet bases at scale 0:
\subfigure[Bases at  Scale~0]{\includegraphics[width=0.20\textwidth]{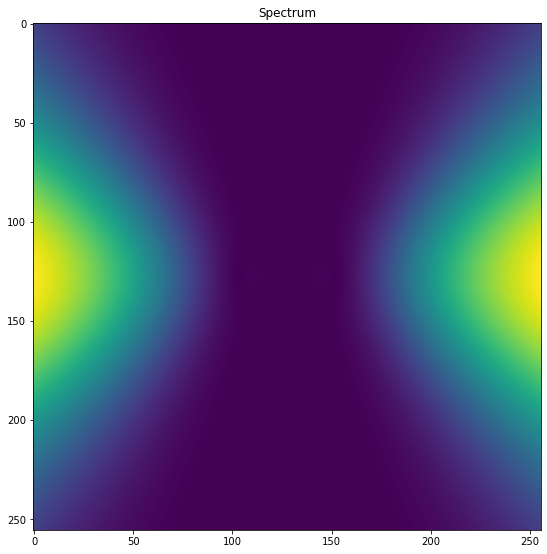}}
%
\subfigure[]{\includegraphics[width=0.20\textwidth]{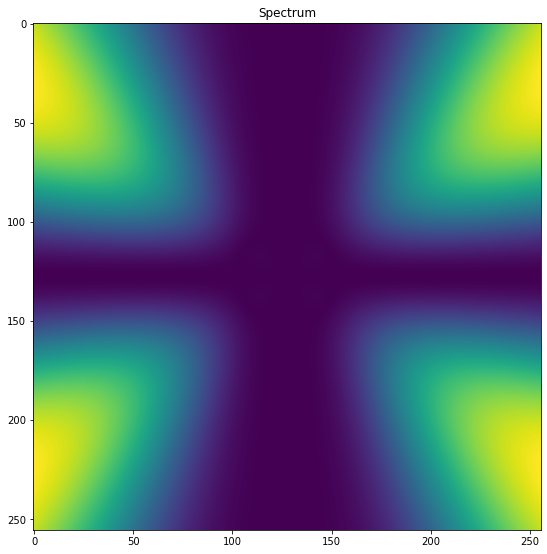}}
%
\subfigure[]{\includegraphics[width=0.20\textwidth]{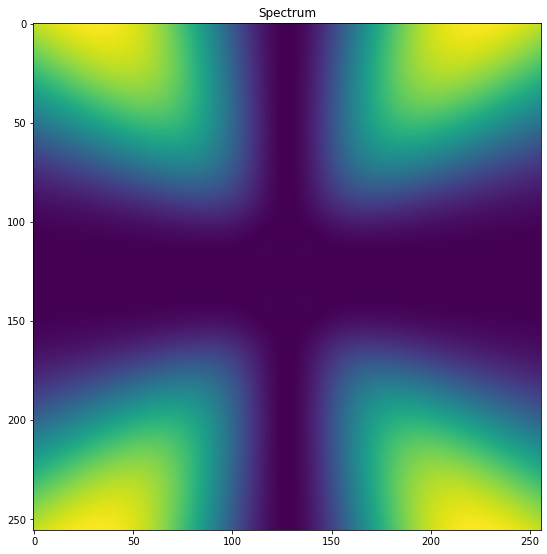}}
%
\subfigure[]{\includegraphics[width=0.20\textwidth]{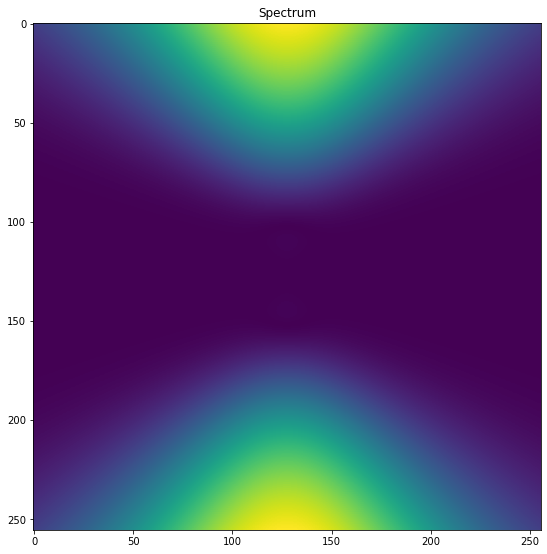}}

\vspace{-0.4cm}
% Wavelet bases at scale 1:
\subfigure[Bases at  Scale~1]{\includegraphics[width=0.20\textwidth]{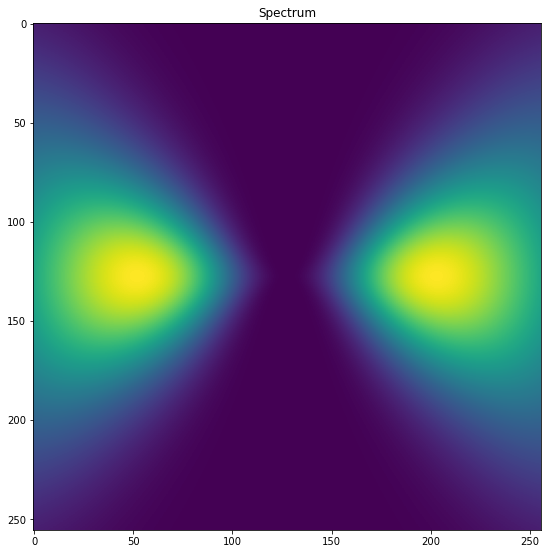}}
%
\subfigure[]{\includegraphics[width=0.20\textwidth]{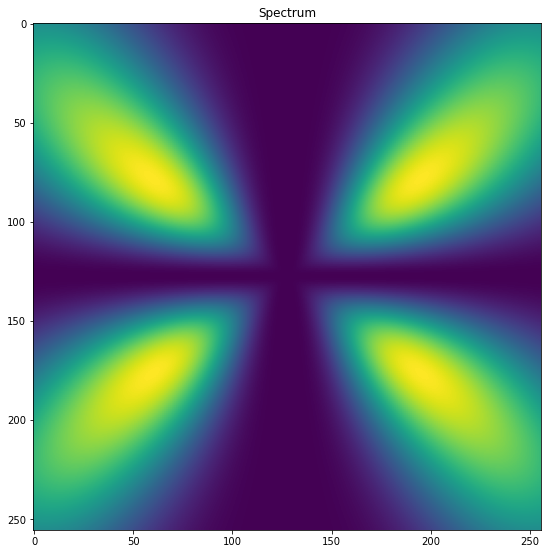}}
%
\subfigure[]{\includegraphics[width=0.20\textwidth]{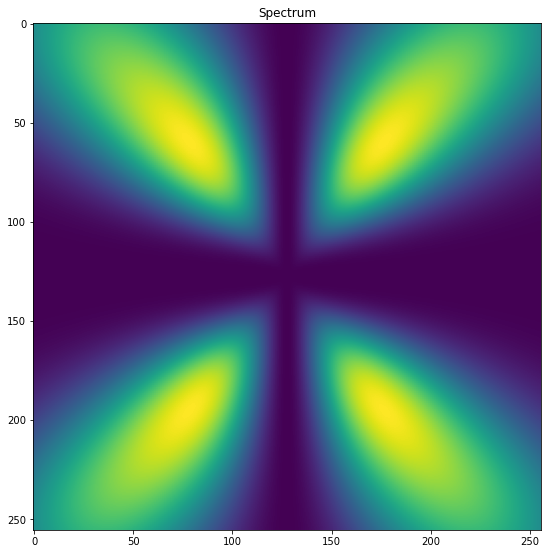}}
%
\subfigure[]{\includegraphics[width=0.20\textwidth]{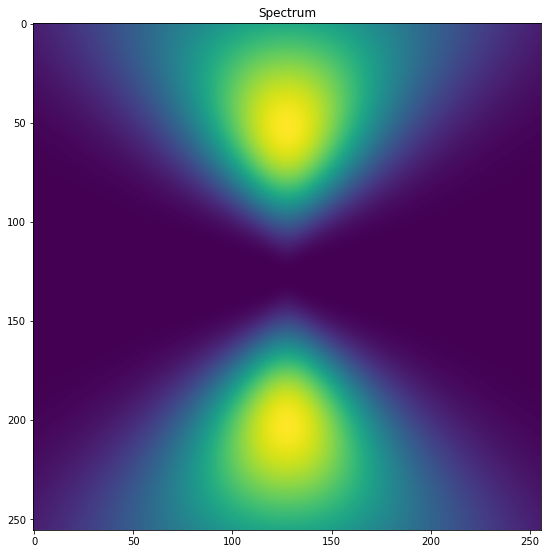}}

\vspace{-0.4cm}
% Wavelet bases at scale 2:
\subfigure[Bases at  Scale~2]{\includegraphics[width=0.20\textwidth]{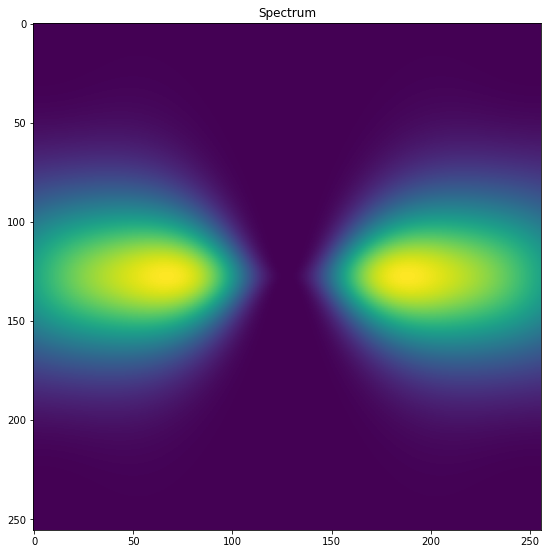}}
%
\subfigure[]{\includegraphics[width=0.20\textwidth]{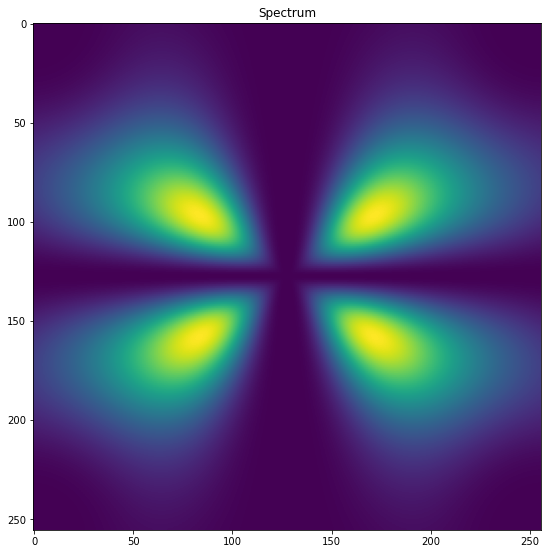}}
%
\subfigure[]{\includegraphics[width=0.20\textwidth]{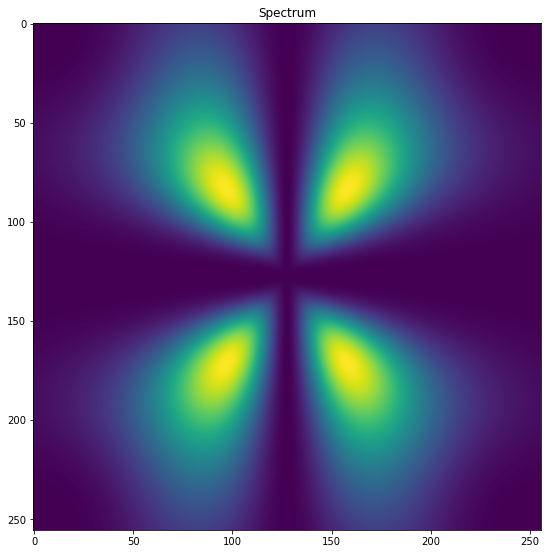}}
%
\subfigure[]{\includegraphics[width=0.20\textwidth]{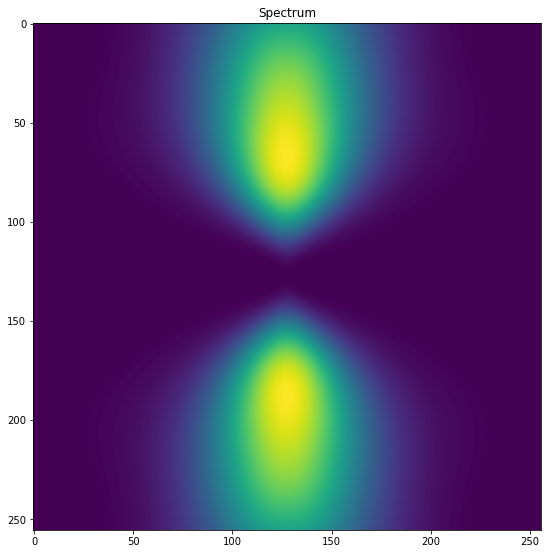}}

\vspace{-0.4cm}
% Wavelet bases at scale 3:
\subfigure[Bases at  Scale~3]{\includegraphics[width=0.20\textwidth]{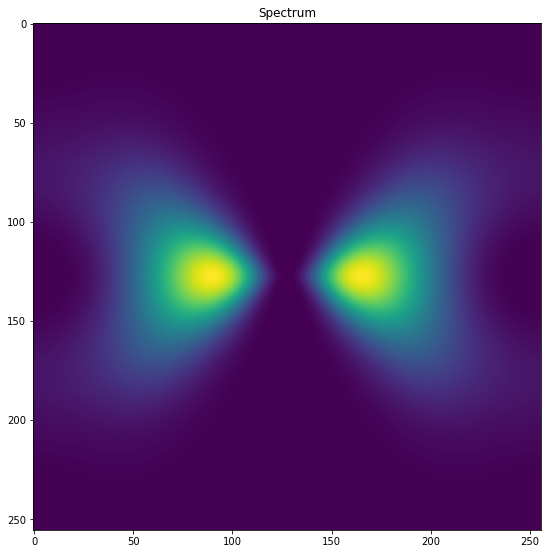}}
%
\subfigure[]{\includegraphics[width=0.20\textwidth]{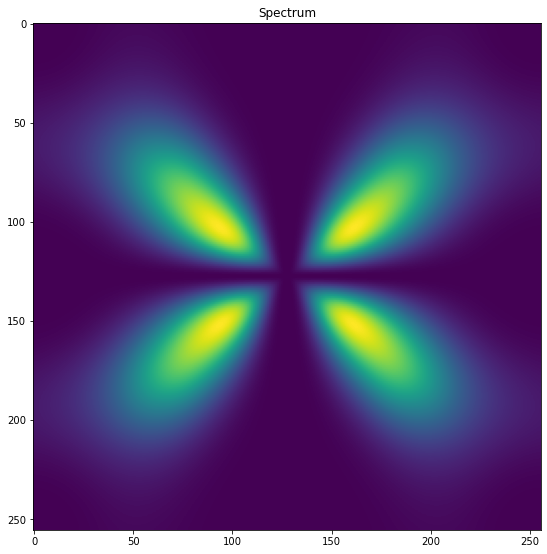}}
%
\subfigure[]{\includegraphics[width=0.20\textwidth]{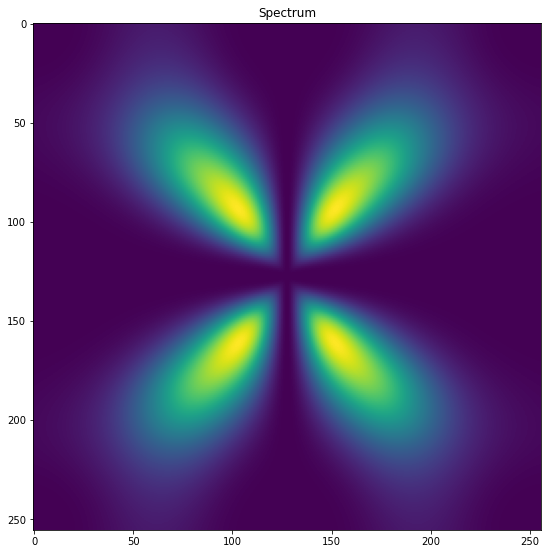}}
%
\subfigure[]{\includegraphics[width=0.20\textwidth]{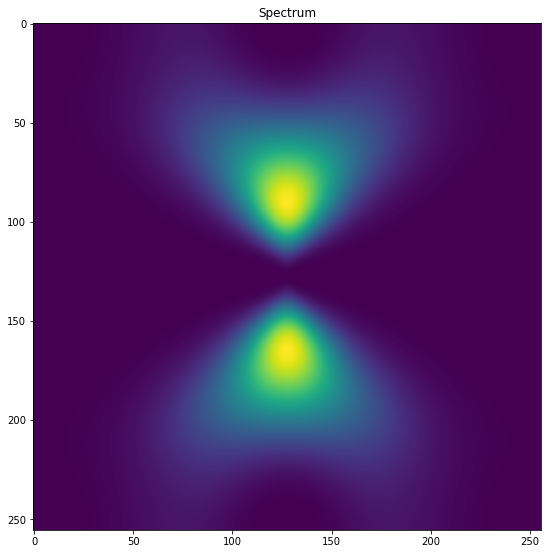}}

\vspace{-0.4cm}
%
\caption{This figure illustrates $N$-th order Riesz Quincunx Filter banks in the Fourier domain with 3 scales, a fractional order of Bspline $\gamma = 1.2$ and order of Riesz transform $N = 3$.
}
\label{fig:HighOrderRieszQuincunxFilterBanks}
\end{center}
\end{figure*}
%
Figure~\ref{fig:HighOrderRieszQuincunxFilterBanks} illustrates the N-th order Riesz Quincunx filter banks in the Fourier domain: rows are for 4 wavelet scales and columns are 4 directions for each scale corresponding to $N=0,...,3$ orders in Riesz transform.

\newpage
\clearpage

\section{Detailed Encoder and Decoder of \RQUnetVAE}
Figure~\ref{fig:RQUnetVAEencoder} shows the detailed architecture of the \RQUnetVAE encoder (left) and decoder (right). 

\begin{figure}
\begin{center}  

\includegraphics[width=0.48\textwidth]{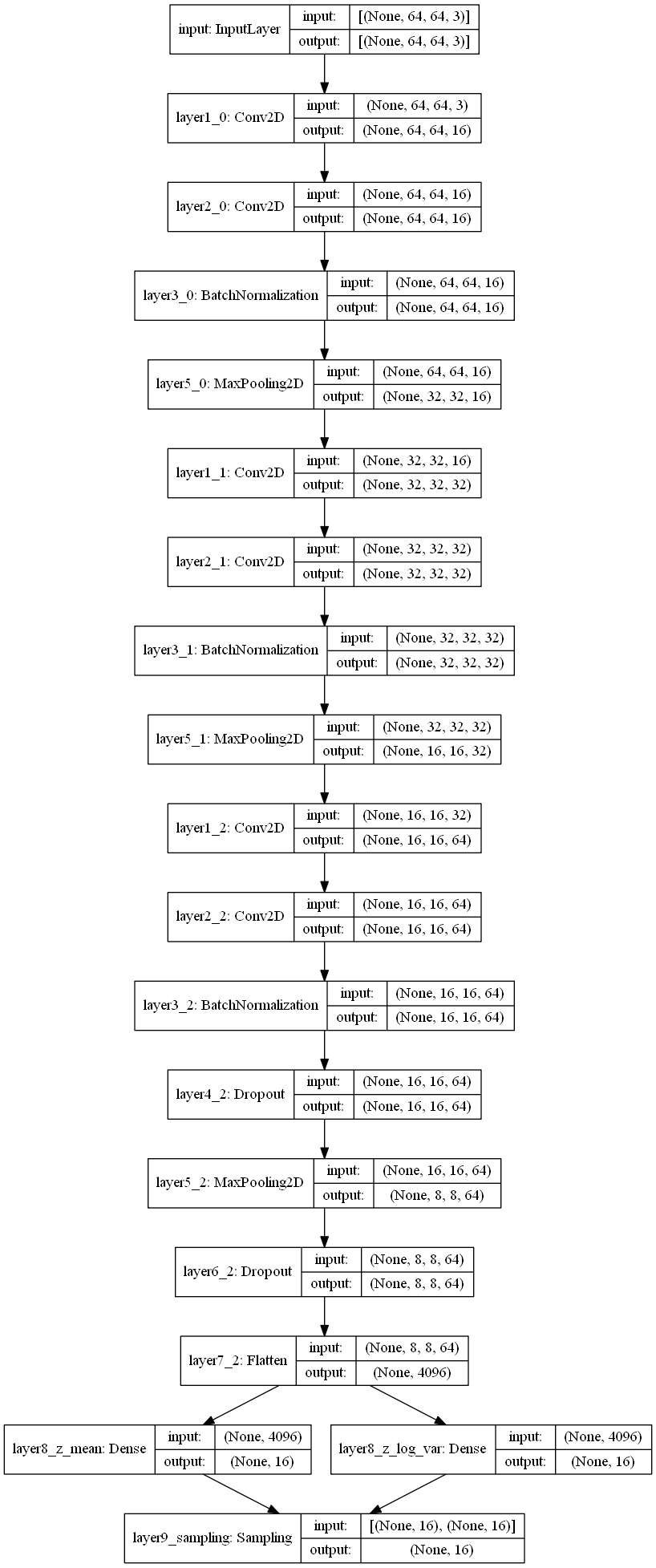}
\includegraphics[width=0.32\textwidth]{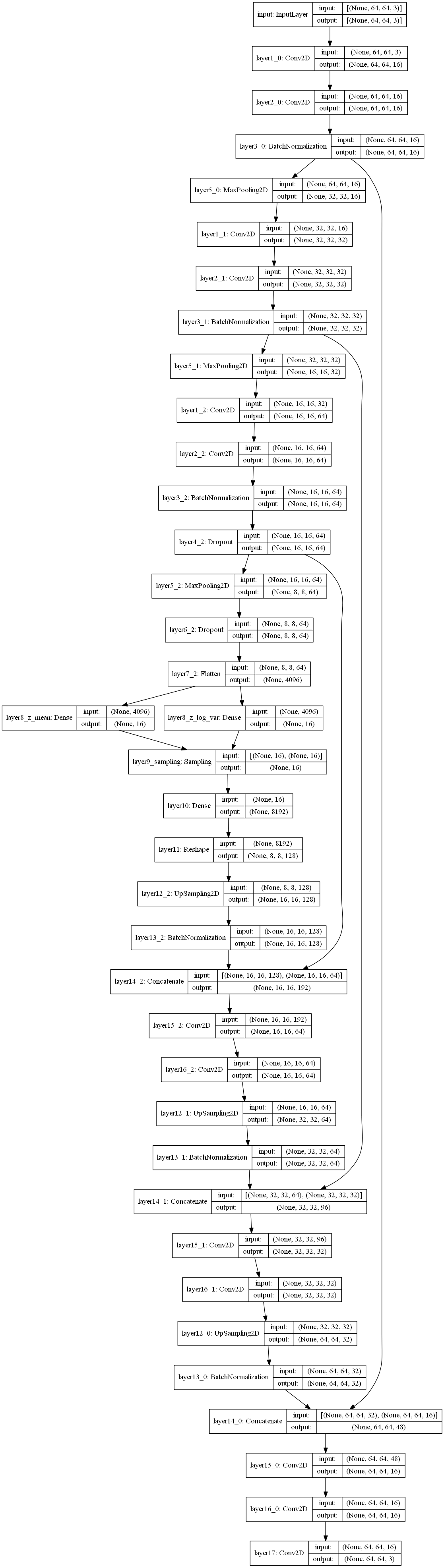}
\caption{RQUnet-VAE Encoder (left) and Decoder (right)}
\label{fig:RQUnetVAEencoder}
\end{center}
\end{figure}

%

%
\section{Additional Background and definitions}
%

%
\subsection{Hankel matrices and convolutional operations}
%
The following list provides definition of Hankel matrix and its inverse and a set of convolution operations used in this work. 
%
\begin{itemize}
    
\item Extended Hankel matrix of an image $\underline{f} \in \mathbb R^{\abs{\Omega}}$:
\begin{align*}
 \mathscr{H}_{d_1 \mid n_2}(\underline{f}) &= \begin{pmatrix} \mathscr{H}_{d_1}(f_1) & \hdots & \mathscr{H}_{d_1}(f_{n_2}) \end{pmatrix}
 \in \mathbb R^{n_1 \times n_2 d_1} \,,
\end{align*}
%
with a Hankel matrix:
\begin{align*}
 \mathscr{H}_{d_1}(f_i) &= \begin{pmatrix} f_i[1] & \ldots & f_i[d_1] 
  \\ \vdots & \ddots & \vdots \\
  f_i[n_1] & \ldots & f_i[n_1+d_1-1]
 \end{pmatrix} \in \mathbb R^{n_1 \times d_1} \,.
\end{align*}
%
\item The extended Hankel matrix of a multi-channel image $\underline{\underline{f}} \in \mathbb R^{\abs{\Omega} \times P}$: 
\begin{align*}
 \mathscr{H}_{n_1 \mid d_2 \mid P} \left( \underline{\underline{f}} \right) = \begin{pmatrix} \mathscr{H}_{n_1 \mid d_2}(\underline{f_1}) & \hdots & \mathscr{H}_{n_1 \mid d_2}(\underline{f_{P}}) \end{pmatrix} \,.
\end{align*}
%
Given $\{ e_k \}_{k=1}^{n_1}$ as an orthonormal basis of $\mathbb R^{n_1}$,
$\left\{ \underline{\tilde{e}_k} = \frac{1}{\sqrt{d_1}} \mathscr{H}_{d_1}(e_k) \right\}_{k=1}^{n_1} \subset \mathbb R^{n_1 \times d_1}$ is the orthonormal basis of $\mathcal{H}(n_1 \,, d_1) = \left\{ h \,:\, h \in \mathbb R^{n_1 \times d_1} \right\}$, i.e.
$\langle \underline{\tilde{e}_k} \,, \underline{\tilde{e}_l} \rangle_\text{F} = \delta_{kl}$ and an orthonormal expansion:
\begin{align*}
 \mathscr{H}_{d_1}(f_i) = \sum_{k=1}^{n_1} \underbrace{ \left\langle \underline{\tilde{e}_k} \,, \mathscr{H}_{d_1}(f_i) \right\rangle_\text{F} }_{ = \sqrt{d_1} f_i[k] } \underline{\tilde{e}_k} \,. 
\end{align*}
%
\item The inverse of an extended Hankel matrices for a matrix $\underline{g} = \begin{pmatrix} \underline{g_1} & \hdots & \underline{g_{n_2}} \end{pmatrix} \in \mathbb R^{n_1 \times d_1 n_2} \,, \underline{g_i} \in \mathbb R^{n_1 \times d_1}$ as:
\begin{align}
 \label{eq:InverseExtendedHankelMatrix}
 \mathscr{H}^\dagger_{d_1 \mid n_2}\left(\underline{g}\right) &= \begin{pmatrix} \mathscr{H}^\dagger_{d_1}\left(\underline{g_1}\right) & \ldots & \mathscr{H}^\dagger_{d_1}\left(\underline{g_{n_2}}\right) \end{pmatrix} \in \mathbb R^{\abs{\Omega}} \,,
\end{align}
%
having an inverse Hankel matrix:
\begin{align}  \label{eq:InverseHankelMatrix}
 \mathscr{H}^\dagger_{d_1}\left(\underline{g_i}\right) &= \frac{1}{\sqrt{d_1}} \begin{pmatrix} \left\langle \underline{\tilde{e}_1} \,, \underline{g_i} \right\rangle_\text{F} \\ \vdots \\ \left\langle \underline{\tilde{e}_{n_1}} \,, \underline{g_i} \right\rangle_\text{F} \end{pmatrix} 
 \in \mathbb R^{n_1} \,.
\end{align}

 %
 \item Filter banks and their family of matrices for the Riesz-Quincunx wavelet: 
 \begin{itemize}
     \item scaling filter bank 
  $\underline{\phi} \in \mathbb R^{d_1 \times d_2}$,
  \item primal wavelet filter bank 
  $\underline{\underline{\psi}} = \left\{ \underline{\psi_1} \,, \hdots \,, \underline{\psi_{P}} \right\} \in \mathbb R^{d_1 \times d_2 \times P} \,, \underline{\psi_i} \in \mathbb R^{d_1 \times d_2}$,
  \item dual wavelet filter bank  $\underline{\underline{\tilde\psi}} = \left\{ \underline{\tilde\psi_1} \,, \hdots \,, \underline{\tilde\psi_{P}} \right\} \in \mathbb R^{d_1 \times d_2 \times P} \,, \underline{\tilde\psi_i} \in \mathbb R^{d_1 \times d_2}$,
  \item having matrix forms

  %
 
  %
  $\underline{\underline{\Psi}}
  = \left\{ \underline{\Psi_1} \,, \hdots \,, \underline{\Psi_{P}} \right\} \,,~ 
  %
  \underline{\underline{\tilde \Psi}}
  = \left\{ \underline{\tilde\Psi_1} \,, \hdots \,, \underline{\tilde\Psi_{P}} \right\}$ such that for $p\in 1,...,P$:
  %
  \begin{align}  \label{eq:RieszQuincunxFilter}
   \underline{\Psi_p} &= \begin{pmatrix} \underline{\Psi^1_p} \\ \vdots \\ \underline{\Psi^{d_2}_p} \end{pmatrix}
   %
   = \begin{pmatrix} \psi_{p,1}^{1} & \hdots &  \psi_{p,d_2}^{1}
    \\ \vdots & \ddots & \vdots \\
    \psi_{p,1}^{d_2} & \hdots & \psi_{p,d_2}^{d_2}
   \end{pmatrix} \,,
   % ---
   \underline{\tilde\Psi_p} &= \begin{pmatrix} \underline{\tilde\Psi^1_p} \\ \vdots \\ \underline{\tilde\Psi^{d_2}_p} \end{pmatrix}
   %
   = \begin{pmatrix} \tilde{\psi}_{p,1}^{1} & \hdots & \tilde{\psi}_{p,d_2}^{1}
    \\ \vdots & \ddots & \vdots \\
    \tilde{\psi}_{p,1}^{d_2} & \hdots & \tilde{\psi}_{p,d_2}^{d_2}
   \end{pmatrix}
   %
   \in \mathbb R^{d_2 d_1 \times d_2}% \,,~
  % p = 1 \,, \ldots \,, P \,,
  \end{align}
  %
  and 
  \begin{align}  \label{eq:ScalingFilterBanks}
   &\underline{\Phi} = \begin{pmatrix}  \underline{\Phi^1} \\ \vdots \\ \underline{\Phi^{d_2}} \end{pmatrix}
   %
   = \begin{pmatrix} \phi_{1}^{1} & \hdots & \phi_{d_2}^{1}
    \\ \vdots & \ddots & \vdots \\
    \phi_{1}^{d_2} & \hdots & \phi_{d_2}^{d_2}
   \end{pmatrix} \,,
   % ----  
   \underline{\tilde\Phi} = \begin{pmatrix}  \underline{\tilde\Phi^1} \\ \vdots \\ \underline{\tilde\Phi^{d_2}} \end{pmatrix}
   %
   = \begin{pmatrix} \tilde\phi_{1}^{1} & \hdots & \tilde\phi_{d_2}^{1}
    \\ \vdots & \ddots & \vdots \\
    \tilde\phi_{1}^{d_2} & \hdots & \tilde\phi_{d_2}^{d_2}
   \end{pmatrix}
   \in \mathbb R^{d_2 d_1 \times d_2}
  \end{align}
  %
  with 
  $\phi_{i_2}^{i_1}[i_3] := \phi[i_3, i_2-i_1] \,, %   
  \psi_{p,i_2}^{i_1}[l] := \psi_p[i_3, i_2-i_1] \,,
  \tilde{\psi}_{p,i_2}^{i_1}[i_3] := \tilde{\psi}_p[i_3, i_2-i_1]$ and
  $\phi_{i_2}^{i_1} \,, \psi_{p,i_2}^{i_1} \,, \tilde{\psi}_{p,i_2}^{i_1} \in \mathbb R^{d_1} 
   \,, i_1 = 1 \,, \ldots \,, d_2$.

  % 
  \item Local bases are defined as: $\xi_i \,, \tilde{\xi}_i \in \mathbb R^{n_1}$,
  \begin{align}  \label{eq:LocalBasis}
   \underline{\Xi} &= \begin{pmatrix} \xi_1 & \hdots & \xi_d \end{pmatrix} \,,~
   %
   \underline{\tilde{\Xi}} = \begin{pmatrix} \tilde{\xi}_1 & \hdots & \tilde{\xi}_d \end{pmatrix}
   \in \mathbb R^{n_1 \times d} \,,
  \end{align}
  
   \end{itemize}
 \item Filter banks and their family of matrices for Unet-VAE: 
  \begin{align} % \label{eq:FilterBanks:UnetVAE}
   \underline{\underline{\underline{\theta}}} &= \left\{
   \underline{\underline{\theta_1}} \,, \hdots \,, \underline{\underline{\theta_{Q}}} \right\} 
   \in \mathbb R^{d_1 \times d_2 \times P \times Q} \,,
   \notag
   % -----
   \underline{\underline{\theta_q}} = \left\{ \underline{\theta_{q,1}} \,, \hdots \,, \underline{\theta_{q,P}} \right\} \in \mathbb R^{d_1 \times d_2 \times P} \,,~
   %
   \underline{\theta_{q,p}} \in \mathbb R^{d_1 \times d_2} \,,  
  \end{align}
  %
  whose matrix form is: 
  \begin{align}
   \label{eq:FilterBanks:UnetVAE:matrixform}
   \underline{\underline{\Theta}} &= \left\{ \underline{\Theta_{1}}
   \,, \hdots \,, 
   \underline{\Theta_{Q}}\right\}
   \in \mathbb R^{d_1 d_2 P \times d_2 \times Q} \,,
   % ----
   \underline{\Theta_{q}} = \begin{pmatrix} \underline{\Theta_{q,1}} \\ \vdots \\ \underline{\Theta_{q,P}} \end{pmatrix} 
   \in \mathbb R^{d_1 d_2 P \times d_2} \,, 
   \\  % ----
   \underline{\Theta_{q,p}} &= \begin{pmatrix} \underline{\Theta_{q,p}^{1}} \\ \vdots \\ \underline{\Theta_{q,p}^{d_2}} \end{pmatrix}
   %
   = \begin{pmatrix} \theta_{q,p,1}^{1} & \hdots & \theta_{q,p,d_2}^{1}
    \\ \vdots & \ddots & \vdots \\
    \theta^{n_2}_{q,p,1} & \hdots & \theta_{q,p,d_2}^{d_2}
   \end{pmatrix}
   %
   = \begin{pmatrix} \theta_{q,p,1} & \hdots & \theta_{q,p,d_2} \end{pmatrix}
   % 
   \in \mathbb R^{d_1 d_2 \times d_2} \,,
   \notag
   \\ % ---
   \theta_{q,p,i_2}^{i_1}[i_3] &:= \theta_{q,p}[i_3, i_2-i_1] \,,~ i_1 = 1 \,, \ldots \,, d_2 \,,~  \theta_{q,p,i_2}^{i_1} \in \mathbb R^{d_1} \,.
   \notag
  \end{align}

\item Convolution Operations:
Given filter bank matrices $\left( \underline{\phi} \,, \underline{\underline{\psi}} \,, \underline{\underline{\underline{\theta}}} \right)$
%
with their matrix forms $\left( \underline{\Phi} \,, \underline{\underline{\Psi}} \,, 
\underline{\underline{\Theta}} \right)$
%
in (Equations~\ref{eq:RieszQuincunxFilter}-\ref{eq:FilterBanks:UnetVAE:matrixform}) respectively, we have convolution operations for a matrix $\underline{f} \in \mathbb R^{n_1 \times n_2}$ and a multi-band image $\underline{\underline{f}} = \{ \underline{f_1} \,, \ldots \,, \underline{f_{P}} \} \in \mathbb R^{\abs{\Omega} \times P}$:
\begin{itemize}
 %
 \item {\bfseries 1D convolution:}
  \begin{align}  \label{eq:1DConvolution}
   \mathfrak{C}_{\phi_{i_1}^{i_2}} &\,:\, \mathbb R^{n_1} \rightarrow \mathbb R^{n_1} \,;~
   %
   \mathfrak{C}_{\phi_{i_1}^{i_2}}\left(f_{i_2}\right) = \left( \sum_{k_1=1}^{d_1} f_{i_2}[k_1] \check{\phi}_{i_1}^{i_2}[k_2-k_1] \right)_{k_2=1}^{n_1}   
   % 
   = \mathscr{H}_{d_1} \left( f_{i_2} \right) \phi_{i_1}^{i_2} \,,
   \end{align}
   \item{\bfseries 2D convolution:}
   \begin{align}
   \label{eq:2DConvolution}
   \underline{\mathfrak{C}}_{\underline{\phi}} &\,:\, \mathbb R^{n_1 \times n_2} \rightarrow \mathbb R^{n_1 \times d_2} \,;~
   % 
   \underline{\mathfrak{C}}_{\underline{\phi}}(\underline{f}) 
   = \left( \sum_{k_1=1}^{d_1} \sum_{k_2=1}^{d_2} f[k_1,k_2] \check{\phi}[r_1-k_1, r_2-k_2] \right)_{r_1=1, \ldots, n_1}^{r_2=1, \ldots, d_2} \,,
  \end{align} 

 % ---
 \item {\bfseries Matrix-family convolution:}
  \begin{align} \label{eq:MIMOConvolution}
   \underline{\underline{\mathfrak{C}}}_{\underline{\underline{\psi}}} &\,:\, \mathbb R^{n_1 \times n_2} \rightarrow \mathbb R^{n_1 \times d_2 \times P} \,;~
   % ---
   \underline{\underline{\mathfrak{C}}}_{\underline{\underline{\psi}}} ( \underline{f} )
   = \left\{ \underline{\mathfrak{C}}_{\underline{\psi_1}}(\underline{f}) \,, \hdots \,,  \underline{\mathfrak{C}}_{\underline{\psi_{P}}}(\underline{f}) \right\} 
   %
   = \mathscr{H}_{n_1 \mid d_2}(\underline{f}) \underline{\underline{\Psi}} \,,
  \end{align}

  % ----
  \item {\bfseries Anisotropic matrix-family convolution:}
   \begin{align*} 
    \underline{\underline{\underline{\mathfrak{C}}}}^\text{ani}_{\underline{\underline{\psi}}} &\,:\, \mathbb R^{n_1 \times n_2 \times P} \rightarrow \mathbb R^{n_1 \times d_2 \times P} \,;~
    % ---
    \end{align*}
    
    \begin{align*}
    \underline{\underline{\underline{\mathfrak{C}}}}^\text{ani}_{\underline{\underline{\psi}}} \left( \underline{\underline{f}} \right)
    %
    = \left\{ \underline{\mathfrak{C}}_{\underline{\psi_1}}(\underline{f_1}) \,, \hdots \,,  \underline{\mathfrak{C}}_{\underline{\psi_{P}}}(\underline{f_{P}}) \right\} 
    %
    = \left\{ \mathscr{H}_{n_1 \mid d_2}(\underline{f_1}) \underline{\Psi_1} \,, \hdots \,,  \mathscr{H}_{n_1 \mid d_2}(\underline{f_{P}}) \underline{\Psi_{P}} \right\} \,.
   \end{align*}

  % ----
  \item {\bfseries Isotropic matrix-family convolution:}
   \begin{align*}
    \underline{\underline{\underline{\mathfrak{C}^\text{iso}}}}_{\underline{\underline{\underline{\theta}}}} &\,:\, \mathbb R^{n_1 \times n_2 \times P} \rightarrow \mathbb R^{n_1 \times d_2 \times Q} \,, 
    % ----
    \underline{\underline{\tilde f}} = \left\{ \underline{\tilde f_1} \,, \ldots \,, \underline{\tilde f_{Q}} \right\} 
    % ---
    = \underline{\underline{\underline{\mathfrak{C}^\text{iso}}}}_{\underline{\underline{\underline{\theta}}}} \left( \underline{\underline{f}} \right)
    %
    = \mathscr{H}_{n_1 \mid d_2 \mid P} \left( \underline{\underline{f}} \right)
    %
    \underline{\underline{\Theta}}
   \end{align*}
   %
   with:
   \begin{align*}
    \underline{\tilde f_q} &= \sum_{p=1}^{P} \underline{\mathfrak{C}}_{\underline{\theta_{q,p}}}(\underline{f_p}) 
    %
    = \sum_{p=1}^{P} \mathscr{H}_{n_1 \mid d_2}(\underline{f_p}) \underline{\Theta_{q,p}}
    = \mathscr{H}_{n_1 \mid d_2 \mid P} \left( \underline{\underline{f}} \right) \underline{\Theta_{q}} \,. 
   \end{align*}
 
\end{itemize}
  
\end{itemize}

%
Then, we have the following relation for 2D convolution operation:
\begin{prop}  \label{prop:2DconvolutionHankelmatrix}
 A 2D convolution is defined by Hankel matrix by describing it via a 1D convolution:
 \begin{align*} 
  \underline{\mathfrak{C}}_{\underline{\phi}}(\underline{f})[k_1,k_2] 
  % 
  &= \sum_{i=1}^{d_2} \mathfrak{C}_{\phi_{k_2}^{i}}(f_{i})[k_1] 
  %
  = \sum_{i=1}^{d_2} \left( \mathscr{H}_{d_1}(f_i) \phi_{k_2}^{i} \right)[k_1] \,,
  %
  k_1=1, \ldots, n_1 \,, k_2=1, \ldots, d_2 
 \end{align*} 
 %
 which is written in a matrix form as:
 \begin{align}
  \label{eq:2DSISOConvolution}
  \underline{\mathfrak{C}}_{\underline{\phi}} \left( \underline{f} \right) 
  %
  &= \mathscr{H}_{n_1 \mid d_2} \left( \underline{f} \right) \underline{\Phi} \,.
 \end{align}
 
\end{prop}
%
\begin{proof}
We provide a proof of Proposition~\ref{prop:2DconvolutionHankelmatrix} in Section \ref{proof:prop2DconvolutionHankel}.
\end{proof}
%
Moreover, note that we have an adjoint operator 
$\underline{\underline{\mathfrak{C}}}^*_{\underline{\underline{\psi}}} = \sum_{p=1}^P \mathfrak{C}^*_{\psi_p} \,:\, \mathbb R^{n_1 \times d_2 \times P} \rightarrow \mathbb R^{n_1 \times n_2}$.
%
And, for $d_1 = n_1 \,, d_2 = n_2$, we have
$\underline{\mathfrak{C}}_{\underline{\phi}}(\underline{f}) = \mathcal{F}^{-1} \left( \underline{\widehat{P}} \odot \underline{\widehat{F}} \right)$, $\underline{\widehat{P}} = \left[ \widehat{P} \left( e^{j \omega} \right) \right]_{\omega \in [-\pi, \pi]^d}$ 
and 
$\underline{\widehat{F}} = \left[ \widehat{F} \left( e^{j \omega} \right) \right]_{\omega \in [-\pi, \pi]^d}$.
Its adjoint operator 
$\underline{\mathfrak{C}}^*_{\underline{\phi}} = \underline{\mathfrak{C}}_{\underline{\check{\phi}}} \,:\, \mathbb R^{n \times m} \rightarrow \mathbb R^{n \times m}$
is defined with a discrete time-reversed kernel 
\begin{align*}
 \check{\phi}(x_1,x_2) = \phi(-x_1,-x_2) \stackrel{\cF}{\longleftrightarrow} \widehat{\check{\phi}}(\omega_1, \omega_2) = \widehat{\phi}^*(\omega_1, \omega_2)
\end{align*}
%
whose discretized version is:
\begin{align*}
 \check{\phi}[k_1,k_2] = \phi[-k_1,-k_2] \stackrel{\cF}{\longleftrightarrow} \widehat{P}^*\left(e^{j \omega_1}, e^{j \omega_2}\right) = \widehat{P}\left(e^{-j \omega_1}, e^{-j \omega_2} \right) \,.
\end{align*}

% ------------------

%
\subsection{Proximal Operators and Moreau-Yosida envelope}
%
\begin{def}
\label{def:MoreauYosidaEnvelope}
 Given a non-smooth function
$\mathscr{P}(\cdot)$ which can be convex or non-convex,
 its Moreau-Yosida envelope ($\frac{1}{\mu} > 0$-Lipschitz differentiable) is 
 \begin{align}  \label{eq:MoreauYosidaEnvelope}
  \mathscr{P}_\mu(\cdot) := \inf_{\underline{u} \in \mathbb R^{n_1 \times n_2}} \left\{ \mathscr{P} (\underline{u}) + \frac{1}{2\mu} \norm{ \underline{u} - \cdot }^2_\text{F} \right\} \,,
 \end{align}
 %
 with $\lim_{\mu \rightarrow 0} \mathscr{P}_\mu(\theta) = \mathscr{P}(\theta)$ and $\norm{\cdot}_\text{F}$ is Frobenius norm.
 Its gradient 
 \begin{align}  \label{eq:gradientMYenvelope}
  \nabla \mathscr{P}_{\mu}(\cdot) = \frac{1}{\mu} \left( \cdot - \text{prox}_{\mu \mathscr{P}} (\cdot) \right) 
  %
  \,:\, \mathbb R^{n_1 \times n_2} \rightarrow \mathbb R^{n_1 \times n_2}
 \end{align}
 %
 is defined by a proximity operator 
\begin{align}  \label{eq:GradientMoreauEnvelope}
 \text{prox}_{\mu \mathscr{P}} (\cdot) &:= \argmin_{\underline{u} \in \mathbb R^{n_1 \times n_2}} \left\{
 \mathscr{P} (\underline{u}) + \frac{1}{2\mu} \norm{ \underline{u} - \cdot }^2_\text{F} \right\} \,.
\end{align}
\end{def}
%
%For more proximal operators and Moreau-Yosida envelope, we refer the readers to \cite{PolsonScottWillard2015, Beck2017}. 
%
We also note that some activation functions in neural network can be well written as a proximal operator associated with its abstract function, e.g. Rectified Linear Unit (ReLU) activation function:
\begin{align*}
 \text{prox}_\text{ReLU} \left( \underline{f} \right) &= \argmin_{\underline{u} \in \mathbb R^{n_1 \times n_2}} \left\{
 \mathscr{P}_\text{ReLU} (\underline{u}) + \frac{1}{2} \norm{ \underline{u} - \underline{f} }^2_{\text{F}} \right\}
 \\
 &= \left[ \max \left( 0 \,, f_{i_1,i_2} \right) \right]_{i_1,i_2=0}^{n_1-1,n_2-1} \,,~ 
 %
 \underline{f} = \left[ f_{i_1,i_2} \right]_{i_1,i_2=0}^{n_1-1,n_2-1} \,,
\end{align*}
%
where $\mathscr{P}_\text{ReLU} (\cdot)$ is a non-defined function.

% 
\subsection{Kullback Leibler divergence:}
%
Given distributions $\mathtt{F}(\text{d}z) = \mathtt{f}(z)\text{d}z$ and $\mathtt{G}(\text{d}z) = \mathtt{g}(z)\text{d}z$ on a domain $\mathbb R^d$, their KL-divergence is:
\begin{align}  \label{eq:KLdivergence}
 \text{KL}\left( \mathtt{F} \mid \mid \mathtt{G} \right) &= \mathbb E_{Z \sim \mathtt{F}} \left[ \log \frac{\mathtt{f}(Z)}{\mathtt{g}(Z)} \right]
 = \int_{\mathbb R^d} \mathtt{f}(z) \log \frac{\mathtt{f}(z)}{\mathtt{g}(z)} \text{d}z
 \notag
 \\ 
 &\geq 0 \,. 
\end{align}

\clearpage
%
\section{Proofs}\label{sec:proofs}
%

\subsection{Proof of Proposition II.1}
\label{proof:FrameletDecomposition}

Due to the unity condition (Equation 1) and 
$\mathscr{H}^\dagger_{d_1 \mid n_2} \circ \mathscr{H}_{d_1 \mid n_2} = \text{Id}$, convolutional framelet decomposition is:
\begin{align*}
 \underline{f} &= \mathscr{H}^\dagger_{d_1 \mid n_2} \bigg( \underline{\tilde{\Xi}} 
 \underline{\Xi}^\text{T} \mathscr{H}_{d_1 \mid n_2}\left( \underline{f} \right) \underline{\Phi}
 \underline{\tilde{\Phi}}^\text{T} \bigg)
\end{align*}
%
where: 
\begin{align*}
 \underline{c_f} &= \underline{\Xi}^\text{T} \mathscr{H}_{d_1 \mid n_2} \left( \underline{f} \right) \underline{\Phi} 
 %
 = \begin{pmatrix} \xi_1^\text{T} \\ \vdots \\ \xi_d^\text{T} \end{pmatrix}
 \mathscr{H}_{d_1 \mid n_2}\left( \underline{f} \right) 
 \begin{pmatrix} \phi_1 & \hdots & \phi_{d_2} \end{pmatrix}
 %
 = \begin{pmatrix} \xi_1^\text{T} \mathscr{H}_{d_1 \mid n_2}\left( \underline{f} \right) \phi_1 & \hdots & \xi_1^\text{T} \mathscr{H}_{d_1 \mid n_2} \left( \underline{f} \right) \phi_{d_2}
  \\ \vdots & \ddots & \vdots \\ 
  \xi_d^\text{T} \mathscr{H}_{d_1 \mid n_2}\left( \underline{f} \right) \phi_1 & \hdots & \xi_d^\text{T} \mathscr{H}_{d_1 \mid n_2}\left( \underline{f} \right) \phi_{d_2} 
 \end{pmatrix}
 \\ % ---
 &= \begin{pmatrix} c_{f,1} & \hdots & c_{f,d_2} \end{pmatrix}
 = \left[ c_{f}^{l,s} \right]_{l = 1, \ldots, d}^{s = 1, \ldots, d_2}
\end{align*}
%
where: 
\begin{align*}
 c_f^{l,s} &= \xi_l^\text{T} \mathscr{H}_{d_1 \mid n_2}\left( \underline{f} \right) \phi_s
 = \xi_l^\text{T} 
 \begin{pmatrix} \mathscr{H}_{d_1}\left(f_1\right) & \hdots & \mathscr{H}_{d_1}(f_{n_2}) \end{pmatrix} 
 %
 \begin{pmatrix} \phi_s^1 \\ \vdots \\ \phi_s^{n_2} \end{pmatrix}
 %
 = \sum_{i=1}^{n_2} \xi_l^\text{T} \mathscr{H}_{d_1}(f_i) \phi_s^i
 \\ 
 &= \sum_{i=1}^{n_2} \left\langle f_i \,, \mathfrak{C}_{\phi_s^i} \left( \xi_l \right) \right\rangle_{\ell_2} \,.
\end{align*}
%
The last equality is due to
$u^\text{T} \mathscr{H}_{d_1}(a) v = u^\text{T} \mathfrak{C}_v(a) = \langle a \,, \mathfrak{C}_v(u) \rangle_{\ell_2}$ for $u \,, v \,, a \in \mathbb R^d$.

%
Now, we expand $\underline{f}$:
\begin{align*}
 \underline{f} &= \mathscr{H}^\dagger_{d_1 \mid n_2} \left( \underline{\tilde{\Xi}} \underline{c_f} \underline{\tilde{\Phi}^\text{T}} \right)
 %
 = \begin{pmatrix} \mathscr{H}^\dagger_{d_1} \left( \underline{\tilde{\Xi}} \underline{c_f}  \underline{\tilde{\Phi}^{1,\text{T}}} \right)
 & \hdots & 
 \mathscr{H}^\dagger_{d_1} \left( \underline{\tilde{\Xi}} \underline{c_f} \underline{\tilde{\Phi}^{n_2,\text{T}}} \right) \end{pmatrix}
 \\  % ----
 &= \begin{pmatrix} \mathscr{H}^\dagger_{d_1} \left( \sum_{s=1}^{d_2} \sum_{l=1}^d c_{f}^{l,s} \tilde{\xi}_l \tilde{\phi}^{1,\text{T}}_{s} \right)
 %
 & \hdots & 
 %
 \mathscr{H}^\dagger_{d_1} \left( \sum_{s=1}^{d_2} \sum_{l=1}^d c_{f}^{l,s} \tilde{\xi}_l \tilde{\phi}^{n_2,\text{T}}_{s} \right) \end{pmatrix}
 \\  % ----
 &= \begin{pmatrix} \frac{1}{d_1} \sum_{s=1}^{d_2} \sum_{l=1}^d c^{l,s}_f 
 \mathfrak{C}_{\tilde{\phi}^{1}_s} \left( \tilde{\xi}_l \right)
 %
 & \hdots & 
 %
 \frac{1}{d_1} \sum_{s=1}^{d_2} \sum_{l=1}^d c^{l,s}_f
 \mathfrak{C}_{\tilde{\phi}^{n_2}_s} \left( \tilde{\xi}_l \right) \end{pmatrix}
 \\  % ----
 &= \frac{1}{d_1} \sum_{s=1}^{d_2} \begin{pmatrix}  
 \mathfrak{C}_{\tilde{\phi}^{1}_s} \left( \underline{\tilde{\Xi}} c_{f,s} \right)
 %
 & \hdots & 
 %
 \mathfrak{C}_{\tilde{\phi}^{n_2}_s} \left( \underline{\tilde{\Xi}} c_{f,s} \right) \end{pmatrix} \,.
\end{align*}
%
This concludes the proof. Note that the 3rd equality is due to:
\begin{align*}
 \underline{\tilde{\Xi}} \underline{c_f} \underline{\tilde{\Phi}^{i,\text{T}}} 
 %
 &= \begin{pmatrix} \tilde{\xi}_1 & \hdots & \tilde{\xi}_m \end{pmatrix}
 %
 \begin{pmatrix} c_{f}^{1,1} & \hdots & c_{f}^{1,d_2} 
  \\ \vdots & \ddots & \vdots \\
  c_{f}^{d,1} & \hdots & c_{f}^{d,d_2}
 \end{pmatrix}
 %
 \begin{pmatrix} \tilde{\phi}^{i}_{1,\text{T}} \\ \vdots \\ \tilde{\phi}^{i,\text{T}}_{d_2}
 \end{pmatrix}
 % ----
 = \sum_{s=1}^{d_2} \sum_{l=1}^d \tilde{\xi}_l c_{f}^{l,s} \tilde{\phi}^{i,\text{T}}_{s} \,;
\end{align*}
%
and, the 4th equality is due to:
\begin{align*}
 \mathscr{H}^\dagger_{d_1} \left( \sum_{s=1}^{d_2} \sum_{l=1}^d c_f^{l,s} \tilde{\xi}_l \tilde{\phi}^{n_2,\text{T}}_s \right) 
 %
 &= \frac{1}{\sqrt{d_1}} \sum_{s=1}^{d_2} \sum_{l=1}^d c_f^{l,s}
 %
 \begin{pmatrix}
  \left\langle \frac{1}{\sqrt{d_1}} \mathscr{H}_{d_1}\left( \underline{e_1} \right) \,, \tilde{\xi}_l \tilde{\phi}^{n_2,\text{T}}_s \right\rangle_\text{F}
  \\ \vdots \\
  \left\langle \frac{1}{\sqrt{d_1}} \mathscr{H}_{d_1} \left( \underline{e_{n_1}} \right) \,, \tilde{\xi}_l \tilde{\phi}^{n_2,\text{T}}_s \right\rangle_\text{F}
 \end{pmatrix}
 \\  % ----
 &= \frac{1}{d_1} \sum_{s=1}^{d_2} \sum_{l=1}^d c^{l,s}_f
 %
 \begin{pmatrix}
  \left\langle \underline{e_1} \,, \mathfrak{C}_{\tilde{\phi}^{n_2}_s}\left(\tilde{\xi}_l\right) \right\rangle_\text{F}
  \\ \vdots \\
  \left\langle \underline{e_{n_1}} \,, \mathfrak{C}_{\tilde{\phi}^{n_2}_s}
  \left( \tilde{\xi}_l \right) \right\rangle_\text{F}
 \end{pmatrix}
 % ----
 = \frac{1}{d_1} \sum_{s=1}^{d_2} \sum_{l=1}^d c^{l,s}_f 
 \mathfrak{C}_{\tilde{\phi}^{n_2}_s} \left( \tilde{\xi}_l \right) \,.
\end{align*}

%
\subsection{Proof of Proposition II.2}
%
\label{proof:UnityCondition}
%
Given a wavelet expansion acting on a discrete function $f \in \ell_2 \left( \mathbb Z^2 \right)$:
\begin{align*} 
 f[k] &= \sum_{m \in \mathbb Z^2}
 \left\langle f \,, \tilde \varphi_I (\cdot - m) \right\rangle_{\ell_2} \varphi_I (k - m)
 %
 + \sum_{i=0}^I \sum_{l=0}^L \sum_{m \in \mathbb Z^2}
 \left\langle f \,, \tilde \psi_{il} (\cdot - m) \right\rangle_{\ell_2} \psi_{il} (k - m) \,,
\end{align*}
%
we compute its scaling and wavelet coefficients in the Fourier domain:
\begin{align}  \label{eq:ScalingCoefs}
 c_I[m] &= \left\langle f \,, \tilde{\varphi}_I \left( \cdot - m \right) \right\rangle_{\ell_2}
 \stackrel{\cF}{\longleftrightarrow} 
 \widehat{C}_I \left( e^{j \omega} \right) = \sum_{m \in \mathbb Z^2} c_I[m] e^{-j \langle m \,, \omega \rangle_{\ell_2} }
 = \sum_{k \in \mathbb Z^2} f[k] \sum_{m \in \mathbb Z^2} 
 \check{\tilde{\varphi}}_I \left(m - k \right) e^{-j \langle m \,, \omega \rangle_{\ell_2} } \,.
\end{align}
%
To compute a Fourier transform of $\check{\tilde{\varphi}}_I \left(m - k \right)$,
we find a Fourier transform of its continuous version for $x \in \mathbb R^d$:
\begin{align*}
 &\check{\tilde{\varphi}}_I \left(x - k \right) 
 ~\stackrel{\cF}{\longleftrightarrow}~
 e^{ - j \langle k \,, \omega \rangle_{\ell_2} }
 \int_{\mathbb R^2} \check{\tilde{\varphi}}_I (x) e^{ - j \langle x \,, \omega \rangle_{\ell_2} } 
 \text{d} x
 %
 = \widehat{\tilde{\varphi}}^*_I (\omega)
 e^{ - j \langle k \,, \omega \rangle_{\ell_2} } \,.
\end{align*}
%
By Poisson summation formulae, we have the following identity:
\begin{align*}
 \sum_{m \in \mathbb Z^d} 
 \check{\tilde{\varphi}}_I \left(m - k \right) e^{-j \langle m \,, \omega \rangle_{\ell_2} }
 %
 = \sum_{m \in \mathbb Z^d} \widehat{\tilde{\varphi}}^*_I \left( 2\pi m + \omega \right)
 e^{ - j \langle k \,, 2\pi m + \omega \rangle_{\ell_2} }
\end{align*}
%
Note $e^{j 2\pi m} = 1 \,, \forall m \in \mathbb Z^d$;
then, scaling coefficient 
$c_I[m] \stackrel{\cF}{\longleftrightarrow} \widehat{C}_I \left( e^{j \omega} \right)$ in (\ref{eq:ScalingCoefs}) is
\begin{align}  \label{eq:ScalingCoefs:Fourier}
 \widehat{C}_I \left( e^{j \omega} \right) &= \sum_{k \in \mathbb Z^2} f[k] \sum_{m \in \mathbb Z^2} \widehat{\tilde{\varphi}}^*_I \left( 2\pi m + \omega \right)
 e^{ - j \langle k \,, 2\pi m + \omega \rangle_{\ell_2} }
 %
 = \sum_{m \in \mathbb Z^2} \widehat{\tilde{\varphi}}^*_I \left( 2\pi m + \omega \right)
 %
 \widehat{F} \left( e^{j (2\pi m + \omega) } \right)
 \notag
 \\  % ----
 &= \widehat{F}\left( e^{j \omega} \right) \left[ \widehat{\tilde{\varphi}}^*_I (\omega)
 + \sum_{m \in \mathbb Z^2 \backslash \{0\}} \widehat{\tilde{\varphi}}^*_I \left( 2\pi m + \omega \right)
 \right] \,,
\end{align}
%
where $f[k] \stackrel{\cF}{\longleftrightarrow} \widehat{F}\left( e^{j \omega} \right)$.
%
Similarly, wavelet coefficient $d_{il}[m] \stackrel{\cF}{\longleftrightarrow} \widehat{D}_{il}\left( e^{j \omega} \right)$ is:
\begin{align}  \label{eq:WaveletCoefs:Fourier}
 \widehat{D}_{il} \left( e^{j \omega} \right) 
 &= \widehat{F}\left( e^{j \omega} \right) \left[ \widehat{\tilde{\psi}}^*_{il} (\omega)
 + \sum_{m \in \mathbb Z^2 \backslash \{0\}} \widehat{\tilde{\psi}}^*_{il} \left( 2\pi m + \omega \right)
 \right] \,.
\end{align}
%
Now, we compute wavelet expansion in the Fourier domain as: 
\begin{align*}  
 \widehat{F}\left( e^{j \omega} \right) &= \sum_{k \in \mathbb Z^2} f[k] e^{-j \langle k \,, \omega \rangle_{\ell_2} }
 \\ % -----
 &= \sum_{m \in \mathbb Z^2} c_I[m] \sum_{k \in \mathbb Z^2} \varphi_I \left( k - m \right) e^{-j \langle k \,, \omega \rangle_{\ell_2}}
 %
 + \sum_{i=0}^I \sum_{l=0}^L \sum_{m \in \mathbb Z^2} d_{il}[m]
 \sum_{k \in \mathbb Z^2} \psi_{il} \left( k - m \right) e^{-j \langle k \,, \omega \rangle_{\ell_2}} 
 \\ % -----
 &= \sum_{k \in \mathbb Z^2} \widehat{\varphi}_I \left( 2\pi k + \omega \right)
 \sum_{m \in \mathbb Z^2} c_I[m]
 e^{ - j \langle m \,, 2\pi k + \omega \rangle_{\ell_2} }
 %
 + \sum_{i=0}^I \sum_{l=0}^L 
 \sum_{k \in \mathbb Z^2} \widehat{\psi}_{il} \left( 2\pi k + \omega \right)
 \sum_{m \in \mathbb Z^2} d_{il}[m]
 e^{ - j \langle m \,, 2\pi k + \omega \rangle_{\ell_2} }
 \\ % -----
 &= \widehat{C}_I \left( e^{j \omega} \right)
 \left( \widehat{\varphi}_I (\omega)
 + \sum_{k \in \mathbb Z^2 \backslash \{ 0 \}} \widehat{\varphi}_I \left( 2\pi k + \omega \right) \right)
 %
 + \sum_{i=0}^I \sum_{l=0}^L 
 \widehat{D}_{il}\left( e^{j \omega} \right)
 \left( 
 \widehat{\psi}_{il}(\omega)
 + \sum_{k \in \mathbb Z^2 \backslash \{ 0 \}} \widehat{\psi}_{il} \left( 2\pi k + \omega \right) \right) \,.
\end{align*}
%
The 3rd equality is from Poisson summation formulae.
%
We obtain the unity condition (Equation 6) from Equation 16 and Equation 17.
%
To compensate the error in Equation 6, we modify a primal wavelet frame at scale 0 as:
\begin{align*}  
 &1 = \widehat{\tilde{\varphi}}^*_I (\omega) \widehat{\varphi}_I (\omega)
 %
 + \sum_{i=1}^I \sum_{l=0}^L 
 \widehat{\tilde{\psi}}^*_{il} (\omega) \widehat{\psi}_{il}(\omega)
 %
 + \sum_{l=0}^L 
 \widehat{\tilde{\psi}}^*_{0l} (\omega) \widehat{\psi}_{0l}(\omega)
 %
 + \widehat{e}(\omega) 
 \\ % ----
 \Leftrightarrow~&
 \widehat{\bar \psi}_{0}(\omega) := \widehat{\psi}_{0}(\omega)
 + \frac{\widehat{e}(\omega)}{\widehat{\tilde{\psi}}^*_{0} (\omega)} \,. 
\end{align*}
%
This is due to a unity property of $L$-th Riesz transform $\sum_{l=0}^L \abs{\widehat{\mathscr{R}^l}(\omega)}^2 = 1$.
Finally, the unity condition (Equation 6) becomes:
\begin{align*}
 \widehat{\tilde{\varphi}}^*_I (\omega) \widehat{\varphi}_I (\omega)
 %
 + \widehat{\tilde{\psi}}^*_{0} (\omega) \widehat{\bar \psi}_{0}(\omega) 
 %
 + \sum_{i=1}^I \sum_{l=0}^L 
 \widehat{\tilde{\psi}}^*_{il} (\omega) \widehat{\psi}_{il}(\omega)
 = 1 \,.
\end{align*}
%
Then, multiply 2 sides with $F\left( e^{j \omega} \right)$ and take an inverse Fourier transform, we obtain a wavelet expansion:
\begin{align*}
 &\widehat{F}\left( e^{j \omega} \right) = 
 \widehat{F}\left( e^{j \omega} \right) \widehat{\tilde{\varphi}}^*_I (\omega) \widehat{\varphi}_I (\omega)
 %
 + \sum_{i=0}^I \sum_{l=0}^L 
 \widehat{F}\left( e^{j \omega} \right) \widehat{\tilde{\psi}}^*_{il} (\omega) \widehat{\psi}_{il}(\omega)
 \\ % ----
 \stackrel{\cF}{\longleftrightarrow}~&
 f[k] = \sum_{m \in \mathbb Z^2}
 \left\langle f \,, \tilde \varphi_I (\cdot - m) \right\rangle_{\ell_2} \varphi_I (k - m)
 %
 + \sum_{i=0}^I \sum_{l=0}^L \sum_{m \in \mathbb Z^2} 
 \left\langle f \,, \tilde \psi_{il} (\cdot - m) \right\rangle_{\ell_2} \psi_{il} (k - m)
\end{align*}
%
with $\widehat{\psi}_{0n}(\omega) = \widehat{\bar \psi}_{0n}(\omega)$.

% =================

%
\subsection{Proof of Proposition II.3}
%
\label{proof:RieszQuincunxWavelet:FrameletDecomposition}
%
Denote $\underline{\psi_0} := \underline{\varphi_I}$ and $\underline{\tilde\psi_0} := \underline{\tilde\varphi_I}$ whose matrix forms are $\underline{\Psi_0} = \underline{\Phi_I} \,, \underline{\tilde\Psi_0} = \underline{\tilde\Phi_I} \in \mathbb R^{n_1 n_2 \times n_2}$ as in (\ref{eq:RieszQuincunxFilter}), respectively.
%
Note that $\underline{\mathfrak{C}}_{\underline{\phi}} := \underline{\mathfrak{C}}^*_{\underline{\varphi_I}} \underline{\mathfrak{C}}_{\underline{\tilde\varphi_I}}$.
%
A convolutional form of an expansion (\ref{eq:RieszQuincunxWavelet}) is recast with Hankel matrix as:
\begin{align*} % \label{eq:RieszQuincunxWavelet}  
 \underline{f} &= \underline{\mathfrak{C}}^*_{\underline{\varphi_I}} \underline{\mathfrak{C}}_{\underline{\tilde\varphi_I}}\left(\underline{f}\right) 
 + \sum_{p=1}^P \underline{\mathfrak{C}}^*_{\underline{\psi_p}}
 \underline{\mathfrak{C}}_{\underline{\tilde\psi_p}}\left(\underline{f}\right) 
 %
 = \sum_{p=0}^P \underline{\mathfrak{C}}^*_{\underline{\psi_p}}
 \underline{\mathfrak{C}}_{\underline{\tilde\psi_p}}\left(\underline{f}\right)
 % ----
 = \sum_{p=0}^P \mathscr{H}_{n_1 \mid n_2} \left( \underline{w_{f,p}} \right) \underline{\check \Psi_p}
 \\ % ----
  %
 &= \sum_{p=0}^P \begin{pmatrix} \mathscr{H}_{n_1} \left( w_{f,p,1} \right) & \hdots & \mathscr{H}_{n_1} \left( w_{f,p,n_2} \right) \end{pmatrix} 
 %
 \begin{pmatrix} \underline{\check\Psi^1_p} \\ \vdots \\ \underline{\check\Psi^{n_2}_p} \end{pmatrix} 
 % ----
 = \sum_{p=0}^P \sum_{i=1}^{n_2} \mathscr{H}_{n_1} \left( w_{f,p,i} \right)
 \underbrace{ \begin{pmatrix} 
  \check{\psi}_{p,1}^{i} & \hdots & \check{\psi}_{p,d_2}^{i}
 \end{pmatrix} }_{ = \underline{\check\Psi^{i}_p} }
 \\  % ----
 &= n_1 \sum_{p=0}^P \frac{1}{n_1} \sum_{i=1}^{n_2} 
 \begin{pmatrix} 
  \mathfrak{C}_{\check{\psi}_{p,1}^{i}} \left( \underline{\tilde\Xi} w_{f,p,i} \right)
  & \hdots & 
  \mathfrak{C}_{\check{\psi}_{p,n_2}^{i}} \left( \underline{\tilde\Xi} w_{f,p,i} \right) 
 \end{pmatrix}
 \\
 &= n_1 \sum_{p=0}^P \mathscr{H}^\dagger_{n_1 \mid n_2} \left( \underline{w_{f,p}} \, \underline{\check{\Psi}_p^\text{T}} \right) 
 \\ % ---
 &= n_1 \mathscr{H}^\dagger_{n_1 \mid n_2} \left( \sum_{p=0}^P \underline{w_{f,p}} \, \underline{\check{\Psi}_p^\text{T}} \right)
 %
 = n_1 \mathscr{H}^\dagger_{n_1 \mid n_2} \left( \underline{w_{f}} \underline{\check{\Psi}^\text{T}} \right) \,,
\end{align*}
%
where wavelet coefficient is
\begin{align*}
 \underline{w_{f,p}} &:= \underline{\mathfrak{C}}_{\underline{\tilde\psi_p}}\left(\underline{f}\right)
 = \mathscr{H}_{n_1 \mid n_2}\left(\underline{f}\right) \underline{\tilde\Psi_p}
 \\  % ---
 &= \begin{pmatrix} w_{f,p,1} & \hdots & w_{f,p,n_2} \end{pmatrix} \in \mathbb R^{n_1 \times n_2} \,,~
 %
 p = 0 \,, \ldots \,, P 
 \\  % ---
 \Leftrightarrow~ 
 \underline{w_{f}} &:= \begin{pmatrix} \underline{w_{f,0}} & \hdots & \underline{w_{f,P}} \end{pmatrix}
 %
 = \mathscr{H}_{n_1 \mid n_2}\left(\underline{f}\right)
 \underline{\tilde{\Psi}} \,.
\end{align*}
%
Note that the last 3rd equality is due to proposition \ref{prop:FrameletDecomposition} with an identity local basis $\underline{\tilde\Xi} = \text{Id}$. 
%
The 2nd last equality is because $\mathscr{H}^\dagger_{n_1 \mid n_2}$ is a linear operator: 
Given a matrix 
$\underline{g_p} = \begin{pmatrix} \underline{g_{p,1}} & \hdots & \underline{g_{p,n_2}} \end{pmatrix}$ and constant $\{a_p\}_{p=0}^P \subset \mathbb R$, we have:
%
\begin{align*}    
 \sum_{p=0}^P a_p \mathscr{H}^\dagger_{n_1 \mid n_2} \left( \underline{g_p} \right)
 %
 = \begin{pmatrix}
  \sum_{p=0}^P a_p \mathscr{H}^\dagger_{n_1} \left( \underline{g_{p,1}} \right) %
  & \hdots &
  %
  \sum_{p=0}^P a_p \mathscr{H}^\dagger_{n_1} \left( \underline{g_{p,n_2}} \right)
 \end{pmatrix} \,.
\end{align*}
%
And, by a definition of an inverse Hankel matrix (\ref{eq:InverseHankelMatrix}), we have:
\begin{align*}
 \sum_{p=0}^P a_p \mathscr{H}^\dagger_{n_1} \left( \underline{g_{p,i}} \right)
 %
 = \frac{1}{\sqrt{n_1}} 
 \begin{pmatrix} \left\langle \underline{\tilde{e}_1} \,, \sum_{p=0}^P a_p \underline{g_{p,i}} \right\rangle_\text{F} 
 \\ \vdots \\ 
 \left\langle \underline{\tilde{e}_{n_1}} \,, \sum_{p=0}^P a_p \underline{g_{p,i}} \right\rangle_\text{F} \end{pmatrix} 
 %
 = \mathscr{H}^\dagger_{n_1} \left( \sum_{p=0}^P a_p \underline{g_{p,i}} \right) \,.
\end{align*}
%
Thus, we have a linear property of an extended Hankel matrix:
\begin{align*}    
 \sum_{p=0}^P a_p \mathscr{H}^\dagger_{n_1 \mid n_2} \left( \underline{g_p} \right)
 %
 &= \begin{pmatrix}
  \mathscr{H}^\dagger_{n_1} \left( \sum_{p=0}^P a_p \underline{g_{p,1}} \right) %
  & \hdots &
  %
  \mathscr{H}^\dagger_{n_1} \left( \sum_{p=0}^P a_p \underline{g_{p,n_2}} \right)
 \end{pmatrix} 
 \\ % ---
 &\stackrel{(\ref{eq:InverseExtendedHankelMatrix})}{=} \mathscr{H}^\dagger_{n_1 \mid n_2} \left(
  \begin{pmatrix} \sum_{p=0}^P a_p \underline{g_{p,1}} & \hdots & \sum_{p=0}^P a_p \underline{g_{p,n_2}}
 \end{pmatrix}
 \right) 
 \\ % ---
 &= \mathscr{H}^\dagger_{n_1 \mid n_2} \left( \sum_{p=0}^P a_p \underline{g_p} \right) \,.
\end{align*}
%
In the end, we have the proposed non-subsampled Riesz-Quincunx wavelet is a framelet decomposition with an identity local basis:
\begin{align*} % \label{eq:RieszQuincunxWavelet}  
 \underline{f} &= n_1 \mathscr{H}^\dagger_{n_1 \mid n_2} \left( \mathscr{H}_{n_1 \mid n_2}\left(\underline{f}\right)
 \underline{\tilde{\Psi}} \, \underline{\check{\Psi}^\text{T}} \right) \,.
\end{align*}
%
Since $\mathscr{H}^\dagger_{n_1 \mid n_2} \circ \mathscr{H}_{n_1 \mid n_2} = \text{Id}$, this implies the unity condition:
\begin{align*}
 \frac{1}{n_1} \text{Id}_{n_1 n_2 \times n_1 n_2}
 &= \underline{\tilde{\Psi}} \, \underline{\check{\Psi}^\text{T}} 
 %
 = \underline{\tilde\Phi_I} \, \underline{\check\Phi_I^\text{T}}
 %
 + \sum_{p=1}^P \underline{\tilde\Psi_p} \, \underline{\check{\Psi}_p^\text{T}} \,.
\end{align*}

% =================

%
\subsection{Proof of Proposition II.4}
%
\label{proof:SkipConnectingMapping}
%
The mapping $\mathscr{T}^{(i)}$ in  Equation 13 is defined with 
$\underline{\underline{\underline{\mathfrak{C}}}}^\text{iso}_{\underline{\underline{\underline{\theta^{1(i)}}}}} \,:\, \mathbb R^{n_1 \times n_2 \times P} \rightarrow \mathbb R^{n_1 \times d_2 \times 2^i L}$ and
%
$\underline{\underline{\underline{\mathfrak{C}}}}^\text{iso}_{\underline{\underline{\underline{\theta^{2(i)}}}}} \,:\, \mathbb R^{n_1 \times d_2 \times 2^i L} \rightarrow \mathbb R^{n_1 \times n_2 \times 2^i L}$.
%
Given a local basis $\underline{\Xi^{(i)}_{\text{aug}}} = \begin{pmatrix} \text{Id} & \underline{\Xi^{(i)}} \end{pmatrix}$, an encoder is defined with a lowpass signal and skip-connecting signal as outputs:
\begin{align}  %\label{eq:EncoderUnet:1}
 &\underline{\underline{c_\text{aug}^{(i)}}} = \underline{\Xi^{(i)  \text{T}}_{\text{aug}}} \mathscr{T}^{(i)} \left( \underline{\underline{s^{(i-1)}}} \right) 
 %
 = \begin{pmatrix} \mathscr{T}^{(i)} \left( \underline{\underline{s^{(i-1)}}} \right) \\ \underline{\Xi^{(i),\text{T}}} \mathscr{T}^{(i)} \left( \underline{\underline{s^{(i-1)}}} \right) \end{pmatrix}
 %
 = \begin{pmatrix} \underline{\underline{c^{(i)}}} \\ \underline{\underline{s^{(i)}}} \end{pmatrix} \,,~
 %
 i = 1 \,, \ldots \,, I \,,~
 \underline{\underline{s^{(0)}}} = \underline{\underline{f}} \,,~
 \underline{\underline{s^{(I)}}} = \mathscr{R}_p \left( \underline{\underline{s^{(I)}}} \right)
 % ----
 \\  \label{eq:EncoderUnet:2}
 \Leftrightarrow~&
 \begin{cases} \displaystyle
  \underline{\underline{s^{(I)}}} = \mathscr{R}_p \circ \underline{\Xi^{(I),\text{T}}} \mathscr{T}^{(I)} \circ \ldots \circ \underline{\Xi^{(1),\text{T}}} \mathscr{T}^{(1)} \left( \underline{\underline{f}} \right) \,,
  % ---
  \\ \displaystyle
  \underline{\underline{c^{(i)}}} = \mathscr{T}^{(i)} \circ \underline{\Xi^{(i-1),\text{T}}} \mathscr{T}^{(i-1)} 
  \circ \ldots \circ 
  \underline{\Xi^{(1),\text{T}}} \mathscr{T}^{(1)} \left( \underline{\underline{f}} \right) \,,~ i = 1 \,, \ldots \,, I \,,
 \end{cases}
\end{align}
%
where
\begin{align*}
 \underline{\underline{c^{(i)}}} \in \mathbb R^{2^{-(i-1)} n_1 \times 2^{-(i-1)} n_2 \times 2^{(i-1)} L} \,,~
 %
 \underline{\underline{s^{(i)}}} \in \mathbb R^{2^{-(i-1)-1} n_1 \times 2^{-(i-1)-1} n_2 \times 2^{(i-1)} L} \,.
\end{align*}
%
Each subband of the skip connection
$\underline{\underline{c^{(i)}}} = \left\{ \underline{c^{(i)}_{0}} \,, \ldots \,, \underline{c^{(i)}_{2^i L-1}} \right\} \,,
\underline{c^{(i)}_{l}} \in \mathbb R^{2^{-(i-1)} n_1 \times 2^{-(i-1)} n_2}$,
is passed to the $N$-th order Riesz Quincunx wavelet expansion (Equation 11) at scale $I'$:
$i= 0 \,, \ldots \,, I-1 \,, l = 0 \,, \ldots \,, 2^i L-1$, $k \in \mathbb Z^2$:
\begin{align*}    
 \underline{c^{(i)}_l} &= \underline{\mathfrak{C}}_{\underline{\phi}}\left(\underline{c^{(i)}_l}\right)  
 + \sum_{p=1}^P \underline{\mathfrak{C}}^*_{\underline{\psi_p}}
 \underline{\mathfrak{C}}_{\underline{\tilde\psi_p}}\left(\underline{c^{(i)}_l}\right) 
 %
 = \underline{\mathfrak{C}}_{\underline{\phi}}(\underline{c^{(i)}_l})  
 + \underline{\underline{\mathfrak{C}}}^*_{\underline{\underline{\psi}}}
 \underline{\underline{\mathfrak{C}}}_{\underline{\underline{\tilde\psi}}}(\underline{c^{(i)}_l})
 \\  % ----
 &= \left( \mathscr{H}_{2^{-(i-1)}n_1 \mid 2^{-(i-1)}n_2}(\underline{c^{(i)}_l}) \underline{\Phi} \right)  
 + \sum_{p=1}^P \left( \mathscr{H}_{2^{-(i-1)}n_1 \mid 2^{-(i-1)}n_2} \left( \mathscr{H}_{2^{-(i-1)}n_1 \mid 2^{-(i-1)}n_2}\left(\underline{c^{(i)}_l}\right) \underline{\tilde\Psi_p} \right) \underline{\check \Psi_p} \right) 
 % ----
 \\ \Leftrightarrow
 \underline{c^{(i)}_l} &= \mathfrak{W} \, \mathfrak{W}^{-1} \left\{ \underline{c^{(i)}_l} \right\} \,, 
\end{align*}
%
which satisfies the condition of perfect reconstruction. Scaling and wavelet functions are defined in the previous section.

% =================

%
\subsection{Proof of Proposition II.5}
%
\label{proof:RQUnetVAE:VariationalTerm}
%
The proof is straight-forward by noting that a lowpass signal as an encoder's output in (\ref{eq:EncoderUnet:2})
\begin{align*}
 \underline{\underline{s^{(I)}}} = \mathscr{R}_p \circ \underline{\Xi^{(I),\text{T}}} \mathscr{T}^{(I)} \circ \ldots \circ \underline{\Xi^{(1),\text{T}}} \mathfrak{T}^{(1)} \left( \underline{\underline{f}} \right) \,.    
\end{align*}
%
And, unknown parameters are:
$\gamma_m := \left\{ \gamma_c \,, \underline{W^\mu} \,, b^\mu \right\}$ and
%
$\gamma_s := \left\{ \gamma_c \,, \underline{W^\sigma} \,, b^\sigma \right\}$.

% =================

%
\subsection{Proof of Proposition II.6}
%
\label{proof:RQUnetVAE:decoder}
%
Now, we define an ``inverse" mapping of (Equation 15) as a linear perceptron network and a reshape operation mapping:
\begin{align}  \label{eq:InverseMappingLatentVariable}
 \mathscr{F}^s(\cdot) = \text{uvec} \left( \underline{W^s} \cdot + b^s \right) 
 \,:\, \mathbb R^d \rightarrow
 \mathbb R^{2^{-i-1} n_1 \times 2^{-i-1} n_2 \times 2^i L}
\end{align}
%
with $\underline{W^s} \in \mathbb R^{2^{-(I+1)} n_1 n_2 L \times d}$ and $b^s \in \mathbb R^{2^{-(I+1)} n_1 n_2 L}$.

% 
Then, combining Equations 16 and 20, a reconstructed lowpass image generated from the latent variable $y$ in Equation 16:
\begin{align*}
 \underline{\underline{\widehat{s}^{(I)}}} &=
 \mathscr{F}^s(y) 
 = \text{uvec} \left( \underline{W^s} \mathscr{M}_{\gamma_m} \left( \underline{\underline{f}} \right) 
 + b^s \right)
 %
 + \text{uvec} \left( \left( \underline{W^s} \mathscr{S}_{\gamma_s}^\frac{1}{2} \left( \underline{\underline{f}} \right) \right) \odot \epsilon \right)
 \\  % ---
 &\in \mathbb R^{2^{-i-1} n_1 \times 2^{-i-1} n_2 \times 2^i L} \,,~
 \epsilon \stackrel{\text{i.i.d.}}{\sim} \mathcal{N}_d \left( \mathbf{0}_d \,, \text{Id}_d \right) \,.
\end{align*}
%
The last equality is because $\text{uvec}(\cdot)$ is a linear operator.

%
Given an augmented local basis $\underline{\tilde{\Xi}^{(i)}_\text{aug}} = \begin{pmatrix} \text{Id}_N & \mathscr{B} \circ \underline{\tilde{\Xi}^{(i)}} \end{pmatrix}$, 
%
a decoder at scale $i \in \{ I - 1 \,, \ldots \,, 0 \}$ is defined by an output encoder and a skip connection $\underline{\underline{\widehat{c}_\text{aug}^{(i)}}} = \begin{pmatrix} \underline{\underline{c^{(i)}}} & \underline{\underline{\widehat{s}^{(i+1)}}} \end{pmatrix}$ as:
$\underline{\underline{\widehat{s}^{(I)}}} = \mathfrak{F}^s(y)$,
\begin{align}  \label{eq:DecoderScale:i}
 &\begin{cases}
  \underline{\underline{\widehat{c}_\text{aug}^{(i-1)}}} = \tilde{\mathscr{T}}^{(i)} \left( \underline{\underline{\widehat{c}_\text{aug}^{(i)}}} \right) = \begin{pmatrix} \underline{\underline{c^{(i-1)}}} & \underline{\underline{\widehat{s}^{(i)}}} \end{pmatrix} \,,~
  %
  i = I \,, \ldots \,, 1 \,,
  \\  % ----
  \underline{\underline{\widehat{s}^{(0)}}} = \tilde{\mathscr{T}}^{(0)} \left( \underline{\underline{\widehat{c}_\text{aug}^{(0)}}} \right) \,,~ i = 0 \,,
 \end{cases}
 % ---
 \\  \label{eq:DecoderScale:iterated_i}
 \Leftrightarrow~&
 \underline{\underline{\widehat{s}^{(0)}}} = \tilde{\mathscr{T}}^{(0)} \circ \tilde{\mathscr{T}}^{(1)} \circ \ldots \circ \tilde{\mathscr{T}}^{(I-1)} \begin{pmatrix} \underline{\underline{c^{(I-1)}}} & \underline{\underline{\widehat{s}^{(I)}}} \end{pmatrix} 
 %
 := \tilde{\mathscr{T}}_I \left( \underline{\underline{\underline{c}}} \,, \mathfrak{F}^s(y) \right) \,.
\end{align}
%
Then, we have Proposition II.6.

% =================

%
\subsection{Proof of Proposition II.7}
%
\label{proof:RQUnetVAEparameters}
%
Since $\mathtt{p}\left(y \mid \underline{\underline{f}}\right) = \mathtt{k}_\alpha\left(y \mid \underline{\underline{f}}\right)$, a likelihood of the latent variable in an encoder (Equation 26) is computed by a marginal likelihood:
\begin{align*}
 \mathtt{p}(z) &= \int_{\mathcal{F}} \mathtt{p}\left(z \mid \underline{\underline{f}}\right) \mathtt{p}\left(\underline{\underline{f}}\right) \text{d}\underline{\underline{f}}
 % ---
 = \mathbb E_{\underline{\underline{f}} \sim \textfrak{F}} \left[ \mathtt{k}_\alpha \left(z \mid \underline{\underline{f}}\right) \right] \,.
\end{align*}
%
Assume we have a bunch of realization of latent variable $z$ as
$\textfrak{Z} = \left\{ z_i \right\}_{i=1}^{n_z} \subset \mathbb R^d$; then, we have a maximum of an expected log-marginal-likelihood $\mathtt{p}(z)$ as:
\begin{align*}  
 \alpha^\dagger &= \argmax_{\alpha} \mathbb E_{\underline{\underline{f}} \sim \textfrak{F}} \left[ \log \mathtt{k}_\alpha \left(\textfrak{Z} \mid \underline{\underline{f}}\right) \right]
 %
 = \argmax_{\alpha} \mathbb E_{\underline{\underline{f}} \sim \textfrak{F}} \left[ \frac{1}{n_y} \sum_{i=1}^{n_y} \log \mathtt{k}_\alpha \left( z_i \mid \underline{\underline{f}} \right) \right] \,,~
 z_i \stackrel{\text{i.i.d.}}{\sim} \mathbb{K}_\alpha\left(z \mid \underline{\underline{f}}\right)
\end{align*}
% ---
which is
\begin{align}  \label{eq:VAE:likelihood:latent} 
 \left( \gamma_m^\dagger \,, \gamma_s^\dagger \,, \alpha^\dagger \right) &= \argmax_{\gamma_m, \gamma_s, \alpha} \mathbb E_{\underline{\underline{f}} \sim \textfrak{F}} \left[ \frac{1}{n_z} \sum_{i=1}^{n_z} \log \mathtt{k}_\alpha(z_i \mid \underline{\underline{f}}) \right] \,,~
 z_i \stackrel{\text{i.i.d.}}{\sim} \mathbb{G}_{\gamma_m, \gamma_s}(z \mid \underline{\underline{f}})
 %
 = \mathtt{g}_{\gamma_m, \gamma_s}\left(z \mid \underline{\underline{f}}\right)\text{d}z
 \notag
 \\ % ---
 &\approx \argmax_{\gamma_m, \gamma_s, \alpha} \mathbb E_{\underline{\underline{f}} \sim \textfrak{F}} \left[ \mathbb E_{z \sim \mathbb{G}_{\gamma_m, \gamma_s}(\cdot \mid \underline{\underline{f}})} \left[ \log \mathtt{k}_\alpha\left(z \mid \underline{\underline{f}}\right) \right] \right]
 \notag
 \\ % ---
 &= \argmax_{\gamma_m, \gamma_s, \alpha} \left\{ \mathbb E_{\underline{\underline{f}} \sim \textfrak{F}} \left[  \mathbb E_{z \sim \mathbb{G}_{\gamma_m, \gamma_s}\left(\cdot \mid \underline{\underline{f}}\right)} \left[ \log \mathtt{k}_\alpha(z \mid \underline{\underline{f}}) \right]
 %
 - \mathbb E_{z \sim \mathbb{G}_{\gamma_m, \gamma_s}(\cdot \mid \underline{\underline{f}})} \left[ \log \mathtt{g}_{\gamma_m,\gamma_s}(z \mid \underline{\underline{f}}) \right] \right] \right\}
 \notag %
 \\ % ---
 &= \argmax_{\gamma_m, \gamma_s, \alpha} \left\{ 
 - \mathbb E_{\underline{\underline{f}} \sim \textfrak{F}} \left[ \mathbb E_{z \sim \mathbb{G}_{\gamma_m, \gamma_s}\left(\cdot \mid \underline{\underline{f}}\right)} \left[ \log \frac{\mathtt{g}_{\gamma_m,\gamma_s}\left(z \mid \underline{\underline{f}}\right)}{\mathtt{k}_\alpha\left(z \mid \underline{\underline{f}}\right)} \right] \right]
 \right\}
 \notag
 \\  % ---
 &= \argmin_{\gamma_m, \gamma_s, \alpha} \mathbb E_{\underline{\underline{f}} \sim \textfrak{F}} \left[ \text{KL} \left( \mathbb{G}_{\gamma_m, \gamma_s}\left(z \mid \underline{\underline{f}}\right) \mid \mid \mathbb{K}_\alpha\left(z \mid \underline{\underline{f}}\right) \right) \right] \,.
\end{align}
%
In the end, we need to find $(\gamma_m \,, \gamma_s \,, \theta)$ that minimize the KL-distance between the true distribution and its approximated version in Equation 27 over all data $\textfrak{F}$ as:
\begin{align}  \label{eq:KLdivergece:approximation}
 \min_{\gamma_m, \gamma_s, \alpha} \mathbb{E}_{\underline{\underline{f}} \sim \textfrak{F}} \left[ \text{KL} \left( \mathbb{G}_{\gamma_m, \gamma_s} (z \mid \underline{\underline{f}}) \mid \mid \mathbb{K}_\alpha(z \mid \underline{\underline{f}}) \right) \right] \,.
\end{align}
%
From Bayes' rule (Equation 27),
we derive an evidence lower bound of the KL-divergence in (\ref{eq:KLdivergece:approximation}):
\begin{align*}
 &\text{KL} \left( \mathbb{G}_{\gamma_m,\gamma_s}(z \mid \underline{\underline{f}}) \mid \mid \mathbb{K}_\alpha(z \mid \underline{\underline{f}}) \right)
 \\
 =& \mathbb E_{z \sim \mathbb{G}_{\gamma_m, \gamma_s}(\cdot \mid \underline{\underline{f}})} \left[ \log \frac{\mathtt{g}_{\gamma_m,\gamma_s}(z \mid \underline{\underline{f}})}{\mathtt{k}_\alpha(z \mid \underline{\underline{f}})} \right] 
 % ---
 \stackrel{(27)}{=} 
 \mathbb E_{z \sim \mathbb{G}_{\gamma_m,\gamma_s}(\cdot \mid \underline{\underline{f}})} \left[ \log \frac{\mathtt{p}(\underline{\underline{f}}) \mathtt{g}_{\gamma_m,\gamma_s}(z \mid \underline{\underline{f}})}{\mathtt{h}_\alpha\left(\underline{\underline{f}} \mid z\right) \mathtt{p}(z)} \right]
 \\  % -----
 =& - \mathbb E_{z \sim \mathbb{G}_{\gamma_m,\gamma_s}\left(\cdot \mid \underline{\underline{f}}\right)} \left[ \log \mathtt{h}_\alpha\left(\underline{\underline{f}} \mid z\right) \right]
 %
 + \mathbb E_{z \sim \mathbb{G}_{\gamma_m,\gamma_s}\left(\cdot \mid \underline{\underline{f}}\right)} \left[ \log \frac{\mathtt{g}_{\gamma_m,\gamma_s}\left(z \mid \underline{\underline{f}}\right)}{\mathtt{p}(z)} \right]
 %
 + \log \mathtt{p}\left(\underline{\underline{f}}\right)
\end{align*}
 % 
 which is equivalent to:
\begin{align*} 
 \mathcal{L}\left(\mathbb{G}_{\gamma_m,\gamma_s} \,, \alpha \,; \underline{\underline{f}}\right) &:=
 \mathbb E_{z \sim \mathbb{G}_{\gamma_m,\gamma_s}\left(\cdot \mid \underline{\underline{f}}\right)} \left[ \log \mathtt{h}_\alpha\left(\underline{\underline{f}} \mid z\right) \right]
 %
 - \text{KL} \left[ \mathbb{G}_{\gamma_m,\gamma_s}\left(z \mid \underline{\underline{f}}\right) \mid \mid \mathbb{P}(z) \right]
 \\ %
 &= \log \mathtt{p}\left(\underline{\underline{f}}\right)
 - \text{KL} \left( \mathbb{G}_{\gamma_m,\gamma_s}\left(z \mid \underline{\underline{f}}\right) \mid \mid \mathbb{K}_\alpha\left(z \mid \underline{\underline{f}}\right) \right)
 \\  %----
 &\leq \log \mathtt{p}\left(\underline{\underline{f}}\right) \,.
\end{align*}
%
The last inequality is due to $\text{KL}(\cdot \mid \mid \cdot) \geq 0$.
Note that, via a reparamization trick from Equation 20 
\begin{align*}  
 z &\stackrel{\text{i.i.d.}}{\sim} \mathcal{N}_d \left( \mathscr{M}_{\gamma_m} \left( \underline{\underline{f}} \right) \,, \text{diag} \left\{ \mathscr{S}_{\gamma_s} \left( \underline{\underline{f}} \right) \right\} \right)
 = \mathbb{G}_{\gamma_m,\gamma_s}\left(z \mid \underline{\underline{f}} \right) \,,~
 %
 \\ \Leftrightarrow~
 z &= \mathscr{M}_{\gamma_m}\left(\underline{\underline{f}}\right) + \mathscr{S}_{\gamma_s}^\frac{1}{2}\left(\underline{\underline{f}}\right) \odot \epsilon \,,~ 
 \epsilon \stackrel{\text{i.i.d.}}{\sim} \mathcal{N}(0 \,, \text{Id}) 
\end{align*}
%
an evidence bound is recast as:
\begin{align*}
 \mathcal{L} \left(\mathbb{G}_{\gamma_m,\gamma_s} \,, \alpha \,; \underline{\underline{f}}\right) &=
 \mathbb E_{\epsilon \sim \mathcal{N}(0 \,, \text{Id})} \left[ \log \mathtt{h}_\alpha\left(\underline{\underline{f}} \mid z = \mathscr{M}_{\gamma_m}\left(\underline{\underline{f}}\right) + \mathscr{S}_{\gamma_s}^\frac{1}{2} \left(\underline{\underline{f}}\right) \odot \epsilon\right) \right]
 %
 - \text{KL} \left[ \mathbb{G}_{\gamma_m,\gamma_s}\left(z \mid \underline{\underline{f}}\right) \mid \mid \mathbb{P}(z) \right] \,.
\end{align*}
%
Then, a minimization (\ref{eq:KLdivergece:approximation}) is equivalent to maximize an expected  evidence bound over all observation $\textfrak{F}$:
\begin{align}  \label{eq:ExpectedEvidenceBound}  
 \left( \gamma_m^\dagger \,, \gamma_s^\dagger \,, \alpha^\dagger \right) &= \argmax_{\gamma_m, \gamma_s, \alpha} 
 \mathbb{E}_{\underline{\underline{f}} \sim \textfrak{F}} \left[ \log \mathtt{p}\left(\underline{\underline{f}}\right) - \text{KL} \left( \mathbb{G}_{\gamma_m, \gamma_s}\left(z \mid \underline{\underline{f}}\right) \mid \mid \mathbb{K}_\alpha\left(z \mid \underline{\underline{f}}\right) \right) \right]
 \notag
 \\  % ----
 &= \argmax_{\gamma_m, \gamma_s, \alpha} \mathbb{E}_{\underline{\underline{f}} \sim \textfrak{F}} \left[ 
 \mathcal{L}\left(\mathbb{G}_{\gamma_m,\gamma_s} \,, \alpha \,; \underline{\underline{f}}\right) \right] 
 \notag
 \\  % ----
 &= \argmax_{\gamma_m, \gamma_s, \alpha}
 \frac{1}{T} \sum_{i=1}^{T}
 \mathcal{L}\left(\mathbb{G}_{\gamma_m, \gamma_s} \,, \alpha \,; \underline{\underline{f_i}}\right) \,.
\end{align}
%
This is because $\mathtt{p}\left(\underline{\underline{f}}\right)$ is independent to $(\gamma_m \,, \gamma_s \,, \alpha)$.

%
For choosing $\mathbb{P}(z) = \mathcal{N}_d \left( \mathbf{0}_d \,, \text{Id}_d \right)$ and $\mathbb{G}_{\gamma_m, \gamma_s} \left(z \mid \underline{\underline{f}}\right) = \mathcal{N}_d \left( \mathscr{M}_{\gamma_m}\left(\underline{\underline{f}}\right) \,,
%
\text{diag} \left\{ \mathscr{S}_{\gamma_s} \left( \underline{\underline{f}} \right) \right\} \right)$, the KL-divergence term is defined above.
%
From Equation 25, we have a likelihood:
\begin{align}  
 \mathtt{h}_\alpha \left( \underline{\underline{f}} \mid z \right) =
 \mathcal{N}_D \left( \underline{\underline{f}} \,; \mathscr{D}_\alpha \left( \underline{\underline{\underline{c}}} \,, z \right) \,, \sigma^2 \text{Id} \right) \propto 
 \exp \left( - \frac{1}{2 \sigma^2} \norm{\underline{\underline{f}} - \mathscr{D}_\alpha \left( \underline{\underline{\underline{c}}} \,, z \right)}^2_{\ell_2} \right)
 \,.
\end{align}
%
Then, a minimization (\ref{eq:ExpectedEvidenceBound}) becomes the following minimization problem:
\begin{align}  \label{eq:RQUnetVAELossFunc}
 \left( \gamma_m^\dagger \,, \gamma_s^\dagger \,, \alpha^\dagger \right) 
 &= \argmax_{\gamma_m \,, \gamma_s, \alpha} \left\{ \mathscr{L}(\gamma_m,\gamma_s \,, \alpha) = \sum_{i=1}^{T}
 \mathcal{L} \left( \mathbb{G}_{\gamma_m,\gamma_s} \,, \alpha \,; \underline{\underline{f_i}} \right) \right\} \,.
\end{align}

% ===============

%
\subsection{Proof of Proposition II.9}
%
\label{proof:RQUnetVAEsegmentation}
 %
 We firstly define a vector-valued function for $K$-clusters:
 \begin{align*}
  u \,:\, \mathbb R^P \rightarrow \mathbb R^K \,;~ 
  u \left( f_{i,l} \right) = \begin{pmatrix} u_1 \left( f_{i,l} \right) \\ \vdots \\ u_K \left( f_{i,l} \right) \end{pmatrix} \,,~
  %
  u_k \left( f_{i,l} \right) \in \mathbb R \,,
 \end{align*}
 %
 whose definition is defined in a matrix form by our proposed RQUnet-VAE as:
\begin{align}  \label{eq:RQUnetVAEsegmentation:map}
  u \left( \underline{\underline{f_i}} \right) 
  %
  &:= \left[ u \left( f_{i,l} \right) \right]_{l \in \Omega}
  \notag
  \\  % ---
  &= \left\{ u_1 \left( \underline{\underline{f_i}} \right) \,, \hdots \,, 
  u_K \left( \underline{\underline{f_i}} \right) \right\}
  \in \mathbb R^{n_1 \times n_2 \times K}
  \,,~ u_k \left( \underline{\underline{f_i}} \right) \in \mathbb R^{n_1 \times n_2}
  \notag
  \\  % ----
  &= \underline{\underline{\underline{\mathfrak{C}^\text{iso}}}}_{\underline{\underline{\underline{\theta}}}}
  %
  \circ 
  %
  \mathscr{D}_\alpha \left( \mathscr{C}_{\gamma_c} \left(\underline{\underline{f_i}} \right) \,, \mathscr{M}_{\gamma_m} \left( \underline{\underline{f_i}} \right) + \mathscr{S}_{\gamma_s}^\frac{1}{2} \left( \underline{\underline{f_i}} \right) \odot \epsilon \right) \,,
 \end{align}
 %
 where an isotropic matrix-family convolution is for the $K$-clusters:
 \begin{align*}
  \underline{\underline{\underline{\mathfrak{C}^\text{iso}}}}_{\underline{\underline{\underline{\theta}}}} &\,:\, \mathbb R^{\abs{\Omega} \times P} \rightarrow \mathbb R^{\abs{\Omega} \times K} \,.
 \end{align*}
 %
 We define distributions for the set $\left( \textfrak{F} \,, \textfrak{F}^\text{gt} \right)$ as
 $\mathbb{Q}\left( f_{i,l} \right) = q \left( f_{i,l} \right) \text{d}f_{i,l} \,, 
 %
 \mathbb{P}\left(f_{i,l}\right) = p \left( f_{i,l} \right) \text{d}f_{i,l}$ and their densities are:
 \begin{align*}
  q \left( f_{i,l} \right) &=
  \left[ q_k \left( f_{i,l} \right) \right]_{k=1}^K
  \in \mathbb R^K
  \,,~
  q_k \left( f_{i,l} \right) = \text{softmax} \left( u \left( f_{i,l} \right) \right)_k
  = \frac{ \exp \left( u_k \left( f_{i,l} \right) \right) }{ \sum_{h=1}^K \exp \left( u_h \left( f_{i,l} \right) \right) } \,,
  \\
  p \left( f_{i,l} \right) &= \left[ p_k \left( f_{i,l} \right) \right]_{k=1}^K \in \mathbb R^K 
  \,,~
  p_k \left( f_{i,l} \right) = f^\text{gt}_{i,l,k} \in \{ 0 \,, 1 \} \,,~
  \sum_{k=1}^K p_k \left( f_{i,l} \right) = 1 \,;
 \end{align*}
  %
 then, cross-entropy loss between distributions $\mathbb{P}$ and $\mathbb{Q}$ is:
 \begin{align*}
  \mathcal{H}(\mathbb{P} \,, \mathbb{Q}) &= 
  - \mathbb E_\mathbb{P} \left[ \log \mathbb{Q} \right]
  := - \sum_{k=1}^K \mathbb E_{f \sim \textfrak{F}} \left[  p_k(f) \log q_k(f) \right]
  \\  % ---
  &= \lim_{n \to \infty} 
  - \frac{1}{n} \sum_{i=1}^n \mathscr{H} \left( \underline{\underline{f_i^\text{gt}}} \,, u \left( \underline{\underline{f_i}} \right) \right) \,, 
  \\ % 
  \mathscr{H} \left( \underline{\underline{f_i^\text{gt}}} \,, u \left( \underline{\underline{f_i}} \right) \right)
  %
  &= \sum_{l \in \Omega} \sum_{k=1}^K f^\text{gt}_{i,l,k} \log \text{softmax} \left( u \left( f_{i,l} \right) \right)_k \,.
 \end{align*}
 %
 The last equality is due to Monte Carlo approximation method.
 
%
\subsection{Proof of proposition \ref{prop:2DconvolutionHankelmatrix}}
%
\label{proof:prop2DconvolutionHankel}
%
We start with a 2D convolution (\ref{eq:2DConvolution})
\begin{align*}  
 \underline{\tilde{f}} &= \underline{\mathfrak{C}}_{\underline{\phi}}(\underline{f}) 
 %
 = \begin{pmatrix} \tilde{f}_1 & \hdots & \tilde{f}_{d_2} \end{pmatrix}
 \in \mathbb R^{n_1 \times d_2} \,,~ \tilde{f}_i \in \mathbb R^{n_1} \,.
\end{align*} 
%
Then, we rewrite an element of the above 2D convolution as a 1D convolution for a 2D image 
$\underline{f} = \begin{pmatrix} f_{1} & \hdots & f_{n_2} \end{pmatrix}
\in \mathbb R^{\abs{\Omega}} 
\,,~ f_i \in \mathbb R^{n_1}$:
$r_1=1, \ldots, n_1 \,, r_2=1, \ldots, d_2$,
\begin{align*}
 \tilde{f}[r_1,r_2] &= \underline{\mathfrak{C}}_{\underline{\phi}}(\underline{f})[r_1,r_2] 
 = \sum_{k_1=1}^{d_1} \sum_{k_2=1}^{d_2} f[k_1,k_2] \check{\phi}[r_1-k_1, r_2-k_2]
 \\ % ----
 &= \sum_{k_2=1}^{d_2} \left( \sum_{k_1=1}^{d_1} f_{k_2}[k_1] \check{\phi}_{r_2}^{k_2}[r_1-k_1] \right)
 \\ % ----
 &= \sum_{k_2=1}^{d_2} \mathfrak{C}_{\phi_{r_2}^{k_2}}(f_{k_2})[r_1] 
 %
 := \tilde{f}_{r_2}[r_1] \,,
\end{align*}
%
where column vector notations are
\begin{align*}
 f_{k_2}[k_1] := f[k_1,k_2] \,,~ 
 \phi_{r_2}^{k_2}[r_1-k_1] := \phi_{k_2}[r_1-k_1,r_2] 
 := \phi[r_1-k_1, r_2-k_2] \,.
\end{align*}
%
In conclusion, we have a useful relation between 2D and 1D convolution operations: $k_1=1, \ldots, n_1 \,, k_2=1, \ldots, d_2$,
\begin{align}  \label{eq:Relation2D1DConvolution:1}
 \tilde{f}_{k_2}[k_1] = \underline{\mathfrak{C}}_{\underline{\phi}}(\underline{f})[k_1,k_2] 
 % 
 = \sum_{i=1}^{d_2} \mathfrak{C}_{\phi_{k_2}^{i}}(f_{i})[k_1] 
 %
 = \sum_{i=1}^{d_2} \left( \mathscr{H}_{d_1}(f_i) \phi_{k_2}^{i} \right)[k_1] \,,
\end{align}
%
where $f_{k_2}[k_1] := f[k_1,k_2]$.

%
Then, a multi-input-numlti-output convolution is:
\begin{align*}
 \underline{\tilde{f}} &= \begin{pmatrix} \tilde{f}_1 \,, & \hdots \,, & \tilde{f}_{d_2} \end{pmatrix} \in \mathbb R^{n_1 \times d_2} \,,~
 \tilde{f}_{i} \in \mathbb R^{n_1} \,,
 \\  % ----
 &= 
 \sum_{i=1}^{d_2} \begin{pmatrix} \mathfrak{C}_{\phi_{1}^{i}}(f_{i}) \,, & \hdots \,, & \mathfrak{C}_{\phi_{d_2}^{i}}(f_{i}) \end{pmatrix}
 % ----
 \stackrel{(\ref{eq:1DConvolution})}{=} \sum_{i=1}^{d_2} \mathscr{H}_{n_1}(f_i) 
 \underbrace{ \begin{pmatrix} \phi_{1}^{i} \,, & \hdots \,, & \phi_{d_2}^{i} \end{pmatrix} }_{ = \underline{\Phi^i} }
 \\  % ----
 &= \mathscr{H}_{n_1 \mid d_2}(\underline{f}) \underline{\Phi} \,.
\end{align*}
%
This concludes a relation (\ref{eq:2DConvolution}).

%
%\clearpage
%\newpage

% \section{Algorithms}
%This section provides pseudocode of our proposed algorithms. Specifically, Algorithm~\ref{alg:RQUnetVAEdecomposition} gives the pseudocode... [removing algorithms since these are not really algorithms but formulas inside an infinite loop]

% \begin{algorithm*} 
% \label{alg:RQWaveletInternsection}
% \caption{Riesz-Quincunx wavelet-Intersection algorithms for gray scale image decomposition \textcolor{red}{An algorithm should have an input and an output. The input of this algorithm is clear. But what does compute or return?-A}}
% \begin{algorithmic}
% \small
% \STATE{
% {\bfseries Parameters:} $\underline{f} \in \mathbb R^{n_1 \times n_2}$, $\beta > 0 \,, \mu > 0$} 
% \STATE{
% {\bfseries Initialization:} $\underline{u}^{(0)} = \underline{f} \,, \underline{\lambda} = \mathbf{0}$,
% } 
% \STATE{ } 
% \FOR{$\tau = 1 \,, \ldots \,,  $}
% \STATE
% {
% \begin{align*}  
%  \underline{w}^{(\tau)} &= \text{prox}_{\mu\mathscr{P}}\left\{ \mathscr{H}_{n_1 \mid n_2}\left(\underline{u}^{(\tau-1)}\right) \underline{\tilde{\Psi}} - \frac{1}{\beta} \underline{\lambda}^{(\tau)} \right\} \,,
%  \\
%  \underline{u}^{(\tau)} &= n_1 \mathscr{H}^\dagger_{n_1 \mid n_2}  
%  \left( \mathscr{H}_{n_1 \mid n_2}\left(\underline{f}\right) \underline{\Phi}
%  %
%  + \left( \underline{w}^{(\tau)} + \frac{1}{\beta} \underline{\lambda}^{(\tau)}
%  \right) \underline{\check{\Psi}^\text{T}} \right) \,,
%  \\ %
%  \underline{\lambda}^{(\tau+1)} &= \underline{\lambda}^{(\tau)} + \beta \left( \underline{w}^{(\tau)} - \mathscr{H}_{n_1 \mid n_2}\left(\underline{u}^{(\tau)}\right) \underline{\tilde{\Psi}} \right) \,.
% \end{align*}
% }

% \ENDFOR

% \STATE{ }
% \end{algorithmic}
% \end{algorithm*}
% % end: ------------------------------------

% begin: ----------------------------------

%\newpage
%\clearpage

%\begin{algorithm*} 
%\label{alg:RQUnetVAEdecomposition}
%\caption{RQUnet-VAE-Intersection algorithms for multi-band image decomposition}
%\begin{algorithmic}
%\small
%\STATE{
%{\bfseries Parameters:} $\underline{\underline{f}} \in \mathbb R^{n_1 \times n_2 \times P}$, $\beta > 0 \,, \mu > 0$, 
%$\left( \underline{\underline{\underline{c_f}}} \,, y_{f} \right) = \mathscr{E}_{\gamma^\dagger} \left( \underline{\underline{f}} \right)$} 
%\STATE{
%{\bfseries Initialization:} $\underline{\underline{u}}^{(0)} = \underline{\underline{f}} \,, \underline{\underline{\lambda^{(0)}}} = \underline{\underline{\mathbf{0}}}$,
%} 
%\STATE{ } 
%\FOR{$\tau = 1 \,, \ldots \,,  $}
%\STATE
%{
%\begin{align*}  
% \left( \underline{\underline{\underline{c_u}}} \,, y_{u} \right) &= \mathscr{E}_{\gamma^\dagger} \left( \underline{\underline{u}}^{(\tau-1)} \right) \,,
% \\  % ---
% \underline{\underline{w}}^{(\tau)} &=
% \text{prox}_{\mu\mathscr{P}} \left\{ 
% \underline{\underline{\mathfrak{C}}}_{\underline{\underline{\tilde\psi}}} \left( \underline{\underline{\underline{c_u}}} \right)
%  - \frac{1}{\beta} \underline{\underline{\lambda}}^{(\tau)} \right\}
% \\  % ---
% \underline{\underline{u}}^{(\tau)} &= \mathscr{D}_{\alpha^\dagger} \left( \underline{\mathfrak{C}}_{\underline{\phi}} \left( \underline{\underline{\underline{c_f}}} \right)
%  + \underline{\underline{\mathfrak{C}}}^*_{\underline{\underline{\psi}}}
%  \left( \underline{\underline{w}}^{(\tau)}
%  + \frac{1}{\beta} \underline{\underline{\lambda}}^{(\tau)}
%  \right)
%  \,, y_{u} \right) \,,
%  \\ % ---
%  \underline{\underline{\lambda}}^{(\tau+1)} &=
% \underline{\underline{\lambda}}^{(\tau)} 
% + \beta \left( \underline{\underline{w}}^{(\tau)} - \underline{\underline{\mathfrak{C}}}_{\underline{\underline{\tilde\psi}}} \left( \underline{\underline{\underline{c_u}}} \right)
% \right) \,,
%\end{align*}
%}
%
%\ENDFOR
%
%\STATE{ }
%\end{algorithmic}
%\end{algorithm*}
% end: ------------------------------------
%
%
% begin: ----------------------------------
%\begin{algorithm*} 
%\label{alg:HaarRQUnetVAEIntersection}
%\caption{Haar-RQUnet-VAE-Intersection algorithms for video decomposition}
%\begin{algorithmic}
%\small
%\STATE{
%{\bfseries Parameters:} $\underline{\underline{\underline{f}}} \in \mathbb R^{n_1 \times n_2 \times P \times T}$, $\beta > 0 \,, \mu > 0$} 
%\STATE{
%{\bfseries Initialization:} $\underline{\underline{\underline{u^{(0)}}}} = \underline{\underline{\underline{f}}}$,
%$\underline{\underline{\underline{\eta^{(0)}}}} = \underline{\underline{\mathbf{0}}}$
%and $\underline{\underline{\lambda_t^{(0)}}} = \underline{\underline{\mathbf{0}}} \,, t = 1 \,, \ldots \,, T$,
%} 
%\STATE{ } 
%\FOR{$\tau = 1 \,, \ldots \,,  N$}
%\STATE
%{
%\begin{align*} 
% \underline{\underline{\underline{r^{(\tau)}}}} &= 
% \text{prox}_{\mu\mathscr{P}} \left\{ \mathscr{W}_{\xi} \left( \underline{\underline{\underline{u^{(\tau-1)}}}} \right) 
% - \frac{1}{\beta} \underline{\underline{\underline{\eta^{(\tau)}}}}
% \right\} \,,
% \\  % -----
% \underline{\underline{\underline{u^{(\tau)}}}}
% &:= \mathscr{G}_\phi \left( \underline{\underline{\underline{f}}} \right)
 %
% + \mathscr{W}^*_{\xi} \left( 
% \underline{\underline{\underline{r^{(\tau)}}}}
% + \frac{1}{\beta} \underline{\underline{\underline{\eta^{(\tau)}}}} \right) \,,
% \\  % ----
% \left( \underline{\underline{u_t^{(\tau)}}} \,, \underline{\underline{\lambda_t^{(\tau+1)}}} \right) &= \mathscr{K}_\mu\left( \underline{\underline{f_t}} \,, \underline{\underline{u_t^{(\tau-1)}}} \,, \underline{\underline{\lambda_t^{(\tau)}}} \,; \Gamma \right) \,,~
% t = 0 \,, \ldots \,, T-1 \,,
% \\ % ====
% \underline{\underline{\underline{\eta^{(\tau+1)}}}} 
% &= \underline{\underline{\underline{\eta^{(\tau)}}}}
% + \beta \left( \underline{\underline{\underline{r^{(\tau)}}}}
% - \mathscr{W}_{\xi} \left( \underline{\underline{\underline{u^{(\tau-1)}}}} \right) \right) \,.
%\end{align*}
%}
%
%\ENDFOR
%
%\STATE{ }
%\end{algorithmic}
%\end{algorithm*}
% end: ------------------------------------
%
%

\clearpage
\newpage

\begin{figure*}[tbh]
\begin{center}  
\includegraphics[width=0.98\textwidth]{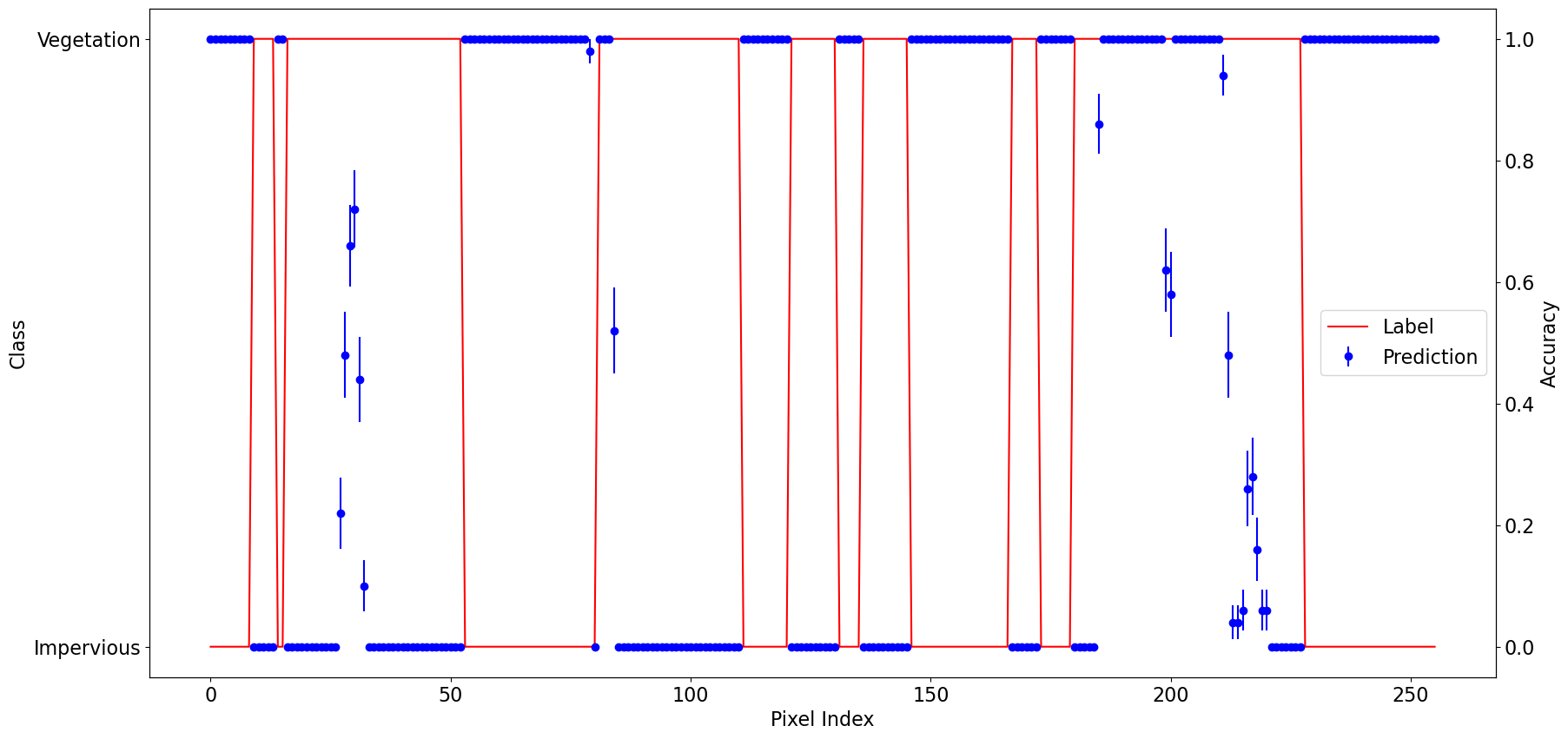}\label{fig:pixels}

\caption{The probability that individual  pixels are assigned to the impervious class along single horizontal line of pixels in an image during 50 RQUNet-VAE predictions compared to ground truth.   Blue dots correspond to predicted accuracy (right y-axis) whereas the red line corresponds to the ground truth land cover class (left y-axis).  This illustrates the results of variational terms of RQUNet-VAE.}
%\textcolor{red}{expand the description of a STD anf mean - green on top of orange}
\end{center}
\end{figure*}

\section{Additional Experiments}
While the main paper describes precision, recall, and F1-scores for entire images, this experiment looks at the accuracy of individual pixels an image to understand where our \RQUnetVAE incurs errors and where the variational approach is able to identify pixels having low confidence. For this purpose, we look at one horizontal line of pixels taken from the vertical center of the first noisy image shown in Figure~\ref{fig:NoisyTestImagesvar0_04:19059hdf}. For each pixel, from the left to the right of the image, Figure~\ref{fig:pixels} shows the ground truth label and the prediction accuracy. We see that for this line of pixels, the ground truth changes frequently between the impervious and the vegetation (non-impervious) class. We observe that there are many pixels where \RQUnetVAE provides a prediction accuracy of 1, implying that the class was predicted correctly in each of 50 runs of \RQUnetVAE (which is not deterministic due to the variational auto-encoder module). For example, on the very left of Figure~\ref{fig:pixels} we observe that numerous vegetation pixels were classified correctly in all cases. Then however, the ground truth has a few pixels classified as impervious, which our \RQUnetVAE is not able to capture. For these pixels, we have an accuracy of $0$, meaning that even the repeated iterations of our variational approach are not able to correctly classify these pixels. Across this horizontal line of pixels we also have numerous pixel where the variational approach provides a non-binary classification. For these cases, we provide the accuracy (as the fraction of correct classifications among the 50 runs) as well as the standard deviation of this accuracy denoted by the whiskers around the point estimate in the figure. We observe that in these cases, the variational approach allows
our \RQUnetVAE to possibly make better decisions by allowing to base a decision on multiple runs and take the consensus of all runs.

While these results appear weak due to many miss-classifications, we again reiterate that the underlying image has been heavily obfuscated with noise (compare the original images in Figure~\ref{fig:OriginalTestImages:19059hdf} with the noised images in Figure~\ref{fig:NoisyTestImagesvar0_04:19059hdf}) and that the experiments in the main paper show that the classic U-Net yields worse results, thus showing that our proposed \RQUnetVAE augments the traditional U-Net architecture by making it more robust to noise. 
\clearpage
\newpage

\bibliography{Current}

%\bibliographystyle{unsrt}
\bibliographystyle{abbrv}